\documentclass[11pt]{article}
\usepackage{graphicx} % Required for inserting images
\usepackage{xspace}
\usepackage{verbatim}
\usepackage[dvipsnames]{xcolor}
\usepackage{graphicx}
\usepackage{tabularx}

\usepackage{amsmath}
\usepackage{bbm}
\usepackage[showonlyrefs]{mathtools}
\usepackage{mathstyle}
\usepackage{empheq}
\usepackage{amstext,amssymb,amsfonts}
\usepackage{nicefrac}
\usepackage{bm}
\usepackage{derivative}

\usepackage{algorithm}
\usepackage{algpseudocode}
\usepackage{caption}
\usepackage{subcaption}

\usepackage{tikz}
\usepackage[normalem]{ulem}
\usepackage{soul}
\usetikzlibrary{arrows.meta,positioning,calc}

\usepackage{todonotes}

\newcounter{mynotes}
\usepackage{amsthm}
\usepackage{thmtools}
\usepackage{thm-restate}

\usepackage{textcomp,setspace}
\usepackage[top=1in, bottom=1in, left=1in, right=1in]{geometry}
\usepackage{nameref}
\usepackage{color}
\usepackage[pagebackref,colorlinks,linkcolor=blue,filecolor = blue,citecolor = ForestGreen, urlcolor
= blue, hyperfootnotes=false,hypertexnames=false]{hyperref}
\usepackage[capitalize, noabbrev, nameinlink]{cleveref}
\declaretheorem[within=section]{theorem}
\declaretheorem[sibling=theorem]{corollary}

\declaretheorem[sibling=theorem]{lemma}
\declaretheorem[sibling=theorem]{claim}

\declaretheorem[sibling=theorem]{definition}

\declaretheorem[sibling=theorem]{remark}

\declaretheorem[sibling=theorem]{fact}

\declaretheorem[sibling=theorem]{example}

\crefname{claim}{Claim}{Claims}
\crefname{fact}{Fact}{Facts}
\newcounter{termcounter}
\renewcommand{\thetermcounter}{\Alph{termcounter}}
\crefname{term}{term}{terms}
\creflabelformat{term}{\textup{(#2#1#3)}}

\makeatletter
\def\term{\@ifnextchar[\term@optarg\term@noarg}%]
\def\term@optarg[#1]#2{%
  \textup{(#1)}%
  \def\@currentlabel{#1}%
  \def\cref@currentlabel{[][2147483647][]#1}%
  \cref@label[term]{#2}}
\def\term@noarg#1{%
  \refstepcounter{termcounter}%
  \textup{(\thetermcounter)}%
  \cref@label[term]{#1}}
\makeatother
\newcommand{\ignore}[1]{}

\newcommand{\bits}{\{0,1\}}
\newcommand{\zo}{\bits}

\newcommand{\sbits}{\{1,-1\}}

\newcommand{\varvarTV}{\mathrm{TV}}

\newcommand{\poly}{\mathrm{poly}}

\newcommand{\dist}{\mathrm{dist}}

\newcommand{\Brac}[1]{\left[#1 \right]}
\newcommand{\abs}[1]{\lvert#1\rvert}
\newcommand{\Abs}[1]{\left\lvert#1\right\rvert}

\definecolor{DSred}{rgb}{1,0,0}

\renewcommand{\leq}{\leqslant}
\renewcommand{\geq}{\geqslant}
\renewcommand{\ge}{\geqslant}
\renewcommand{\le}{\leqslant}
\renewcommand{\epsilon}{\varepsilon}
\newcommand{\eps}{\epsilon}

\newcommand{\R}{\mathbb{R}}

\newcommand{\N}{\mathbb{N}}

\newcommand{\cA}{\mathcal A}
\newcommand{\cB}{\mathcal B}

\newcommand{\cD}{\mathcal D}

\newcommand{\cF}{\mathcal F}

\newcommand{\cH}{\mathcal H}

\newcommand{\cL}{\mathcal L}

\newcommand{\cN}{\mathcal N}

\newcommand{\cP}{\mathcal P}

\newcommand{\cR}{\mathcal R}

\newcommand{\cT}{\mathcal T}

\newcommand{\cX}{\mathcal X}
\newcommand{\cY}{\mathcal Y}

\newcommand{\bbR}{\mathbb R}

\newcommand{\bbD}{\mathbb D}
\newcommand{\bbT}{\mathbb T}

\newcommand{\Esymb}{{\bf E}}

\newcommand{\Psymb}{{\bf Pr}}

\DeclareMathOperator*{\E}{\Esymb}

\DeclareMathOperator*{\ProbOp}{\Psymb}

\renewcommand{\Pr}{\ProbOp}

\newcommand{\argEx}[2]{\E_{#1}\Brac{#2}}

\newcommand{\Ind}{\mathbb I}

\newcommand{\bern}{\mathrm{Bern}}
\allowdisplaybreaks

\newif\ifshowcomments
\showcommentsfalse

\ifshowcomments
  \newcommand{\RD}[1]{{\color{blue} (Rathin: #1)}}
    \newcommand{\RDR}[1]{{\color{red} (Rathin: #1)}}
  \newcommand{\RF}[1]{{\color{ForestGreen} (Renato: #1)}}
  \newcommand{\MB}[1]{{\color{purple} (Mark: #1)}}

\else
  \newcommand{\RD}[1]{}
  \newcommand{\RF}[1]{}
  \newcommand{\MB}[1]{}
  \newcommand{\RDR}[1]{}

\fi

\newcommand\numberthis{\addtocounter{equation}{1}\tag{\theequation}}

\newcommand{\ee}{\E}

\DeclareMathOperator{\err}{err}
\DeclareMathOperator{\opt}{opt}

\newcommand{\bin}{\mathrm{Bin}}

\newcommand{\define}{\vcentcolon=}

\newcommand{\bbN}{\mathbb N}
\renewcommand{\sbits}{\{\pm 1\}}

\newcommand{\Unif}{\mathrm{Unif}}

\newcommand{\tv}{\mathrm{TV}}
\newcommand{\sset}{\cH_{P,Q}}
\newcommand{\ind}{\mathbbm{1}}
\renewcommand{\Ind}{\ind}

\newcommand{\refdist}[1]{{#1}_{\mathsf{ref}}}
\newcommand{\tgtdist}[1]{{#1}}

\newcommand{\comp}{\mathrm{comp}}
\newcommand{\sound}{\mathrm{sound}}

\newcommand{\TL}{\mathcal{TL}}

\title{Testing Distributions Against Bounded Distinguishers}
\author{
    Mark Bun \\ Boston University \and
    Rathin Desai \\ Boston University \and
    Renato Ferreira Pinto Jr. \\ Columbia University
}
\date{\today}

\begin{document}

\maketitle

\begin{abstract}
    Motivated by the challenge of testing distributions over high-dimensional or
    continuous domains, we study distribution testing with respect to
    \emph{bounded classes of distinguishers}. %In a nutshell, the goal is to use
    A representative task is to use samples from an unknown distribution $P$ over a very large domain to decide
    between two cases: $P = \refdist{P}$ for a fixed reference distribution
    $\refdist{P}$, or there exists a distinguisher $f$ in a bounded class $\cF$
    which witnesses the separation $\abs{\E_P[f] - \E_{\refdist{P}}[f]} >
    \epsilon$. This is the task of \emph{identity testing} with respect to
    \emph{fooling distance}, a name inspired by the conceptual connection with
    pseudorandomness. (Formally, our model instantiates integral probability
    metrics from Boolean classes of bounded expressivity.)

    We show that testing with respect to fooling distance is not only a natural
    computational problem that admits sample-efficient algorithms even in
    high-dimensional settings, but also one that reveals and underlies connections
    between three seemingly unrelated areas of study: testable
    learning~\cite{rubinfeld2022testingdistributionalassumptionslearning},
    verification of learning algorithms~\cite{GoldwasserRSY21}, and testing of
    structured distributions~\cite{DKN14} (whose ``$\cA_k$-testing'' model our
    framework extends). These connections yield new results for all of these
    models, including:
    \begin{enumerate}
        \item Testable \emph{proper} learners using membership queries for
        halfspaces and decision trees.
    \item A lower bound for \emph{testable} PAC verification in terms
        of Rademacher complexity, and a distribution-free verification protocol for disjoint
        unions of $k$ multidimensional rectangles.
        \item Identity testers (with respect to total variation distance) for
        decision tree distributions
        and distributions with low-degree polynomial densities,
        over Boolean and continuous hypercube domains.
    \end{enumerate}
\end{abstract}

\thispagestyle{empty}
\setcounter{page}{0}
\newpage
{
\setcounter{tocdepth}{2}
\tableofcontents
}
\thispagestyle{empty}
\setcounter{page}{0}
\newpage
\setcounter{page}{1}

\section{Introduction}

Distribution testing \cite{BatuFRSW00,GR00} captures a basic class of problems that
arise both as self-contained
scientific questions and as building blocks for larger computational tasks. One central
distribution testing problem is \emph{identity testing}: given a reference distribution $\refdist{P}$ over
domain $\cX$, and sample access to unknown input distribution $P$, the goal is
to distinguish between 1)~$P=\refdist{P}$ and 2)~$\dist_\varvarTV(P,\refdist{P})
> \epsilon$, where $\dist_\tv$ denotes total variation distance.
Unfortunately, well-known sample complexity lower bounds rule out standard distribution testing as a
feasible task in most natural \emph{high-dimensional} settings. For example, testing uniformity of a
distribution over the Boolean hypercube $\bits^n$ requires $\Theta(2^{n/2})$
samples \cite{paninski2008coincidence}.
A
related challenge afflicts distribution testing over \emph{continuous} domains, in which case
standard distribution testing is impossible using a finite number of samples.

Broadly speaking, two approaches have emerged in the literature to address these
challenges and bypass basic infeasibility results: invoking a \emph{stronger
sampling oracle}, or assuming that the input comes from a class of
\emph{structured distributions}. For example, given access to \emph{conditional}
samples, many properties of distributions over $\bits^n$ are efficiently testable
\cite{canonne2015testingprobabilitydistributionsusing,Bhattacharyya_2018,CanonneCKLW21,ChenJLW21,AdarFL24a,ChenM24,BlancaCSV25,chakrabarty2025monotonicitytestinghighdimensionaldistributions}.
In the second category, distributions classes permitting efficient testing
include Ising models and Bayesian networks
\cite{DaskalakisD019,CanonneDKS20}. Identity of
distributions over the continuous domain $[0,1]$ can be tested efficiently when
their probability density functions are
piecewise constant, low-degree polynomials, or log concave \cite{DKN14,DiakonikolasKN15}. Uniformity of monotone distributions over $\bits^n$ and $[0,1]^n$ is also
efficiently testable \cite{RubinfeldS09,AdamaszekCS10}. Identity testing, closeness testing, and distance estimation between $k$-modal distributions also have efficient algorithms \cite{daskalakis2013testing,gs009}.

What if these stronger assumptions are not applicable? In this paper, we study a third relaxation of
the problem which not only suffices to make distribution testing feasible in new challenging
settings, but also unlocks new applications to existing problems. Specifically, we propose to relax
the \emph{success criterion} of distribution testing from its standard formulation under \emph{total
variation} (TV) distance to a criterion that only contemplates a fixed class of
\emph{distinguishers}. Before making this notion precise, we motivate the approach via the following
example:

\begin{example}
    \label{example:intro}
    Given samples of the medical features of patients presenting a certain disease, which we model
    as a distribution $P$ over $\bits^n$, distinguish with high probability between the following
    two cases:
    \begin{enumerate}
        \item The distribution $P$ is uniform over $\bits^n$.
        \item There is a simple rule $f : \bits^n \to \bits$, representable as a
            size-$s$ decision tree, which captures an overrepresented subgroup in the sense that
            $\E_{x \sim P}[f(x)] > \E_{x \sim \bits^n}[f(x)] + \epsilon$.
    \end{enumerate}
    In the scenario where the features are uniformly distributed in the background population,
    the first case indicates that the collected features have no predictive value for the
    disease, whereas the second case points toward a simple predictive rule connecting the
    collected features to disease incidence.
\end{example}

We make the notion above precise via the notion of \emph{fooling distance}.
For a domain $\cX$ and a family $\cF$ of Boolean functions $f : \cX \to
\bits$, called the \emph{distinguishers}, the fooling distance between
probability distributions $P$ and $Q$ over $\cX$ with respect to $\cF$ is
\begin{equation}
    \label{eq:fooling-distance}
    \dist_\cF(P, Q) \define \sup_{f \in \cF} \abs{\E_P[f] - \E_Q[f]} \,.
\end{equation}
For comparison, we may see TV distance as the fooling distance with respect to the
family of \emph{all} Boolean functions. The measure \eqref{eq:fooling-distance}
is the \emph{integral probability metric} (IPM) instantiated using the bounded
family $\cF$. IPMs capture many other  natural notions of distance including
the Kolmogorov and $1$-Wasserstein metrics \cite{zolotarev1984probability,muller1997integral}.

We study distribution testing problems where fooling distance
takes the role traditionally played by TV distance, as in \cref{example:intro}. This newly defined problem guiding our studies is what we call \emph{identity testing against fooling distance}, or
\emph{$\cF$-identity testing} for short. The problem is defined as follows:

\begin{definition}[$\cF$-identity testing]
    \label{def:intro-identity-testing}
    Given a domain $\cX$, a family $\cF$ of Boolean functions $f : \cX \to
    \bits$, an explicit reference distribution $\refdist{P}$ over $\cX$, and
    parameters $\epsilon, \delta \in (0,1)$, algorithm $\cA$ is an
    \emph{$(\epsilon, \delta)$-identity tester to $\refdist{P}$ against $\cF$}
    if, given sample access to input probability distribution $P$ over $\cX$,
    with probability at least $1-\delta$ the following conditions hold:
    \begin{itemize}
        \item \textbf{(Completeness)} If $P=\refdist{P}$, then $\cA$ accepts.
        \item \textbf{(Soundness)} If $\dist_\cF(P,\refdist{P}) > \epsilon$,
        then $\cA$ rejects.
    \end{itemize}
    When the parameter $\delta$ is not important or can be inferred from
    context, we omit it from the notation and call $\cA$ an $\epsilon$-identity
    tester to $\refdist{P}$ against $\cF$. For brevity, we also call $\cA$ an
    $\cF$-identity tester (to $\refdist{P}$).
\end{definition}

We study both sample and time complexity bounds for such
algorithms\footnote{For the design of computationally-efficient algorithms, we
consider infinite sequences of testing problems over arbitrarily large finite
domains.}; some of our general results are sample efficient but not
computationally efficient, and for many concrete settings we give computationally efficient algorithms as well. 
We stress that while
\cref{def:intro-identity-testing} serves to guide our initial study,
fooling
distance has broader applicability to distribution testing problems. 
We also study a version of \emph{equivalence} (or closeness) testing
with respect to fooling distance, where both distributions $P$ and $\refdist{P}$ are unknown and only
accessible via samples. 
A natural intermediate goal, which
also arises in our applications,
is to have a single algorithm $\cA$ that is an identity
tester for \emph{any} distribution $\refdist{P}$ whose explicit description is given as an input.
More generally, many classic distribution testing problems may be studied under the lens of fooling
distance, in which case we call them $\cF$-distribution testing problems. 
While the $\cF$-distribution testing problems are natural by themselves, we show that $\cF$-distribution testing is an important unifying concept related to other areas of theoretical computer science such as testable learning \cite{rubinfeld2022testingdistributionalassumptionslearning}, PAC verification of learning algorithms \cite{GoldwasserRSY21}, and testing of structured distributions \cite{DKN14}. We show that $\cF$-distribution testing is a core idea in each of these models unlocking new results for each one. We now give a brief overview of our results:

\begin{enumerate}
    \item \textbf{Sample complexity of $\cF$-identity testing} (\cref{sec:intro-sample-complexity}).
        A distribution-free baseline follows immediately from uniform convergence:
        if family $\cF$ has VC dimension $d$, then for any distribution $P$, the empirical
        expectations $\hat{\ee}_P[f]$ over a sample of size $O(d/\epsilon^2)$ are good estimates
        of the population expectations $\ee_P[f]$ for all $f \in \cF$.
        However, since $\cF$-identity testing is defined with respect to
        a fixed reference distribution $\refdist{P}$, it is natural to ask for bounds that exploit
        this structure.  We show that a sharper, distribution-specific bound is achievable and
        necessary, by characterizing the sample complexity in terms of the Rademacher complexity of
        $\cF$ with respect to $\refdist{P}$, up to a quadratic gap between our upper and lower
        bounds.

\item \textbf{Testable learning and testable proper learning} (\cref{subsec:TL}). In many settings,
    efficient agnostic learning algorithms are known for a family $\cF$ under a specific ``nice''
    marginal distribution, but not for arbitrary marginals. The testable learning framework
    \cite{rubinfeld2022testingdistributionalassumptionslearning} enables the design of learning
    algorithms that succeed whenever the data marginal is nice, even if the marginal is unknown
    and niceness is hard to verify, by optionally declining to produce an output when the
    marginal is ``detectably not nice.'' We argue that $\cF$-identity testing is the natural
    testing concept underlying testable learning. We show that testable learners imply
    $\cF$-identity testers (efficiently), and conversely that $\cF$-identity testers imply
    testable \emph{query} learners. As an application, this yields testable \emph{proper}
    learning algorithms for halfspaces and decision trees, a capability not previously available
    without queries.

    \item \textbf{PAC verification} (\cref{sec:pac-verification}). In the PAC verification model
        \cite{GoldwasserRSY21,MutrejaS23,GurJKRSS24}, a computationally weak verifier $V$ interacts
        with a powerful but untrusted prover to agnostically learn a family $\cF$. The prover's role
        is to help the verifier find a good hypothesis using fewer labeled samples than it would
        need on its own, but the verifier must remain protected against a dishonest prover who might
        attempt to certify a bad hypothesis. We show a mild equivalence between $\cF$-identity
        testing and PAC verification: $\cF$-identity testers imply verification protocols with
        sample-efficient verifiers, and testable verification protocols
        imply $\cF$-identity testers
        for a restricted class of distributions. As corollaries, we obtain a Rademacher-based lower
        bound for testable verification and new distribution-free verification protocols for unions of
        intervals and axis-aligned rectangles.

    \item \textbf{Testing structured distributions} (\cref{sec:intro-structured}).
            As discussed, distribution testing with respect to TV distance is intractable in
            high-dimensional and continuous settings. Meanwhile, our results show that
            $\cF$-identity is often tractable if $\cF$ is a family of bounded expressivity.
            While $\cF$-identity testing is a syntactically weaker goal than TV testing,
            it is possible to identify circumstances where performing the former suffices to
            accomplish the latter. One such setting is when the input distribution comes from a
            \emph{structured class} of distributions, within which fooling distance against some set of distinguishers $\cF$
            already captures TV distance.
        This principle has been exploited in the $\cA_k$-testing framework of \cite{DKN14}. As a
        consequence of our broader systematization, we obtain new sample-efficient TV testers for
        structured classes of high-dimensional distributions (including low-degree polynomial and
        decision tree distributions over $[0,1]^n$), as well as computationally efficient
        testers for degree-$2$ polynomial and decision tree distributions over $\bits^n$.
\end{enumerate}

Next, we give a more detailed discussion of each class of results and overview the main proof ideas. We
finish the introduction with a discussion of interesting directions for
further investigation.

\subsection{VC and Rademacher bounds for the sample complexity of \texorpdfstring{$\cF$}{F}-identity testing}
\label{sec:intro-sample-complexity}

Uniform convergence implies that $O(d/\epsilon^2)$ samples suffice to
perform identity testing to any explicit distribution $\refdist{P}$ against a family $\cF$ of VC dimension
$d$; the algorithm simply estimates the expectations $\E_P[f]$ for each $f \in \cF$ using the
samples, and rejects if any estimate is too far from the corresponding value $\E_{\refdist{P}}[f]$. This observation has appeared before in the distribution testing
literature \cite{diakonikolas2023testingclosenessmultivariatedistributions}. Note that even this simple observation already has
intriguing conceptual implications: to test identity against polynomial-time computation or
polynomial-size circuits (for any fixed $\poly(n)$ bound),
polynomially (rather than exponentially) many samples suffice.

This result is distribution-free in the sense that it
holds for all reference distributions $\refdist{P}$ simultaneously. We show that
more refined bounds may be obtained in terms of the \emph{Rademacher complexity}
$\cR_m(\cF, \refdist{P})$, a
measure that accounts for both the family $\cF$ and the distribution
$\refdist{P}$. Specifically, Rademacher complexity upper bounds the sample complexity of
$\cF$-identity testing to $\refdist{P}$, and
moreover this bound is tight up to a quadratic gap. 

\begin{theorem}[Rademacher upper bound for $\cF$-identity testing]
    \label{thm:intro-testing-upper-bounds}
    Let $\cF$ be a family of $\cX \to \bits$ functions, $\refdist{P}$ be a
    reference distribution over $\cX$, and $m$ be a parameter such that
    $\cR_m(\cF, \refdist{P}) \le \epsilon/16$. Then there exists an $(\epsilon,
    \delta)$-identity tester to $\refdist{P}$ against $\cF$ with sample complexity
    $m + O\left(\frac{\log(1/\delta)}{\epsilon^2}\right)$.
\end{theorem}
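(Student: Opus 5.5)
The tester draws $n$ samples from $P$, computes the empirical fooling distance to the explicit reference distribution, and thresholds it at $\epsilon/2$. Completeness then follows from the standard symmetrization bound plus McDiarmid, and soundness from Hoeffding applied to a single witness distinguisher. Concretely, set $n \define \max\{m, \lceil C\log(2/\delta)/\epsilon^2\rceil\}$ for a suitable constant $C$, so that $n \le m + O(\log(1/\delta)/\epsilon^2)$. Draw $x_1,\dots,x_n \sim P$ and compute
\[
T \define \sup_{f\in\cF} \Abs{\frac1n\sum_{i=1}^n f(x_i) - \E_{\refdist{P}}[f]} .
\]
Accept iff $T \le \epsilon/2$. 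Since $\refdist{P}$ is explicit, $\E_{\refdist{P}}[f]$ is known for every $f$. We only care about sample complexity here, so computing the supremum is not an issue.

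\emph{Completeness.} Suppose $P=\refdist{P}$. The standard symmetrization argument (introduce a ghost sample from $\refdist{P}$, apply Jensen, then insert Rademacher signs) gives $\E[T] \le 2\cR_n(\cF,\refdist{P})$. I would then use monotonicity of Rademacher complexity in the sample size, namely $\cR_n \le \cR_m$ for $n \ge m$. This follows by writing the average over $n$ points as an average of averages over size-$m$ subsets and pulling the supremum inside. Together these give $\E[T] \le 2\cdot\epsilon/16 = \epsilon/8$.

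Changing a single $x_i$ changes $T$ by at most $1/n$, because $f$ is Boolean. McDiarmid's inequality therefore gives $T \le \E[T] + \sqrt{\log(2/\delta)/(2n)}$ with probability at least $1-\delta/2$. Our choice of $n$ makes the deviation term at most $\epsilon/8$, so $T \le \epsilon/4 < \epsilon/2$ and the tester accepts.

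\emph{Soundness.} Suppose $\dist_\cF(P,\refdist{P}) > \epsilon$, and fix a witness $f^*\in\cF$ with $\abs{\E_P[f^*]-\E_{\refdist{P}}[f^*]} > \epsilon$. Then $T \ge \abs{\hat\E_P[f^*] - \E_{\refdist{P}}[f^*]}$. By Hoeffding's inequality, $\abs{\hat\E_P[f^*]-\E_P[f^*]} \le \epsilon/4$ with probability at least $1-\delta/2$ once $C$ is large enough. On that event $T > 3\epsilon/4 > \epsilon/2$, and the tester rejects.

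Soundness needs only a single function, so no uniform convergence over $\cF$ under $P$ is required. This is exactly why $P$'s own complexity never enters the bound. The only mildly delicate step is the completeness bound: one must be careful that symmetrization is carried out under $\refdist{P}$, which is legitimate precisely because $P=\refdist{P}$ in that case. One must also invoke the monotonicity $\cR_n \le \cR_m$, so that padding the sample to size $n$ for the concentration terms does not break the hypothesis $\cR_m(\cF,\refdist{P}) \le \epsilon/16$.
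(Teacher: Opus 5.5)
Your proof is correct, but it takes a more direct route than the paper. The paper derives this theorem from a \emph{Rademacher testable distribution learner} (\cref{lemma:testable-distribution-learning}). It draws $m + O(\log(1/\delta)/\epsilon^2)$ samples $S$ from $P$, computes the empirical Rademacher complexity $\cR(\cF,S)$, and rejects if it exceeds $\epsilon/2$. Otherwise it outputs $\hat P = \Unif(S)$, which the Rademacher generalization bound makes $\epsilon$-close to $P$ in $\dist_\cF$. Identity testing is then treated as the special case of equivalence testing in which $\hat P$ is compared to a distribution learned from $\refdist{P}$, at threshold $\epsilon/2$. So the paper handles soundness by uniform convergence under the \emph{unknown} $P$. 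That forces it to first certify from data that $P$'s own Rademacher complexity is small, via the case split on whether $\cR_s(\cF,P)$ exceeds $2\epsilon/3$.

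You sidestep this entirely. In the soundness case a single witness $f^*$ plus Hoeffding suffices, so $P$'s complexity never enters. Uniform convergence is needed only when $P=\refdist{P}$, where the hypothesis applies directly through symmetrization, the monotonicity $\cR_n \le \cR_m$ (which the paper also uses implicitly), and McDiarmid. Your argument is more elementary and self-contained, and it has slack in the constants. With the paper's $f^{\pm}$ normalization you can write $T=\frac12\sup_{f\in\cF}\Abs{\frac1n\sum_i f^{\pm}(x_i)-\E_{\refdist{P}}[f^{\pm}]}$, and symmetrizing this gives $\E[T]\le\cR_n$, better than the $2\cR_n$ you claim.

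What the paper's detour buys is a stronger intermediate primitive: for \emph{arbitrary} $P$, either reject or output $\hat P$ with $\dist_\cF(P,\hat P)\le\epsilon$. That is a learning guarantee in the spirit of testable learning. For testing alone it is not needed. Replacing $\refdist{P}$'s exact expectations in your statistic with empirical ones from a second sample would also give the equivalence-testing variant in \cref{thm:testing-upper-bounds}.
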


\begin{restatable}[Rademacher lower bound for $\cF$-identity testing]{theorem}{thmrademacherlowerbound}
    \label{thm:rademacher-lower-bound}
    There exists an absolute constant $C > 0$ such that the following holds. Let
    $\cF$ be a family of $\cX \to \bits$ functions, $\refdist{P}$ be a reference
    distribution over $\cX$, and $m \ge C/\epsilon^2$ be a parameter such that
    $\cR_m(\cF, \refdist{P}) > 4\epsilon$. Then every $(\epsilon,
    0.99)$-identity tester to $\refdist{P}$ against $\cF$ has sample complexity
    $\Omega(\sqrt{m})$.
\end{restatable}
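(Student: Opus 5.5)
The plan is to build a single hard instance from the Rademacher hypothesis. Draw a random sample $S=(x_1,\dots,x_m)\sim \refdist{P}^m$ and let $Q_S$ be the uniform distribution over the multiset $S$, i.e.\ the empirical distribution. I will show two things. First, with probability close to $1$ over $S$, $\dist_\cF(Q_S,\refdist{P})>\epsilon$, so a valid tester must reject $Q_S$. Second, a tester using $k=o(\sqrt m)$ samples cannot distinguish ``$P=\refdist{P}$'' from ``$P=Q_S$ for a random $S$''. Together these contradict the tester's guarantees, which forces $k=\Omega(\sqrt m)$.

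\textbf{Step 1 (desymmetrization).} Write $P_m f=\frac1m\sum_i f(x_i)$ for the empirical mean and $\refdist{P} f = \E_{\refdist{P}}[f]$. The standard symmetrization argument with a ghost sample gives
\[
\E_{S,\sigma}\sup_{f\in\cF}\frac1m\sum_i\sigma_i\bigl(f(x_i)-\refdist{P} f\bigr)\le 2\,\E_S\sup_{f\in\cF}\abs{P_m f-\refdist{P} f}.
\]
The leftover term satisfies $\E_\sigma\sup_f \frac1m\sum_i\sigma_i\,\refdist{P} f\le \E\abs{\frac1m\sum_i\sigma_i}\le 1/\sqrt m$. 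Combining the two, $\E_S[\dist_\cF(Q_S,\refdist{P})]\ge (\cR_m(\cF,\refdist{P})-1/\sqrt m)/2 > 2\epsilon-\epsilon/\sqrt{C}/2$. Choosing $C$ large, this is at least $1.9\epsilon$. (If $\cR_m$ is defined with an absolute value, the same computation goes through with the absolute value carried along.)

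\textbf{Step 2 (concentration).} The map $S\mapsto\dist_\cF(Q_S,\refdist{P})$ has bounded differences $1/m$ in each coordinate. McDiarmid's inequality therefore gives deviations of order $O(1/\sqrt m)\le O(\epsilon/\sqrt C)$ with probability $0.999$. So, for $C$ large, $\dist_\cF(Q_S,\refdist{P})>\epsilon$ with probability at least $0.999$ over $S$.

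\textbf{Step 3 (indistinguishability via birthday bound).} Consider $k$ samples drawn from $Q_S$ with $S$ random. Equivalently, draw $k$ uniform indices $j_1,\dots,j_k\in[m]$ and output $x_{j_1},\dots,x_{j_k}$. Conditioned on the indices being distinct, the outputs are exactly $k$ i.i.d.\ draws from $\refdist{P}$. The collision probability is at most $k^2/(2m)$. Hence the mixture $\E_S[Q_S^{\otimes k}]$ is within TV distance $k^2/(2m)$ of $\refdist{P}^{\otimes k}$. Now suppose a tester is correct with probability at least $2/3$ (or whatever constant the $(\epsilon,0.99)$ convention fixes). It accepts $\refdist{P}$ with probability at least $2/3$. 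It rejects $Q_S$ with probability at least $2/3$ on the $0.999$-fraction of good $S$, so its average acceptance probability over the mixture is at most $1/3+0.001$. These two acceptance probabilities must then differ by a constant, forcing $k^2/(2m)=\Omega(1)$, i.e.\ $k=\Omega(\sqrt m)$.

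\textbf{Main obstacle.} The main care point is Step 1: getting the constants right so that $4\epsilon$ Rademacher complexity yields a fooling distance strictly above $\epsilon$ after the $1/\sqrt m$ correction and the concentration slack. This is exactly where the assumption $m\ge C/\epsilon^2$ is used. A secondary issue is that $Q_S$ may be supported on repeated points when $\refdist{P}$ has atoms. This is harmless, because the birthday argument works on indices rather than on values.
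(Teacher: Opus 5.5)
Your Steps 2 and 3 are sound. McDiarmid with differences $1/m$ is fine, and the index-based coupling gives $\dist_\tv(\E_S[Q_S^{\otimes k}],\refdist{P}^{\otimes k})\le k^2/(2m)$ while correctly handling atoms. Step 1, however, loses exactly the factor of $2$ that the theorem's constant cannot absorb. The paper defines $\cR_m$ with the signed functions $f^{\pm}=2f-1$, which is roughly twice the $\{0,1\}$-valued Rademacher average your computation uses. Write $2f(x_i)-1=2\bigl(f(x_i)-\E_{\refdist{P}}[f]\bigr)+\bigl(2\E_{\refdist{P}}[f]-1\bigr)$, use $\abs{2\E_{\refdist{P}}[f]-1}\le 1$, and then desymmetrize as you do. This gives
\[
\cR_m(\cF,\refdist{P})
\le 2\,\E_{S,\sigma}\sup_{f\in\cF}\Abs{\frac1m\sum_{i=1}^m\sigma_i\bigl(f(x_i)-\E_{\refdist{P}}[f]\bigr)}+\frac{1}{\sqrt m}
\le 4\,\E_S\bigl[\dist_\cF(Q_S,\refdist{P})\bigr]+\frac{1}{\sqrt m}.
\]
So all you can conclude is $\E_S[\dist_\cF(Q_S,\refdist{P})]>\epsilon-\frac{1}{4\sqrt m}$, not $>1.9\epsilon$. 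After the McDiarmid slack you only get $\dist_\cF(Q_S,\refdist{P})>\epsilon-O(\epsilon/\sqrt{C})$, which does not oblige the tester to reject. As written, your argument proves the theorem only under a stronger hypothesis such as $\cR_m(\cF,\refdist{P})>5\epsilon$. The loss has two sources: the ghost-sample desymmetrization, which costs a worst-case factor $2$, and the use of the full size-$m$ empirical distribution, which is closer to $\refdist{P}$ than a half-size one.

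The missing idea, which is how the paper proceeds, is to let $\sigma$ itself split the sample. Set $R=\{x_i:\sigma_i=+1\}$ and $T=\{x_i:\sigma_i=-1\}$; with high probability $\abs{R},\abs{T}=\frac m2\bigl(1\pm O(1/\sqrt m)\bigr)$. Then $\frac1m\sum_i\sigma_i f^{\pm}(x_i)=\E_R[f]-\E_T[f]\pm O(1/\sqrt m)$ directly, with no ghost sample. Concentration of $\cR(\cF,S)$ and of the $\sigma$-average then yields some $f$ with $\abs{\E_R[f]-\E_T[f]}>2\epsilon$. By the triangle inequality, $\Unif(R)$ or $\Unif(T)$ is $\epsilon$-far from $\refdist{P}$, and by symmetry $\Unif(R)$ is far with probability about $1/2$. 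Conditioned on $\abs{R}>m/3$, the elements of $R$ are i.i.d.\ from $\refdist{P}$, so your birthday coupling applies to $\Unif(R)$ verbatim. The price is that this far event has probability only about $1/2$ rather than close to $1$, which is why the paper needs the tester's $0.99$ success probability. Your full-sample instance would have given a cleaner contradiction, but only under a larger constant in the hypothesis.
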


These results parallel similar bounds on the sample complexity of testable
learning in terms of Rademacher complexity by \cite{GollakotaKK23}\footnote{Indeed, our upper bound could also
be inferred via our reduction from $\cF$-identity testing to testable learning previewed in the next
subsection; for clarity, we give direct proofs of both upper and lower bounds.}\!\!, and reveal
another characterization of Rademacher complexity in terms of a natural statistical task. The idea
for the upper bound is that the Rademacher complexity of $\cF$ with respect to the unknown input
distribution $P$ can itself be estimated from samples --- if it is too high, the algorithm may
reject, and otherwise, empirical frequencies suffice to estimate $\dist_\cF(P, \refdist{P})$.
The lower bound holds because, informally, when $\cR_m(\cF, \refdist{P}) > \epsilon$, a random
nearly-balanced partition of a size-$m$ sample has non-trivial correlation with some $f \in
\cF$, which implies that the uniform distribution over one of the parts is far
from the distribution $\refdist{P}$ under $\dist_\cF$;
yet, by a birthday problem bound,
$o(\sqrt{m})$ samples cannot detect this.

\subsection{Testable learning}\label{subsec:TL}

The testable learning framework \cite{rubinfeld2022testingdistributionalassumptionslearning}
addresses the situation where an efficient agnostic learning algorithm is known for a family $\cF$
when the input marginal distribution is a ``good'' distribution $\cD_\cX$, but not for general
distributions. For example, halfspaces are efficiently agnostically learnable under Gaussian
marginals by polynomial regression \cite{kalai2008agnostically}, but the distribution-free case is
computationally intractable \cite{feldman2006new,guruswami2009hardness,daniely2016complexity}.
Testable learners enjoy the computational efficiency guarantees of distribution-specific learning,
but are required to produce a good output \emph{regardless} of what the true input
distribution is, in the following sense: either they successfully learn a good hypothesis (which in
particular they must do whenever the true marginal is indeed $\cD_\cX$), or they flag and abort
because the distributional assumption was violated. Crucially, this framework bypasses basic
impossibility results for directly \emph{testing} the distributional assumption (for example, it is
impossible to test Gaussianity of an arbitrary distribution with respect to TV distance). For
instance, the following positive results are known from the testable learning literature:

\begin{enumerate}
    \item Halfspaces (and, in fact, fixed functions of a constant number of halfspaces) can be
        testably learned with respect to the uniform distribution over $\bits^n$ or the standard
        Gaussian distribution over $\bbR^n$ in time $n^{\tilde{O}(1/\epsilon^2)}$ \cite{rubinfeld2022testingdistributionalassumptionslearning,GollakotaKK23}.
    \item Size-$s$ decision trees can be testably learned with respect to the uniform distribution
        over $\bits^n$ in time $n^{O(\log(s/\epsilon))}$ \cite{GollakotaKK23,KlivansSV24,GoelSSV24}.
\end{enumerate}

Let us formally define the model. Write the \emph{classification error} of hypothesis $h : \cX \to
\bits$ with respect to joint distribution $\cD$
as $\err(h,\cD) \define \Pr_{(x,y)\sim \cD}[h(x)\neq y]$, and write $\opt(\cF, \cD)
\define \inf_{h \in \cF} \err(h, \cD)$ for the \emph{optimal error} of any
hypothesis in $\cF$ with respect to the joint distribution $\cD$. Then, the
model is defined as follows. We extend the definition of
\cite{rubinfeld2022testingdistributionalassumptionslearning} to encompass testable \emph{query} learning and
testable \emph{semiagnostic} learning, which will feature in our applications.

\begin{definition}[Testable learning]
    \label{def:testable-learning}
    We say algorithm $\cA$ is an \emph{$(\epsilon, \delta)$-testable (agnostic) learner} for class
    $\cF$ with respect to marginal distribution $\cD_\cX$ on $\cX$ if, given sample access to input
    distribution $\cD$ on $\cX\times \bits$, algorithm $\cA$ either \emph{rejects} or outputs a
    hypothesis $h : \cX \to \bits$ such that, with probability at least $1-\delta$, the following
    properties hold.
    \begin{itemize}
        \item \textbf{(Soundness/composability)} If $\cA$ outputs a hypothesis $h$, then
            $\err(h,\cD) \le \opt(\cF,\cD) + \epsilon$.
        \item \textbf{(Completeness)} If $\cD$ truly has marginal $\cD_\cX$ on $\cX$, then $\cA$
            does not reject.
    \end{itemize}
    When the parameter $\delta$ is not important or can be inferred from
    context, we omit it from the notation and say $\cA$ is an
    $\epsilon$-testable learner.
    
    When the output hypothesis $h$ of $\cA$ is required to satisfy $h \in \cF$,
    we say $\cA$ is \emph{proper}.

    When the guarantee $\err(h,\cD) \le \opt(\cF,\cD) + \epsilon$ in the
    soundness condition is replaced with $\err(h,\cD) \le \beta \cdot
    \opt(\cF,\cD) + \epsilon$ for a parameter $\beta > 1$, we say that $\cA$ is
    \emph{$\beta$-semiagnostic}.
    
    When we assume that the input distribution $(x, y) \sim \cD$ is of the form
    $(x \sim P, y = f(x))$ for unknown marginal $P$ and
    labeling function $f : \cX \to \bits$, and moreover algorithm $\cA$
    is given \emph{query access} to the function $f$ at arbitrary points $x \in
    \cX$, we say that $\cA$ is a testable \emph{query} learner.
\end{definition}

Much of the success in the rapidly growing literature on testable learning \cite{rubinfeld2022testingdistributionalassumptionslearning,GollakotaKK23,diakonikolas2023efficient,gollakota2023efficient,KlivansSV24,slot2024testably,goel2025testing} has come from
harnessing analytic properties of distributions (such as Gaussian and uniform over $\bits^n$) and
hypothesis classes (such as halfspaces, intersections of halfspaces, decision trees, and small
formulas) to obtain tester-learner pairs that, in some form, combine two basic kinds of tasks:
1)~moment estimation, to check that the input distribution roughly matches the reference
distribution; and 2)~polynomial regression, to learn a good hypothesis when the moment-matching
condition is satisfied. The instantiation of this framework by~\cite{GollakotaKK23} crucially relied on the notion of \emph{fooling}.
 Informally, they gave a duality-based equivalence between
1)~low-degree moment matching between marginal distributions $P$ and $\cD_\cX$ being a sufficient
condition for ensuring that $\dist_\cF(P, \cD_\cX)$ is small; and 2)~the family $\cF$ admitting
low-degree \emph{sandwiching polynomials}, which is a stronger guarantee than $L_1$ polynomial
approximation that allows one to reason about the quality of polynomial regression when the true
marginal is $P$ rather than the reference marginal $\cD_\cX$.

\paragraph*{Testable learning implies $\cF$-identity testing.}
Our first result shows that the connection between testable learning and fooling can be made even
more concrete, and indeed independent of specific analytic considerations such as moment matching: given
a testable learner for family $\cF$ with respect to reference distribution $\cD_\cX = \refdist{P}$, we can construct an identity tester to $\refdist{P}$ against $\cF$ with the same sample and time complexity (up to
simulation overhead). That is, $\cF$-identity testing is no harder than testable learning:

\begin{theorem}[Testable learning implies $\cF$-identity testing]
    \label{thm:intro-testable-learning-implies-identity-testing}
    Suppose there exists an $\epsilon$-testable learner $\cA$ for family $\cF$ with respect to
    marginal $\cD_\cX = \refdist{P}$. Then there exists an $\epsilon$-identity tester to $\refdist{P}$ against $\cF$
    with the same sample complexity as $\cA$ (plus $O(1/\epsilon^2)$ additional samples), and
    with running time polynomial in the running time of $\cA$.
\end{theorem}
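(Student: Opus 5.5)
The plan is to run the testable learner $\cA$ on labeled data built from samples of the unknown $P$, with labels chosen so that any distinguisher $f\in\cF$ witnessing $\dist_\cF(P,\refdist{P})>\epsilon$ becomes a hypothesis with noticeably smaller error than anything that can be certified using only the structure of $\refdist{P}$. The most direct construction is a mixture. Flip a fair coin $y\in\bits$. If $y=1$, draw $x\sim P$; if $y=0$, draw $x\sim\refdist{P}$, which we can sample from because $\refdist{P}$ is explicit. Each labeled example costs at most one sample from $P$.

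The marginal of this joint distribution $\cD$ is $\frac12(P+\refdist{P})$. This equals $\refdist{P}$ when $P=\refdist{P}$, so completeness of $\cA$ guarantees no rejection in that case. Moreover, in that case the labels are independent of $x$, so every hypothesis $h$ has $\err(h,\cD)=1/2$ exactly. For any $h$ we have
\[
\err(h,\cD)=\tfrac12-\tfrac12\bigl(\E_P[h]-\E_{\refdist{P}}[h]\bigr).
\]
Hence if some $f\in\cF$ has $\E_P[f]-\E_{\refdist{P}}[f]>\epsilon$, then $\opt(\cF,\cD)<\frac12-\frac\epsilon2$.

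The sign issue is handled by assuming $\cF$ is closed under complement, or by also running the learner with the labels flipped (equivalently, swapping the roles of $P$ and $\refdist{P}$). One of the two runs then sees a gap in the right direction.

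The tester proceeds as follows. Run $\cA$ on $\cD$ (and on the flipped version). If $\cA$ rejects, reject. Otherwise take the output $h$ and estimate $\E_P[h]-\E_{\refdist{P}}[h]$ to accuracy $\Theta(\epsilon)$. This uses $O(1/\epsilon^2)$ fresh samples of $P$, and $\refdist{P}$ is either computed explicitly or sampled. Reject if the estimate exceeds a threshold, and accept otherwise.

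For soundness, suppose $\dist_\cF(P,\refdist{P})>\epsilon$. Either $\cA$ rejects, or soundness of $\cA$ gives $\err(h)\le\opt+\epsilon'$, so the advantage of $h$ is at least $\epsilon-2\epsilon'$. For completeness, the advantage of $h$ is exactly $0$, so the estimate stays below the threshold.

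The main obstacle is parameter bookkeeping. The mixture halves the gap, and $\cA$'s additive error must be smaller than $\epsilon$. So the conversion naturally yields a tester for distance $\epsilon$ from a $c\epsilon$-testable learner, for a constant such as $c=1/4$. To match the statement "$\epsilon$-learner gives $\epsilon$-tester", I would try to tighten the construction: weight the mixture unevenly, or use $\refdist{P}$ itself and a label-dependent construction in which the marginal is $P$ when $y=1$. If that fails, I would accept the constant-factor loss, presumably absorbed by the paper's conventions.

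The failure probabilities of the two learner runs and the estimation step combine by a union bound. Running time is the simulation cost of $\cA$ plus sampling from $\refdist{P}$, which is polynomial in the running time of $\cA$.
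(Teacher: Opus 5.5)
Your proposal is correct and is essentially the paper's proof: the same fair-coin mixture of $P$ and $\refdist{P}$ (the paper gives label $1$ to $\refdist{P}$ instead, which is immaterial), the same identity $\opt(\cF,M)=\frac12(1-\dist_\cF(P,\refdist{P}))$ via closure under complement, and a final check of the learned hypothesis's error on $O(\log(1/\delta)/\epsilon^2)$ fresh mixture samples, which is equivalent to your estimate of $\E_P[h]-\E_{\refdist{P}}[h]$. The constant-factor loss you expect is exactly what the paper accepts in its refined version (\cref{thm:reduction}), which starts from an $(\epsilon/5,\delta/2)$-testable learner.
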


Notably, this reduction yields \emph{computationally efficient} $\cF$-identity testers for all the settings
in which we have testable learners. In comparison, our VC and Rademacher-based sample complexity
upper bounds described in \cref{sec:intro-sample-complexity} are information-theoretic and do not
yield computationally efficient algorithms.

The idea behind the proof of \cref{thm:intro-testable-learning-implies-identity-testing} is to
create a mixture joint distribution $M$ which with probability $1/2$ draws a sample from the input
marginal $P$ labeled $1$, and with probability $1/2$ draws a sample from the reference
marginal $\refdist{P}$ labeled $0$, and feed samples from $M$ as input to the testable learner $\cA$. The key observation is that
\begin{align*}
    \opt(\cF,M)=\frac{1}{2}-\frac{1}{2}\dist_{\cF}(P,\refdist{P}),
\end{align*}
since the best $f\in\cF$ for separating the two halves of $M$ is precisely the function maximizing the fooling distance between the distributions.
If
$\cA$ rejects, this means that $P \ne \refdist{P}$ and the tester may reject. Otherwise, $\cA$ produces some hypothesis $h$. Then, since
%one can verify that
$\dist_\cF(P, \refdist{P})$ is large if and only if the optimal classification error $\opt(\cF, M)$ is
bounded away from $1/2$, the tester may accept or reject depending on the classification error of the
hypothesis $h$.

\paragraph*{$\cF$-identity testing implies testable query learning.}
As a partial converse to the result above, we show that the existence of an
identity tester to $\refdist{P}$
against $\cF$, along with the existence of an efficient distribution-specific agnostic learner for
$\cF$ under marginal $\cD_\cX = \refdist{P}$, implies the existence of a testable \emph{query} learner for
$\cF$ with respect to marginal $\cD_\cX = \refdist{P}$. %Intuitively, the reason for requiring queries in this
%reduction is that, by using queries, the learning algorithm may draw its own labeled samples of the
The reason our reduction uses queries is so that the learning algorithm can draw its own labeled samples of the
form $(x \sim \refdist{P}, y = f(x))$. Hence, the task of producing a good hypothesis for the input
distribution $(x \sim P, y = f(x))$ reduces to a question of \emph{domain adaptation} \cite{mansour2009domain,chandrasekaran2024efficient}:
does a hypothesis obtained by training on $\refdist{P}$ generalize to the marginal $P$?

Concretely, our result is the following:

\begin{theorem}[$\cF$-identity testing implies testable query learning; informal]
    \label{thm:intro-testable-query-learning-from-identity-testing}
    Suppose there is an identity tester $\cT$ to reference distribution $\refdist{P}$ against family $\cF$,
    and an agnostic query learner $\cL$ for family $\cF$ under marginal $\cD_\cX = \refdist{P}$. Then there
    exists a testable query learner $\cA$ for $\cF$ with respect to marginal $\cD_\cX = \refdist{P}$ using the
    combined sample and time complexities of $\cT$ and $\cL$. Moreover, $\cA$ is proper if $\cL$ is.
\end{theorem}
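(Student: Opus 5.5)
The testable query learner $\cA$ runs the identity tester $\cT$ on the unlabeled marginal of the input and then trains $\cL$ on samples from $\refdist{P}$, labeled using the query oracle. The input is $(x\sim P, y=f(x))$ with query access to $f$. Concretely:
\begin{enumerate}
    \item Draw unlabeled samples $x \sim P$ and run $\cT$ (say with accuracy $\epsilon/4$) to test whether $P=\refdist{P}$ against $\cF$. If $\cT$ rejects, $\cA$ rejects.
    \item Otherwise, run $\cL$ with accuracy $\epsilon/4$ on the joint distribution $\cD' = (x\sim\refdist{P}, y=f(x))$. We can sample $\cD'$ because $\refdist{P}$ is explicit, and we answer $\cL$'s queries with our own query oracle. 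Let $h$ be the output.
    \item Optionally, estimate $\err(h,\cD)$ with fresh labeled samples from $P$ as a sanity check. This step should not be needed for the guarantee.
\end{enumerate}

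\textbf{Completeness.} If $P=\refdist{P}$, then $\cT$ accepts with probability $1-\delta$, and $\cL$ produces some hypothesis, so $\cA$ does not reject.

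\textbf{Soundness.} If $\cA$ outputs $h$, then except with probability $\delta$ we have $\dist_\cF(P,\refdist{P})\le\epsilon/4$. In the realizable-by-query setting $f$ itself may lie outside $\cF$, so $\opt(\cF,\cD)$ may be positive. We need to compare errors on $P$ and $\refdist{P}$, which calls for controlling
\[
|\Pr_P[g\neq f]-\Pr_{\refdist{P}}[g\ne f]|
\]
for $g\in\cF$ and for $g=h$. This is the domain-adaptation step, and it is the main obstacle: the disagreement indicator $\ind[g\neq f]$ is generally not in $\cF$, so small fooling distance does not directly transfer the errors.

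My first attempt is to run $\cT$ against the family $\cF\oplus f=\{x\mapsto g(x)\oplus f(x): g\in\cF\}$, transformed using queries. For a fixed $f$, testing $P$ against $\cF\oplus f$ is equivalent to testing against $\cF$ after the labeling is folded in, so the informal statement presumably intends the identity tester to be applied to the class of disagreement functions. Alternatively, the paper may assume $\cT$ is a tester against $\cF\Delta\cF = \{\ind[g\ne g']\}$. Granting that both $\ind[h\ne f]$ and $\ind[g^\*\ne f]$ are fooled, where $g^\*$ is near-optimal on $\cD$, we get
\[
\err(h,\cD)\le \err(h,\cD')+\tfrac{\epsilon}{4}\le \opt(\cF,\cD')+\tfrac{\epsilon}{2}\le \err(g^\*,\cD')+\tfrac{\epsilon}{2}\le \opt(\cF,\cD)+\epsilon .
\]
Because $h$ is arbitrary when $\cL$ is improper, it must also be fooled. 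I would handle this with the empirical check in step 3: reject if $\widehat{\err}(h,\cD)$ exceeds $\widehat{\err}(h,\cD')+\epsilon/4$, using $O(\log(1/\delta)/\epsilon^2)$ samples. The check uses queries to label fresh $\refdist{P}$ samples, and it never fires in the complete case. With the check in place, only $g^\*$ needs fooling. The $g^\*$ direction is the delicate part I would resolve by choosing the tested class accordingly.

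\textbf{Properness and complexity.} Finally, union-bound the failure probabilities. Properness is inherited since $h$ is output unchanged. The complexities add.
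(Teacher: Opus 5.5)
\textbf{There is a genuine gap.} Your skeleton is the same as the paper's \cref{lemma:testable-query-learning-from-implicit}: run the tester, train $\cL$ on $\refdist{P}^f$ by sampling $\refdist{P}$ and querying $f$, then check that $\err(h,P_f)$ and $\err(h,\refdist{P}^f)$ are close. Your chain of inequalities is also the right one. But that chain needs $\E_{\refdist{P}}[f\oplus g^*]\le \E_P[f\oplus g^*]+O(\epsilon)$ for an unknown near-optimal $g^*$, i.e.\ fooling against $f\oplus\cF$. The hypothesis only gives you a tester against $\cF$. You defer this as ``the delicate part I would resolve by choosing the tested class accordingly,'' and neither fix you float works:
\begin{itemize}
\item Assuming a tester against $f\oplus\cF$ changes the hypothesis. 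A tester against $\{\ind[g\neq g']\}$ does not help either, since it does not contain $\ind[g\neq f]$ when $f\notin\cF$.
\item The claim that testing against $f\oplus\cF$ is ``equivalent to testing against $\cF$ after the labeling is folded in'' is unproved. It is precisely the nontrivial content of the theorem.
\end{itemize}
Running $\cT$ on $P$ itself genuinely fails. Take $\cX=\bits^3$, $\cF=\{x_1,1-x_1,x_2,1-x_2\}$, and $f(x)=x_3$. Under $\refdist{P}$, let $x_1,x_2$ be independent uniform bits and $x_3=x_1$ with probability $0.7$. Under $P$, let $x_2$ be uniform, $x_3=x_2$ with probability $0.95$, and $x_1=x_3$ with probability $0.7$. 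Every coordinate is uniform under both, so $\dist_\cF(P,\refdist{P})=0$ and a valid $\cF$-tester (e.g.\ the VC tester) accepts. For small $\epsilon$ a proper $\cL$ must return $h=x_1$, which has error $0.3$ under both marginals, so your check passes. Yet $\opt(\cF,P_f)=0.05$.

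\textbf{The missing idea} is \cref{lemma:implicit-tester-reduction}: a black-box reduction from testing against $f\oplus\cF$ to testing against $\cF$ that feeds $\cT$ \emph{synthetic} distributions rather than $P$.
\begin{itemize}
\item Using queries, split samples of $P$ and $\refdist{P}$ by the value of $f$.
\item Reject if the empirical estimates of $\mu_P=\E_P[f]$ and $\mu_{\refdist{P}}=\E_{\refdist{P}}[f]$ differ by more than $\epsilon/2$.
\item For each $b\in\bits$, run $\cT$ on $P^{\mu_{\refdist{P}}}_b$. This distribution draws $b'\sim\bern(\mu_{\refdist{P}})$ and outputs $x\sim P\mid f=b'$ if $b'=b$, and $x\sim\refdist{P}\mid f=b'$ otherwise.
\end{itemize}
Writing $f\oplus g=f\bar g+\bar f g$, a gap larger than $4\epsilon$ in $\E[f\oplus g]$ forces a gap larger than $2\epsilon$ in one term, say $\E[f\bar g]$. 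Now $P^{\mu_{\refdist{P}}}_1$ coincides with $\refdist{P}$ on $\{f=0\}$, and its $\{f=1\}$ part is within $\abs{\mu_P-\mu_{\refdist{P}}}\le\epsilon$ of $P$'s. Hence $\bar g\in\cF$ witnesses $\dist_\cF(P^{\mu_{\refdist{P}}}_1,\refdist{P})>\epsilon$; closure under complement is used here.

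Two further details are needed. When $\mu_{\refdist{P}}$ is near $0$ or $1$, run $\cT$ on $P$ directly. You must also ensure there are enough conditional samples to simulate $\cT$. The result is a $(7\epsilon,4\delta)$-tester against $f\oplus\cF$, which then plugs into your skeleton. In the example above, $\E_{P^{1/2}_1}[x_2]=0.725$ versus $0.5$ under $\refdist{P}$, so the reduction catches it.
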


As suggested, the proof relies on a domain adaptation guarantee: if we could use the
$\cF$-identity tester $\cT$ to ensure that a near-optimal hypothesis learned by a distribution-specific
agnostic learner on the reference marginal $\refdist{P}$ will perform similarly to a near-optimal hypothesis
for the true marginal $P$, then we could simulate the agnostic learner on marginal $\refdist{P}$ by drawing
unlabeled samples from $\refdist{P}$ and issuing queries to $f$, and return the learned hypothesis. On
the other hand, if the $\cF$-identity tester $\cT$ rejects, then we know that $P \ne \refdist{P}$ and the testable
learning model allows us to reject.

The key observation is that the domain adaptation guarantee above holds as long as $P$ and $\refdist{P}$ are close
under fooling distance with respect to the \emph{related} family $f \oplus \cF \define \{f \oplus g
: g \in \cF\}$, where $\oplus$ denotes the XOR operation. This is true because the classification
error of a hypothesis $h$ under marginal $D$ is precisely $\err_D(h, f) = \E_D[f \oplus h]$, so when
$\dist_{f \oplus \cF}(P, \refdist{P})$ is small, the classification error of each hypothesis $h \in \cF$ is
similar under marginals $P$ and $\refdist{P}$. To complete the picture, our key lemma
(\cref{lemma:implicit-tester-reduction})
shows that, in the query
setting, there is a black-box reduction from identity testing against the class $f \oplus \cF$ to
identity testing against the class $\cF$ itself; the reduction proceeds by a careful analysis of
certain synthetic distributions obtained by mixing $P$ and $\refdist{P}$ according to the label values
assigned by $f$.

\paragraph*{Connection to TDS learning.} A related insight on the role of fooling distance metrics for
domain adaptation appears in recent work on \emph{testable learning with distribution
shift} (TDS learning) \cite{KlivansSV24,chandrasekaran2024efficient}, where the goal is to learn a hypothesis using labeled samples from
a joint distribution $\cD_{\text{train}}$, and use unlabeled samples from another marginal
distribution $\cD_{\text{test}}$ to either \emph{reject} (when the marginals differ ``too much''), or
produce a hypothesis $h$ that performs well against the test distribution (up to an
information-theoretic barrier that depends on the train and test distributions). In \cite{chandrasekaran2024efficient}, the
authors defined the notion of \emph{localized discrepancy} between two distributions. That
definition is fairly general but, in a natural specialization that appears in
their applications,
it precisely matches the fooling distance $\dist_{f \oplus \cF}$. Building on the
``analytic'' theme discussed above, the authors showed that, by estimating the Chow parameters of a
learned hypothesis $h$ under the marginals $P$ and $\refdist{P}$, one can test whether $\dist_{h \oplus
\cF}(P, \refdist{P})$ is small. This suffices for TDS learning, which is conceptually related to, but a
different problem from testable learning. In contrast, here we give a purely combinatorial result
showing that, given a tester against $\dist_\cF$, we can obtain a tester against $\dist_{f \oplus
\cF}$ using queries to $f$, which in turn suffices for testable query learning.

\paragraph*{Applications to testable proper learning.}
Prior work on testable learning relied on polynomial regression \cite{kalai2008agnostically} to produce a
hypothesis $h$, which means that in general we have $h \not\in \cF$ --- the learning algorithm is
\emph{improper}. Since our upper bound for testable query learning in
\cref{thm:intro-testable-query-learning-from-identity-testing} uses any agnostic learning algorithm as a
black box, we may use recent advances on agnostic \emph{proper} learning (using samples or queries)
to obtain testable proper query learners. Moreover, thanks to the reduction in the \emph{opposite}
direction from \cref{thm:intro-testable-learning-implies-identity-testing}, we may fulfill the
requirement that an $\cF$-identity tester exist by leveraging an existing \emph{improper} testable learner.
Combining these observations allows us to prove results in the following spirit:
\begin{quote}
    \centering
    \emph{In testable learning, queries help to afford properness.}
\end{quote}

We give two applications of this principle. First, using recent advances on agnostic proper learning
of halfspaces \cite{DiakonikolasKKT24} under the Gaussian distribution, together with
testable learning results for halfspaces with respect to the Gaussian distribution
\cite{GollakotaKK23}, we obtain a testable proper query learner for halfspaces with respect to the
Gaussian distribution over $\bbR^n$ with running time $n^{\tilde O(1/\epsilon^2)} +
2^{\poly(1/\epsilon)}$, matching the dependence on $n$ of standard testable learning.

Second, using a result of \cite{BlancLQT22} on agnostic proper learning of size-$s$ decision trees
with respect to the uniform distribution over $\bits^n$ using membership queries, along with known
results for testable learning of size-$s$ decision trees (which may be inferred from
\cite{GollakotaKK23} and \cite{KlivansSV24}, or as directly stated in
\cite{GoelSSV24}), we
obtain a testable proper query learner for size-$s$ decision trees with respect to the uniform
distribution over $\bits^n$ with running time
$(n/\epsilon)^{O(\log(s/\epsilon))}$, essentially matching the known bound
for standard testable learning.

\paragraph*{Testable semiagnostic proper learning using samples.}
It is natural to ask whether query access, which is a somewhat onerous condition, is a hard
requirement for the proper learning results stated above. We observe that, in the sample-based
setting, we may still use the insights above to obtain testable \emph{semiagnostic} proper learners.
Specifically, we show that bounds such as
those described above may still be attained using labeled
samples for the task of testable $3$-semiagnostic proper learning, i.e., with error guarantee of the
form $3 \cdot \opt +\ \epsilon$. The proof essentially proceeds by first learning a hypothesis $h$ using a standard
(improper) testable learner, and then applying the query-based results using query access to $h$; the factor of $3$ comes from the triangle inequality.

\paragraph*{Recent work on the limitation of queries in testable learning.} In independent
concurrent work, \cite{lange2026limitations} showed that testable learning with membership queries
cannot have smaller time complexity than sample-based distribution-specific agnostic learning. That
is, the time complexity of agnostic learning places a barrier to how much queries may speed up
testable learning. In contrast, our results above show that queries can help in a different way:
rather than beating the time complexity of sample-based testable learning, we may obtain a proper
testable learner from an improper one, as long as we also have access to a
distribution-specific agnostic proper learner.

\subsection{PAC verification}\label{sec:pac-verification}

In the PAC verification model \cite{GoldwasserRSY21,MutrejaS23,GurJKRSS24}, a computationally weak verifier $V$ wishes to agnostically learn a family $\cF$ with assistance from
powerful but untrusted prover $P$. The goal is to design a protocol
in which the verifier $V$ uses fewer labeled samples than would be required for it to agnostically
learn $\cF$ by itself (i.e., without the prover's help). This model may be defined in the
distribution-specific setting, where the underlying marginal distribution is assumed to be a fixed
distribution $\cD_\cX$ (or to come from a fixed class $\mathbb{D}$ of distributions), or in the
distribution-free sense where the marginal distribution is allowed to be arbitrary.
In analogy with agnostic learning, there is also a natural \emph{testable} setting that sits
between distribution-free and distribution-specific, following
\cite{rubinfeld2022testingdistributionalassumptionslearning}.
We now present the formal definition of PAC verification, suitably augmented with its
testable setting: 

\begin{definition}[PAC Verification of Hypothesis Class]\label{def:ver}
    Let $\cX$ be a set, let $\mathbb{D}_\comp, \mathbb{D}_\sound
    \subseteq\Delta(\cX\times\{0,1\})$ be classes of distributions and let
    $\cF\subseteq\{0,1\}^{\cX}$ be a class of hypotheses. We say $\cF$ is PAC verifiable with
    respect to $(\mathbb{D}_\comp, \mathbb{D}_\sound)$ if there exists an interactive proof
    system consisting of a verifier $V$ and an honest prover $P$ and parameters
    $m_V,m_P\in\N$ such that the following conditions are satisfied:
\begin{itemize}
    \item (Completeness) For all $\cD \in \mathbb{D}_\comp$, the following holds. Let $S_V
        \sim \cD^{m_V}$ be the sample drawn by $V$, let $S_P \sim \cD^{m_P}$ be the sample drawn by $P$,
        and let the random variable
        \[
            h_V=[V(S_V,\eps,\delta),P(S_P,\eps,\delta)]\in\cF\cup\{\text{reject}\}
        \]
        denote the output of $V$ after receiving input $(S_V,\eps,\delta)$ and interacting with $P$,
        which received input $(S_P,\eps,\delta)$. Then, with probability at least $1-\delta$ over
        $S_V$, $S_P$, and the internal randomness of $V$ and $P$,
        \[
            (h_V\neq \text{reject})\land \left(\err(h_V,\cD)\leq \opt(\cF,\cD)+\eps\right).
        \]

    \item \sloppy (Soundness) For all $\cD \in \mathbb{D}_\sound$, and for any (possibly malicious and
        computationally unbounded) prover $P'$ (which may depend on $\cD,\eps,\delta$), the verifier
        outputs $h_V=[V(S_V,\eps,\delta),P']$ such that, with probability at least $1-\delta$ over
        $S_V \sim \cD^{m_V}$ and the internal randomness of $V$ and $P'$,
        \[
            (h_V= \text{reject})\lor \left(\err(h_V,\cD)\leq \opt(\cF,\cD)+\eps\right).
        \]
\end{itemize} 
    When $\mathbb{D}_\comp = \mathbb{D}_\sound = \Delta(\cX\times \{0,1\})$, we refer to
    this setting as distribution-free verification. When $\mathbb{D}_\comp = \mathbb{D}_\sound =
    \{\cD \in \Delta(\cX \times \bits) : \cD \text{ has marginal $\cD_\cX$ on $\cX$}\}$
    for some fixed marginal $\cD_\cX$, we
    refer to this setting as distribution-specific verification. When $\mathbb{D}_\comp = \{\cD \in
    \Delta(\cX \times \bits) : \cD \text{ has marginal $\cD_\cX$ on $\cX$}\}$
    for some fixed marginal $\cD_\cX$, but
    $\mathbb{D}_\sound = \Delta(\cX \times \bits)$, we refer to this setting as \emph{testable}
    verification (with respect to marginal $\cD_\cX$).
\end{definition}

In this paper, we actually consider a simplification of \cref{def:ver} whereby the
honest prover receives an explicit description of the input distribution, rather than just samples
from it. This simplification is benign for us because we only focus on verifier sample
complexity, and do not attempt to optimize prover or communication cost in our general reductions.
We suspect that our concrete applications should enjoy doubly-efficient protocols. (See
\cref{rem:verification-efficiency}.)

We show that PAC verification is closely related to $\cF$-identity testing via a mild equivalence between the
two problems. Specifically, the equivalence hinges on identity testing against the ``lifted class''
$\tilde{\cF}$ over the augmented domain $\cX \times \bits$, given by
$\tilde{\cF}\;:=\;\bigl\{\;\tilde{f}(x,y):=\Ind[f(x)\neq y] :\ f\in \cF\;\bigr\}$. The
intuition behind this definition is that it captures error rates: for every joint probability
distribution $\cD$ over $\cX\times \bits$ and function $f \in \cF$, we have $\E_{(x,y) \sim
\cD}[\tilde{f}(x,y)]=\err(f,\cD)$.

\paragraph*{$\cF$-identity testing implies distribution-free verification.} We first show that
algorithms for $\tilde{\cF}$-identity testing imply distribution-free verification protocols with
sample-efficient verifiers\footnote{We focus on understanding verifier sample complexity, and do not
attempt to construct protocols with efficient communication. See
\cref{rem:verification-efficiency}.}\!\!. The prover receives an explicit description 
and the verifier gets sample access to the input distribution $\cD\in\Delta(\cX\times\bits)$.
The protocol proceeds by the honest prover computing and
sending an approximately optimal hypothesis $h^*\in \cF$ along with an explicit distribution
$\cD'=\cD$ to the verifier. The verifier, using sample access to the distribution $\cD$,
can check if $\cD'$ and $\cD$ are close in fooling distance with respect to the class of
distinguishers $\tilde{\cF}$, where closeness would imply that the error of all functions in the
function class $\cF$ are close with respect to distributions $\cD'$ and $\cD$. If they are far
away, the verifier rejects, otherwise the verifier checks if $h^*$ is indeed the optimal hypothesis
with respect to $\cD'$ (without suffering the uniform convergence bound). In case the prover is
honest, the verifier successfully outputs $h^*$, and if the prover is dishonest, the verifier either
outputs ``reject'' or outputs a hypothesis which is near-optimal, as required. 

Moreover, a ``lifting'' lemma shows that $\tilde{\cF}$-identity testing in fact
reduces to the more intuitive problem of $\cF$-identity testing (by carefully working with conditional
probabilities to relate $\dist_{\tilde{\cF}}$ to $\dist_\cF$). Therefore, we conclude that
$\cF$-identity testing yields distribution-free verification:

\begin{theorem}[Informal]
    \label{thm:informal-identity-testing-implies-verification}
    Suppose there is an algorithm $\cT$ that, given as input the explicit description of any
    probability distribution $\refdist{P}$ over $\cX$, performs identity testing to $\refdist{P}$ against $\cF$. Then
    there exists an interactive proof system in the distribution-free setting which verifies $\cF$ in which the verifier $V$ has the
    same sample complexity as $\cT$.
\end{theorem}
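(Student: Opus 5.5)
The argument has two parts: a lifting lemma that turns an $\cF$-identity tester into a $\tilde{\cF}$-identity tester, and a verification protocol built on the $\tilde{\cF}$-tester.

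\emph{Lifting.} Given an explicit reference $\refdist{D}$ on $\cX\times\bits$, write $p_b=\Pr_{\refdist{D}}[y=b]$ and let $\refdist{D}_b$ denote the conditional marginal on $\cX$ given $y=b$. For the unknown $\cD$, define $q_b$ and $\cD_b$ in the same way. For $f\in\cF$ we have $\tilde f(x,y)=f(x)$ if $y=0$ and $\tilde f(x,y)=1-f(x)$ if $y=1$. Hence
\[
\E_{\cD}[\tilde f]=q_0\E_{\cD_0}[f]+q_1\bigl(1-\E_{\cD_1}[f]\bigr).
\]
The triangle inequality then gives
\[
\dist_{\tilde\cF}(\cD,\refdist{D})\le 2|q_0-p_0|+p_0\dist_\cF(\cD_0,\refdist{D}_0)+p_1\dist_\cF(\cD_1,\refdist{D}_1).
\]
The lifted tester works as follows.
\begin{itemize}
\item It estimates $q_0$ to accuracy $\epsilon/8$ using $O(\log(1/\delta)/\epsilon^2)$ samples, and rejects if the estimate is off.
\item For each $b$ with $p_b\ge\epsilon/8$, it runs $\cT$ on the explicit distribution $\refdist{D}_b$ with accuracy about $\epsilon/(4p_b)$. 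The conditional samples come from rejection on the label. Since $q_b\approx p_b$, this costs a factor $O(1/p_b)$ samples per conditional sample, and the lower accuracy requirement balances this.
\item Labels $b$ with $p_b<\epsilon/8$ contribute at most about $\epsilon/4$ and can be ignored.
\end{itemize}
Completeness is immediate, and soundness follows from the displayed inequality. The main obstacle I expect is this bookkeeping: making sure the $1/p_b$ rejection-sampling overhead is compensated by the looser accuracy $\epsilon/p_b$, so that the total is (up to constants and $\log(1/\delta)$ factors) the sample complexity of $\cT$ at accuracy $\Theta(\epsilon)$. That requires a mild monotonicity assumption, namely that sample complexity scales at least like $1/\epsilon$, which the informal statement can absorb.

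\emph{Protocol.} The honest prover knows $\cD$. It sends the explicit $\cD'=\cD$ and a hypothesis $h^*\in\cF$ minimizing $\err(\cdot,\cD')$ (or within $\epsilon/4$ of the minimum). The verifier proceeds in two steps.
\begin{itemize}
\item It runs the lifted $\tilde\cF$-identity tester with reference $\cD'$ and accuracy $\epsilon/4$ on its own samples from $\cD$, and rejects if the test rejects.
\item Otherwise it computes $\opt(\cF,\cD')$ and $\err(h^*,\cD')$ directly from the explicit description. Only computation is spent here, no samples. It rejects if $\err(h^*,\cD')>\opt(\cF,\cD')+\epsilon/4$, and otherwise outputs $h^*$.
\end{itemize}

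\emph{Correctness.} For completeness, $\cD'=\cD$, so the tester accepts and $h^*$ is $\epsilon/4$-optimal for $\cD$. For soundness, suppose the verifier outputs $h^*$ and the tester was correct. Then $\dist_{\tilde\cF}(\cD,\cD')\le\epsilon/4$, so $|\err(f,\cD)-\err(f,\cD')|\le\epsilon/4$ for every $f\in\cF$. This gives
\[
\err(h^*,\cD)\le\err(h^*,\cD')+\tfrac{\epsilon}{4}\le\opt(\cF,\cD')+\tfrac{\epsilon}{2}\le\opt(\cF,\cD)+\tfrac{3\epsilon}{4}.
\]
Both failure events have probability at most $\delta$ after splitting the confidence. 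The verifier's sample complexity is therefore that of the lifted tester, i.e. that of $\cT$ up to the constant and additive $O(\log(1/\delta)/\epsilon^2)$ terms.
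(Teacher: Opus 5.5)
Your argument works under the extra assumption you flag. The protocol and its soundness chain are the same as the paper's (\cref{alg:protocol}, \cref{thm:IT-to-IP}); the real difference is in the lifting step.

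Your decomposition is the same inequality as \cref{lemma:synthetic}. The paper's synthetic distribution $R^b$ draws $y$ from the reference label marginal, then $x\sim(\refdist{\cD})_b$ if $y=b$ and $x\sim\cD_{1-b}$ otherwise. This gives $\dist_f((\refdist{\cD})_\cX,R^b)=p_{1-b}\abs{\E_{\cD_{1-b}}[f]-\E_{(\refdist{\cD})_{1-b}}[f]}$, which is exactly your $p_{1-b}$-weighted conditional term.

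What differs is how the weight $p_b$ is absorbed. You test $\cD_b$ directly at the rescaled accuracy $\epsilon/(4p_b)$ via rejection sampling. This requires running $\cT$ at several accuracies and assuming $\eta\, m_\cT(\eta)$ is nonincreasing. Without that assumption you lose up to a factor $1/\epsilon$, for instance with a tester whose sample count does not shrink as the accuracy loosens.

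The paper instead runs $\cT$ once per $b$ at the fixed accuracy $\epsilon$, with explicit reference $(\refdist{\cD})_\cX$ and input $R^b$. The mixture scales the distance by $p_{1-b}$ automatically. Each draw from $R^b$ consumes an unknown sample only with probability $p_{1-b}$, so binning $2m_\cF+O(\log(1/\delta)/\epsilon^2)$ samples by label suffices (\cref{cor:F-testing-to-F-IP}). This needs no assumption on $\cT$ and has no rejection-sampling overhead.

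The same mixing idea, applied inside your own scheme, removes your monotonicity assumption. Feed $\cT$ the input $p_b\cD_b+(1-p_b)(\refdist{\cD})_b$ at accuracy $\epsilon/4$. It then needs only about $p_b\, m_\cT(\epsilon/4)$ conditional samples, which offsets the $1/p_b$ rejection cost.

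One small repair to your version. After the label check you only know $\abs{q_b-p_b}=O(\epsilon)$, while $p_b$ may be as small as $\epsilon/8$. So $q_b$ can be essentially $0$, and ``$q_b\approx p_b$'' gives no multiplicative control on the sampler. Cap the rejection sampler and reject on overflow. This is harmless for completeness, where $q_b=p_b$, and only helps soundness.
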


A quick corollary of this result is that, by leveraging the identity tester of \cite{DKN14} against the
class $\cA_k$ of unions of $k$ intervals on domain $[0,1]$, we obtain a distribution-free
verification protocol for
unions of $k$ intervals with the optimal verifier sample complexity $O(\sqrt{k}/\epsilon^2)$,
slightly improving upon the bound of $O(\sqrt{k}/\epsilon^{2.5})$ shown by \cite{MutrejaS23}. Moving beyond the
one-dimensional regime, we can also leverage recent results of \cite{diakonikolas2023testingclosenessmultivariatedistributions} for equivalence testing
against the \emph{multidimensional} class $\cA_k$ --- which corresponds to unions of $k$ disjoint
rectangles in $[0,1]^n$ --- to obtain a distribution-free verification protocol for disjoint unions of
multidimensional rectangles with sample-efficient verifier (i.e., smaller sample complexity than
required for learning) for every constant dimension $n$:

\begin{restatable}{theorem}{thmrect}
\label{thm:rectangles}
    There exists a distribution-free verification protocol for the family of unions of $k$ disjoint rectangles in
    $[0,1]^n$ with verifier sample complexity $k^{6/7} \log^n(k) \poly_n(1/\epsilon)$, where the
    dependence on $\epsilon$ is a large polynomial that depends only on $n$.
\end{restatable}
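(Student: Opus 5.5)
We obtain \cref{thm:rectangles} by instantiating \cref{thm:informal-identity-testing-implies-verification} with the multidimensional $\cA_k$ tester of \cite{diakonikolas2023testingclosenessmultivariatedistributions}. Let $\cF$ be the family of indicators of unions of $k$ disjoint axis-aligned rectangles in $[0,1]^n$. By \cref{thm:informal-identity-testing-implies-verification}, it suffices to have an algorithm $\cT$ with the following property: given an explicit description of an arbitrary distribution $\refdist{P}$ on $[0,1]^n$, and sample access to $P$, $\cT$ distinguishes $P=\refdist{P}$ from $\dist_\cF(P,\refdist{P})>\epsilon$ using $k^{6/7}\log^n(k)\poly_n(1/\epsilon)$ samples. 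The verifier then inherits this sample complexity.

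\textbf{Step 1 (identity testing from closeness testing).} The result of \cite{diakonikolas2023testingclosenessmultivariatedistributions} is a closeness tester: it distinguishes $P=Q$ from $\dist_{\cA_k}(P,Q)>\epsilon$ using $k^{6/7}\log^n(k)\poly_n(1/\epsilon)$ samples from each of $P$ and $Q$. Here $\cA_k$ is the class of unions of $k$ disjoint rectangles. Identity testing reduces to this directly. Since $\refdist{P}$ is given explicitly, the tester can generate its own i.i.d.\ samples from $\refdist{P}$ and feed them in as the samples from $Q$. This costs no additional samples from the unknown $P$ (we are not counting computation). Completeness and soundness carry over unchanged. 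One point to check is that $\dist_{\cA_k}$ in their paper has the same meaning as our fooling distance $\dist_\cF$, namely a supremum over sets $A\in\cA_k$ of $\abs{P(A)-Q(A)}$. If their definition is instead a sum of discrepancies over the $k$ pieces, that quantity upper bounds ours, so their tester is still sound for our weaker requirement.

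\textbf{Step 2 (lifting to the verification setting).} The protocol behind \cref{thm:informal-identity-testing-implies-verification} runs the tester against the lifted class $\tilde\cF$ on $\cX\times\bits$. The lifting lemma reduces $\tilde\cF$-testing to $\cF$-testing by conditioning on the label $y$. Under that reduction, the distinguishers that arise are $f$ restricted to the $y=0$ slice and $1-f$ on the $y=1$ slice. The complement of a union of $k$ disjoint rectangles is not itself such a union. So the main obstacle is to verify that the lifting lemma only ever needs $\cF$-testing against the conditional marginals, with $\refdist{P}$ equal to the claimed conditional distributions. In that case complements enter only through the identity $\E[1-f]=1-\E[f]$, and testing against $\cF$ also controls its complements.

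\textbf{Step 3 (bookkeeping).} If the reduction needs the tester run on each label-conditional separately, we must also account for the label-conditional sample sizes. One label may be rare. I would handle this as follows: a label of probability below $\epsilon$ contributes at most $\epsilon$ to the fooling distance, so it can be ignored. The remaining error is absorbed by rescaling $\epsilon$ and $\delta$ by constants. That rescaling changes the bound only by $\poly_n(1/\epsilon)$ and constant factors, which gives the stated verifier sample complexity.
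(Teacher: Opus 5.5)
Your proposal is correct and follows the paper's proof. The paper uses the closeness tester of \cite{diakonikolas2023testingclosenessmultivariatedistributions} as an identity tester for all reference distributions, by sampling from the explicit reference exactly as in your Step~1, and plugs it into \cref{cor:F-testing-to-F-IP}. Your Steps~2--3 are unnecessary: the paper's lifting tests the reference's full $\cX$-marginal against the hybrid distributions $R^b$ using each $f$ directly (\cref{lemma:synthetic}, so closure under complement is never needed), and it already handles rare labels, at a cost of $2m_\cF + O(\log(1/\delta)/\epsilon^2)$ samples. Your per-label variant would also work, with an extra $1/\epsilon$ factor that is absorbed into $\poly_n(1/\epsilon)$.
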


\paragraph*{Testable verification implies $\tilde{\cF}$-identity testing}
We show a partial converse to the implication above: given a testable verification protocol for family
$\cF$ with respect to a fixed marginal distribution $\cD_\cX$, we may obtain an identity tester
for the distribution $\refdist{\cD} \define \cD_{\cX} \times \Unif(\bits)$ over $\cX \times \bits$ against the
lifted family $\tilde{\cF}$, with essentially the same complexity as the verifier. 

The key idea is to use the verifier as a testing primitive by simulating a prover that has
access only to the reference distribution $\refdist{\cD}$. Recall that
$\dist_{\tilde{\cF}}(\cD, \refdist{\cD}) > \epsilon$ means some $f \in \cF$ achieves
$|\err(f, \cD) - 1/2| > \epsilon$, i.e., the labels in $\cD$ are not close to uniform.
The identity tester runs the verification protocol with a prover instantiated on
$\refdist{\cD}$, while the verifier receives sample access to the unknown $\cD$. If
$\cD = \refdist{\cD}$, labels are uniform and every $f \in \cF$ has error close to
$1/2$; moreover, in this case our prover is honest and the marginal is indeed
$\cD_\cX$, so the verifier accepts a hypothesis with error near $1/2$. If,
instead, $\cD$ is far
from $\refdist{\cD}$ in fooling distance, then $\opt(\cF, \cD) < 1/2 - \epsilon$, and by soundness, either the verifier rejects or any
accepted hypothesis must have error bounded away from $1/2$. The identity tester therefore
accepts or rejects based on whether the output hypothesis has error close to $1/2$.

While this
``lifted'' identity tester may not have a natural interpretation in terms of $\cF$-identity testing, it is
easy to show that the Rademacher complexity of $\tilde{\cF}$ over
$\refdist{\cD}$ is the same as that of
$\cF$ over $\cD_{\cX}$. Combining this with our sample complexity lower bound for $\cF$-identity testing from
\cref{thm:rademacher-lower-bound}, we conclude a lower bound for testable
verification:

\begin{restatable}{theorem}{thmintroverificationlowerbound}
    \label{thm:intro-verification-lower-bound}
    There exists an absolute constant $C > 0$ such that the following holds for
    all sufficiently small constants $\delta \in (0,1)$. Let $\cF$ be a family
    of $\cX \to \bits$ functions, $\cD_\cX$ be a reference marginal
    distribution, and $m \ge C/\epsilon^4$ be a parameter such that $\cR_m(\cF,
    \cD_\cX) > 4\epsilon$. Then every testable verification protocol for $\cF$ to error
    $\epsilon$ and confidence $1-\delta$ with respect to marginal $\cD_\cX$ has
    verifier sample complexity $\Omega(\sqrt{m})$.
\end{restatable}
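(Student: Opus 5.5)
The argument is a reduction: a testable verification protocol yields an identity tester for the lifted class, and \cref{thm:rademacher-lower-bound} then forces the sample bound. I would proceed in three steps: (i) turn a testable verification protocol for $\cF$ with respect to $\cD_\cX$ into an identity tester to $\refdist{\cD} \define \cD_\cX \times \Unif(\bits)$ against $\tilde{\cF}$; (ii) show $\cR_m(\tilde{\cF}, \refdist{\cD}) = \cR_m(\cF, \cD_\cX)$; (iii) apply \cref{thm:rademacher-lower-bound} to $\tilde{\cF}$ and $\refdist{\cD}$.

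\textbf{Step (i).} Run the protocol with target error $\epsilon' = c\epsilon$ for a small constant $c$. The verifier gets sample access to the unknown $\cD$. We simulate the honest prover on $\refdist{\cD}$, which we know explicitly.
\begin{itemize}
\item If the verifier rejects, the tester rejects.
\item Otherwise the verifier outputs some $h$. We estimate $\err(h,\cD)$ with $O(1/\epsilon^2)$ fresh samples and accept iff the estimate is at least $1/2 - \epsilon/2$.
\end{itemize}
Completeness: if $\cD = \refdist{\cD}$, the marginal is $\cD_\cX$ and the prover is honest, so with probability $1-\delta$ the verifier outputs $h$ with $\err(h,\cD) \le 1/2 + \epsilon'$. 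Since labels are uniform and independent of $x$, every hypothesis has error exactly $1/2$, so the test accepts.

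Soundness: if $\dist_{\tilde{\cF}}(\cD, \refdist{\cD}) > \epsilon$, then since $\E_{\refdist{\cD}}[\tilde f] = 1/2$, some $f$ has $\abs{\err(f,\cD) - 1/2} > \epsilon$. There is a subtlety here: the error might be above $1/2$ rather than below. I would handle it by assuming $\cF$ is closed under complement, or by instead running the protocol on the complemented class $\cF \cup \{1-f\}$. Complementation preserves Rademacher complexity up to a factor $2$ via $\cR(\cF\cup\cF') \le \cR(\cF)+\cR(\cF')$, and the relevant quantity is symmetric anyway. With that in place, $\opt(\cF,\cD) < 1/2 - \epsilon$. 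Soundness of the verifier then gives either rejection or $\err(h,\cD) \le 1/2 - \epsilon + \epsilon' < 1/2 - \epsilon/2$ minus a margin, so the test rejects. The tester's sample complexity is the verifier's plus $O(1/\epsilon^2)$.

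\textbf{Step (ii).} For $(x,y) \sim \refdist{\cD}$ we have $\tilde f(x,y) = f(x) \oplus y$. Write $s(b) = 1-2b$ for the $\pm1$ version of a bit $b$. Then $s(\tilde f(x,y)) = s(f(x))\, s(y)$. Because $y$ is an independent uniform sign, $\sigma_i s(y_i)$ is again a vector of independent Rademacher signs. So the two Rademacher averages coincide in distribution. Under the $\{0,1\}$-valued convention an extra constant term $\E[\sum_i \sigma_i]/m = 0$ appears and vanishes, so the identity holds under either normalization up to the factor-$2$ scaling. Either way $\cR_m(\tilde\cF,\refdist{\cD}) > 4\epsilon'' $ for a suitable rescaled $\epsilon''$.

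\textbf{Step (iii).} Apply \cref{thm:rademacher-lower-bound} to $\tilde{\cF}$ and $\refdist{\cD}$. Its requirement $m \ge C/\epsilon^2$ is implied by the hypothesis $m \ge C/\epsilon^4$. This gives tester sample complexity $\Omega(\sqrt m)$. Hence the verifier needs $\Omega(\sqrt m) - O(1/\epsilon^2)$ samples, which is $\Omega(\sqrt m)$ because $\sqrt m \ge \sqrt C/\epsilon^2$ for $C$ large. This is exactly why the hypothesis asks for $\epsilon^4$.

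The main obstacle is making the constants consistent. The verifier must run at a smaller error $c\epsilon$ while the lower bound is invoked at error $\epsilon$, with the threshold $4\epsilon$ on the Rademacher complexity. The confidence must also fit the $(\epsilon, 0.99)$ regime, which is why $\delta$ must be a sufficiently small constant. I would resolve this by applying \cref{thm:rademacher-lower-bound} with the tester's fooling parameter equal to the original $\epsilon$, and choosing the verifier accuracy $\epsilon/4$ so that its slack fits inside that gap. The one-sidedness issue in soundness needs the closure argument above.
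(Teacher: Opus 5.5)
Your proposal is correct and follows the paper's proof essentially step for step. It uses the tester of \cref{thm:IP-to-IT} (simulate the honest prover on $\refdist{\cD} = \cD_\cX \times \Unif(\bits)$, run the verifier at error $\epsilon/4$, and threshold the empirical error of its output hypothesis), then the identity $\cR_m(\tilde{\cF}, \refdist{\cD}) = \cR_m(\cF, \cD_\cX)$ of \cref{claim:rademacher-equal} via $\tilde{f}^{\pm}(x,y) = (-1)^y f^{\pm}(x)$, and finally \cref{thm:rademacher-lower-bound}, with $m \ge C/\epsilon^4$ absorbing the $O(1/\epsilon^2)$ overhead.

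The only blemishes are in side remarks. The fallback of running the protocol on $\cF \cup (1-\cF)$ is not available, since you are only handed a protocol for $\cF$; it is unnecessary, though, because the paper assumes throughout that $\cF$ is closed under complement. Also, with the paper's $\pm 1$ normalization the Rademacher identity is exact, so no factor-$2$ rescaling is needed.
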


This lower bound complements a similar $\Omega(\sqrt{d}/\epsilon^2)$ lower bound by \cite{MutrejaS23} for
PAC verification in the distribution-free setting, where $d$ is the VC dimension of family $\cF$.

\subsection{Testing structured distributions}
\label{sec:intro-structured}
Given a structured class of distributions $\bbD$, a natural question is whether the TV distance between any two distributions in the class is effectively captured by fooling distance with respect to a function class $\cF$ of ``low complexity.'' If so, $\cF$-identity testing, which is tractable for low complexity $\cF$, would suffice for \emph{TV distance} testing within the class of distributions $\bbD$, thus bypassing the general impossibility of TV testing in high-dimensional and continuous settings. A series of works \cite{DKN14,DiakonikolasKN15,DiakonikolasKN17,DiakonikolasKP19,diakonikolas2023testingclosenessmultivariatedistributions} in the literature under the name of \emph{$\cA_k$-testing} has exploited this connection --- instantiated against the family $\cA_k$ of unions of $k$
disjoint rectangles (or intervals, in one dimension)  --- to obtain efficient identity and
equivalence testers for distributions whose density functions are, for example, piecewise low-degree
polynomials or log-concave, in one-dimension \cite{DKN14}; or piecewise-uniform over unions of disjoint
rectangles, in higher dimensions \cite{diakonikolas2023testingclosenessmultivariatedistributions}. These works focused on obtaining testers with $o(k)$
sample complexity, i.e., fewer samples than required for uniform convergence. We give further results in this spirit, both extending the range of structured classes that admit sample-efficient TV testers, and obtaining computationally efficient testers via our connections to the testable learning framework. We now formally define the problem of testing structured distributions:
\begin{definition}
\label{def:identity-testing-structured-intro}
Let $\bbD \subseteq\Delta(\cX)$ be a (possibly infinite) class
of distributions defined over a (possibly infinite) domain $\cX$. We say algorithm
$\cA$ is an $(\epsilon, \delta)$-identity tester for distribution class
$\bbD$ if,
given an explicit description of any reference distribution $\refdist{P}\in \bbD$
and sample access to a distribution $\tgtdist{P}\in \bbD$, with probability
at least $1-\delta$ the following conditions hold:
\begin{itemize}
    \item \textbf{(Completeness)} If $\refdist{P}=\tgtdist{P}$, then
    $\cA$ accepts.
    \item \textbf{(Soundness)} If $\dist_\tv(\refdist{P},\tgtdist{P})>\eps$,
    then $\cA$ rejects.
\end{itemize}
We omit the parameter $\delta$ when it is not important or can be inferred from context.
\end{definition}

As already mentioned,
algorithms for $\cF$-identity testing may be used to
satisfy the requirements of \cref{def:identity-testing-structured-intro},
specifically when the structural properties of $\bbD$ guarantee that $\dist_\cF(P, \refdist{P}) \approx
\dist_\varvarTV(P, \refdist{P})$ for all $P, \refdist{P} \in \bbD$. 

The key concept is that of \emph{Scheff\'{e}} sets (also known as \emph{Yatracos} sets which were
inspired by \cite{scheffe1947useful}, implicitly defined in \cite{yatracos1985rates}, and formally
defined by \cite{devroye2001combinatorial}): for distributions $P,\refdist{P}$ with densities
$p,\refdist{p}$ respectively, the Scheff\'{e} set is defined by
$S_{P,\refdist{P}}=\{x:p(x)>\refdist{p}(x)\}$. It is easy to see that fooling distance -- with the
class of distinguishers being the indicator of whether a point is present in the Scheff\'{e} set --
equates to the total variation distance. (The Scheff\'{e} set approach to bounding the total
variation distance via a restricted class of distinguishers originates in the density estimation
literature
\cite{DBLP:conf/colt/MahalanabisS08,DBLP:conf/colt/BousquetKM19,DBLP:conf/icml/AamandACINS23}, where it underlies the classical minimum distance estimator and modern refinements.)
Therefore, to perform TV testing within class $\bbD$, it suffices to
include the indicator of the Scheff\'{e} set between every pair of distributions in $\bbD$ in the family $\cF$ of distinguishers. When the
set $\cF$ obtained in this manner has bounded VC dimension, sample complexity
upper bounds immediately follow.
In particular, distributions whose density
functions are low-degree polynomials, as well as  size-$s$ decision
tree distributions, admit sample-efficient identity testers against TV distance
over $[0,1]^n$. (Recall that, without structural guarantees, TV testing over
continuous domains is impossible.) However, the resulting testers are not computationally efficient.

By further leveraging our reduction from $\cF$-identity testing to testable learning from
\cref{thm:intro-testable-learning-implies-identity-testing}, we give \emph{computationally
efficient} uniformity and identity testers for structured classes: 
\begin{enumerate}
    \item For the class of distributions over $\bits^n$ whose probability mass functions are
        degree-$2$ multivariate polynomials, there is a \emph{uniformity} tester (against TV
        distance) with sample and time complexity at most that of testably learning the class of
        degree-$2$ polynomial threshold functions  with respect to the uniform distribution over
        $\bits^n$, which is $n^{\tilde O(1/\epsilon^9)}$ \cite{GollakotaKK23}.
    \item For the class of distributions over $\bits^n$ that are piecewise-uniform over the leaves
        of some size-$s$ decision tree over $\bits^n$ (i.e., decision tree distributions \cite{FeldmanOS08,BlancLMT23}), there is
        an \emph{identity}
        tester (against TV distance) with sample and time complexity at most that of
        testably learning the class of size-$s^2$ decision trees with respect to the uniform
        distribution over $\bits^n$, which is $n^{O(\log(s/\epsilon))}$
        \cite{GollakotaKK23,KlivansSV24,GoelSSV24}.
\end{enumerate}

The first result works by a reduction to testable learning; while we could unpack the
analytic techniques therein to express the tester in terms estimating low-degree moments, we find it
compelling that the reduction works in a combinatorial, black-box way to connect the seemingly very
different models of testable learning and testing of structured distributions. The
second result for \emph{identity} testing performs extra work on top of a testable learning
reduction, and thus carries distribution testing content above and beyond the analytic machinery of
testable learning.

\subsection{Discussion}

In this paper, we show that distribution testing against bounded distinguishers,
operationalized by the fooling distance, is a natural statistical task that
1)~makes feasible an intuitively interpretable version of distribution testing
in high-dimensional and continuous settings; and 2)~connects previously
unrelated areas of property testing and learning theory, uncovering new results
in those areas.

We now mention a few more lines of research related to our work, and then conclude
with open questions for future research.

\paragraph*{Fooling distance in the wild.} Fooling distance, as defined here, is
a natural concept in computer science that appears in many contexts. Formally, it is an
integral probability metric (IPM), a family of metrics with broad applications in statistics
and machine learning (see e.g.,
\cite{sriperumbudur2012empirical,arjovsky2017wasserstein,amit2022integral,kong2023covariate}).
We favor our terminology due to the prominence of the concept of fooling in
computer science, not only, e.g., in testable learning but also in pseudorandomness
and its applications \cite{trevisan2009regularity,vadhan2012pseudorandomness}.
Similar distance metrics also appear in the settings of
multicalibration and multiaccuracy
\cite{hebert2018multicalibration,gopalan2021omnipredictors,dwork2023pseudorandomness,casacuberta2024complexity,casacuberta2025global}, which are concerned with notions
of multi-group fairness measured via a bounded family $\cF$ representing the
subpopulations of concern, on each of which the learned hypothesis $h$ should
be a good approximator.

In fact, multicalibration proved to be a key technical ingredient in recent work of
\cite{marcussen2025characterizing}, who studied the computational aspects of the problem of
distinguishing from which of two distributions, say $P$ or $Q$, a sequence of $m$ independent
samples was drawn. Information-theoretically, the best achievable advantage is $\dist_\tv(P^m,
Q^m)$, which is controlled by the Hellinger distance between $P$ and $Q$. In
\cite{marcussen2025characterizing}, the authors asked about the \emph{computational}
distinguishability of $P$ and $Q$ by a bounded class $\cF$, with emphasis on the class of small
circuits. They showed that the distinguishability of $P$ and $Q$ by $\cF$ is tightly characterized
by the information-theoretic distinguishability of related distributions $\tilde{P}$ and
$\tilde{Q}$, such that $P, \tilde{P}$ are indistinguishable by $\cF$ and similarly for
$Q, \tilde{Q}$. While the setting of \cite{marcussen2025characterizing} is related to ours, and it
would be interesting to study concrete connections between the two works, we also highlight
two important conceptual differences:
\begin{enumerate}
    \item At a surface level, \cite{marcussen2025characterizing} studied the problem of
        distinguishing between two fixed distributions $P$ and $Q$, whereas we study the identity
        testing problem of distinguishing between a fixed reference distribution $\refdist{P}$, and
        any distribution that is far from it (in fooling distance).
    \item In \cite{marcussen2025characterizing}, the bounded family of distinguishers $\cF$ is part
        of the computational modeling of the problem at hand: informally, how well can we
        distinguish between $P$ and $Q$ using a fixed number of samples and computational budget? In
        contrast, in our work the bounded family $\cF$ is part of the specification of the
        distinguishing problem: the task is to tell apart between $\refdist{P}$ and any distribution
        $Q$ with $\dist_\cF(\refdist{P}, Q) > \epsilon$, and one can ask about the statistical and
        computational aspects of this task --- how many samples are needed, and how efficiently the
        computation can be performed? (We give both types of results.)
\end{enumerate}

\paragraph*{Distribution testing with alternative metrics.}
Metrics other than TV distance have appeared in the property testing
and related literature --- 
either incidentally to the analysis of algorithms (e.g., $\ell_2$ distance
and $\chi^2$ divergences, as described in the survey \cite{Canonne22}),
or as targets of direct study (see e.g., \cite{DaskalakisKW18}).
Prior works have studied distribution testing against
the \emph{earthmover distance} \cite{IT03,BaNNR11,FH24}, which is an
IPM that also allows for a form of testing over continuous
settings --- but offers qualitatively
different guarantees that incorporate geometric content of the domain.

\paragraph*{Future work.}
We identify two directions for further research. First, we ask for a \emph{tight
characterization} of the sample complexity of $\cF$-identity testing, improving upon the quadratic gap in our
Rademacher characterization. One challenge we note is that,
in the limit where $\cF$
is the family of all Boolean functions and $\cF$-identity testing becomes standard
identity testing, such a characterization would need to encompass known
characterizations of the sample complexity of identity testing in terms of
subtle $\ell_{2/3}$-norms or interpolation norms of the reference distribution
\cite{valiant2017automatic,BlaisCG19}.

More broadly, we observe that three related statistical tasks are currently all
characterized by Rademacher complexity up to a quadratic gap: $\cF$-identity testing (via
\cref{thm:intro-testing-upper-bounds,thm:rademacher-lower-bound}), testable learning (by
\cite{GollakotaKK23}), and testable verification (the upper bound by reducing to
testable learning, and the lower bound by
\cref{thm:intro-verification-lower-bound}). Moreover,
$\cF$-identity testing is no harder than testable learning
by \cref{thm:intro-testable-learning-implies-identity-testing}
(while it can be strictly easier\footnote{Unions of $k$ intervals witness a
maximal quadratic separation: identity testing against the family $\cA_k$ can be done using $O(\sqrt{k})$
samples \cite{DKN14}, but learning this class requires $\Theta(k)$ samples.}), and both bounds for $\cF$-identity testing
(\cref{thm:intro-testing-upper-bounds,thm:rademacher-lower-bound}) can be nearly tight for some classes (see
\cref{rem:rademacher-tight}).

Second, it would be interesting to characterize which distributions
$\refdist{P}$ and families $\cF$ admit
\emph{computationally efficient} identity testers; our VC and Rademacher bounds are information-theoretic, and although we
give computationally efficient bounds for specific applications, we do not have a general
characterization. One promising avenue might be to adapt the notion of \emph{refutation complexity}
\cite{vadhan2017learning,kothari2018improper}, which takes the role of Rademacher complexity for characterizing efficient agnostic
learnability, seeking a complexity measure that captures efficient distribution testing against fooling
distance.

\paragraph*{Organization.}
The rest of this paper is organized as follows. \cref{sec:prelim} introduces notation and
conventions used throughout the paper. \cref{sec:sample-complexity} gives our general sample
complexity bounds for $\cF$-identity testing. \cref{sec:testable-learning} gives our results relating
$\cF$-identity testing and testable learning. \cref{sec:verification} gives our results relating
$\cF$-identity testing and PAC verification. Finally, \cref{sec:structured} gives our results for TV testing
of structured distributions.

\section{Preliminaries}
\label{sec:prelim}
Throughout the paper, we write $\cX$ for a domain and $\cF \subseteq \zo^\cX$
for a family of Boolean functions over $\cX$. For simplicity, we assume
throughout the paper that the family $\cF$ is closed under complementation, that
is, that for each $f \in \cF$ it holds that $1-f \in \cF$.

We write $\Delta(\cX)$ for the set of probability distributions over $\cX$. We
typically denote reference distributions for identity testing with notation such
as $\refdist{P} \in \Delta(\cX)$, and write $P$ for the unknown input
distribution over $\cX$. In learning settings where the domain of interest is $\cX \times
\bits$, we often denote a joint distribution over $\cX \times \bits$ by $\cD$ and a marginal
distribution over $\cX$ by $\cD_\cX$.

Typically,
parameter $\epsilon$ denotes an error or distance parameter, and $\delta$
denotes a failure probability parameter. For a set $S$, we write $\Unif(S)$ for
the uniform distribution over $S$.

We write $\err(h,\cD) \define \Pr_{(x,y)\sim \cD}[h(x)\neq y]$ for the
\emph{classification error} of hypothesis $h : \cX \to \bits$ with respect to
joint distribution $\cD$ over $\cX \times \bits$, and we write $\opt(\cF, \cD)
\define \inf_{h \in \cF} \err(h, \cD)$ for the \emph{optimal error} of any
hypothesis in $\cF$ with respect to the joint distribution $\cD$.

Given a marginal distribution $P$ over $\cX$ and a Boolean function $f : \cX \to
\bits$, we write $P_f$ for the joint distribution over $\cX \times \bits$ whose
samples $(x,y) \sim P_f$ are distributed as $(x \sim P, y = f(x))$. Sometimes we
also place $f$ as a superscript and write $P^f$ for this distribution to avoid
cumbersome notation, e.g.\ $\refdist{P}^f = \left(\refdist{P}\right)_f$.

\subsection{Rademacher complexity}
Given a Boolean function $f : \cX \to \zo$, write $f^{\pm} : \cX \to \sbits$
for its corresponding signed function given by $f^{\pm}(x) \define 2f(x) - 1$.

\begin{definition}[Rademacher complexity]
    \label{def:rademacher-complexity}
    Let $m \in \bbN$ and let $S = (x_1, \dotsc, x_m) \in \cX^m$ be a tuple. The
    \emph{empirical Rademacher complexity} of family $\cF$ over $S$ is
    \begin{equation}
        \label{eq:empirical-rademacher}
        \cR(\cF, S) \define \argEx{\sigma \sim \sbits^m}{
            \sup_{f \in \cF} \Abs{\frac{1}{m} \sum_{i=1}^m \sigma_i f^{\pm}(x_i)}} \,.
    \end{equation}
    Let $D$ be a probability distribution over $\cX$. The \emph{distributional Rademacher
    complexity} of family $\cF$ over $D$ at size $m$ is the expected empirical Rademacher complexity
    over a sample from $D$ of size $m$:
    \[
        \cR_m(\cF, D) \define \argEx{S \sim D^m}{\cR(\cF, S)} \,.
    \]
\end{definition}

We recall three standard facts about Rademacher complexity: with high
probability, 1)~the generalization error of a sample is controlled by the
distributional Rademacher complexity; 2)~the empirical Rademacher complexity
approximates the distributional Rademacher complexity; and 3)~a random
assignment of $\sigma \sim \sbits^m$ suffices to approximate the empirical
Rademacher complexity in \eqref{eq:empirical-rademacher}. These are summarized
below.

\begin{fact}[See e.g.\ \cite{GollakotaKK23}]
    There exists a universal constant $C > 0$ such that the following holds.
    Let $D$ be a probability distribution over $\cX$. Then
    \begin{enumerate}
        \item Let $g : \cX \to \zo$ be any labeling function. With probability
        at least $1-\delta$ over a sample $S \sim D^{m'}$ for any $m' \ge m$, it
        holds that
            \begin{equation}
                \label{eq:rademacher-generalization}
                \sup_{f \in \cF} \Abs{\err(f, \Unif(S)_g) - \err(f, D_g)}
                \le \cR_m(\cF, D)
                    + C \sqrt{\frac{\log(1/\delta)}{m}} \,.
            \end{equation}
        \item With probability at least $1-\delta$ over a sample $S \sim D^m$,
        it holds that
            \begin{equation}
                \label{eq:rademacher-empirical-distributional}
                \Abs{\cR(\cF, S) - \cR_m(\cF, D)}
                \le C \sqrt{\frac{\log(1/\delta)}{m}} \,.
            \end{equation}
        \item Let $S = (x_1, \dotsc, x_m) \in \cX^m$ be a tuple. With
        probability at least $1-\delta$ over the choice of $\sigma \sim
        \sbits^m$, it holds that
            \begin{equation}
                \label{eq:rademacher-sigma}
                \Abs{
                    \cR(\cF, S)
                    - \sup_{f \in \cF} \Abs{\frac{1}{m} \sum_{i=1}^m \sigma_i f^{\pm}(x_i)}}
                \le C \sqrt{\frac{\log(1/\delta)}{m}} \,.
            \end{equation}
    \end{enumerate}
\end{fact}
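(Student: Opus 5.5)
All three items follow the usual pattern: bound an expectation (by symmetrization, or by definition), then get concentration from McDiarmid's bounded-differences inequality. I would prove them in the order 3, 2, 1, since item 1 is the only one needing extra work.

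\textbf{Items 3 and 2.} Fix $S$ and set $\Phi(\sigma) \define \sup_{f\in\cF}\abs{\frac1m\sum_i \sigma_i f^{\pm}(x_i)}$. Then $\E_\sigma[\Phi(\sigma)] = \cR(\cF,S)$ by definition. Flipping one coordinate $\sigma_i$ changes each inner sum by at most $2/m$, so it changes the supremum by at most $2/m$. McDiarmid then gives deviation $O(\sqrt{\log(1/\delta)/m})$ with probability $1-\delta$, which is \eqref{eq:rademacher-sigma}. For \eqref{eq:rademacher-empirical-distributional}, view $S\mapsto \cR(\cF,S)$ as a function of $m$ independent draws from $D$. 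Replacing one $x_i$ changes, for every fixed $\sigma$ and $f$, the inner sum by at most $2/m$, so it changes $\cR(\cF,S)$ by at most $2/m$. Its expectation is $\cR_m(\cF,D)$, and McDiarmid finishes as before.

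\textbf{Item 1.} Let $\ell_f(x) \define \ind[f(x)\ne g(x)] \in \zo$, so that $\err(f,\Unif(S)_g) = \frac{1}{m'}\sum_i \ell_f(x_i)$ and $\err(f,D_g)=\E_D[\ell_f]$. Set $Z(S) \define \sup_f\abs{\frac1{m'}\sum_i(\ell_f(x_i)-\E_D\ell_f)}$; it has bounded differences $1/m'$, so McDiarmid gives $Z \le \E Z + O(\sqrt{\log(1/\delta)/m'})$. The steps for $\E Z$ are:
\begin{itemize}
    \item Ghost-sample symmetrization gives $\E Z \le 2\,\E_{S,\sigma}\sup_f\abs{\frac1{m'}\sum_i\sigma_i\ell_f(x_i)}$.
    \item Write $\ell_f = \tfrac12(1 - f^{\pm}g^{\pm})$. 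Then $\sigma_i\ell_f(x_i) = \tfrac12\sigma_i - \tfrac12\sigma_i g^{\pm}(x_i) f^{\pm}(x_i)$.
    \item The constant part contributes at most $\tfrac12\E\abs{\frac1{m'}\sum_i\sigma_i} \le \frac{1}{2\sqrt{m'}}$.
    \item For fixed $S$, the signs $\sigma_i g^{\pm}(x_i)$ are again uniform on $\sbits^{m'}$, so the second part contributes exactly $\tfrac12\cR_{m'}(\cF,D)$.
\end{itemize}
Hence $\E Z \le \cR_{m'}(\cF,D) + 1/\sqrt{m'}$. The factor $2$ from symmetrization cancels the $\tfrac12$, which is what gives leading coefficient exactly $1$.

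To pass from $m'$ to $m$, I would show that $\cR_k(\cF,D)$ is nonincreasing in $k$. Write a sample of size $k+1$ as an average over its size-$k$ leave-one-out subsamples. Then the averaged sum $\frac1{k+1}\sum_i\sigma_i f^{\pm}(x_i)$ equals the average over these subsamples of their normalized sums, so its supremum is at most the average of the supremums. Taking expectations, and using that each subsample is distributed as $D^k$ with i.i.d.\ signs, gives $\cR_{k+1}\le\cR_k$. Finally, since $m'\ge m$, the error terms $1/\sqrt{m'}$ and $\sqrt{\log(1/\delta)/m'}$ are at most $C\sqrt{\log(1/\delta)/m}$ once $C$ is large, for $\delta$ bounded below $1$.

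\textbf{Main obstacle.} The delicate step is keeping the leading coefficient exactly $1$ in \eqref{eq:rademacher-generalization}. This requires the $\pm$ conversion above, so that the constant term is only $O(1/\sqrt{m'})$, together with the sign-invariance argument showing that multiplying by the fixed labels $g^{\pm}$ does not change the Rademacher complexity. Everything else is routine.
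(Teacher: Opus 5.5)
Your proof is correct. The paper gives no proof of this fact and only defers to \cite{GollakotaKK23} and standard references. Your argument is exactly the standard derivation behind that citation: McDiarmid for items~2 and~3, and for item~1 symmetrization, the identity $\ind[f\ne g]=\tfrac12(1-f^{\pm}g^{\pm})$ with the signs $\sigma_i g^{\pm}(x_i)$ absorbed to get leading coefficient exactly $1$, and leave-one-out monotonicity $\cR_{k+1}(\cF,D)\le\cR_k(\cF,D)$. Your restriction to $\delta$ bounded away from $1$ is genuinely needed for the statement as written, not an artifact of your proof. For item~3 with $m=2$ and $\cF$ the two constant functions, $\sup_f\abs{\frac12\sum_i\sigma_i f^{\pm}(x_i)}$ is $0$ or $1$ with probability $\tfrac12$ each while $\cR(\cF,S)=\tfrac12$, so the deviation is $\tfrac12$ surely.
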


\subsection{Uniform convergence}
We state the VC theorem (also referred to as the uniform convergence theorem),
which roughly captures the sample complexity rates for the empirical loss of any
function to be a good estimate of the true loss.

\begin{theorem}[\cite{vapnik2015uniform}]
\label{th:uc}
     Let $S=\{(x_1,y_1),\dots ,(x_m,y_m)\}$, where each
     $(x_i,y_i)\in \cX\times\{0,1\}$ is sampled i.i.d.\ from a
     distribution $\cD\in\Delta(\cX\times \bits)$. Let $\err(f,S) \define
     \err(f,\Unif(S)) = \frac{1}{m}\abs{\{i:f(x_i)\neq y_i\}}$. Let $\cF$
     be a class of Boolean functions over $\cX$ of VC dimension $d$. Then
     \[
        \Pr\left[
            \underset{f\in\cF}{\sup}\vert \err(f,D)-\err(f,S)\vert
                \leq O\left(\sqrt{\frac{d+\log(1/\delta)}{m}}\right)\right]
            \geq 1-\delta \,.
     \]
\end{theorem}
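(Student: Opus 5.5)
The plan is the standard route: concentration around the mean, symmetrization to Rademacher complexity, and then a chaining bound that uses only the VC dimension. Write $\ell_f(x,y) \define \Ind[f(x)\neq y]$ for the loss of $f$, and let $\cL_\cF \define \{\ell_f : f\in\cF\}$. Define
\[
    \Phi(S) \define \sup_{f\in\cF} \abs{\err(f,\cD)-\err(f,S)} \,.
\]

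\textbf{Step 1 (concentration).} Changing a single pair $(x_i,y_i)$ changes $\Phi$ by at most $1/m$. By McDiarmid's inequality, with probability at least $1-\delta$,
\[
    \Phi(S) \le \E[\Phi(S)] + \sqrt{\frac{\log(1/\delta)}{2m}} \,.
\]
This already gives the $\log(1/\delta)$ part of the bound, so it remains to show $\E[\Phi(S)] = O(\sqrt{d/m})$.

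\textbf{Step 2 (symmetrization).} Introduce an independent ghost sample $S'\sim\cD^m$ and Rademacher signs $\sigma$. Jensen's inequality and the exchangeability of $(x_i,y_i)$ with $(x_i',y_i')$ give
\[
    \E[\Phi(S)] \le 2\,\E_{S}\,\E_\sigma \sup_{f\in\cF}\Abs{\frac1m\sum_{i=1}^m \sigma_i \ell_f(x_i,y_i)} \,.
\]
The right-hand side is (up to the $\pm$ encoding of \cref{def:rademacher-complexity}) twice the Rademacher complexity of $\cL_\cF$. Next compare $\cL_\cF$ with $\cF$ on a fixed sample. The map $f \mapsto \ell_f$ acts pointwise by $b\mapsto b\oplus y_i$, which is a bijection on $\bits$ and preserves Hamming distances between label patterns on $S$. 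Hence the patterns of $\cL_\cF$ on $S$ correspond bijectively and isometrically to those of $\cF$ on $(x_1,\dots,x_m)$. Consequently $\cL_\cF$ has the same growth function, and the same $L_2(\Unif(S))$ covering numbers, as $\cF$.

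\textbf{Step 3 (bounding the empirical Rademacher complexity).} The crude route is Massart's finite-class lemma combined with Sauer--Shelah, which bounds the number of patterns by $(em/d)^d$. That yields only $O(\sqrt{d\log(m/d)/m})$, which is off by a logarithmic factor. To obtain the stated bound I would instead use Dudley's chaining: conditionally on $S$,
\[
    \E_\sigma \sup_{f\in\cF}\Abs{\frac1m\sum_{i=1}^m \sigma_i \ell_f(x_i,y_i)}
    \le \frac{C}{\sqrt m}\int_0^1 \sqrt{\log N\bigl(\cL_\cF, L_2(\Unif(S)), \eta\bigr)}\,d\eta \,,
\]
where the absolute value (rather than a one-sided supremum) costs at most an extra $O(1/\sqrt m)$ term. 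By Haussler's packing lemma, $N(\eta) \le (C'/\eta)^{O(d)}$ in $L_2(\Unif(S))$, with no dependence on $m$. The integral is then $O\bigl(\sqrt d\int_0^1\sqrt{\log(C'/\eta)}\,d\eta\bigr) = O(\sqrt d)$, so $\E[\Phi(S)] = O(\sqrt{d/m})$. Combining this with Step 1 and using $\sqrt a+\sqrt b\le\sqrt{2(a+b)}$ gives the claimed bound.

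\textbf{Main obstacle.} The main obstacle is Step 3, specifically removing the $\log m$ factor. This requires Haussler's dimension-only packing bound together with chaining, rather than a single union bound over Sauer--Shelah patterns. If the full chaining argument were too heavy, I would fall back on citing Haussler's packing bound as a black box and present only the Dudley integral computation.
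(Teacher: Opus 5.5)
The paper does not prove this statement. It is imported as a black box from \cite{vapnik2015uniform}, so there is no internal argument to compare against. Your proof is correct, and it is the standard modern route to exactly the log-free form stated: McDiarmid, ghost-sample symmetrization, the observation that $f\mapsto \ell_f$ is an isometry of label patterns on the sample, and then Dudley's entropy integral with Haussler's packing bound.

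The useful comparison is with the classical Vapnik--Chervonenkis argument that the citation points to. That argument symmetrizes with a ghost sample, applies random swaps, and takes a union bound with Hoeffding over the at most $(2em/d)^d$ patterns on $2m$ points given by Sauer--Shelah. It is shorter and fully elementary. However, as you correctly note in Step~3, it only yields $O\bigl(\sqrt{(d\log(em/d)+\log(1/\delta))/m}\bigr)$. Removing the logarithm genuinely requires a dimension-only covering bound plus chaining (as you do), or Talagrand's argument.

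For this paper the difference is only quantitative. The log-free form is what gives the clean $O\bigl((d+\log(1/\delta))/\epsilon^2\bigr)$ sample complexity in \cref{lemma:vc-upper-bound} and in the bounds of \cref{sec:structured} that build on it. The classical form would give $O\bigl((d\log(1/\epsilon)+\log(1/\delta))/\epsilon^2\bigr)$, which would still suffice for every qualitative use of the theorem.

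One bookkeeping remark: absorbing the $O(1/\sqrt m)$ cost of the absolute value and the additive $\log(e(d+1))$ in Haussler's bound into $O(\sqrt{d/m})$ requires $d\ge 1$. This holds automatically under the paper's standing assumption that $\cF$ is closed under complementation, since then every single point is shattered.
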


\subsection{Fooling distance}
Recall from  \eqref{eq:fooling-distance} that the fooling distance between
distributions $P$ and $Q$ against family $\cF$ is $\dist_\cF(P, Q) \define
\sup_{f \in \cF} \abs{\E_P[f] - \E_Q[f]}$. Formally, the pair $(\cX, \dist_\cF)$
only forms a \emph{pseudometric}: it satisfies $\dist_\cF(P, P) = 0$ for each
$P$, the symmetry condition $\dist_\cF(P, Q) = \dist_\cF(Q, P)$ for all $P, Q$,
and the triangle inequality $\dist_\cF(P, Q) \le \dist_\cF(P, R) + \dist_\cF(R,
Q)$. However, there may exist distinct distributions $P, Q$ for which
$\dist_\cF(P, Q) = 0$.

We also occasionally specialize the notation of fooling distance to a single
function $f : \cX \to \bits$, and write $\dist_f(P, Q) \define \abs{\E_P[f] - \E_Q[f]}$. Hence $\dist_\cF(P, Q) = \sup_{f \in \cF} \dist_f(P, Q)$.

\section{Sample Complexity of \texorpdfstring{$\cF$}{F}-Identity Testing}
\label{sec:sample-complexity}

\subsection{Rademacher bounds for specific distributions}
In this section, we show that the Rademacher complexity of the family $\cF$ of
distinguishers over the reference distribution $\refdist{P}$ characterizes, up
to a quadratic factor, the sample complexity of testing identity to
$\refdist{P}$ against $\cF$. (Later, in \cref{rem:rademacher-tight}, we observe that this
quadratic gap is necessary.)
Note that this analysis is information-theoretic
and does not contemplate the \emph{computational} complexity of this problem.

We now show that the distributional Rademacher complexity of the family $\cF$
over distribution $\refdist{P}$ gives an upper bound on the sample complexity of
identity testing to $\refdist{P}$ against $\cF$. In fact, the upper bound also
holds for \emph{equivalence} (or closeness) testing, which is similar to the
identity testing problem from \cref{def:intro-identity-testing}, but in the
setting where the algorithm is only given sample access to two unknown
distributions $P$ and $Q$, and must distinguish between $P=Q$ and
$\dist_\cF(P,Q) > \epsilon$ with probability at least $1-\delta$; such an
algorithm is called an \emph{$(\epsilon,\delta)$-equivalence tester} against
$\cF$.

Both upper bounds are consequences of a more general result on a \emph{testable
distribution learning} problem, where the algorithm must succeed in learning
(with respect to metric $\dist_\cF$) any input distribution with small
Rademacher complexity. We formally define this problem below.

\begin{definition}[Testably learning distributions of bounded Rademacher complexity]
    \label{def:testable-distribution-learning}
    Given parameters $m \in \bbN$ and $(\epsilon, \epsilon_\cR, \delta) \in
    (0,1)$, an algorithm $\cA$
    is an \emph{$(m, \epsilon, \epsilon_\cR, \delta)$-Rademacher testable distribution learner}
    against family $\cF$ if, given sample access to an unknown input
    distribution $P$, algorithm $\cA$
    either rejects or outputs a distribution $\hat{P}$ such that, with probability at least
    $1-\delta$, the following two conditions are satisfied:
    \begin{itemize}
        \item \textbf{(Soundness)} Either $A$ rejects or its output $\hat{P}$
        satisfies $\dist_\cF(P, \hat{P}) \le \epsilon$.
        \item \textbf{(Completeness)} If $\cR_m(\cF, P) \le \epsilon_\cR$, then
        $A$ does not reject.
    \end{itemize}
\end{definition}

Rademacher complexity arises naturally in the formulation above because it is a property that is
itself testable from samples. We have the following general result:

\begin{lemma}
    \label{lemma:testable-distribution-learning}
    For every family $\cF$ and parameters $m, \epsilon, \delta$, there exists an $(m, \epsilon,
    \epsilon/4, \delta)$-Rademacher testable distribution learner against $\cF$ with sample
    complexity $m + O\left(\frac{\log(1/\delta)}{\epsilon^2}\right)$.
\end{lemma}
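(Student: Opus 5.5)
The learner draws one sample $S$ of size $m$ from $P$, computes the empirical Rademacher complexity $\cR(\cF, S)$, rejects if it is too large, and otherwise outputs $\hat P = \Unif(S)$. The extra $O(\log(1/\delta)/\epsilon^2)$ samples will let the concentration terms in the Fact be made small; only an additive quantity such as $m' = m + O(\log(1/\delta)/\epsilon^2)$ is needed. Concretely, draw $S \sim P^{m'}$ with $m' \ge m$ and $C\sqrt{\log(3/\delta)/m'} \le \epsilon/8$.

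\textbf{Step 1 (estimating the Rademacher complexity).} Equation \eqref{eq:rademacher-empirical-distributional} applied at size $m'$ gives $\abs{\cR(\cF,S) - \cR_{m'}(\cF,P)} \le \epsilon/8$ with probability $1-\delta/3$. To compare $\cR_{m'}$ with $\cR_m$, I use that distributional Rademacher complexity is nonincreasing in the sample size: split $m'$ coordinates into blocks and use convexity of the supremum, or equivalently the standard monotonicity $\cR_{m'} \le \cR_m$ for $m' \ge m$. Only this direction is needed for completeness. The algorithm rejects iff $\cR(\cF,S) > 3\epsilon/8$. If $\cR_m(\cF,P) \le \epsilon/4$, then $\cR(\cF,S) \le \epsilon/4 + \epsilon/8 = 3\epsilon/8$, so completeness holds. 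Computing $\cR(\cF,S)$ exactly requires no samples; this procedure is information-theoretic, so computational cost does not matter.

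\textbf{Step 2 (soundness).} Suppose the algorithm accepts, so $\cR_{m'}(\cF,P) \le \cR(\cF,S) + \epsilon/8 \le \epsilon/2$. Apply \eqref{eq:rademacher-generalization} with $D = P$, sample size $m'$ (taking the reference size to be $m'$ itself), and labeling function $g \equiv 0$. Then $\err(f, \Unif(S)_0) = \E_{\Unif(S)}[f]$ and $\err(f, P_0) = \E_P[f]$, so
\[
\dist_\cF(P, \Unif(S)) \le \cR_{m'}(\cF,P) + \epsilon/8 \le 5\epsilon/8 \le \epsilon
\]
with probability $1-\delta/3$. A union bound over the two events finishes the argument.

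\textbf{Main obstacle.} The delicate point is the logical order in Step 2. Generalization is controlled by the \emph{distributional} complexity $\cR_{m'}(\cF,P)$, but soundness must hold for arbitrary $P$, including when that complexity is large. This is handled by the deterministic implication: on the good event of Step 1, acceptance forces $\cR_{m'}(\cF,P) \le \epsilon/2$. Both events are fixed in advance and do not depend on the decision, so the union bound is valid.

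If I want to avoid the monotonicity claim, I can instead use a first subsample of exactly $m$ points to estimate $\cR_m$ and run the generalization bound on that same subsample, stated at size $m$. In that case the $O(\log(1/\delta)/\epsilon^2)$ term forces $m \gtrsim \log(1/\delta)/\epsilon^2$. When $m$ is smaller, I would first replace $m$ by $\max(m, C'\log(1/\delta)/\epsilon^2)$, which again relies on the monotonicity fact, so I would rather prove that fact directly by the block-averaging argument.
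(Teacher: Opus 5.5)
Your proposal is correct and follows essentially the same route as the paper: draw $m + O(\log(1/\delta)/\epsilon^2)$ samples, reject if the empirical Rademacher complexity exceeds a threshold, and otherwise output $\Unif(S)$. Completeness comes from monotonicity of $\cR_m$ plus \eqref{eq:rademacher-empirical-distributional}, and soundness from \eqref{eq:rademacher-generalization} with $g \equiv 0$; the differences are only cosmetic (threshold $3\epsilon/8$, and arguing that acceptance on the good event forces $\cR_{m'}(\cF,P) \le \epsilon/2$, which is equivalent to the paper's case split on whether $\cR_s(\cF,P)$ exceeds $2\epsilon/3$), and the monotonicity fact, which the paper also uses without proof, is cleanest to prove by averaging over all $m$-subsets rather than disjoint blocks when $m \nmid m'$.
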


Before proving the lemma, we observe that identity and equivalence testing of distributions against
$\cF$ follow immediately:

\begin{theorem}[Upper bounds; refinement of \cref{thm:intro-testing-upper-bounds}]
    \label{thm:testing-upper-bounds}
    The following problems admit algorithms with sample complexity $m +
    O\left(\frac{\log(1/\delta)}{\epsilon^2}\right)$:
    \begin{enumerate}
        \item $(\epsilon, \delta)$-Identity testing for explicit distribution
        $\refdist{P}$ against family $\cF$, where $\cR_m(\cF, \refdist{P}) \le
        \epsilon/16$.
        \item $(\epsilon, \delta)$-Equivalence testing between unknown distributions $P$ and $Q$
            against family $\cF$, under the \emph{promise} that $\cR_m(\cF, P) \le \epsilon/16$ or
            $\cR_m(\cF, Q) \le \epsilon/16$.
    \end{enumerate}
\end{theorem}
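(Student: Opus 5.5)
The plan is to derive both parts from \cref{lemma:testable-distribution-learning}, which we take as given. We instantiate it with distance parameter $\epsilon/4$ and confidence $\delta/2$. This yields an $(m, \epsilon/4, \epsilon/16, \delta/2)$-Rademacher testable distribution learner $\cA$ against $\cF$, using $m + O(\log(1/\delta)/\epsilon^2)$ samples. Its completeness threshold $\epsilon_\cR = (\epsilon/4)/4 = \epsilon/16$ matches the hypothesis of the theorem exactly.

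\textbf{Identity testing.} Run $\cA$ on samples from the unknown $P$.
\begin{itemize}
\item If $\cA$ rejects, the tester rejects.
\item Otherwise $\cA$ outputs $\hat P$, and the tester computes $\dist_\cF(\hat P, \refdist{P})$. This is an information-theoretic computation on two explicit distributions, so running time is not an issue. The tester accepts iff this quantity is at most $\epsilon/4$.
\end{itemize}
For completeness, suppose $P=\refdist{P}$. Then $\cR_m(\cF,P)\le \epsilon/16$, so with probability $1-\delta/2$ the learner does not reject. Soundness of $\cA$ gives $\dist_\cF(\hat P, \refdist{P}) = \dist_\cF(\hat P, P)\le \epsilon/4$, so the tester accepts.

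For soundness, suppose $\dist_\cF(P,\refdist{P})>\epsilon$. Either $\cA$ rejects, or the triangle inequality for the pseudometric $\dist_\cF$ gives
\[
\dist_\cF(\hat P,\refdist{P}) \ge \epsilon - \epsilon/4 > \epsilon/4,
\]
and the tester rejects. Soundness of $\cA$ holds unconditionally, not only under small Rademacher complexity, so this case needs no assumption on $P$.

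\textbf{Equivalence testing.} Run $\cA$ separately on fresh samples from $P$ and from $Q$, each with confidence $\delta/2$, and union bound. This doubles the sample count, which is absorbed in the stated bound.
\begin{itemize}
\item If both runs return hypotheses $\hat P,\hat Q$, accept iff $\dist_\cF(\hat P,\hat Q)\le \epsilon/2$.
\item If exactly one run rejects, the promise is needed. When $P=Q$, both distributions have the same Rademacher complexity, which is at most $\epsilon/16$ by the promise. So neither run rejects in the completeness case, and it is safe to reject whenever either run rejects.
\end{itemize}
In the completeness case, $\dist_\cF(\hat P,\hat Q)\le \dist_\cF(\hat P,P)+\dist_\cF(Q,\hat Q)\le \epsilon/2$, so the tester accepts. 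In the far case, if both runs output hypotheses, then
\[
\dist_\cF(\hat P,\hat Q)\ge \dist_\cF(P,Q)-\epsilon/2>\epsilon/2,
\]
so the tester rejects.

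The main point to get right is the parameter bookkeeping: the learner's accuracy must be at most $\epsilon/4$ so that the thresholds separate after the triangle inequality, while its completeness threshold $\epsilon/4\cdot\tfrac14$ must still equal $\epsilon/16$. The other point is noticing that in the equivalence setting the promise on one distribution transfers to the other exactly when $P=Q$, which is the only case where completeness is required. The substantive work lies in \cref{lemma:testable-distribution-learning} itself.
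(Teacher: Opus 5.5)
Your proposal is correct and follows the paper's proof. Both instantiate \cref{lemma:testable-distribution-learning} with accuracy $\epsilon/4$ and confidence $\delta/2$, reject whenever a learner run rejects, and separate the two cases by the triangle inequality. The only difference is that you handle identity testing directly, with a single learner run and threshold $\epsilon/4$, whereas the paper deduces it from the equivalence tester. One small wording fix: in the equivalence case, read the bound $m + O(\log(1/\delta)/\epsilon^2)$ as the number of samples from each of $P$ and $Q$, since a total of $2m$ is not literally absorbed into $m + O(\cdot)$.
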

\begin{proof}
    Clearly the second result implies the first. The equivalence testing algorithm works as follows:
    \begin{enumerate}
        \item Use the algorithm from \cref{lemma:testable-distribution-learning} with input $P$ (and
            parameters $m, \epsilon/4, \delta/2$) to learn distribution $\hat{P}$ and with input $Q$
            (and same parameters) to learn distribution $\hat{Q}$, or \textbf{reject} if either
            execution rejects.
        \item \textbf{Accept} if $\dist_\cF(\hat{P}, \hat{Q}) \le \epsilon/2$, or \textbf{reject}
            otherwise.
    \end{enumerate}
    The sample complexity claim is immediate, so it remains to show correctness. We separately show
    completeness and soundness.

    \textbf{Completeness.} Suppose $P=Q$. Then $\cR_m(\cF, P) \le \epsilon/16$ and
    \cref{lemma:testable-distribution-learning} implies that, with probability at least
    $1-\delta/2$, the first step learns distribution $\hat{P}$ satisfying $\dist_\cF(P, \hat{P}) \le
    \epsilon/4$. Similarly, with probability at least $1-\delta/2$, the first step learns
    distribution $\hat{Q}$ satisfying $\dist_\cF(P, \hat{Q}) \le \epsilon/4$. Therefore, with
    probability at least $1-\delta$, the first step does not reject and moreover we have
    \[
        \dist_\cF(\hat{P}, \hat{Q})
        \le \dist_\cF(\hat{P}, P) + \dist_\cF(P, \hat{Q})
        \le \epsilon/2 \,,
    \]
     and hence the algorithm accepts.

    \textbf{Soundness.} Suppose $\dist_\cF(P, Q) > \epsilon$. By the union bound over the two
    executions of the algorithm from \cref{lemma:testable-distribution-learning}, with probability
    at least $1-\delta$ the first step either rejects or it learns distributions $\hat{P}$ and
    $\hat{Q}$ satisfying $\dist_\cF(P, \hat{P}) \le \epsilon/4$ and $\dist_\cF(Q, \hat{Q}) \le
    \epsilon/4$. In the first case we are done, and in the second case the triangle inequality gives
    \[
        \dist_\cF(\hat{P}, \hat{Q})
        \ge \dist_\cF(P, Q) - \dist_\cF(P, \hat{P}) - \dist_\cF(Q, \hat{Q})
        > \epsilon - \epsilon/4 - \epsilon/4
        = \epsilon/2 \,,
    \]
    and hence the second step rejects.
\end{proof}

We now prove \cref{lemma:testable-distribution-learning}. The idea is to use
\eqref{eq:rademacher-empirical-distributional} to estimate the Rademacher complexity of the input
distribution $P$ from a sample and either reject right away if this quantity too large, or use
\eqref{eq:rademacher-generalization} to argue that the empirical distribution is
close to $P$ (against distinguishers in $\cF$).

\begin{proof}[Proof of \cref{lemma:testable-distribution-learning}]
    Let $s = m + \frac{C \log(1/\delta)}{\epsilon^2}$ for sufficiently large
    universal constant $C > 0$. The algorithm works as follows:
    \begin{enumerate}
        \item Draw a sample $S \sim P^s$. If $\cR(\cF, S) > \epsilon/2$, \textbf{reject}.
        \item Otherwise, \textbf{return} $\hat{P} \define \Unif(S)$.
    \end{enumerate}
    We will show that the soundness and completeness conditions of
    \cref{def:testable-distribution-learning} are satisfied with probability at least $1-\delta/2$
    each, so that the claim will follow from a union bound.

    Completeness is easy to check: if $\cR_m(\cF, P) \le \epsilon/4$,
    so that $\cR_s(\cF, P) \le \epsilon/4$,
    then \eqref{eq:rademacher-empirical-distributional} implies that, with probability at least $1 -
    \delta/2$, we have $\cR(\cF, S) \le \epsilon/2$ and hence the algorithm does not reject.

    We now verify soundness. We consider two cases. First, suppose $\cR_s(\cF, P) > 2\epsilon/3$.
    Then, again by \eqref{eq:rademacher-empirical-distributional}, with probability at least
    $1-\delta/2$ we have $\cR(\cF, S) > \epsilon/2$ and hence the algorithm rejects, in which case
    the requirement is satisfied. Second, suppose $\cR_s(\cF, P) \le 2\epsilon/3$. Then
    \eqref{eq:rademacher-generalization} with $g=0$ gives that, with probability at least
    $1-\delta/2$, for each $f \in \cF$ we have
    \[
        \Abs{\argEx{S}{f} - \argEx{P}{f}}
        = \Abs{\err(f, \Unif(S)_g) - \err(f, P_g)}
        \le \epsilon \,,
    \]
    so that $\dist_\cF(P, \hat{P}) \le \epsilon$ and hence the requirement is satisfied.
\end{proof}

We now prove a lower bound for identity testing that essentially matches the upper bound from
\cref{thm:testing-upper-bounds} up to a quadratic gap. The idea is that, if $\cR_m(\cF, \refdist{P}) \gtrsim
\epsilon$, then by \eqref{eq:rademacher-sigma}, with high probability a random assignment $\sigma
\sim \sbits^m$ determines a (nearly-balanced) partition $S = T \cup R$ of the sample such that
$\Abs{\argEx{T}{f} - \argEx{R}{f}} \gtrsim \epsilon$ for some $f \in \cF$. This implies, via the
triangle inequality, that at least one of the distributions $\Unif(R)$ or $\Unif(T)$ is far from $\refdist{P}$ under
fooling distance, and yet indistinguishable from $\refdist{P}$ using
$o(\sqrt{m})$ samples by a birthday problem bound.

\thmrademacherlowerbound*
\begin{proof}
    Let $\cA$ be an $(\epsilon, 0.99)$-identity tester for $\refdist{P}$ against $\cF$ with sample complexity
    $c\sqrt{m}$, where $c > 0$ is a sufficiently small absolute constant, and consider the following
    experiment:
    \begin{enumerate}
        \item Draw a sample $S = (x_1, \dotsc, x_m) \sim \refdist{P}^m$ and a vector $\sigma \sim \sbits^m$.
        \item Define multisets $R \define \{x_i \in S : \sigma_i = +1\}$ and $T \define \{x_i \in S
            : \sigma_i = -1\}$. \textbf{Reject} if $|R| < m/3$.
        \item Otherwise, simulate $\cA$ with samples drawn uniformly and independently at random
            from $R$; \textbf{accept} if $\cA$ accepts, and \textbf{reject} if $\cA$ rejects.
    \end{enumerate}
    We first observe that the sets $R$ and $T$ are nearly balanced with good probability. Standard
    concentration of Rademacher random variables gives
    \[
        \Pr\Brac{\Abs{\Abs{R}-\Abs{T}} > 10\sqrt{m}}
        = \Pr\Brac{\Abs{\sum_{i=1}^m \sigma_i} > 10\sqrt{m}}
        \le 2e^{-10^2/2}
        < 0.01 \,.
    \]
    Using the fact that $\Abs{T} = m - \Abs{R}$, we conclude that, with probability at least $0.99$,
    we have
    \begin{equation}
        \label{eq:R-size}
        \Abs{2\Abs{R} - m} \le 10\sqrt{m}
        \implies
        \Abs{\Abs{R} - \frac{m}{2}} \le 5\sqrt{m} < \frac{m}{6}
        \implies
        \Abs{R} > \frac{m}{3} \,.
    \end{equation}
    Finally, using the assumption that $m \ge C/\epsilon^2$ for sufficiently large $C > 0$, we also
    obtain in this case that
    \begin{equation}
        \label{eq:R-size-eps}
        \Abs{\frac{\Abs{R}}{m/2} - 1} \le \frac{10}{\sqrt{m}} \le \frac{\epsilon}{32} \,.
    \end{equation}

    The rest of the proof proceeds in two parts. First, we show that $c\sqrt{m}$ independent samples
    drawn uniformly from $R$ are nearly indistinguishable from samples from $\refdist{P}$, and conclude that
    the experiment accepts with high constant probability. Then, we show that there is at least a
    constant probability that $\dist_\cF(\Unif(R), \refdist{P}) > \epsilon$, and conclude that the experiment
    rejects with some constant probability. This yields the desired contradiction.

    \textbf{Experiment accepts.}
    Suppose $\Abs{R} > m/3$. Then the probability that some element $x_i \in R$ is sampled as an
    input to $\cA$ more than once (i.e., a \emph{collision}, where we see multiset elements $x_i$
    and $x_j$ as distinct if $i \ne j$) is, by a union bound,
    \begin{equation}
        \label{eq:prob-collision}
        \Pr\Brac{\text{collision} \;|\; \Abs{R} > m/3}
        \le \frac{(c\sqrt{m})^2}{\Abs{R}} < 3c^2 \le 0.01 \,.
    \end{equation}
    Moreover, if no collisions occur, the samples from $R$ are identically distributed to
    independent samples from $\refdist{P}$. By the completeness of $\cA$, \eqref{eq:R-size},
    \eqref{eq:prob-collision}, and a union bound, we conclude that
    \begin{align*}
        &\Pr\Brac{\text{Experiment accepts}} \\
        &\qquad \ge 1 - \Pr\Brac{\Abs{R} \le m/3} - \Pr\Brac{\text{collision} \land \Abs{R} > m/3}
            - \Pr\Brac{\text{$\cA$ fails}} \\
        &\qquad \ge 1 - 0.03
        = 0.97 \,. \numberthis \label{eq:experiment-accepts}
    \end{align*}

    \textbf{Experiment rejects.}
    Combining \eqref{eq:rademacher-empirical-distributional} and \eqref{eq:rademacher-sigma}, with
    probability at least $0.99$ we have
    \begin{equation}
        \label{eq:exp-rad}
        \Abs{
            \cR_m(\cF, \refdist{P})
            - \sup_{f \in \cF} \Abs{\frac{1}{m} \sum_{i=1}^m \sigma_i f^{\pm}(x_i)}}
        \le \Theta\left(\sqrt{\frac{1}{m}}\right)
        \le \epsilon
    \end{equation}
    and hence, for some $f \in \cF$ (assuming the supremum is attained for simplicity),
    \[
        \Abs{\frac{1}{m} \sum_{i=1}^m \sigma_i f^{\pm}(x_i)}
        \ge \cR_m(\cF, \refdist{P}) - \epsilon
        \ge 3\epsilon \,.
    \]
    Write $\Abs{R} = \frac{m}{2}(1 + r)$ and $\Abs{T} = \frac{m}{2}(1 + t)$. We then have
    \begin{align*}
        3\epsilon
        &\le \Abs{\frac{1}{m} \sum_{i=1}^m \sigma_i f^{\pm}(x_i)}
        = \Abs{\frac{1}{m} \sum_{i=1}^m \sigma_i (2f(x_i) - 1)} \\
        &= \Abs{\frac{1}{m} \sum_{x \in R} (2f(x)-1) - \frac{1}{m} \sum_{x \in T} (2f(x)-1)} \\
        &= \Abs{\frac{\Abs{R}}{m/2} \argEx{R}{f} - \frac{\Abs{T}}{m/2} \argEx{T}{f}
                - \frac{\Abs{R}}{m} + \frac{\Abs{T}}{m}} \\
        &\le \Abs{(1+r) \argEx{R}{f} - (1+t) \argEx{T}{f}}
                + \frac{1}{2}\Abs{r} + \frac{1}{2}\Abs{t} \\
        &\le \Abs{\argEx{R}{f} - \argEx{T}{f}} + \frac{3}{2}\Abs{r} + \frac{3}{2}\Abs{t} \,,
    \end{align*}
    where in the last step we used the fact that $0 \le f \le 1$. Recall that, by
    \eqref{eq:R-size-eps}, we have $\Abs{r}, \Abs{t} \le \epsilon/32$ with probability at least
    $0.99$, and in this case we have
    \[
        \dist_\cF(\Unif(R), \Unif(T))
        \ge \Abs{\argEx{R}{f} - \argEx{T}{f}}
        \ge 3\epsilon - 3 \cdot \epsilon/32
        > 2\epsilon \,,
    \]
    \sloppy
    which implies via the triangle inequality that at least one of $\dist_\cF(\Unif(R), \refdist{P}) > \epsilon$
    or $\dist_\cF(\Unif(T), \refdist{P}) > \epsilon$ holds. Applying a union bound over \eqref{eq:R-size-eps} and
    \eqref{eq:exp-rad}, and using the symmetry between $R$ and $T$, we obtain that
    \[
        \Pr\Brac{\dist_\cF(\Unif(R), \refdist{P}) > \epsilon} \ge \frac{1 - 2 \cdot 0.01}{2} = 0.49 \,.
    \]
    By the soundness of algorithm $\cA$ and a union bound, we conclude that
    \begin{align*}
        \Pr\Brac{\text{Experiment rejects}}
        &\ge 1 - \Pr\Brac{\dist_\cF(\Unif(R), \refdist{P}) \le \epsilon} - \Pr\Brac{\text{$\cA$ fails}} \\
        &\ge 1 - 0.51 - 0.01
        = 0.48 \,,
    \end{align*}
    which contradicts \eqref{eq:experiment-accepts}.
\end{proof}

\subsection{VC bounds for all distributions}
\label{sec:vc-bounds}
As mentioned in the introduction, we are also interested in the sample
complexity of testers which, for any reference distribution $\refdist{P}$ whose
description is given explicitly as an input, and given sample access to an
unknown distribution $P$, test identity to $\refdist{P}$ against family $\cF$ on
input $P$. For the sake of clarity, we explicitly define the model below. Note
that, since this section is concerned with sample complexity rather than time
complexity, we do not discuss here the computational complexity of representing
and computing over the explicitly given reference distribution.

\begin{definition}[$\cF$-identity testing for all reference distributions]
    \label{def:testing-for-all}
    Given a domain $\cX$, a family $\cF$ of Boolean functions $f : \cX \to \bits$,
    and parameters $\epsilon, \delta \in (0,1)$, algorithm $\cA$ is an
    \emph{$(\epsilon, \delta)$-identity tester for all reference distributions
    against $\cF$} if, given the explicit description of a reference distribution
    $\refdist{P}$ over $\cX$ and sample access to a unknown input distribution
    $P$ over $\cX$, with probability at least $1-\delta$ the following conditions
    hold:
    \begin{itemize}
        \item \textbf{(Completeness)} If $P = \refdist{P}$, then $\cA$ accepts.
        \item \textbf{(Soundness)} If $\dist_\cF(P, \refdist{P}) > \epsilon$,
        then $\cA$ rejects.
    \end{itemize}
\end{definition}

It is an immediate consequence of uniform convergence that $O(d/\epsilon^2)$ samples
suffice to test identity for all reference distributions against a family $\cF$ whose
VC dimension is $d$, and this also follows from our distribution-specific upper
bound from \cref{thm:intro-testing-upper-bounds} by known VC bounds for Rademacher
complexity. Nevertheless, since the VC bound is convenient and indeed applied later
in \cref{sec:structured}, here we spell out the tester and its guarantees.

\begin{algorithm}[!ht]
\caption{$\cF$-Identity Tester against VC classes}
\label{alg:idtesting}
    \begin{algorithmic}[1]
    \State \textbf{Input:}
    Samples $S=\left( x_1,\dots,x_m\right)\sim \tgtdist{P}^m$ and reference
    distribution $\refdist{P}$ over $\cX$.
    \For {$f\in \cF$}
    \If{\(\left\lvert  \frac{1}{m} \sum_{i=1}^{m} f(x_i) - \mathbb{E}_{x\sim \refdist{P}}[f(x)] \right\rvert  > \varepsilon/2\)}
    \State \textbf{Output} ``Reject"
    \EndIf
    \EndFor
    \State \textbf{Output} ``Accept"
    \end{algorithmic}
\end{algorithm}

\begin{lemma}
    \label{lemma:vc-upper-bound}
    Let $\cX$ be a domain and $\cF$ be a family of Boolean functions over $\cX$
    of VC dimension $d$. Then \cref{alg:idtesting} is an
    $(\epsilon,\delta)$-identity tester for all reference distributions against
    $\cF$ with sample complexity $m = O\left(\frac{d + 
    \log(1/\delta)}{\epsilon^2}\right)$.
\end{lemma}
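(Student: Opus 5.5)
The plan is to apply the uniform convergence theorem (\cref{th:uc}) directly to the unknown distribution $P$, using the constant labeling $y \equiv 0$ so that classification error coincides with expectation. Concretely, for the joint distribution $P_0$ (samples $(x, 0)$ with $x \sim P$) and any $f \in \cF$, we have $\err(f, P_0) = \E_P[f]$ and $\err(f, \Unif(S)_0) = \frac{1}{m}\sum_i f(x_i)$. Invoking \cref{th:uc} with confidence parameter $\delta$ then gives, with probability at least $1-\delta$ over $S \sim P^m$,
\[
    \sup_{f \in \cF} \Abs{\frac{1}{m}\sum_{i=1}^m f(x_i) - \E_P[f]} \le O\left(\sqrt{\frac{d + \log(1/\delta)}{m}}\right).
\]
Choosing $m = C(d + \log(1/\delta))/\epsilon^2$ for a sufficiently large constant $C$ makes the right-hand side strictly less than $\epsilon/2$. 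Call this the good event $G$.

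Next I would split into the two cases, conditioned on $G$.

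\textbf{Completeness.} If $P = \refdist{P}$, then on $G$ every $f \in \cF$ has empirical mean within $\epsilon/2$ of $\E_{\refdist{P}}[f]$. Hence no test in \cref{alg:idtesting} fires, and the algorithm accepts.

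\textbf{Soundness.} If $\dist_\cF(P, \refdist{P}) > \epsilon$, pick a witness $f$ with $\abs{\E_P[f] - \E_{\refdist{P}}[f]} > \epsilon$; one exists by the definition of supremum. Since the uniform convergence deviation is strictly below $\epsilon/2$, the triangle inequality gives
\[
    \Abs{\frac{1}{m}\sum_{i=1}^m f(x_i) - \E_{\refdist{P}}[f]} > \epsilon - \epsilon/2 = \epsilon/2,
\]
so the algorithm rejects.

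There is no real obstacle here. The only point needing care is making the uniform convergence bound strict, which is handled by choosing the constant $C$ large enough; there is also no issue with the supremum, since a strict inequality yields a witness $f$. Note also that the sample does not depend on $\refdist{P}$, so the same sample works for all reference distributions. This is consistent with \cref{def:testing-for-all}, which requires correctness for each given $\refdist{P}$ with probability $1-\delta$.
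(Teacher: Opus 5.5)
Your proof is correct and follows essentially the same route as the paper: uniform convergence (\cref{th:uc}) applied to $P$ puts every empirical mean within $\epsilon/2$ of its true value, which gives completeness directly and soundness via the triangle inequality on a witness $f$. One minor remark is that the deviation bound need not be strict, because the witness's strict gap $\abs{\E_P[f]-\E_{\refdist{P}}[f]}>\epsilon$ already yields the strict $>\epsilon/2$ rejection condition.
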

\begin{proof}
   \textbf{Completeness $(P=\refdist{P})$:} Since $P=\refdist{P}$, the uniform
   convergence theorem (\cref{th:uc}) implies that, with probability at least
   $1-\delta$, for all functions $f\in \cF$ we have
    \begin{align*}
        \left\lvert  \frac{1}{m} \sum_{i=1}^{m} f(x_i)
            - \mathbb{E}_{x\sim \refdist{P}}[f(x)] \right\rvert
        \le \varepsilon/2 \,.
    \end{align*}
    Hence the tester accepts with probability at least $1-\delta$.

    \textbf{Soundness ($\dist_{\cF}(P,\refdist{P})>\eps$):}
    By definition of fooling distance, there exists a function
    $f^*\in \cF$ such that $\left\lvert \ee_{x\sim P}[f^*(x)]  - \ee_{x\sim
    \refdist{P}}[f^*(x)] \right\rvert > \varepsilon$.
    By \cref{th:uc}, with probability at least $1-\delta$ it holds that
    $\lvert\E_{x\sim S}[f(x)]-\E_{x\sim P}[f(x)]\rvert\leq\varepsilon/2$.
    When this holds, the triangle inequality yields
    \begin{align*}
        \varepsilon&<\lvert\E_{x\sim \refdist{P}}[f^*(x)]-\E_{x\sim P}[f^*(x)]\rvert\\
        &\leq \lvert\E_{x\sim \refdist{P}}[f^*(x)]-\E_{x\sim S}[f^*(x)]\rvert+\lvert\E_{x\sim S}[f^*(x)]-\E_{x\sim P}[f^*(x)]\rvert\\
        &\leq \lvert\E_{x\sim \refdist{P}}[f^*(x)]-\E_{x\sim S}[f^*(x)]\rvert+\eps/2 \,,
    \end{align*}
    so that $\lvert\E_{x\sim \refdist{P}}[f^*(x)]-\E_{x\sim S}[f^*(x)]\rvert >
    \epsilon/2$ and the step of the tester with $f = f^*$ rejects.
\end{proof}

\paragraph*{Lower bounds.} It is possible to show that, for each family $\cF$ of
VC dimension $d$, any identity tester for all distributions against $\cF$
requires $\Omega(\sqrt{d})$ samples; otherwise, by a birthday problem bound, the
uniform distribution over a shattered set cannot be distinguished from the
uniform distribution over a randomly selected half of that set. For convenience,
we delay the statement of this bound until \cref{sec:vc-lower-bounds}, where we
obtain it via a black-box reduction from a lower bound for PAC verification by
\cite{MutrejaS23} instead. More interestingly, in that section we also leverage
a lower bound for PAC verification by \cite{GoldwasserRSY21} to show that there
exist classes $\cF$ of VC dimension $\tilde O(d)$ such that every identity
tester for all reference distributions against $\cF$ requires $\Omega(d)$
samples.

%\section{Computationally Efficient Testing and Applications to Testable Learning}
\section{Relation to Testable Learning}
\label{sec:testable-learning}

In this section, we prove our results relating $\cF$-identity testing to testable learning.
Recall the definition of testable learning from \cref{def:testable-learning}.

\subsection{Testable learning implies \texorpdfstring{$\cF$}{F}-identity testing}
\label{sec:testable-learning-implies-f-identity-testing}
%\begin{definition}[Testable agnostic learning]\label{def:testable}
%We say a tester-learner pair $(T,A)$ testably learns a class $\cF\subseteq \{\pm1\}^{\cX}$ w.r.t.\ a fixed marginal distribution $D_{\cX}$ on $\cX$ up to excess error $\eps>0$ if for any distribution $D$ on $\cX\times\{\pm1\}$, the following holds:
%\begin{itemize}
%    \item \textbf{(Soundness/composability)} If $D$ is such that the tester $T$ accepts with high probability over a sample drawn from $D$, then the learner $A$ succeeds in agnostically learning $\cF$ w.r.t.\ $D$, i.e., with probability at least $1-\delta$ it produces a hypothesis $h$ such that
%    \[
%      \err(h,D)\le\opt(\cF,D)+\eps,
%    \]
%    where, $\err(h,D):=\Pr_{(x,y)\sim D}[h(x)\neq y].$
%    \item \textbf{(Completeness)} Whenever $D$ truly has marginal $D_{\cX}$ on $\cX$, the tester $T$ accepts with probability at least $1-\delta$.
%\end{itemize}
%\end{definition}

We start with the following lemma relating the fooling distance between $P$ and
$Q$ to the optimal classification error of a certain mixture distribution.

\begin{lemma}\label{lemma:mixtureopt}
    Let $\cF\subseteq \bits^{\cX}$ be a class of Boolean functions and let
    $P,Q\in \Delta(\cX)$ be probability distributions. Define the mixture
    (joint) distribution $M = M(P,Q) \in \Delta(\cX\times\bits)$ by $M =
    \frac{1}{2}(P,1) + \frac{1}{2}(Q,0)$. Then
    $\opt(\cF,M)=\frac{1}{2}(1-\dist_{\cF}(P,Q))$.
\end{lemma}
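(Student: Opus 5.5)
The plan is to compute $\err(h,M)$ exactly for each $h\in\cF$ and then optimize over $\cF$, using the standing assumption that $\cF$ is closed under complementation. For any $h:\cX\to\bits$, a sample from $M$ is with probability $1/2$ a point $x\sim P$ labeled $1$, which $h$ misclassifies iff $h(x)=0$, and with probability $1/2$ a point $x\sim Q$ labeled $0$, misclassified iff $h(x)=1$. Hence
\[
\err(h,M)=\tfrac12\bigl(1-\E_P[h]\bigr)+\tfrac12\E_Q[h]=\tfrac12\bigl(1-(\E_P[h]-\E_Q[h])\bigr).
\]

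Taking the infimum over $h\in\cF$ gives $\opt(\cF,M)=\frac12\bigl(1-\sup_{h\in\cF}(\E_P[h]-\E_Q[h])\bigr)$. The only remaining point, and the only real obstacle, is that this is the \emph{signed} supremum, whereas $\dist_\cF(P,Q)$ uses absolute values. Here closure under complementation is used: if $h\in\cF$ then $1-h\in\cF$, and
\[
\E_P[1-h]-\E_Q[1-h]=-(\E_P[h]-\E_Q[h]).
\]
So the set $\{\E_P[h]-\E_Q[h]:h\in\cF\}$ is symmetric about $0$, and its supremum equals $\sup_{h\in\cF}\abs{\E_P[h]-\E_Q[h]}=\dist_\cF(P,Q)$. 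Substituting gives $\opt(\cF,M)=\frac12(1-\dist_\cF(P,Q))$.

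No attainment of the supremum is needed, since infima and suprema are handled directly through the affine relation above.
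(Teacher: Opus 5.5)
Your proposal is correct and follows essentially the same route as the paper: you compute $\err(h,M)=\frac12\bigl(1-(\E_P[h]-\E_Q[h])\bigr)$ and then use closure of $\cF$ under complementation to replace the signed supremum by $\dist_\cF(P,Q)$. Your symmetric-set remark simply spells out the step the paper compresses into one line.
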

\begin{proof}
    For each $f\in \cF$, the definition of classification error yields
    \begin{align*}
        \err(f,M)&=\Pr_{(x,y)\sim M}\left[f(x)\neq y\right]\\
        &=\frac{1}{2}\left(\Pr_{x\sim P}\left[f(x)\neq 1\right]+\Pr_{x\sim Q}\left[f(x)\neq 0\right]\right)\\
        &=\frac{1}{2}\left(1-\left(\Pr_{x\sim P}\left[f(x)= 1\right]-\Pr_{x\sim Q}\left[f(x)=1\right]\right)\right) \,.
    \end{align*}
    Minimizing over all $f \in \cF$ and using the assumption that $\cF$ is
    closed under complement we get
    \[
        \opt(\cF,M)=\frac{1}{2}\left(1-\dist_{\cF}(P,Q)\right) \,. \qedhere
    \]
\end{proof}

We now describe our construction showing how to convert a testable agnostic learner
(for function class $\cF$ with respect to the marginal distribution
$\refdist{P}$) to an $\cF$-identity tester to reference distribution $\refdist{P}$. Let
$m_{TL}$ denote the sample complexity of such a testable agnostic learning
algorithm.
\\

\textbf{Algorithm:}\label{alg:tl-to-it}
\begin{enumerate}
    \item Draw a sample $S_{TL}$ of size $m_{TL}$ from the distribution
    $M(\refdist{P},P)$. Additionally, draw a sample $S_V$ of size
    $m_V=O\left(\frac{\log (1/\delta)}{\eps^2}\right)$ from $M(\refdist{P},P)$.
    \item Simulate the testable learner with input $S_{TL}$. If that
    algorithm rejects, then \textbf{reject}.
    \item Otherwise, let $h$ be the hypothesis returned by the testable learner.
    If the empirical error of $h$ on $S_V$ is strictly less than
    $\frac{1}{2}-\frac{7\eps}{40}$, \textbf{reject}; else, \textbf{accept}. 
\end{enumerate}

We now prove the correctness of this reduction. Formally, we get the following guarantee:
\begin{theorem}[Testable learning implies $\cF$-identity testing; refinement of
    \cref{thm:intro-testable-learning-implies-identity-testing}]
    \label{thm:reduction}
    Let $\TL$ be an $(\epsilon/5, \delta/2)$-testable learner for $\cF$
    with respect to marginal $\cD_\cX=\refdist{P}$. Then there exists an $(\epsilon,
    \delta)$-identity tester to reference distribution $\refdist{P}$, with
    sample complexity equal to the sample complexity of $\TL$ plus an additional
    $O(\log(1/\delta)/\epsilon^2)$, and running time polynomial in the running
    time of $\TL$.
\end{theorem}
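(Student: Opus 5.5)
Our plan is to analyze the algorithm stated just above, run with $\TL$ on samples from the mixture $M \define M(\refdist{P},P)$. By \cref{lemma:mixtureopt}, $\opt(\cF,M) = \frac{1}{2}(1-\dist_\cF(\refdist{P},P))$. The sample complexity and running time claims are immediate: we use $m_{TL}+m_V$ samples, and each sample of $M$ is simulated by a fair coin flip plus one draw from $P$ or $\refdist{P}$; the latter is explicit, so it can be sampled within the simulation overhead. Running $h$ on $S_V$ costs polynomial time. We take $m_V = C\log(1/\delta)/\epsilon^2$ with $C$ large enough that Hoeffding's inequality gives $\abs{\err(h,S_V)-\err(h,M)} \le \epsilon/40$ with probability at least $1-\delta/2$. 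The inequality applies to the fixed hypothesis $h$, since $S_V$ is drawn independently of $S_{TL}$, and it holds conditionally on $h$. A union bound with the $\delta/2$ failure probability of $\TL$ then leaves a good event of probability at least $1-\delta$, on which we argue deterministically.

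\textbf{Completeness.} Suppose $P=\refdist{P}$. Then the marginal of $M$ on $\cX$ is $\refdist{P}$, so by completeness of $\TL$ it does not reject. Moreover, the label is independent of $x$ and uniform, so every hypothesis, including the output $h$ even if $h\notin\cF$, has $\err(h,M)=1/2$. Hence $\err(h,S_V) \ge 1/2-\epsilon/40 \ge \frac12 - \frac{7\epsilon}{40}$, and the tester accepts. We do not even need the soundness guarantee of $\TL$ here.

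\textbf{Soundness.} Suppose $\dist_\cF(P,\refdist{P})>\epsilon$. If $\TL$ rejects, we reject. Otherwise its soundness gives
\[
\err(h,M)\le \opt(\cF,M)+\epsilon/5 < \tfrac12-\tfrac{\epsilon}{2}+\tfrac{\epsilon}{5} = \tfrac12-\tfrac{3\epsilon}{10},
\]
so $\err(h,S_V) < \frac12 - \frac{3\epsilon}{10}+\frac{\epsilon}{40} = \frac12-\frac{11\epsilon}{40} < \frac12-\frac{7\epsilon}{40}$, and the tester rejects. The threshold $7\epsilon/40$ sits comfortably between the two cases, so the constants check out.

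The only mildly delicate point is conditioning: we need Hoeffding applied to $h$ after fixing $S_{TL}$, and we need completeness to hold for improper $h$. The latter follows because the labels are independent of $x$ when $P=\refdist{P}$. We expect no real obstacle beyond this bookkeeping.
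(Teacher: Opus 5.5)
Your proposal is correct and matches the paper's proof. Both use the mixture $M(\refdist{P},P)$, the identity $\opt(\cF,M)=\frac12(1-\dist_\cF(\refdist{P},P))$ from \cref{lemma:mixtureopt}, the fact that every (even improper) $h$ has $\err(h,M)=\frac12$ when $P=\refdist{P}$, and a Hoeffding check against the threshold $\frac12-\frac{7\epsilon}{40}$. The differences are minor: you estimate the empirical error to accuracy $\epsilon/40$ rather than $\epsilon/8$, which leaves slack (the paper's constants land exactly at the threshold in the soundness case and rely on strict inequality), and you make explicit that $S_V$ is independent of $S_{TL}$, which the paper leaves implicit.
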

\begin{proof}
We analyze the algorithm described just above.

\textbf{Completeness ($\refdist{P}=P$).} Algorithm $\TL$ rejects only with
probability at most $\delta/2$ by its completeness condition. If $\TL$ produces
a hypothesis $h$, then since the mixture distribution in this case is
$M=\frac{1}{2}(P,1)+\frac{1}{2}(P,0)$, the distribution over the labels is
uniform, which implies that
\begin{equation}\label{eq:opterror}
    \err(h,M)=\frac{1}{2}.
\end{equation}
By Hoeffding's inequality, if we choose $\abs{S_V} =
\frac{C\log(1/\delta)}{\epsilon^2}$ for sufficiently large constant $C$,
with probability at least $1-\delta/2$ it holds that
\begin{equation}\label{eq:hoeffdingopt}
    \lvert\err(h,\Unif(S_V))-\err(h,M)\rvert\leq \eps/8.
\end{equation}
Taking the union bound over the events that $\TL$ rejects and that
%described by \eqref{eq:opterror} and
\eqref{eq:hoeffdingopt} fails to hold,
we get that with
probability at least $1-\delta$, algorithm $\TL$ outputs a
hypothesis $h$ such that $\err(h,\Unif(S_V))\geq \frac{1}{2}-\eps/8
> \frac{1}{2} - 7\epsilon/40$, and in this case the algorithm accepts.

\textbf{Soundness ($\dist_{\cF}(\refdist{P},P)>\eps$).}
If $\TL$ rejects, then our tester rejects and we are done. Suppose $\TL$ outputs
a hypothesis $h$. By \cref{lemma:mixtureopt}, we have
\begin{equation}\label{eq:optbound}
    \opt(\cF,M)=\frac{1}{2}\left(1-\dist_{\cF}(\refdist{P},P)\right)< \frac{1}{2}\left(1-\eps\right).
\end{equation}
By the soundness of $\TL$ and \eqref{eq:optbound},
with probability at least $1-\delta/2$ we have
\[
    \err(h,M)\leq \opt(\cF,M)+\eps/5 < \frac{1}{2}\left(1-\eps \right)+\eps/5.
\]
By Hoeffding's inequality and the union bound as before, with probability at
least $1-\delta$ we have
\[
    \err(h,\Unif(S_V))< \frac{1}{2}-\eps/2+\eps/5+\eps/8=\frac{1}{2}-7\eps/40 \,,
\]
and in this case the tester rejects.
\end{proof}

\subsection{\texorpdfstring{$\cF$}{F}-Identity testing implies testable query learning}
Before proceeding, we also define distribution-specific agnostic learning, as follows:

\begin{definition}[Distribution-specific agnostic query learning]
    We say algorithm $\cL$ is an \emph{$(\epsilon, \delta)$-agnostic query
    learner} for family $\cF$ under marginal $\refdist{P}$ if, given query
    access to input function $f : \cX \to \zo$, algorithm $\cL$ outputs a
    function $h : \cX \to \zo$ satisfying $\err(h, \refdist{P}^f) \le \opt(\cF,
    \refdist{P}^f) + \epsilon$ with probability at least $1-\delta$. Algorithm
    $\cL$ is called \emph{proper} if its outputs satisfies $h \in \cF$.
\end{definition}

The class $f \oplus \cF$ of error distinguishers, defined implicitly by the
labeling function $f$ to which the algorithm has query access, will be crucial
in our reductions. Below, we define this class and the formal model of identity
testing against it.

\begin{definition}[Error distinguishers]
    Let $f : \cX \to \zo$ be a Boolean function over $\cX$. Write $f \oplus \cF
    \define \{ f \oplus g : g \in \cF \}$ for the family of \emph{error
    distinguishers} defined by function $f$ and family $\cF$.
\end{definition}

\begin{definition}[Distribution testing against implicit error distinguishers]
    We say algorithm $\cT$ is an $(\epsilon, \delta)$-identity tester to
    $\refdist{P}$ against \emph{implicit $\cF$-error distinguishers} if,
    given query access to input function $f : \cX \to \zo$, algorithm $\cT$ is
    an $(\epsilon, \delta)$-identity tester to $\refdist{P}$ against family $f
    \oplus \cF$, that is, it uses queries to $f$ and samples from input
    distribution $P$ to distinguish between the cases $P = \refdist{P}$ and
    $\dist_{f \oplus \cF}(P, \refdist{P}) > \epsilon$ with success probability
    at least $1-\delta$.
\end{definition}

The next lemma shows that, if we have a distribution-specific agnostic query
learner for a family $\cF$ under marginal $\refdist{P}$, as well as an identity
tester to $\refdist{P}$ against implicit $\cF$-error distinguishers, then we
obtain a testable query learner for $\cF$ with respect to marginal
$\refdist{P}$.

\begin{lemma}[Testable query learning via identity testing against implicit error distinguishers]
    \label{lemma:testable-query-learning-from-implicit}
    Suppose $\cT$ is an $(\epsilon, \delta)$-identity tester to $\refdist{P}$
    against implicit $\cF$-error distinguishers using $s_\cT$ samples, $q_\cT$
    queries, and $t_\cT$ time. Suppose $\cL$ is an $(\epsilon, \delta)$-agnostic
    query learner for $\cF$ under marginal $\refdist{P}$ using $q_\cL$ queries
    and $t_\cL$ time. Suppose a sample from $\refdist{P}$ may be produced in
    time $t_{\refdist{P}}$. Then there exists a $(3\epsilon, 2\delta)$-testable
    query learner $\cA$ for $\cF$ with respect to marginal $\refdist{P}$ using
    $s_\cT + O\left(\frac{\log(1/\delta)}{\epsilon^2}\right)$ samples from $P$,
    $q_\cT + q_\cL + O\left(\frac{\log(1/\delta)}{\epsilon^2}\right)$ queries to
    $f$, and $\poly(t_\cT) + \poly(t_\cL) + \poly(t_{\refdist{P}}
    \log(1/\delta)/\epsilon)$ time. Also, $\cA$ is proper if $\cL$ is.
\end{lemma}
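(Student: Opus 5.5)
The algorithm $\cA$ runs the tester first and, if it passes, runs the learner on simulated reference samples. On input sample access to $P$ and query access to $f$:
\begin{enumerate}
    \item Run $\cT$ with samples from $P$ and queries to $f$. If $\cT$ rejects, \textbf{reject}.
    \item Otherwise, run $\cL$, answering its queries with queries to $f$, and return its hypothesis $h$.
\end{enumerate}
Here $\cL$ is taken to learn under $\refdist{P}^f$. Where $\cL$ needs samples from $\refdist{P}$, we generate them ourselves in time $t_{\refdist{P}}$ each. Any such sample $x$ can be labelled by querying $f(x)$. This accounts for the stated resource bounds. The extra $O(\log(1/\delta)/\epsilon^2)$ samples and queries budget a Hoeffding validation step. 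That step may be unnecessary, but I would keep it for safety; in it, $\err(h,\refdist{P}^f)$ is estimated using fresh labelled reference samples. Properness is inherited, since $\cA$ outputs the hypothesis of $\cL$ unchanged.

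\textbf{Completeness.} Suppose $P=\refdist{P}$. Then $\cT$ accepts with probability at least $1-\delta$, and $\cA$ outputs a hypothesis. Since $P=\refdist{P}$, the hypothesis of $\cL$ satisfies $\err(h,P^f)\le \opt(\cF,P^f)+\epsilon$ with probability at least $1-\delta$. So both conditions hold together with probability at least $1-2\delta$.

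\textbf{Soundness.} This is the main step. Suppose $\cA$ outputs $h$ while $\dist_{f\oplus\cF}(P,\refdist{P})>\epsilon$. That event has probability at most $\delta$ by the soundness of $\cT$. Otherwise $\dist_{f\oplus\cF}(P,\refdist{P})\le\epsilon$. Since $\err(g,D^f)=\E_D[f\oplus g]$, every $g\in\cF$ satisfies
\[
\abs{\err(g,P^f)-\err(g,\refdist{P}^f)}\le\epsilon .
\]
Hence $\opt(\cF,\refdist{P}^f)\le\opt(\cF,P^f)+\epsilon$.

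The delicate point is transferring the error of $h$ itself, because $h$ need not lie in $\cF$ when $\cL$ is improper. For proper $\cL$ the transfer is immediate:
\[
\err(h,P^f)\le \err(h,\refdist{P}^f)+\epsilon\le \opt(\cF,\refdist{P}^f)+2\epsilon\le\opt(\cF,P^f)+3\epsilon .
\]
This gives exactly the factor $3\epsilon$ in the statement. For improper $h$ we cannot directly bound $\abs{\err(h,P^f)-\err(h,\refdist{P}^f)}$. I would first try to read the guarantee of the lemma as intended for hypotheses to which the distinguisher bound applies. Failing that, I would apply the tester to the class $f\oplus(\cF\cup\{h\})$, or note that the applications use proper $\cL$. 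I expect this transfer step for the output hypothesis to be the main obstacle.

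The failure probabilities of $\cT$ and $\cL$ total at most $2\delta$ by a union bound, so $\cA$ is a $(3\epsilon,2\delta)$-testable query learner.
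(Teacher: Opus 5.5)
Your algorithm (run $\cT$, then simulate $\cL$ on $\refdist{P}$ using queries) and your three-step chain match the paper. However, there is a genuine gap exactly where you suspected it. The lemma is stated for an arbitrary, possibly improper, $\cL$; properness is only an extra conclusion. For $h\notin\cF$, the step $\err(h,P_f)\le\err(h,\refdist{P}^f)+\epsilon$ does not follow from $\dist_{f\oplus\cF}(P,\refdist{P})\le\epsilon$, and your algorithm is in fact unsound in that case. Take $\cF$ to be the two constant functions, $f\equiv 0$, and $P,\refdist{P}$ with disjoint supports. Then $\dist_{f\oplus\cF}(P,\refdist{P})=0$, so $\cT$ may accept. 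A valid improper learner that flips its output off the support of $\refdist{P}$ may return $h=\Ind[x\notin\mathrm{supp}(\refdist{P})]$. This $h$ has $\err(h,\refdist{P}^f)=0$, but $\err(h,P_f)=1$ while $\opt(\cF,P_f)=0$. Falling back on ``the applications use proper $\cL$'', or reinterpreting the guarantee, only proves a weaker statement. The validation step you propose, estimating $\err(h,\refdist{P}^f)$ on reference samples, carries no information about $P$.

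The missing idea is to check the single extra distinguisher $f\oplus h$ directly rather than through $\cT$; this is what the $O(\log(1/\delta)/\epsilon^2)$ extra samples from $P$ in the statement pay for. After obtaining $h$, the paper estimates both $\E_P[f\oplus h]=\err(h,P_f)$ (fresh samples from $P$ plus queries) and $\E_{\refdist{P}}[f\oplus h]$ to within $\epsilon/4$. It rejects if the two estimates differ by more than $\epsilon/2$. This rejection is always permissible: when $P=\refdist{P}$ the two expectations coincide, so the check rejects with probability at most $\delta$, which is where the second $\delta$ in completeness comes from. In soundness, if $\abs{\E_P[f\oplus h]-\E_{\refdist{P}}[f\oplus h]}>\epsilon$, the check rejects with probability at least $1-\delta$. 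So whenever $h$ is output (outside a $\delta$-probability event), the first inequality of your chain holds for arbitrary $h$. The other two inequalities only involve $g\in\cF$ and go through as you wrote. Your idea of testing against $f\oplus(\cF\cup\{h\})$ was pointing this way; the key point is that the $h$-part is one fixed function, so a Hoeffding estimate suffices.
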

\begin{proof}
    The algorithm $\cA$ first simulates tester $\cT$ on input distribution $P$.
    If $\cT$ rejects, algorithm $\cA$ rejects. Otherwise, $\cA$ simulates
    learner $\cL$ by taking samples from (the explicit) distribution
    $\refdist{P}$ and querying $f$ to obtain hypothesis $h$. It then uses
    $O(\log(1/\delta)/\epsilon^2)$ samples from $P$ and queries to $f$ to
    estimate the quantities $\E_P[f \oplus h]$ and $\E_{\refdist{P}}[f \oplus
    h]$ to additive error $\epsilon/4$ with overall success probability
    $1-\delta$. Finally, $\cA$ returns the hypothesis $h$ if these two estimates
    are within $\epsilon/2$ of each other, and rejects otherwise.

    The sample, query and time complexity claims are immediate, and so is
    properness when $\cL$ is proper. Completeness follows from the fact that
    $\cT$ accepts with probability at least $1-\delta$ when $P=\refdist{P}$, and
    that in this case we have $\E_P[f \oplus h] = \E_{\refdist{P}}[f \oplus h]$,
    so the second step only rejects with probability at most $\delta$. It
    remains to verify soundness.

    First, suppose $\dist_{f \oplus \cF}(P, \refdist{P}) > \epsilon$. Then, with
    probability at least $1-\delta$, tester $\cT$ rejects and hence so does
    $\cA$, so soundness is satisfied. Similarly, if the hypothesis $h$
    produced\footnote{For simplicity of the analysis, we assume that $\cA$
    always simulates both $\cT$ and $\cL$, so that $h$ is available even if it
    is ultimately discarded.} by $\cL$ is such that $\abs{\E_P[f \oplus h] -
    \E_{\refdist{P}}[f \oplus h]} > \epsilon$, then with probability at least
    $1-\delta$ the estimates of these expectations are more than $\epsilon/2$
    apart, so $\cA$ rejects and soundness is satisfied.

    On the other hand, suppose $\dist_{f \oplus \cF}(P, \refdist{P}) \le
    \epsilon$ and that the hypothesis $h$ is such that $\abs{\E_P[f \oplus h] -
    \E_{\refdist{P}}[f \oplus h]} \le \epsilon$. Suppose that
    $\err(h, \refdist{P}^f) \le \opt(\cF, \refdist{P}^f) + \epsilon$, which
    occurs with probability at least $1-\delta$. If $\cA$ rejects, then
    soundness is satisfied. Otherwise, $\cA$ outputs hypothesis $h$, and we have
    \begin{align*}
        \err(h, P_f)
        &= \E_P[f \oplus h]
        \le \E_{\refdist{P}}[f \oplus h] + \epsilon
        = \err(h, \refdist{P}^f) + \epsilon \\
        &\le \inf_{g \in \cF} \E_{\refdist{P}}[f \oplus g] + 2\epsilon
        \le \inf_{g \in \cF} \E_P[f \oplus g] + 3\epsilon
        = \opt(\cF, P_f) + 3\epsilon \,,
    \end{align*}
    where the first inequality is by the assumption that $\E_P[f \oplus h]$ and
    $\E_{\refdist{P}}[f \oplus h]$ are close to each other, the second inequality
    is by the assumption on $\err(h, \refdist{P}^f)$, and the third inequality
    is by the assumption that $\dist_{f \oplus \cF}(P, \refdist{P}) \le
    \epsilon$ and the definition of family $f \oplus \cF$. Thus soundness is
    satisfied.
\end{proof}

The lemma above raises the question: why should one hope to have testers against
implicit error distinguishers? Next, we show that one may in fact obtain such a
tester from an $\cF$-identity tester. The idea is to use samples from $P$ and queries to
$f$ to create two ``synthetic'' distributions which, depending on the value of
$f$, mix components of $P$ and $\refdist{P}$. Then, we can show by the triangle
inequality that at least one such distribution must be such that family $\cF$
already witnesses the distance captured by family $f \oplus \cF$.

\begin{lemma}[Tester against implicit $\cF$-error distinguishers from tester against $\cF$]
    \label{lemma:implicit-tester-reduction}
    Suppose $\cT$ is an $(\epsilon, \delta)$-identity tester to $\refdist{P}$
    against $\cF$ using $s_\cT$ samples and $t_\cT$ time. Also, suppose a sample
    from $\refdist{P}$ may be produced in time $t_{\refdist{P}}$. Then
    there exists a $(7\epsilon, 4\delta)$-identity tester to $\refdist{P}$
    against implicit $\cF$-error distinguishers using $s = O\left(s_\cT +
    \frac{\log(1/\delta)}{\epsilon^2}\right)$ samples from $P$, $s$ queries to
    $f$, and $\poly(s \cdot t_{\refdist{P}}) + \poly(t_\cT)$ time.
\end{lemma}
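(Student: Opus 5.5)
The plan is to reduce testing against $f \oplus \cF$ to two runs of $\cT$ on synthetic distributions, each built by splicing $P$ and $\refdist{P}$ along the level sets of $f$. The starting point is the identity $f \oplus g = f\cdot(1-g) + (1-f)\cdot g$, which gives, for every $g \in \cF$,
\[
\E_P[f\oplus g]-\E_{\refdist{P}}[f\oplus g] = \bigl(\E_P[f(1-g)]-\E_{\refdist{P}}[f(1-g)]\bigr) + \bigl(\E_P[(1-f)g]-\E_{\refdist{P}}[(1-f)g]\bigr).
\]
If $\dist_{f\oplus\cF}(P,\refdist{P}) > 7\epsilon$, then for some $g$ one of the two brackets exceeds $3.5\epsilon$ in absolute value. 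Both $g$ and $1-g$ lie in $\cF$ by closure under complement.

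I would define two idealized distributions. $Q_1$ agrees with $P$ on $\{f=1\}$ and with $\refdist{P}$ on $\{f=0\}$. $Q_0$ agrees with $\refdist{P}$ on $\{f=1\}$ and with $P$ on $\{f=0\}$. These are genuine distributions only when $P(f=1)=\refdist{P}(f=1)$, so the tester proceeds as follows.

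\textbf{Step 1 (mass check).} Estimate $P(f=1)$ from samples and $\refdist{P}(f=1)$ by sampling $\refdist{P}$, each to additive error $\epsilon/2$, using $O(\log(1/\delta)/\epsilon^2)$ samples and queries. Reject if the estimates differ by more than $\epsilon$. Completeness of this step is immediate. If it passes, then $\abs{P(f=1)-\refdist{P}(f=1)}\le 2\epsilon$.

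\textbf{Step 2 (sampler for $R_1$).} Draw $x\sim P$ and query $f(x)$. If $f(x)=1$, output $x$. Otherwise output a sample of $\refdist{P}$ conditioned on $f=0$, obtained by rejection sampling with queries to $f$. In the degenerate case where the estimated $\refdist{P}(f=0)$ is below $\epsilon$, output $x$ itself. The sampler $R_0$ is defined symmetrically. When $P=\refdist{P}$, both samplers output exact samples from $\refdist{P}$, so completeness of $\cT$ is preserved. In general, $R_1(A) = P(A\cap\{f=1\}) + P(f=0)\,\refdist{P}(A\mid f=0)$. After Step 1, this gives $\dist_\tv(R_1, Q_1)\lesssim 2\epsilon$, and the same bound covers the degenerate case because both masses on $\{f=0\}$ are then $O(\epsilon)$.

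\textbf{Step 3 (run $\cT$ twice).} Run $\cT$ on $R_1$ and on $R_0$ with confidence $\delta$ each, and reject if either run rejects. Union-bounding with Step 1 and the rejection-sampling failure event gives total failure probability at most $4\delta$.

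For soundness, write $h\in\{g,1-g\}$ for the function appearing in each bracket. Then
\[
\E_{Q_1}[h]-\E_{\refdist{P}}[h] = \E_P[fh]-\E_{\refdist{P}}[fh] + \bigl(\refdist{P}(f=0)-P(f=0)\bigr)\cdot \E_{\refdist{P}}[h\mid f=0]\cdot(\text{sign}),
\]
so this difference is the first bracket up to $\pm 2\epsilon$. Adding the TV error between $R_1$ and $Q_1$ gives $\dist_\cF(R_1,\refdist{P}) \ge 3.5\epsilon - O(\epsilon)$, and likewise for $R_0$ with the second bracket. The constants have to be balanced so that one of these exceeds $\epsilon$. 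The slack, with $7\epsilon$ against roughly $4\epsilon$ of accumulated error, looks adequate, but I would tighten the estimation accuracies to $\epsilon/4$ if needed.

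I expect the main obstacle to be this bookkeeping: relating the mass mismatch to the TV and fooling error of the spliced distributions, and handling small $\refdist{P}(f=0)$ or $\refdist{P}(f=1)$ so that rejection sampling costs only $O(1/\epsilon)$ draws of $\refdist{P}$ per sample. Samples from $\refdist{P}$ are free up to time $t_{\refdist{P}}$. Queries to $f$ on those draws add a $1/\epsilon$ factor, which I would either absorb or avoid. One way to avoid it is to replace conditional sampling with splicing by independent pairs $(x\sim P,\, x'\sim\refdist{P})$, outputting $x$ if $f(x)=1$ and otherwise $x'$ if $f(x')=0$, with a fallback to $x$. This uses $O(1)$ queries per sample, at the cost of a slightly messier TV analysis.
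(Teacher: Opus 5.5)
Your construction matches the paper's. You split $f\oplus g=f(1-g)+(1-f)g$, compare the mass of $\{f=1\}$ under $P$ and $\refdist{P}$, run $\cT$ on a distribution that follows $P$ on one level set of $f$ and $\refdist{P}$ on the other, and fall back to $P$ when that level set is light under $\refdist{P}$. The paper instead mixes with weight $\refdist{P}(f=1)$ rather than $P(f=1)$, and serves samples from pre-drawn pools rather than by rejection sampling; both differences are inessential. However, the bookkeeping you defer does not close as written, and the fallback case needs an idea you do not state.

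Write $B_1=\E_P[f(1-g)]-\E_{\refdist{P}}[f(1-g)]$ and $B_2=\E_P[(1-f)g]-\E_{\refdist{P}}[(1-f)g]$.

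\textbf{Spliced case.} Here $\E_{Q_1}[1-g]-\E_{\refdist{P}}[1-g]$ is exactly $B_1$. The correction term in your display is $\E_{R_1}[1-g]-\E_{Q_1}[1-g]$, so adding the TV error between $R_1$ and $Q_1$ counts it twice. Also, the budget is $\abs{B_1}>3.5\epsilon$, not $7\epsilon$, so ``roughly $4\epsilon$ of accumulated error'' would leave nothing. Computed directly, $\abs{\E_{R_1}[1-g]-\E_{\refdist{P}}[1-g]}\ge\abs{B_1}-\abs{P(f=1)-\refdist{P}(f=1)}>1.5\epsilon$, which is fine.

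\textbf{Fallback case.} This is where the real gap is. Here $R_1=P$, and the error is $\Delta=\E_P[(1-f)(1-g)]-\E_{\refdist{P}}[(1-f)(1-g)]$, bounded only by $M=\max\{P(f=0),\refdist{P}(f=0)\}$.
\begin{itemize}
\item With your thresholds, $M$ can approach $3.5\epsilon$, which leaves no margin at all. Saying both masses are $O(\epsilon)$ does not help, because the constant $7$ is fixed.
\item With accuracies tightened to $\epsilon/4$ as you suggest, $M<2.75\epsilon$, and the per-bracket bound still gives only $3.5\epsilon-2.75\epsilon=0.75\epsilon$.
\end{itemize}
The missing observation is that in this branch $B_2$ is \emph{also} bounded by $M$. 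So you should use the whole hypothesis:
\[
\abs{B_1+\Delta}\ge\abs{B_1+B_2}-\abs{B_2}-\abs{\Delta}>7\epsilon-2M,
\]
which exceeds $\epsilon$ once $M<3\epsilon$. This is exactly where the paper spends the constant $7$: it loses $6\epsilon$ in the fallback branch and only $\epsilon$ in the spliced one. Alternatively, shrink the thresholds themselves, not only the accuracies, until $M<2.5\epsilon$.

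\textbf{Queries.} The $1/\epsilon$ factor does not arise in aggregate. A conditional draw from $\refdist{P}$ given $f=0$ is needed only when $f(x)=0$, which has probability $P(f=0)$, and it costs $1/\refdist{P}(f=0)$ queries in expectation. So each sample of $R_1$ costs $O(1+P(f=0)/\refdist{P}(f=0))=O(1)$ expected queries outside the fallback branch, because there $\refdist{P}(f=0)=\Omega(\epsilon)$ and $\abs{P(f=0)-\refdist{P}(f=0)}=O(\epsilon)$.

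For a worst-case bound, cap the total number of draws from $\refdist{P}$ at $O(s_\cT+\log(1/\delta)/\epsilon^2)$ and reject on overflow; completeness then follows from a binomial tail bound. This is the paper's pool argument: pre-draw $s$ samples, split them by $f$, and reject if a pool runs out.

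By contrast, absorbing a $1/\epsilon$ factor would not meet the stated query bound. Your independent-pairs sampler with fallback to $x$ also does not work as analyzed. It outputs a point of $P$ conditioned on $f=0$ with probability $P(f=0)\refdist{P}(f=1)$, so it is not TV-close to $Q_1$, and the analysis you sketch does not transfer to it.
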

\begin{proof}
    Let $\mu_P \define \E_P[f]$ and $\mu_{\refdist{P}} \define
    \E_{\refdist{P}}[f]$. For each $b \in \zo$, let $P|b$ denote the
    distribution of $x \sim P$ conditional on $f(x) = b$, set arbitrarily in the
    pathological case where it is not uniquely defined, and define
    $\refdist{P}|b$ analogously. Given $\alpha \in [0,1]$, we define the
    distributions $P^\alpha_0$, $P^\alpha_1$ as follows: for each $b \in \zo$, a
    sample $x \sim P^\alpha_b$ is produced by sampling $b' \sim \bern(\alpha)$,
    and then $x \sim P|b'$ if $b' = b$ or $x \sim \refdist{P}|b'$ if $b' \ne b$.

    \textbf{Algorithm.} The tester works as follows:
    \begin{enumerate}
        \item Take $s$ samples from $P$ and, using queries to $f$, divide the
        samples into tuples $S_0^P$ and $S_1^P$ by putting each sampled element
        $x$ into tuple $S_{f(x)}^P$. Do the same using samples from
        $\refdist{P}$ to construct the tuples $S_0^{\refdist{P}}$ and
        $S_1^{\refdist{P}}$.

        \item\label{item:step-mu} Compute $\hat{\mu}_P \define
        \frac{\abs{S_1^P}}{s}$ and $\hat{\mu}_{\refdist{P}} \define
        \frac{\abs{S_1^{\refdist{P}}}}{s}$. If $\abs{\hat{\mu}_P -
        \hat{\mu}_{\refdist{P}}} > \epsilon/2$, \textbf{reject}.

        \item For each $b \in \zo$, do the following:
            \begin{enumerate}
                \item If $\min\left\{\hat{\mu}_{\refdist{P}},
                1-\hat{\mu}_{\refdist{P}}\right\} \ge \epsilon/2$, let $R_b =
                P^{\mu_{\refdist{P}}}_b$. Otherwise, let $R_b = P$.

                \item\label{item:step-simulation} Simulate tester $\cT$ on input
                distribution $R_b$. When $R_b=P$, a sample from $R_b$ is
                obtained by sampling from $P$. When $R_b =
                P^{\mu_{\refdist{P}}}_b$, draw a sample from $R_b$ as follows.
                First, draw $b' \sim \bern(\mu_{\refdist{P}})$ by drawing $z
                \sim \refdist{P}$ and letting $b' = f(z)$. Then, if $b' = b$,
                provide a fresh element from $S_{b'}^P$, and if $b' \ne b$,
                provide a fresh element from $S_{b'}^{\refdist{P}}$; if this is
                impossible because the corresponding tuple has run out of fresh
                elements, \textbf{reject}.

                \item If the simulation of $\cT$ rejects, \textbf{reject}.
            \end{enumerate}
        \item If no previous step has rejected, \textbf{accept}.
    \end{enumerate}

    The sample, query, and time complexity claims are immediate, so we now prove
    correctness. We show completeness and soundness separately.

    \textbf{Completeness.} Suppose $P=\refdist{P}$. Then $\mu_P =
    \mu_{\refdist{P}}$, and hence $\abs{\hat{\mu}_P - \hat{\mu}_{\refdist{P}}}
    \le \epsilon/2$, with probability at least $1-\delta$ by Hoeffding's
    inequality, and in this case Step~\ref{item:step-mu} does not reject.

    We now claim that, with probability at least $1-\delta$, the algorithm has
    enough elements in the tuples $S_b^P, S_b^{\refdist{P}}$ to simulate $\cT$
    without rejecting in Step~\ref{item:step-simulation}. If
    $\min\left\{\mu_{\refdist{P}}, 1-\mu_{\refdist{P}}\right\} < \epsilon/4$,
    then $\min\left\{\hat{\mu}_{\refdist{P}}, 1-\hat{\mu}_{\refdist{P}}\right\}
    < \epsilon/2$ with probability at least $1-\delta$ by Hoeffding's
    inequality, and in this case we set $R_b=P$ and
    Step~\ref{item:step-simulation} does not reject. Therefore we may assume
    that $\min\left\{\mu_{\refdist{P}}, 1-\mu_{\refdist{P}}\right\} \ge
    \epsilon/4$.

    Since $P=\refdist{P}$ and hence $\mu_P=\mu_{\refdist{P}}$, we may without
    loss of generality fix $b=1$ and consider the case of drawing samples from
    $S_1^P$ when $b'=1$. By Hoeffding's inequality, we have, for all $t > 0$,
    \begin{equation}
        \label{eq:tail-bound-1}
        \Pr\left[\abs{\abs{S_1^P} - s\mu_P} \ge t\right] \le 2\exp(-2t^2 / s) \,.
    \end{equation}
    Let $N$ denote the number of fresh elements from $S_1^P$ requested during
    the simulation, so that $N \sim \bin(s_\cT, \mu_P)$ (recall that $\mu_P =
    \mu_{\refdist{P}}$ here). Then, similarly, we have
    \begin{equation}
        \label{eq:tail-bound-2}
        \Pr\left[\abs{N - s_\cT \mu_P} \ge t\right]
        \le 2\exp(-2 t^2 / s_\cT)
        \le 2\exp(-2 t^2 / s) \,.
    \end{equation}
    The tuple $S_1^P$ has enough fresh elements when $\abs{S_1^P} \ge N$. Let
    $\ell \define \ln(16/\delta)$ and $t \define \sqrt{\frac{\ell s}{2}}$, so
    that the RHS of \eqref{eq:tail-bound-1}, \eqref{eq:tail-bound-2} is
    $\delta/8$. A union bound yields that, with probability at least
    $1-\delta/4$, we have
    \[
        N \le s_\cT \mu_P + t
        \qquad \text{and} \qquad
        s \mu_P - t \le \abs{S_1^P} \,.
    \]
    Thus it suffices to show that $s_\cT \mu_P + t \le s \mu_P - t$, or equivalently that
    \begin{equation}
        \label{eq:equiv-ineq}
        \mu_P (s - s_\cT) \ge \sqrt{2 \ell s} \,.
    \end{equation}
    Thanks to the bound $\mu_P \ge \epsilon/4$, we have that \eqref{eq:equiv-ineq} is implied by
    \[
        \epsilon (s - s_\cT) \ge 4 \sqrt{2 \ell s} \,.
    \]
    By choosing $s \ge 2s_\cT$, so that $s - s_\cT \ge s/2$, it suffices to have
    $s \ge 128\ell/\epsilon^2$, which indeed holds when we set $s =
    \Theta\left(s_\cT + \frac{\log(1/\delta)}{\epsilon^2}\right)$ appropriately.
    We conclude that \eqref{eq:equiv-ineq} holds and that the tuple $S_1^P$ runs
    out of fresh elements with probability at most $\delta/4$. By a union bound,
    Step~\ref{item:step-simulation} rejects with probability at most $\delta$,
    as claimed.

    Finally, since $\refdist{P} = P = P^{\mu_{\refdist{P}}}_b$ for each $b \in
    \zo$, in each simulation of $\cT$ we have $R_b=\refdist{P}$. Since each
    fresh element from the tuples $S_b^P, S_b^{\refdist{P}}$ is an independent
    sample from $P|b, \refdist{P}|b$ respectively, the probability that a
    simulation of $\cT$ rejects is at most the probability that a simulation of
    $\cT$ with fresh independent samples would reject, plus the probability that
    the simulation fails because one of the tuples $S_b^P, S_b^{\refdist{P}}$
    does not have enough elements. The latter is accounted for above, and the
    former is at most $\delta$ by the soundness of $\cT$. By a union bound over
    the events considered before and the two simulations of $\cT$, the algorithm
    accepts with probability at least $1-4\delta$.

    \textbf{Soundness.} Suppose $\dist_{f \oplus \cF}(P, \refdist{P}) >
    7\epsilon$. If $\abs{\mu_P - \mu_{\refdist{P}}} > \epsilon$, then
    Step~\ref{item:step-mu} rejects with probability at least $1-\delta$ by
    Hoeffding's inequality, so we may assume that $\abs{\mu_P -
    \mu_{\refdist{P}}} \le \epsilon$.

    Since each fresh element from the tuples $S_b^P, S_b^{\refdist{P}}$ is an
    independent sample from $P|b, \refdist{P}|b$ respectively, while running out
    of fresh elements only causes the algorithm to reject, it suffices to show
    the claim that $\dist_\cF(R_b, \refdist{P}) > \epsilon$ for some $b \in \zo$
    with probability at least $1-\delta$: in this case, a perfect simulation of
    $\cT$ (without ever running out of fresh elements) would reject with
    probability at least $1-\delta$, and the event of running out of elements
    only increases the probability of rejection; thus by a union bound the
    tester rejects with probability at least $1-2\delta$. The rest of the proof
    is devoted to establishing the claim.

    We first show that $\dist_\cF(P_b^{\mu_{\refdist{P}}}, \refdist{P}) >
    \epsilon$ for some $b \in \zo$. Let $g \in \cF$ be a function such that
    $\abs{\E_P[f \oplus g] - \E_{\refdist{P}}[f \oplus g]} > 7\epsilon >
    4\epsilon$, which exists by hypothesis. Write $\bar{f} \define 1-f$, and
    similarly $\bar g \define 1-g$. We have
    \begin{align*}
        4\epsilon
        &< \Abs{\E_P[f \oplus g] - \E_{\refdist{P}}[f \oplus g]}
        = \Abs{\E_P[f \cdot \bar g + \bar f \cdot g]
            - \E_{\refdist{P}}[f \cdot \bar g + \bar f \cdot g]} \\
        &\le \Abs{\E_P[f \cdot \bar g] - \E_{\refdist{P}}[f \cdot \bar g]}
            + \Abs{\E_P[\bar f \cdot g] - \E_{\refdist{P}}[\bar f \cdot g]} \,.
    \end{align*}
    \sloppy
    Thus one of the two terms in the last line above is bigger than $2\epsilon$.
    Suppose that $\Abs{\E_P[f \cdot \bar g] - \E_{\refdist{P}}[f \cdot \bar g]}
    > 2\epsilon$. We have
    \begin{align*}
        \dist_\cF(P_1^{\mu_{\refdist{P}}}, \refdist{P})
        &\ge \Abs{\E_{P_1^{\mu_{\refdist{P}}}}[\bar g] - \E_{\refdist{P}}[\bar g]}
        = \Abs{\E_{P_1^{\mu_{\refdist{P}}}}[f \cdot \bar g + \bar f \cdot \bar g]
            - \E_{\refdist{P}}[f \cdot \bar g + \bar f \cdot \bar g]} \\
        &= \Abs{\E_{P_1^{\mu_{\refdist{P}}}}[f \cdot \bar g]
            - \E_{\refdist{P}}[f \cdot \bar g]} \,,
    \end{align*}
    \sloppy
    where the last equality holds because $\E_{P_1^{\mu_{\refdist{P}}}}[\bar f
    \cdot \bar g] = \E_{\refdist{P}}[\bar f \cdot \bar g] =
    (1-\mu_{\refdist{P}}) \E_{\refdist{P}|0}[\bar g]$ by the construction of
    $P_1^{\mu_{\refdist{P}}}$. Since $\Abs{\E_P[f \cdot \bar g] -
    \E_{\refdist{P}}[f \cdot \bar g]} > 2\epsilon$, if we show that $\Abs{\E_P[f
    \cdot \bar g] - \E_{P_1^{\mu_{\refdist{P}}}}[f \cdot \bar g]} \le \epsilon$,
    then we will conclude via the triangle inequality that
    $\dist_\cF(P_1^{\mu_{\refdist{P}}}, \refdist{P}) \ge
    \Abs{\E_{P_1^{\mu_{\refdist{P}}}}[f \cdot \bar g] - \E_{\refdist{P}}[f \cdot
    \bar g]} > \epsilon$, as desired. Using $\abs{\mu_P - \mu_{\refdist{P}}} \le
    \epsilon$, we have
    \[
        \Abs{\E_P[f \cdot \bar g] - \E_{P_1^{\mu_{\refdist{P}}}}[f \cdot \bar g]}
        = \Abs{\mu_P \cdot \E_{P|1}[\bar g]
            - \mu_{\refdist{P}} \cdot \E_{P|1}[\bar g]}
        \le \Abs{\mu_P - \mu_{\refdist{P}}}
        \le \epsilon \,,
    \]
    as needed. Similarly, suppose $\Abs{\E_P[\bar f \cdot g] -
    \E_{\refdist{P}}[\bar f \cdot g]} > 2\epsilon$ instead. We now have
    \begin{align*}
        \dist_\cF(P_0^{\mu_{\refdist{P}}}, \refdist{P})
        &\ge \Abs{\E_{P_0^{\mu_{\refdist{P}}}}[g] - \E_{\refdist{P}}[g]}
        = \Abs{\E_{P_0^{\mu_{\refdist{P}}}}[f \cdot g + \bar f \cdot g]
            - \E_{\refdist{P}}[f \cdot g + \bar f \cdot g]} \\
        &= \Abs{\E_{P_0^{\mu_{\refdist{P}}}}[\bar f \cdot g]
            - \E_{\refdist{P}}[\bar f \cdot g]} \,,
    \end{align*}
    where the last equality holds because $\E_{P_0^{\mu_{\refdist{P}}}}[f \cdot
    g] = \E_{\refdist{P}}[f \cdot g] = \mu_{\refdist{P}} \E_{\refdist{P}|1}[g]$
    by the construction of $P_0^{\mu_{\refdist{P}}}$. Again it suffices to show
    that $\Abs{\E_P[\bar f \cdot g] - \E_{P_0^{\mu_{\refdist{P}}}}[\bar f \cdot
    g]} \le \epsilon$, which holds by
    \[
        \Abs{\E_P[\bar f \cdot g] - \E_{P_0^{\mu_{\refdist{P}}}}[\bar f \cdot g]}
        = \Abs{(1-\mu_P) \cdot \E_{P|0}[g] - (1-\mu_{\refdist{P}}) \cdot \E_{P|0}[g]}
        \le \Abs{\mu_P - \mu_{\refdist{P}}}
        \le \epsilon \,.
    \]

    We have now established that $\dist_\cF(P_b^{\mu_{\refdist{P}}},
    \refdist{P}) > \epsilon$ for some $b \in \zo$, and our goal is to show that
    $\dist_\cF(R_b, \refdist{P}) > \epsilon$ for some $b \in \zo$ with
    probability at least $1-\delta$. To conclude the proof, we consider two
    cases.

    First, suppose $\min\left\{\mu_P, 1-\mu_P\right\} \ge 2\epsilon$. Then
    $\min\left\{\mu_{\refdist{P}}, 1-\mu_{\refdist{P}}\right\} \ge \epsilon$
    (since $\Abs{\mu_P-\mu_{\refdist{P}}} \le \epsilon$), and hence
    $\min\left\{\hat{\mu}_{\refdist{P}}, 1-\hat{\mu}_{\refdist{P}}\right\} \ge
    \epsilon/2$ with probability at least $1-\delta$ by Hoeffding's inequality.
    Moreover, in this case $R_b = P_b^{\mu_{\refdist{P}}}$ in each iteration $b
    \in \zo$, and the claim follows.

    Second, suppose $\min\left\{\mu_P, 1-\mu_P\right\} < 2\epsilon$. If
    $\min\left\{\hat{\mu}_{\refdist{P}}, 1-\hat{\mu}_{\refdist{P}}\right\} \ge
    \epsilon/2$, then again we are done, so suppose that
    $\min\left\{\hat{\mu}_{\refdist{P}}, 1-\hat{\mu}_{\refdist{P}}\right\} <
    \epsilon/2$, so that $R_b = P$ for each $b \in \zo$. Suppose without loss of
    generality that $\mu_P < 2\epsilon$, so that $\mu_{\refdist{P}} <
    3\epsilon$. We claim that $\dist_\cF(P, \refdist{P}) > \epsilon$. Let $g \in
    \cF$ be such that $\abs{\E_P[f \oplus g] - \E_{\refdist{P}}[f \oplus g]} >
    7\epsilon$, which exists by hypothesis. We have
    \begin{align*}
        \dist_\cF(P, \refdist{P})
        &\ge \Abs{\argEx{P}{g} - \argEx{\refdist{P}}{g}}
        = \Abs{\argEx{P}{f \cdot g + \bar f \cdot g}
            - \argEx{\refdist{P}}{f \cdot g + \bar f \cdot g}} \\
        &= \Abs{\mu_P \argEx{P|1}{g} - \mu_{\refdist{P}} \argEx{\refdist{P}|1}{g}
                + (1-\mu_P) \argEx{P|0}{g}
                - (1-\mu_{\refdist{P}}) \argEx{\refdist{P}|0}{g}} \\
        &\ge \Abs{(1-\mu_P) \argEx{P|0}{g}
            - (1-\mu_{\refdist{P}}) \argEx{\refdist{P}|0}{g}}
            - \underbrace{\Abs{\mu_P \argEx{P|1}{g}
                - \mu_{\refdist{P}} \argEx{\refdist{P}|1}{g}}}_{< 3\epsilon} \\
        &> \Abs{(1-\mu_P) \argEx{P|0}{g}
            - (1-\mu_{\refdist{P}}) \argEx{\refdist{P}|0}{g}
            + \mu_P \argEx{P|1}{\bar g}
            - \mu_{\refdist{P}} \argEx{\refdist{P}|1}{\bar g}} \\
            &\qquad - \underbrace{\Abs{\mu_P \argEx{P|1}{\bar g}
                - \mu_{\refdist{P}} \argEx{\refdist{P}|1}{\bar g}}}_{
                < 3\epsilon} - 3\epsilon \\
        &> \Abs{\argEx{P}{f \cdot \bar g + \bar f \cdot g}
                - \argEx{\refdist{P}}{f \cdot \bar g + \bar f \cdot g}}
            - 6\epsilon \\
        &= \Abs{\argEx{P}{f \oplus g} - \argEx{\refdist{P}}{f \oplus g}} - 6\epsilon
        > \epsilon \,,
    \end{align*}
    as claimed. The case when $1-\mu_P < 2\epsilon$ proceeds similarly by
    starting with the function $\bar g$ instead, and again we obtain that
    $\dist_\cF(P, \refdist{P}) > \epsilon$. This concludes the proof.
\end{proof}

Combining \cref{lemma:testable-query-learning-from-implicit,lemma:implicit-tester-reduction} yields
the main result of this section.

\begin{theorem}[$\cF$-identity testing implies testable query learning; refinement of
    \cref{thm:intro-testable-query-learning-from-identity-testing}]
    \label{thm:testable-query-learning-from-identity-testing}
    Suppose $\cT$ is an $(\epsilon,\delta)$-identity tester to $\refdist{P}$
    against $\cF$ using $s_\cT$ samples and $t_\cT$ time. Suppose $\cL$ is an
    $(\epsilon, \delta)$-agnostic query learner for $\cF$ under marginal
    $\refdist{P}$ using $q_\cL$ queries and $t_\cL$ time. Also, suppose a sample
    from $\refdist{P}$ may be produced in time $t_{\refdist{P}}$. Then there
    exists an $(O(\epsilon), O(\delta))$-testable query learner $\cA$ for $\cF$
    with respect to marginal $\refdist{P}$ using $s = O\left(s_\cT +
    \frac{\log(1/\delta)}{\epsilon^2}\right)$ samples from the input marginal
    $P$, $s + q_\cL$ queries to $f$, and $\poly(s \cdot t_{\refdist{P}}) +
    \poly(t_\cT) + \poly(t_\cL)$ time. Also, $\cA$ is proper if $\cL$ is.
\end{theorem}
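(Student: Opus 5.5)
The plan is to compose the two lemmas just proved, keeping track of how the parameters rescale. Start from the given $(\epsilon,\delta)$-identity tester $\cT$ to $\refdist{P}$ against $\cF$, which uses $s_\cT$ samples and $t_\cT$ time. Apply \cref{lemma:implicit-tester-reduction} to it. This gives a $(7\epsilon,4\delta)$-identity tester $\cT'$ to $\refdist{P}$ against implicit $\cF$-error distinguishers. The tester $\cT'$ uses $s=O(s_\cT+\log(1/\delta)/\epsilon^2)$ samples from $P$, $s$ queries to $f$, and $\poly(s\cdot t_{\refdist{P}})+\poly(t_\cT)$ time.

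Next, feed $\cT'$ together with the agnostic query learner $\cL$ into \cref{lemma:testable-query-learning-from-implicit}. That lemma requires the tester and the learner to share parameters $(\epsilon',\delta')$. Take $\epsilon'=7\epsilon$ and $\delta'=4\delta$. An $(\epsilon,\delta)$-agnostic learner is trivially also a $(7\epsilon,4\delta)$-agnostic learner, since weakening the accuracy and confidence requirements preserves the guarantee. Alternatively, one could rescale everything at the start by running $\cT$ at $\epsilon/7$, but the statement fixes $\cT$'s parameters, so weakening $\cL$ is the clean route.

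The lemma then outputs a $(21\epsilon,8\delta)$-testable query learner $\cA$, which is $(O(\epsilon),O(\delta))$ as claimed. Its costs are:
\begin{itemize}
    \item samples from $P$: $s+O(\log(1/\delta)/\epsilon^2)=O(s_\cT+\log(1/\delta)/\epsilon^2)$;
    \item queries to $f$: $s+q_\cL+O(\log(1/\delta)/\epsilon^2)$, where the last term is absorbed into $s$ by enlarging the constant in $s$;
    \item time: $\poly(\poly(s\,t_{\refdist{P}})+\poly(t_\cT))+\poly(t_\cL)+\poly(t_{\refdist{P}}\log(1/\delta)/\epsilon)$, which simplifies to $\poly(s\cdot t_{\refdist{P}})+\poly(t_\cT)+\poly(t_\cL)$ because $s\ge\log(1/\delta)/\epsilon^2$.
\end{itemize}
Properness carries over directly from \cref{lemma:testable-query-learning-from-implicit}.

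The only points needing care are the parameter matching and the absorption of the extra query term into $s$. I would state both explicitly; the rest is bookkeeping.
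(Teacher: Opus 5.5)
Your proposal is correct and matches the paper's proof: it obtains a tester against implicit $\cF$-error distinguishers via \cref{lemma:implicit-tester-reduction}, then combines it with $\cL$ via \cref{lemma:testable-query-learning-from-implicit}. The paper's proof only writes $(O(\epsilon),O(\delta))$. You go further by making the parameter matching explicit, treating $\cL$ as a $(7\epsilon,4\delta)$-learner to obtain a $(21\epsilon,8\delta)$-testable learner, and by spelling out how the lower-order query and time terms are absorbed into $s$.
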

\begin{proof}
    Apply \cref{lemma:implicit-tester-reduction} to obtain from $\cT$ an
    $(O(\epsilon), O(\delta))$-identity tester $\cT'$ to $\refdist{P}$ against
    implicit $\cF$-error distinguishers using $s$ samples and queries, and
    $\poly(s \cdot t_{\refdist{P}}) + \poly(t_\cT)$ time. Then, apply
    \cref{lemma:testable-query-learning-from-implicit} to obtain from $\cT'$ and
    $\cL$ an $(O(\epsilon), O(\delta))$-testable query learner for $\cF$ with
    respect to marginal $\refdist{P}$ with the announced sample, query, and time
    complexity bounds, which is proper if $\cL$ is.
\end{proof}

\subsection{Implications for testable proper learning with queries}

Agnostic proper learning is a notoriously challenging problem, and indeed the
attempt to design efficient algorithms for this task runs up against
complexity-theoretic barriers
\cite{feldman2006new,guruswami2009hardness,daniely2016complexity}.
On the other hand, two
standard modifications to the general model can often make agnostic proper
learning more tractable: 1)~fixing an underlying distribution, i.e., moving to
the \emph{distribution-specific} setting; and, possibly,
2)~also\footnote{Feldman showed that, in the distribution-free setting
of agnostic learning, queries essentially do not help \cite{Feldman09}.}
allowing the algorithm to query the input function at arbitrary points, i.e.,
giving it \emph{membership queries}. For example, the class of halfspaces over
$\bbR^n$ admits an agnostic proper learning algorithm under the standard
Gaussian distribution with running time $n^{\poly(1/\epsilon)}$
\cite{DiakonikolasKKT21}, and in fact the dependence on $n$ can be made fully
polynomial if membership queries are allowed \cite{DiakonikolasKKT24}; and the
class of size-$s$ decision trees over $\bits^n$ admits an agnostic proper
learning algorithm with membership queries under the uniform distribution with
running time $\poly(n) \cdot s^{O(\log \log s)}$ \cite{BlancLQT22}.

We show that, similarly, testable \emph{proper} learning algorithms may be obtained by allowing
membership queries. Specifically, we obtain testable proper learning algorithms using membership
queries for the classes of \emph{halfspaces} (with respect to the standard Gaussian distribution)
and \emph{size-$s$ decision trees} (with respect to the uniform distribution over the hypercube),
while matching the computational complexity of the corresponding improper, sample-based testable
learners.

We give a general recipe of the form ``agnostic proper query learning and
testable learning imply testable proper query learning.'' That is, we obtain
our results by combining two ingredients:
\begin{enumerate}
    \item An efficient agnostic proper learner for family $\cF$ under a
    ``good marginal distribution'' (e.g., Gaussian or uniform over the
    hypercube), possibly using membership queries.

    \item An efficient (improper) testable learner with respect to the good
    marginal. This implies, via \cref{thm:reduction}, an identity tester to the
    good marginal against $\cF$, which suffices for an application of our
    general \cref{thm:testable-query-learning-from-identity-testing}.
\end{enumerate}
Given these two ingredients, our results follow from the framework established
earlier in this section via black-box reductions, rather than requiring tailored
arguments to address the combined challenges of testable and proper learning.

\subsubsection{General result}

The following theorem encodes the general scheme outlined above, which underlies the results in this
subsection.

\begin{lemma}
    \label{lemma:general-testable-proper-query-learning}
    Let $\cX$ be a domain, $\cF$ be a family of Boolean functions over $\cX$,
    and $\refdist{P}$ be a probability distribution over $\cX$. Suppose a
    sample from $\refdist{P}$ may be produced in time $t_{\refdist{P}}$. Suppose
    we have the following algorithms $\cL$ and $\TL$:
    \begin{enumerate}
        \item $\cL$ is an $(\epsilon, \delta)$-agnostic proper query learner for
        $\cF$ under marginal $\refdist{P}$ using $q_\cL$ queries and $t_\cL$
        time.
        \item $\TL$ is an $(\epsilon, \delta)$-testable learner for $\cF$ with
        respect to marginal $\refdist{P}$ using $s_{\TL}$ labeled samples and
        $t_{\TL}$ time.
    \end{enumerate}
    Then there exists an $(O(\epsilon), O(\delta))$-testable proper query
    learner for $\cF$ with respect to marginal $\refdist{P}$ using $s =
    O\left(s_{\TL} + \frac{\log(1/\delta)}{\epsilon^2}\right)$ samples from the
    input marginal $P$, $s + q_\cL$ queries to the input function $f$, and
    $\poly(s \cdot t_{\refdist{P}}) + \poly(t_{\TL}) + \poly(t_\cL)$ time.
\end{lemma}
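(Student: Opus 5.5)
The plan is to chain the two black-box reductions already established. First, I would apply \cref{thm:reduction} to the testable learner $\TL$. There is a parameter mismatch: \cref{thm:reduction} needs an $(\epsilon'/5,\delta'/2)$-testable learner to produce an $(\epsilon',\delta')$-identity tester. So I would set $\epsilon' = 5\epsilon$ and $\delta' = 2\delta$. This yields an $(O(\epsilon),O(\delta))$-identity tester $\cT$ to $\refdist{P}$ against $\cF$. It uses $s_\cT = s_{\TL} + O(\log(1/\delta)/\epsilon^2)$ samples and $\poly(t_{\TL})$ time. The samples from the mixture $M(\refdist{P},P)$ cost one sample from either $P$ or $\refdist{P}$ each. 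Sampling from $\refdist{P}$ takes $t_{\refdist{P}}$ time, which is absorbed into the $\poly(s\cdot t_{\refdist{P}})$ term.

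Second, I would feed $\cT$ together with the agnostic proper query learner $\cL$ into \cref{thm:testable-query-learning-from-identity-testing}. That theorem is stated with the same $(\epsilon,\delta)$ for both $\cT$ and $\cL$. Here $\cT$ has parameters $(5\epsilon, 2\delta)$ while $\cL$ has $(\epsilon,\delta)$, and an $(\epsilon,\delta)$-agnostic learner is in particular a $(5\epsilon,2\delta)$-agnostic learner. So both ingredients can be taken at parameters $(5\epsilon,2\delta)$. The theorem then gives an $(O(\epsilon),O(\delta))$-testable query learner $\cA$ using:
\begin{itemize}
\item $O(s_\cT + \log(1/\delta)/\epsilon^2) = O(s_{\TL} + \log(1/\delta)/\epsilon^2) = s$ samples from $P$;
\item $s + q_\cL$ queries to $f$;
\item $\poly(s\cdot t_{\refdist{P}}) + \poly(t_\cT) + \poly(t_\cL)$ time, where $\poly(t_\cT) = \poly(t_{\TL}) + \poly(s \cdot t_{\refdist{P}})$.
\end{itemize}
Since $\cL$ is proper, \cref{thm:testable-query-learning-from-identity-testing} guarantees that $\cA$ is proper.

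The only mildly delicate step is the bookkeeping in the first application. The tester from \cref{thm:reduction} draws samples from $M(\refdist{P},P)$, so its running time depends on generating $\refdist{P}$-samples, and this must be folded into the stated bounds. One must also check that the confidence and accuracy constants compose to $O(\epsilon)$ and $O(\delta)$; this is immediate since each reduction loses only constant factors. Otherwise the argument is a direct composition of the two reductions, and I do not expect a real obstacle.
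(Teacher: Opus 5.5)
Your proposal is correct and matches the paper's proof, which simply composes \cref{thm:reduction} with \cref{thm:testable-query-learning-from-identity-testing}. Your treatment of the $(\epsilon/5,\delta/2)$ parameter shift, and of the time to generate $\refdist{P}$-samples for the mixture $M(\refdist{P},P)$, is bookkeeping the paper leaves implicit, and you handle it correctly.
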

\begin{proof}
    Combine \cref{thm:testable-query-learning-from-identity-testing,thm:reduction}.
\end{proof}

\subsubsection{Halfspaces}

The first ingredient we require is an agnostic proper learner.
\cite{DiakonikolasKKT21} gave an agnostic proper learner for halfspaces under
the standard Gaussian distribution with running time $n^{O(1/\epsilon^4)}$. To
more closely match the $n^{\tilde O(1/\epsilon^2)}$ bound for testable learning
first attained by \cite{GollakotaKK23}, we use the following result by
\cite{DiakonikolasKKT24} instead, which achieves an improved upper bound by
allowing membership queries (which are inherent to
\cref{lemma:general-testable-proper-query-learning} in any case).

\begin{theorem}[Theorem~A.1 of \cite{DiakonikolasKKT24}]
    \sloppy
    There exists an $(\epsilon, \delta)$-agnostic proper query learner for the class of halfspaces
    over $\bbR^n$ under the standard Gaussian distribution using $\poly(n/\epsilon)$ samples and
    queries, and $\left(\poly(n/\epsilon) + (1/\epsilon)^{O(1/\epsilon^6)}\right) \log(1/\delta)$
    time.
\end{theorem}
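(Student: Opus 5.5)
Since this is Theorem~A.1 of \cite{DiakonikolasKKT24}, we use it as a black box. Still, here is how we would reconstruct it: a \emph{dimension reduction followed by brute force} argument, with membership queries used only in the dimension-reduction step. Fix an optimal halfspace $h^*(x)=\Ind[\langle w^*,x\rangle\ge t^*]$ for the input labeling $f$ under $\cN(0,I_n)$, with $\opt=\err(h^*,\cN(0,I_n)^f)$. The target is a subspace $V\subseteq\bbR^n$ of dimension $k=\poly(1/\epsilon)$, found in $\poly(n/\epsilon)$ time, such that some halfspace with normal in $V$ has error at most $\opt+\epsilon/2$. Given $V$, we would take an $\epsilon$-net of unit vectors in $V$ together with a grid of thresholds in $[-O(\sqrt{\log(1/\epsilon)}),O(\sqrt{\log(1/\epsilon)})]$. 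This gives $(1/\epsilon)^{O(k)}$ candidate halfspaces. We would estimate the error of each from $O(k\log(1/\epsilon)\log(1/\delta)/\epsilon^2)$ labeled samples and output the empirically best one, which is proper. Taking $k=O(1/\epsilon^6)$ would match the stated running time $(1/\epsilon)^{O(1/\epsilon^6)}$.

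\textbf{Finding $V$.} For the subspace we would use a spectral method built from noisy correlations of $f$, which is where queries enter. Draw $x\sim\cN(0,I_n)$ and a correlated copy $y=\rho x+\sqrt{1-\rho^2}\,z$ with $z\sim\cN(0,I_n)$ independent and $\rho$ a suitable constant. Query $f(x)$ and $f(y)$, and form the empirical matrix
\[
M=\E\big[f^{\pm}(x)f^{\pm}(y)(x-y)(x-y)^\top\big],
\]
or alternatively the degree-$2$ Hermite matrix of the noise-smoothed function $T_\rho f$. Then set $V$ to be the span of the degree-$1$ Chow vector $\E[f^{\pm}(x)x]$ and the top $k$ eigenvectors of $M-\E[M_0]$, where $M_0$ is the baseline obtained with independent labels. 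The intuition is as follows. For the exact halfspace $h^*$, the correction $M-M_0$ is rank one, along $w^*$, with eigenvalue bounded below by a function of $\epsilon$ once we restrict to thresholds $|t^*|\le O(\sqrt{\log(1/\epsilon)})$ (other thresholds are handled by constant hypotheses). The agnostic noise $f\oplus h^*$ has mass $\opt$, so it perturbs $M$ by a matrix of bounded trace norm. Hence only $\poly(1/\epsilon)$ spurious directions can carry eigenvalue above the threshold, and either $w^*$ has a large projection onto the top-$k$ eigenspace, or its orthogonal part barely affects the error. All entries concentrate with $\poly(n/\epsilon)$ samples and queries, since $f^{\pm}$ is bounded and Gaussian moments are controlled.

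\textbf{Correctness of the reduction and main obstacle.} We would then show that if $w^*=w_V+w_\perp$ with $w_\perp$ orthogonal to $V$, the halfspace with normal $w_V/\|w_V\|$ and a re-fitted threshold loses only $\epsilon/2$ in error. For this we would use that the disagreement region between two Gaussian halfspaces at angle $\theta$ has mass $O(\theta)$, together with a structural lemma of the DKKT type: either $\theta$ is small, or the residual direction $w_\perp$ is nearly uncorrelated with $f$, in which case replacing $h^*$ by its $V$-projected version costs little. We expect the quantitative core of this step to be the main obstacle. It requires a robust eigenvalue gap showing that the signal eigenvalue of $h^*$ dominates the trace-norm contribution of adversarial label noise of mass $\opt$. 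This must hold uniformly over thresholds, and the conversion from angle to error must lose only $\epsilon$ additively rather than a constant factor times $\opt$. If the direct spectral bound is too weak, we would iterate: localize to a band around the current candidate halfspace and rerun the spectral step, with queries supplying labels inside the band. Each round would add $\poly(1/\epsilon)$ directions, and after $\poly(1/\epsilon)$ rounds the dimension would still be $O(1/\epsilon^6)$.
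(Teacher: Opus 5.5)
Your proposal matches the paper: the paper also states this result as Theorem~A.1 of \cite{DiakonikolasKKT24} without proof, and uses it only as the agnostic proper query learner plugged into \cref{lemma:general-testable-proper-query-learning} to obtain \cref{thm:proper-halfspaces}, so invoking it as a black box is all that is needed. Your reconstruction sketch is optional, but its load-bearing step would not work as phrased: the label-noise contribution to $M$ has trace norm growing with $n$ (only its Frobenius norm, after removing the isotropic part, is dimension-free), and noise of mass $\opt \gg \epsilon$ can largely cancel the signal of $h^*$ along $w^*$, so the subspace lemma must be argued from $f$'s own smoothed-gradient energy outside $V$ (e.g., by averaging $h^*$ over rotations of $w_\perp$ and re-thresholding along $w_V$) rather than from an eigenvalue gap of $h^*$ over the noise.
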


The second ingredient we require is a testable learner. Here we cite the result of
\cite{GollakotaKK23}, with suitably-amplified success probability via standard arguments:

\begin{theorem}[\cite{GollakotaKK23}]
    There exists an $(\epsilon, \delta)$-testable learner for the class of halfspaces over $\bbR^n$
    with respect to the standard Gaussian distribution using $n^{\tilde O(1/\epsilon^2)}
    \log(1/\delta)$ labeled samples and time.
\end{theorem}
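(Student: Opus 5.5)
The plan is to take the constant-confidence testable learner for halfspaces under the standard Gaussian from \cite{GollakotaKK23} as a black box and boost its confidence by repetition and validation. That learner achieves error $\opt+\epsilon/2$ with success probability $2/3$, using $n^{\tilde O(1/\epsilon^2)}$ labeled samples and time. In this black box, soundness means that with probability at least $2/3$ it either rejects or outputs a hypothesis $h$ with $\err(h,\cD)\le\opt(\cF,\cD)+\epsilon/2$. Completeness means that when the marginal is Gaussian, it rejects with probability at most $1/3$.

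\textbf{Amplified algorithm.} Set $k = C\log(1/\delta)$ and run $k$ independent copies of the base learner on fresh samples.
\begin{enumerate}
\item If more than $k/2$ of the copies reject, \textbf{reject}.
\item Otherwise, let $h_1,\dots,h_j$ be the hypotheses output by the non-rejecting copies, where $j\ge k/2$.
\item Draw a fresh labeled validation sample of size $O(\log(k/\delta)/\epsilon^2)$. By Hoeffding and a union bound over the $h_i$, every empirical error is within $\epsilon/4$ of its true error, except with probability $\delta/3$.
\item Output the $h_i$ with the smallest empirical error.
\end{enumerate}
The total cost is $k\cdot n^{\tilde O(1/\epsilon^2)}$ plus lower-order terms, which is $n^{\tilde O(1/\epsilon^2)}\log(1/\delta)$ as claimed.

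\textbf{Completeness.} Suppose the marginal is Gaussian. The number of rejecting copies is a sum of independent indicators, each with mean at most $1/3$. By Chernoff, this number exceeds $k/2$ with probability at most $\delta/3$ for a suitable constant $C$.

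\textbf{Soundness.} Call a copy \emph{bad} if its soundness event fails, meaning it outputs some $h$ with error greater than $\opt+\epsilon/2$. Each copy is bad independently with probability at most $1/3$. By Chernoff, at most $5k/12$ copies are bad, except with probability $\delta/3$.

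Now suppose the algorithm does not reject. Then at least $k/2$ copies produced a hypothesis. Since at most $5k/12$ copies are bad, at least $k/12\ge 1$ of the output hypotheses have error at most $\opt+\epsilon/2$. Under the validation event, the selected hypothesis has true error at most $\opt+\epsilon/2+2\cdot\epsilon/4=\opt+\epsilon$. A union bound over the three failure events gives overall confidence $1-\delta$.

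The main obstacle is that the base soundness guarantee is an ``or'' statement, not a guarantee about every output. A naive rule such as ``output the first hypothesis'' therefore fails, because a bad copy may be the one that happens not to reject. The counting argument above resolves this: the thresholds $1/2$ and $5/12$ are chosen so that a majority of non-rejections forces at least one good output among them, and the validation step then selects a near-good one. The remaining steps are routine concentration bounds.
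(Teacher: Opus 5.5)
Your argument is correct and is exactly the ``standard'' confidence amplification the paper has in mind when it cites \cite{GollakotaKK23} without further detail: independent repetitions, a majority vote on rejection, and hold-out selection. Your $k/2$ versus $5k/12$ counting also correctly handles the fact that base soundness is only an ``either reject or succeed'' guarantee. One small accounting slip: the selection step evaluates up to $k=\Theta(\log(1/\delta))$ hypotheses on $\Theta(\log(k/\delta)/\epsilon^2)$ validation points, so as written the running time is $n^{\tilde O(1/\epsilon^2)}\log^2(1/\delta)$ rather than linear in $\log(1/\delta)$ (the sample bound is fine), which is harmless here since the paper's downstream halfspace theorem only claims $\poly\log(1/\delta)$ time.
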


Combining these results via \cref{lemma:general-testable-proper-query-learning} yields our main
result for halfspaces:

\begin{restatable}[Testable proper query learner for halfspaces w.r.t.\
    Gaussian]{theorem}{thmqueryhalfspaces}
    \label{thm:proper-halfspaces}
    \sloppy
    Suppose a sample from the standard Gaussian distribution over $\bbR^n$ may be produced in time
    $t_\cN$. Then, there exists an $(\epsilon, \delta)$-testable proper query learner for the class
    of halfspaces over $\bbR^n$ with respect to the standard Gaussian distribution using $n^{\tilde
    O(1/\epsilon^2)} \log(1/\delta)$ samples and queries, and $\left(n^{\tilde O(1/\epsilon^2)}
    \poly(t_\cN) + (1/\epsilon)^{O(1/\epsilon^6)}\right) \poly\log(1/\delta)$ time.
\end{restatable}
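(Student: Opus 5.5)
The plan is to instantiate \cref{lemma:general-testable-proper-query-learning} with $\cX = \bbR^n$, $\refdist{P} = \cN(0,I_n)$ the standard Gaussian, and $\cF$ the class of halfspaces. This class is closed under complementation, as the paper's standing assumption requires.

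For the ingredient $\cL$ we take the agnostic proper query learner of Theorem~A.1 of \cite{DiakonikolasKKT24}. It uses $q_\cL = \poly(n/\epsilon)$ queries and time $t_\cL = (\poly(n/\epsilon) + (1/\epsilon)^{O(1/\epsilon^6)})\log(1/\delta)$. Its samples from the Gaussian are generated internally, each at cost $t_\cN$. This only multiplies its running time by a $\poly(t_\cN)$ factor on the $\poly(n/\epsilon)$ part and does not affect the sample count from $P$.

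For the ingredient $\TL$ we take the testable learner of \cite{GollakotaKK23}, with $s_{\TL} = t_{\TL} = n^{\tilde O(1/\epsilon^2)}\log(1/\delta)$.

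Since the lemma produces an $(O(\epsilon), O(\delta))$ learner, we run both ingredients with parameters $(\epsilon/c, \delta/c)$ for a suitable absolute constant $c$. This changes the bounds only by constants inside the $O(\cdot)$ and $\tilde O(\cdot)$ exponents.

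Next we read off the complexities from \cref{lemma:general-testable-proper-query-learning}.
\begin{itemize}
\item \textbf{Samples.} The sample count is $s = O(s_{\TL} + \log(1/\delta)/\epsilon^2) = n^{\tilde O(1/\epsilon^2)}\log(1/\delta)$.
\item \textbf{Queries.} The query count is $s + q_\cL = n^{\tilde O(1/\epsilon^2)}\log(1/\delta)$, since $\poly(n/\epsilon)$ is absorbed.
\item \textbf{Time.} The lemma gives $\poly(s\cdot t_\cN) + \poly(t_{\TL}) + \poly(t_\cL)$.
\end{itemize}

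The main obstacle is the time bound. Taken literally, $\poly(t_\cL)$ would give $(1/\epsilon)^{O(1/\epsilon^6)}$ raised to a polynomial and multiplied by $\poly\log(1/\delta)$ factors; this is still of the form $(1/\epsilon)^{O(1/\epsilon^6)}$, since a constant power only rescales the exponent. Similarly, $\poly(n^{\tilde O(1/\epsilon^2)}\log(1/\delta)\, t_\cN) = n^{\tilde O(1/\epsilon^2)}\poly(t_\cN)\poly\log(1/\delta)$.

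One subtlety deserves care: a polynomial in $\log(1/\delta)$ multiplying the $(1/\epsilon)^{O(1/\epsilon^6)}$ term. To keep the stated additive form, I would inspect the proofs of \cref{lemma:testable-query-learning-from-implicit,lemma:implicit-tester-reduction}. There the learner $\cL$ is simulated exactly once, as a black box, and its cost enters additively rather than polynomially. So the overall time is
\[
n^{\tilde O(1/\epsilon^2)}\poly(t_\cN)\poly\log(1/\delta) + (1/\epsilon)^{O(1/\epsilon^6)}\log(1/\delta).
\]
This is bounded by the claimed $\left(n^{\tilde O(1/\epsilon^2)}\poly(t_\cN) + (1/\epsilon)^{O(1/\epsilon^6)}\right)\poly\log(1/\delta)$.

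Properness is inherited because $\cL$ is proper.
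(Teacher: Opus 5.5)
Your proposal is correct and matches the paper's proof, which consists of plugging the proper query learner of \cite{DiakonikolasKKT24} and the testable learner of \cite{GollakotaKK23} into \cref{lemma:general-testable-proper-query-learning} and reading off the bounds. Your extra inspection of the reductions to argue that $t_\cL$ enters additively is harmless but unnecessary: since $(a+b)^c \le 2^c(a^c+b^c)$ and the stated bound already carries a $\poly\log(1/\delta)$ factor on the whole sum, the literal $\poly(t_\cL)$ term from the lemma already has the claimed form.
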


\subsubsection{Size-$s$ decision trees}

We rely on the following result of \cite{BlancLQT22} on agnostic proper learning of decision trees
using membership queries, with suitably-amplified success probability:

\begin{theorem}[\cite{BlancLQT22}]
    There exists an $(\epsilon, \delta)$-agnostic proper query learner for the class of size-$s$
    decision trees over $\bits^n$ under the uniform distribution with running time $\poly(n) \cdot
    (s/\epsilon)^{O\left(\log\left((\log s)/\epsilon\right)\right)} \cdot \poly\log(1/\delta)$.
\end{theorem}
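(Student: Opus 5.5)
The plan is to take the agnostic proper query learner of \cite{BlancLQT22} as a black box, with its native constant success probability (say $2/3$) and accuracy $\epsilon/2$, and amplify its confidence to $1-\delta$ by a standard repeat-and-select argument. Amplification is the only content beyond the cited result. The selection step can be carried out with queries alone, because the marginal is the explicit uniform distribution over $\bits^n$. We can therefore draw our own labeled samples $(x, f(x))$ with $x \sim \Unif(\bits^n)$ and estimate $\err(h, \refdist{P}^f) = \E_{\refdist{P}}[f \oplus h]$ for any candidate $h$.

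The steps, in order, are as follows.
\begin{enumerate}
    \item Run the base learner independently $k = O(\log(1/\delta))$ times at accuracy $\epsilon/2$. This yields size-$s$ decision trees $h_1, \dotsc, h_k$. With probability at least $1 - (1/3)^k \ge 1-\delta/2$, at least one of them satisfies $\err(h_j, \refdist{P}^f) \le \opt(\cF, \refdist{P}^f) + \epsilon/2$.
    \item Draw $m = O(\log(k/\delta)/\epsilon^2)$ uniform points and query $f$ on each. Compute the empirical error of every $h_j$, which takes $\poly(n, s)$ time per evaluation.
    \item Output the $h_j$ with the smallest empirical error. By Hoeffding's inequality and a union bound over the $k$ candidates, with probability at least $1-\delta/2$ every estimate is within $\epsilon/4$ of its true error.
\end{enumerate}

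On the intersection of these two good events, the selected hypothesis has error at most $\opt + \epsilon/2 + 2 \cdot \epsilon/4 = \opt + \epsilon$. It is a size-$s$ decision tree, so the learner remains proper.

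For the running time, the selection overhead is $\poly(n, s, 1/\epsilon) \cdot \poly\log(1/\delta)$. This is absorbed into $\poly(n) \cdot (s/\epsilon)^{O(\log((\log s)/\epsilon))}$, and the $k$ repetitions contribute the $\poly\log(1/\delta)$ factor. Rescaling $\epsilon$ by the constant $2$ does not change the exponent's form.

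The only point needing care, which I expect to be the main (mild) obstacle, is the base success probability. If \cite{BlancLQT22} states its guarantee with some fixed success probability or in expectation, I would first convert it to a constant success probability, via Markov's inequality on the excess error if necessary. Then I would check that the query/time bound there is stated for accuracy $\epsilon$ with the claimed exponent, so that substituting $\epsilon/2$ preserves the form of the bound.
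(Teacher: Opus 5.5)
Your proposal is correct and takes the same approach as the paper. The paper gives no separate proof: it cites \cite{BlancLQT22} ``with suitably-amplified success probability,'' and your repeat-and-select argument is exactly that standard amplification. You run the learner $O(\log(1/\delta))$ times and use queries on self-drawn uniform points to pick the empirically best tree; this keeps the output proper and adds only the $\poly\log(1/\delta)$ factor.
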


The existence of an $n^{O(\log(s/\epsilon))}$-time algorithm for testable learning of size-$s$
decision trees with respect to the uniform distribution over $\bits^n$ is implied by
\cite[Theorem~4.2]{GollakotaKK23} and \cite{Baz09} via the existence of degree-$O(\log(s/\epsilon))$
sandwiching polynomials for this class \cite[Lemma~34]{KlivansSV24}, and it is also directly stated
(with an additional distributional tolerance guarantee) in \cite{GoelSSV24}.

\begin{theorem}[\cite{GollakotaKK23,KlivansSV24,GoelSSV24}]\label{thm:tl-dt}
    There exists an $(\epsilon, \delta)$-testable learner for the class of size-$s$ decision trees
    over $\bits^n$ with respect to the uniform distribution with running time
    $n^{O(\log(s/\epsilon))} \cdot \poly\log(1/\delta)$.
\end{theorem}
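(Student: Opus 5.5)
The plan is to obtain \cref{thm:tl-dt} by plugging the existence of low-degree sandwiching polynomials for decision trees into the generic ``sandwiching implies testable learning'' theorem of \cite{GollakotaKK23}, and then to amplify the confidence. Concretely, I would first recall \cite[Theorem~4.2]{GollakotaKK23}. It states that if every $f \in \cF$ admits degree-$k$ $\epsilon$-sandwiching polynomials $p_{\mathrm{down}} \le f \le p_{\mathrm{up}}$ with $\E_{\Unif(\bits^n)}[p_{\mathrm{up}} - p_{\mathrm{down}}] \le \epsilon$, then $\cF$ is $(O(\epsilon), \delta_0)$-testably learnable with respect to $\Unif(\bits^n)$ in time $n^{O(k)}$ for a constant $\delta_0$. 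In that algorithm, the tester checks that all degree-$k$ moments of the empirical marginal are close to the uniform moments, and the learner runs $L_1$ polynomial regression of degree $k$.

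The second ingredient is the sandwiching bound. For a size-$s$ decision tree $f$, I would invoke \cite[Lemma~34]{KlivansSV24}, which builds on the argument of \cite{Baz09}, to get degree-$k = O(\log(s/\epsilon))$ sandwiching polynomials. The idea behind it is as follows. Truncate the tree at depth $O(\log(s/\epsilon))$. The leaves at deeper levels have total uniform mass at most $\epsilon$, since each such leaf has mass $2^{-\text{depth}} \le \epsilon/s$ and there are at most $s$ leaves. For the sandwiching pair, take the exact indicator polynomials of the shallow $1$-leaves, and add or omit the indicators of the depth-truncated paths. Both of these are polynomials of degree equal to the truncation depth. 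Their difference is supported on the truncated paths, so its uniform expectation is at most $\epsilon$.

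Substituting $k = O(\log(s/\epsilon))$ into the $n^{O(k)}$ bound gives running time $n^{O(\log(s/\epsilon))}$ at constant confidence, after rescaling $\epsilon$ by a constant. Finally, I would amplify the confidence to $1-\delta$ in the standard way. Run the constant-confidence tester-learner $O(\log(1/\delta))$ times, and reject if a majority of runs reject; this preserves completeness by a Chernoff bound. Otherwise, estimate the error of each output hypothesis on a fresh sample of size $O(\log(1/\delta)\log\log(1/\delta)/\epsilon^2)$ and return the hypothesis with the best estimated error. This costs only a $\poly\log(1/\delta)$ factor.

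I expect the main obstacle to be checking that the sandwiching property and the amplification are stated in exactly the form \cite{GollakotaKK23} requires. That theorem needs sandwiching in $L_1$ with respect to the uniform marginal, with a fixed degree bound uniform over the class, and the amplification must keep soundness when the marginal is not uniform. For soundness in the non-uniform case, the final selection step must be carried out on labeled samples from the true distribution $\cD$ rather than from the reference marginal. Once that is ensured, the remaining steps are routine. Alternatively, one can simply cite \cite{GoelSSV24}, which states this result directly, as noted before the theorem.
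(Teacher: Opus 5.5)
Your proposal is correct and follows the same route the paper indicates. The paper combines \cite[Theorem~4.2]{GollakotaKK23} with degree-$O(\log(s/\epsilon))$ sandwiching polynomials for size-$s$ decision trees, citing \cite[Lemma~34]{KlivansSV24} and \cite{Baz09}, then amplifies confidence in the standard way, or alternatively cites \cite{GoelSSV24} directly; your depth-truncation construction verifies the sandwiching bound directly, and your majority-vote-then-select amplification works whenever the base failure probability is below $1/2$.
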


Combining these results via \cref{lemma:general-testable-proper-query-learning} yields our main
result for decision trees:

\begin{restatable}[Testable proper query learner for size-$s$ decision trees w.r.t\
    uniform]{theorem}{thmquerydecisiontrees}
    \label{thm:proper-decision-trees}
    There exists an $(\epsilon, \delta)$-testable proper query learner for the class of size-$s$
    decision trees over $\bits^n$ with respect to the uniform distribution with running time
    $(n/\epsilon)^{O(\log(s/\epsilon))} \cdot \poly\log(1/\delta)$.
\end{restatable}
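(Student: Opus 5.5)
The plan is to apply \cref{lemma:general-testable-proper-query-learning} with $\cX = \bits^n$, $\cF$ the class of size-$s$ decision trees, and $\refdist{P} = \Unif(\bits^n)$. We will first check the lemma's side condition: a uniform sample from $\bits^n$ can be produced in time $t_{\refdist{P}} = O(n)$, by drawing $n$ independent fair bits. We will also use that $\cF$ is closed under complementation, as assumed throughout the paper, since flipping all leaf labels of a size-$s$ tree gives another size-$s$ tree.

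For the first ingredient, the agnostic proper query learner $\cL$, we take the algorithm of \cite{BlancLQT22} with accuracy $\epsilon$ and confidence $\delta$. Its query complexity $q_\cL$ and running time $t_\cL$ are both at most $\poly(n)\cdot (s/\epsilon)^{O(\log((\log s)/\epsilon))}\cdot\poly\log(1/\delta)$. For the second ingredient, the testable learner $\TL$, we take \cref{thm:tl-dt}. It gives $s_{\TL}, t_{\TL} \le n^{O(\log(s/\epsilon))}\poly\log(1/\delta)$, since the sample complexity is at most the running time.

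Plugging these into \cref{lemma:general-testable-proper-query-learning} yields an $(O(\epsilon),O(\delta))$-testable proper query learner. Its sample count is $s' = O(s_{\TL} + \log(1/\delta)/\epsilon^2)$, its query count is $s' + q_\cL$, and its running time is $\poly(s' n) + \poly(t_{\TL}) + \poly(t_\cL)$. Rescaling $\epsilon$ and $\delta$ by constant factors, which changes the bounds only by constants inside the $O(\cdot)$ exponents, recovers $(\epsilon,\delta)$.

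The only real work is simplifying the total time to the stated form.
\begin{itemize}
\item The testable-learner term satisfies $\poly(n^{O(\log(s/\epsilon))}) = n^{O(\log(s/\epsilon))}$.
\item The additive $\log(1/\delta)/\epsilon^2$ in $s'$ is at most $(1/\epsilon)^{O(1)}\log(1/\delta)$, which is absorbed.
\item For the proper-learner term, note $\log((\log s)/\epsilon) \le \log(s/\epsilon)$. Hence $(s/\epsilon)^{O(\log((\log s)/\epsilon))} \le (s/\epsilon)^{O(\log(s/\epsilon))}$.
\end{itemize}
This last bound need not be at most $n^{O(\log(s/\epsilon))}$ when $s \gg n$, which is why the target bound is stated as $(n/\epsilon)^{O(\log(s/\epsilon))}$. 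We therefore want $(s/\epsilon)^{O(\log(s/\epsilon))} \le (n/\epsilon)^{O(\log(s/\epsilon))}$. We will justify this by assuming without loss of generality that $s \le \poly(n)$ (more precisely $s \le 2^n$, or $s \le n^{O(1)}$ after a standard reduction). Then $\log s = O(n)$, or we argue $s \le (n/\epsilon)^{O(1)}$ in the regime of interest.

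This bookkeeping step is the main obstacle. If the reduction to $s \le \poly(n)$ is not clean, I would instead record the final bound as $(ns/\epsilon)^{O(\log(s/\epsilon))}\poly\log(1/\delta)$. Finally, the confidence-dependent factors all multiply into $\poly\log(1/\delta)$.
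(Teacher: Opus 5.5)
Your route is the paper's route: its proof is exactly the instantiation of \cref{lemma:general-testable-proper-query-learning} with the proper query learner of \cite{BlancLQT22} and the testable learner of \cref{thm:tl-dt}. Your setup (sampling time, closure under complement, constant rescaling of $\epsilon,\delta$) is also fine. The gap is in the running-time simplification, which as written does not give the stated bound.

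The inequality $\log((\log s)/\epsilon)\le\log(s/\epsilon)$ throws away exactly the structure you need. After that step you must show $(s/\epsilon)^{O(\log(s/\epsilon))}\le(n/\epsilon)^{O(\log(s/\epsilon))}$, and this requires $s\le(n/\epsilon)^{O(1)}$. There is no standard reduction to $s\le\poly(n)$: size-$s$ trees with $s$ superpolynomial in $n$ form a genuinely larger class, and a proper learner must output a size-$s$ tree. The bound $s\le 2^n$ does not suffice either. For $s=2^{n/2}$ and constant $\epsilon$, your intermediate bound is $2^{\Theta(n^2)}$, while the target is $2^{O(n\log n)}$. You do mention $\log s=O(n)$, but after the lossy step it no longer helps. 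Your fallback $(ns/\epsilon)^{O(\log(s/\epsilon))}$ is therefore a strictly weaker theorem in this regime.

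The fix is to keep the doubly-logarithmic exponent and apply $\log s=O(n)$ inside it. Every Boolean function on $\bits^n$ is computed by the complete depth-$n$ tree, so you may assume $s\le 2^{n+1}$. For larger $s$, run everything with $s$ replaced by $2^{n+1}$: the class is unchanged, the output is still proper, and all bounds only decrease. Then
\[
(s/\epsilon)^{O(\log((\log s)/\epsilon))}
= 2^{O(\log(s/\epsilon)\cdot\log((\log s)/\epsilon))}
= \left(\frac{\log s}{\epsilon}\right)^{O(\log(s/\epsilon))}
\le \left(\frac{n+1}{\epsilon}\right)^{O(\log(s/\epsilon))},
\]
which is $(n/\epsilon)^{O(\log(s/\epsilon))}$. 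This term is in fact where the $1/\epsilon$ in the base of the stated bound comes from. The $\poly(n)$ factor, the $n^{O(\log(s/\epsilon))}$ testable-learner term, and the $\poly\log(1/\delta)$ factors are then absorbed as you describe.
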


\subsection{Implications for testable semiagnostic proper learning}
\label{sec:semiagnostic}

In the previous subsection, we obtained our testable proper learning results by
studying the membership query model. In certain settings, this model may be
unrealistic, as only sample access to the labeled points may be available. In
this subsection, we observe that a weaker form of testable proper learning,
namely up to $3 \cdot \opt +\ \epsilon$ error, is attainable even in the
sample-based setting.

To accomplish this, we start with the same assumptions as the previous
subsection (the existence of a testable learner and an agnostic proper query
learner for the class) and resort to a two-step process: first, apply standard
testable learning to the input function $f$ to learn, using samples, a
hypothesis $\hat{f}$. Then, apply
\cref{lemma:general-testable-proper-query-learning} to $\hat{f}$ to learn a
\emph{proper} approximation of $\hat{f}$, providing that algorithm with query
access to the learned $\hat{f}$ using its explicit description. Due to the
triangle inequality, this procedure ensures that, if the algorithm does not
reject the input distribution, then it produces (with high probability) a
hypothesis that is at most $(3 \cdot \opt +\ \epsilon)$-far from $f$. The
conclusion is the following general result:

\newcommand{\teval}{\text{eval}}

\begin{lemma}
    Let $\cX$ be a domain, $\cF$ be a family of Boolean functions over $\cX$,
    and $\refdist{P}$ be a probability distribution over $\cX$. Suppose a sample
    from $\refdist{P}$ may be produced in time $t_{\refdist{P}}$, and that a
    hypothesis from $\cF$ may be evaluated at any point $x \in \cX$ in time
    $t_{\teval}$. Suppose we have the following algorithms and $\cL$ and $\TL$:
    \begin{enumerate}
        \item $\cL$ is an $(\epsilon, \delta)$-agnostic proper query learner for $\cF$ under $\refdist{P}$
            using $q_\cL$ queries and $t_\cL$ time.
        \item $\TL$ is an $(\epsilon, \delta)$-testable learner for $\cF$ with respect to $\refdist{P}$ using
            $s_{\TL}$ labeled samples and $t_{\TL}$ time.
    \end{enumerate}
    Then there exists an $(O(\epsilon), O(\delta))$-testable $3$-semiagnostic
    proper learner for $\cF$ with respect to $\refdist{P}$ using $s =
    O\left(s_{\TL} + \frac{\log(1/\delta)}{\epsilon^2}\right)$ labeled samples
    from the input joint distribution $\cD$ and $\poly(s \cdot t_{\refdist{P}})
    + \poly(t_{\TL}) + \poly(t_\cL \cdot t_{\teval})$ time.
\end{lemma}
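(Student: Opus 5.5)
The plan follows the two-step procedure sketched just before the statement. First, run $\TL$ on $s_{\TL}$ labeled samples from the input joint distribution $\cD$ (marginal $P$). If $\TL$ rejects, reject. Otherwise it returns a (possibly improper) hypothesis $\hat f$ with an explicit description. With probability $1-\delta$, soundness of $\TL$ gives
\[
\err(\hat f,\cD)\le \opt(\cF,\cD)+\epsilon .
\]
Second, apply \cref{lemma:general-testable-proper-query-learning} (built from $\cL$ and the same $\TL$) to the realizable-by-$\hat f$ input $P_{\hat f}$. Each labeled sample $(x,y)\sim\cD$ yields an unlabeled sample $x\sim P$, and every query to $\hat f$ is answered by evaluating the explicit description. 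I will assume this evaluation costs about $t_{\teval}$, accepting that this needs $\hat f$'s representation to be as cheap as members of $\cF$; otherwise the time bound should be read with the cost of evaluating $\TL$'s output. This yields the stated sample and time complexity: $s_{\TL}+O(s_{\TL}+\log(1/\delta)/\epsilon^2)$ samples, and running time $\poly(t_\cL\cdot t_{\teval})$ for the simulated queries. If the inner learner rejects, reject; else output its proper $h\in\cF$.

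\textbf{Completeness.} If $\cD$ has marginal $\refdist{P}$, then $\TL$ does not reject with probability $1-\delta$. Moreover $P_{\hat f}$ has marginal $\refdist{P}$, so by the completeness of the query learner from \cref{lemma:general-testable-proper-query-learning} it also does not reject. A union bound gives failure probability $O(\delta)$.

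\textbf{Soundness.} If we output $h$, then by soundness of the inner testable query learner,
\[
\Pr_P[h\ne\hat f]\le \opt(\cF,P_{\hat f})+O(\epsilon).
\]
Let $g^\ast\in\cF$ be near-optimal for $\cD$, so that $\err(g^\ast,\cD)\le \opt(\cF,\cD)+\epsilon$. By the triangle inequality for disagreement under $P$,
\[
\opt(\cF,P_{\hat f})\le \Pr_P[g^\ast\ne\hat f]\le \err(g^\ast,\cD)+\err(\hat f,\cD)\le 2\opt(\cF,\cD)+O(\epsilon).
\]
Applying the triangle inequality again,
\[
\err(h,\cD)\le \err(\hat f,\cD)+\Pr_P[h\ne\hat f]\le \opt+\epsilon+2\opt+O(\epsilon)=3\opt(\cF,\cD)+O(\epsilon).
\]
Union bounding over the failure events of both stages gives confidence $1-O(\delta)$.

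The only delicate point is that the inner learner's guarantee is with respect to $\opt(\cF,P_{\hat f})$ rather than $\opt(\cF,\cD)$. The middle chain of inequalities bridges the two, and it is the source of the factor $3$. The rest is bookkeeping. In particular, the inner learner needs fresh unlabeled samples from $P$, which is why the total is $O(s_{\TL}+\log(1/\delta)/\epsilon^2)$ samples.
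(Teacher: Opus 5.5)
Your proposal is correct and follows the paper's proof essentially step for step. You run $\TL$ to obtain $\hat f$, run the testable proper query learner of \cref{lemma:general-testable-proper-query-learning} on $P_{\hat f}$ with queries answered by evaluating $\hat f$, and get the factor $3$ from the same pair of triangle inequalities. Your near-optimal $g^\ast$ (instead of the paper's assumption that the minimizer is attained) and your caveat that evaluating the possibly improper $\hat f$ need not cost $t_{\mathrm{eval}}$ are minor refinements of points the paper glosses over.
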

\begin{proof}
    Let $P = \cD_\cX$ be the marginal on $\cX$ of the input joint distribution
    $\cD$. The algorithm $\cA$ simulates $\TL$, which either rejects or produces
    a hypothesis $\hat{f}$. If $\TL$ rejects, then $\cA$ rejects. Otherwise,
    $\cA$ simulates the testable proper query learner from
    \cref{lemma:general-testable-proper-query-learning}, call it $\cB$, on
    marginal $P$ and labeling function $\hat{f}$; then, $\cA$ rejects if $\cB$
    rejects, or otherwise outputs the hypothesis $h \in \cF$ produced by $\cB$.

    The complexity bounds and properness are immediate. Completeness is also clear because, when
    $P=\refdist{P}$, each of the algorithms $\TL$ and $\cB$ only reject with probability $O(\delta)$ each, and hence
    so does $\cA$ by a union bound. As for soundness, suppose $\TL$ and $\cB$ produced correct
    outputs (i.e., satisfying their own soundness conditions), which occurs with probability at
    least $1 - O(\delta)$. Then either one of the steps rejected, in which case $\cA$ rejects and
    soundness is satisfied, or $\cA$ produces a hypothesis $h$. In this case, letting $f^* \define
    \arg\min_{f' \in \cF} \err(f', \cD)$ (which we assume is attained for simplicity), we have
    \begin{align*}
        \err(h, \cD)
        &= \Pr_{(x,y) \sim \cD}[h(x) \ne y] \\
        &\le \Pr_{x \sim P}[h(x) \ne \hat{f}(x)] + \Pr_{(x,y) \sim \cD}[\hat{f}(x) \ne y] \\
        &= \err(h, P_{\hat{f}}) + \err(\hat{f}, \cD) \\
        &\le \opt(\cF, P_{\hat{f}}) + \opt(\cF, \cD) + O(\epsilon)
            \tag{Soundness of $\cB$ and $\TL$} \\
        &\le \err(f^*, P_{\hat{f}}) + \opt(\cF, \cD) + O(\epsilon) \\
        &= \Pr_{x \sim P}[f^*(x) \ne \hat{f}(x)] + \opt(\cF, \cD) + O(\epsilon) \\
        &\le \Pr_{(x,y) \sim \cD}[f^*(x) \ne y] + \Pr_{(x,y) \sim \cD}[\hat{f}(x) \ne y]
            + \opt(\cF, \cD) + O(\epsilon) \\
        &= \err(f^*, \cD) + \err(\hat{f}, \cD) + \opt(\cF, \cD) + O(\epsilon) \\
        &\le 3 \cdot \opt(\cF, \cD) + O(\epsilon)
            \tag{Optimality of $f^*$ and soundness of $\TL$} \,.
    \end{align*}
\end{proof}

As a consequence, we conclude

\begin{corollary}
    There exist testable $3$-semiagnostic proper learners from the classes of halfspaces with
    respect to the standard Gaussian distribution over $\bbR^n$, and size-$s$ decision trees with
    respect to the uniform distribution over $\bits^n$, with running times matching
    \cref{thm:proper-halfspaces,thm:proper-decision-trees}, respectively.
\end{corollary}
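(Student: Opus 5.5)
The corollary follows by instantiating the preceding lemma (the testable $3$-semiagnostic proper learner built from $\cL$ and $\TL$) twice, using the same ingredients as in \cref{thm:proper-halfspaces,thm:proper-decision-trees}.

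\textbf{Halfspaces.} Take $\refdist{P}$ to be the standard Gaussian over $\bbR^n$. For $\cL$, use the agnostic proper query learner of Theorem~A.1 of \cite{DiakonikolasKKT24}. It uses $\poly(n/\epsilon)$ queries and $(\poly(n/\epsilon) + (1/\epsilon)^{O(1/\epsilon^6)})\log(1/\delta)$ time. For $\TL$, use the testable learner of \cite{GollakotaKK23}, with $s_{\TL} = n^{\tilde O(1/\epsilon^2)}\log(1/\delta)$ samples and time. The lemma needs the learned hypothesis $\hat f$ to be evaluable, since $\cB$ queries $\hat f$ rather than $f$. So I will check that the hypothesis output by $\TL$ is a polynomial threshold of degree $\tilde O(1/\epsilon^2)$. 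It can therefore be evaluated in time $n^{\tilde O(1/\epsilon^2)}$, and each query by $\cL$ costs that much. Plugging into the lemma gives total time $\poly(s\cdot t_\cN) + \poly(t_{\TL}) + \poly(t_\cL\cdot t_{\teval})$. Since $s = n^{\tilde O(1/\epsilon^2)}\log(1/\delta)$, this matches \cref{thm:proper-halfspaces}.

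\textbf{Decision trees.} Take $\refdist{P}$ to be uniform over $\bits^n$. Sampling is trivial here, with $t_{\refdist{P}} = O(n)$. For $\cL$, use the learner of \cite{BlancLQT22}. For $\TL$, use \cref{thm:tl-dt}. The $\TL$ hypothesis is a low-degree PTF of degree $O(\log(s/\epsilon))$, evaluable in time $n^{O(\log(s/\epsilon))}$. The learner of \cite{BlancLQT22} runs in time $\poly(n)\cdot(s/\epsilon)^{O(\log((\log s)/\epsilon))}$. Multiplying by the evaluation cost and adding $\poly(t_{\TL})$ gives $(n/\epsilon)^{O(\log(s/\epsilon))}\poly\log(1/\delta)$, matching \cref{thm:proper-decision-trees}.

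\textbf{Parameters and main obstacle.} In both cases, rescale $\epsilon$ and $\delta$ by constants to absorb the $O(\cdot)$ losses in the lemma. The main point to check is the evaluation time $t_{\teval}$ of the $\TL$ output. The lemma's hypothesis as written speaks of evaluating hypotheses in $\cF$, but it is actually used for $\hat f$. I will therefore confirm that the known testable learners output explicit polynomial thresholds whose evaluation cost is within the stated running time; polynomial regression yields exactly such hypotheses. The remaining bookkeeping is routine.
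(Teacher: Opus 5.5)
Your proposal is correct and takes the route the paper intends: the paper gives no separate proof, and the corollary follows by instantiating the preceding $3$-semiagnostic lemma with exactly the agnostic proper query learners and testable learners used for \cref{thm:proper-halfspaces,thm:proper-decision-trees}. Your observation that $t_{\teval}$ must bound the evaluation time of the improper output $\hat{f}$ of $\TL$, not of hypotheses in $\cF$, is a valid refinement the paper glosses over, and your fix (polynomial regression outputs explicit low-degree PTFs whose evaluation cost stays within the target running times) is sound.
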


\section{Applications to PAC Verification of Learning Algorithms}
\label{sec:verification}
The abstract connection between verification and testing has been known in the literature. The work
of \cite{MutrejaS23} implicitly reduced uniformity testing to interactive verification of VC
classes. Meanwhile, the work of \cite{chiesa2018proofs} reduce identity testing of distributions to
interactive verification of distributions. Building on some of the ideas presented in these papers,
we prove our results relating $\cF$-identity testing to PAC verification
of learning algorithms.

Recall the definition of PAC verification from \cref{def:ver}.
Before we prove our mild equivalence result, we first set up some additional notation. 

\begin{definition}[Lifted class $\tilde{\cF}$]
\label{def:Lifted function class}
Given $\cF\subseteq \{0,1\}^{\cX}$, define the \emph{lifted} function class $\tilde{\cF}: \cX\times\{0,1\}\to \{0,1\}$ as
\[
\tilde{\cF} \define \left\{\tilde{f}(x,y)=\Ind[f(x)\neq y]:\;f\in \cF\right\}.
\]
Note that for any $\cD\in\Delta(\cX\times \{0,1\})$, $\ee_{(x,y)\sim\cD}[\tilde{f}(x,y)]=\err(f,\cD)$.
\end{definition}

\subsection{Identity testers from verification protocols}
We first show how to obtain an $\tilde{\cF}$-identity tester to a specific reference
distribution (whose marginal on the last bit is uniform) from a
testable interactive proof system. Although the resulting tester
does not have a direct interpretation in terms of $\cF$-identity testing,
a Rademacher complexity argument will allow us to
conclude from this a verifier sample complexity lower bound for
testable verification.

\begin{algorithm}[!ht]
\caption{$\tilde{\cF}$-identity tester to distribution
$\refdist{\cD}=D_{\cX}\times\Unif(\bits)$}
    \begin{algorithmic}[1]
    \State \textbf{Input:} Samples from distribution $\cD\in\Delta(\cX\times \{0,1\})$. 
    \State Run the interactive verification protocol where $P$ is instantiated on $\refdist{\cD}$ and $V$ is given sample access to $\cD$.
    \State If verifier $V$ rejects, output ``Reject".
    \State Draw a tuple $S_m$ of $m$ i.i.d.\ samples $(x_i,y_i)\sim\cD$.
    If the output hypothesis $h_V$ of the verifier satisfies $\err(h_V,\Unif(S_m))= \frac1m\sum_{i=1}^m \Ind[h_V(x_i)\neq y_i] < \frac{1}{2}-\epsilon/4$,
    output ``Reject".
    \State Otherwise output ``Accept".
  \end{algorithmic}
  \label{alg:tester-from-protocol}
\end{algorithm}

\begin{theorem}\label{thm:IP-to-IT}
    Let $\cF\subseteq \{0,1\}^{\cX}$ be a function class closed under complement. Fix any marginal
    $\cD_{\cX}$ on $\cX$ and let the reference distribution be $\refdist{\cD} = \cD_{\cX} \times
    \Unif(\{0,1\})$. Let $\langle P,V\rangle$ be a testable interactive proof system that
    verifies $\cF$ to error $\alpha$ and confidence $1-\beta$ with respect to marginal $\cD_\cX$.

    Then there is an $(\epsilon, \delta)$-identity tester to $\refdist{\cD}$ against
    $\tilde{\cF}$ (\cref{alg:tester-from-protocol}) which simulates $\langle P, V \rangle$
    internally with parameters $\alpha = \epsilon/4, \beta = \delta/2$ and uses $m =
    O(\eps^{-2}\log(1/\delta))$ extra samples from the input joint distribution $\cD$.
\end{theorem}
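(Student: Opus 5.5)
We analyze \cref{alg:tester-from-protocol} directly, splitting into completeness and soundness. The two facts we rely on are $\E_{\cD}[\tilde f]=\err(f,\cD)$ for every $f\in\cF$, and the observation that under $\refdist{\cD}=\cD_\cX\times\Unif(\bits)$ every hypothesis $h$ (in $\cF$ or not) has $\err(h,\refdist{\cD})=\tfrac12$. Hence
\[
\dist_{\tilde\cF}(\cD,\refdist{\cD})=\sup_{f\in\cF}\Abs{\err(f,\cD)-\tfrac12}.
\]
Since $\cF$ is closed under complement and $\err(1-f,\cD)=1-\err(f,\cD)$, the condition $\dist_{\tilde\cF}(\cD,\refdist{\cD})>\epsilon$ is equivalent to $\opt(\cF,\cD)<\tfrac12-\epsilon$. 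We take $m=C\epsilon^{-2}\log(1/\delta)$, so that Hoeffding's inequality, applied to the fixed hypothesis $h_V$ (fixed because $S_m$ is drawn fresh after the protocol ends), gives $\abs{\err(h_V,\Unif(S_m))-\err(h_V,\cD)}\le\epsilon/8$ with probability at least $1-\delta/2$.

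\textbf{Completeness ($\cD=\refdist{\cD}$).} In this case $\cD$ has marginal $\cD_\cX$, so $\cD\in\mathbb{D}_\comp$. The simulated prover, instantiated on $\refdist{\cD}=\cD$, is then exactly the honest prover. By completeness of the protocol with $\beta=\delta/2$, with probability at least $1-\delta/2$ the verifier does not reject and outputs some $h_V\in\cF$. We do not even need its error guarantee here: every hypothesis has true error exactly $\tfrac12$ under $\refdist{\cD}$. So on the Hoeffding event, $\err(h_V,\Unif(S_m))\ge\tfrac12-\epsilon/8>\tfrac12-\epsilon/4$, and the tester accepts. A union bound gives overall failure probability at most $\delta$.

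\textbf{Soundness ($\dist_{\tilde\cF}(\cD,\refdist{\cD})>\epsilon$).} By the equivalence above, $\opt(\cF,\cD)<\tfrac12-\epsilon$. The simulated prover is instantiated on the wrong distribution, so from the verifier's point of view it is a possibly malicious prover $P'$ that depends only on $\refdist{\cD}$ and the parameters. This is allowed by the soundness clause, which holds for all $\cD\in\mathbb{D}_\sound=\Delta(\cX\times\bits)$. By soundness with $\alpha=\epsilon/4$ and $\beta=\delta/2$, with probability at least $1-\delta/2$ the verifier either rejects, in which case we reject, or outputs $h_V$ with $\err(h_V,\cD)\le\opt(\cF,\cD)+\epsilon/4<\tfrac12-3\epsilon/4$. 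On the Hoeffding event this gives $\err(h_V,\Unif(S_m))<\tfrac12-3\epsilon/4+\epsilon/8<\tfrac12-\epsilon/4$, so the tester rejects. A union bound again gives failure probability at most $\delta$.

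\textbf{Main obstacle.} The delicate point is conceptual rather than computational: justifying that the simulated prover is honest exactly in the completeness case, and is a legitimate adversarial prover in the soundness case. The honest case uses the paper's simplification that the honest prover receives an explicit description of the input distribution, which here is $\refdist{\cD}$. The adversarial case needs that soundness quantifies over arbitrary provers, including one with no sample access to $\cD$. Apart from this, the argument needs only Hoeffding's inequality, the complement-closure equivalence, and threshold bookkeeping.
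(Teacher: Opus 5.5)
Your proposal is correct and follows essentially the same argument as the paper. For completeness, the simulated prover is honest and every hypothesis has error exactly $\tfrac12$ under $\refdist{\cD}$. For soundness, the simulated prover is treated as a dishonest prover, $\opt(\cF,\cD)<\tfrac12-\epsilon$ is derived, and a Hoeffding check on fresh samples finishes the argument. The only differences are cosmetic: you use tolerance $\epsilon/8$ rather than $\epsilon/4$, and you spell out the complement-closure step that the paper leaves implicit.
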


\begin{proof}
\smallskip
\textbf{Completeness $(\cD=\refdist{\cD})$:}
Following the construction of the tester, we see that
both the prover and the verifier get sample access to $\refdist{\cD}$.
Since
the marginal distribution on the labels is the uniform distribution, it is easy
to see that for every $h\in \cF$, we have $\err(h,\refdist{\cD})=\frac12$.

The algorithm checks the empirical loss of the output hypothesis $h_V$
of the verifier
by sampling $m \ge C \eps^{-2}\log(1/\delta)$ samples, for
sufficiently large constant $C$. By a Hoeffding bound, with probability at least
$1-\delta/2$ we have
\begin{equation}\label{eq:hoeff}
    \lvert\err(h_V,\Unif(S_m))-\err(h_V,\refdist{\cD})\rvert\leq \eps/4.    
\end{equation}

    Since $P$ is the honest prover and the prover and verifier get samples from $\cD =
    \refdist{\cD}$ with marginal $\cD_\cX$, the completeness of the $\langle P,V\rangle$
protocol with parameters $(\eps/4,\delta/2)$
ensures that, with probability at least
$1-\delta/2$, the verifier $V$ does not reject.
Since every hypothesis in $\cF$ has error at
least $1/2$ on the distribution $\refdist{\cD}$, taking the union bound
over the events that $V$ rejects or \eqref{eq:hoeff} fails to hold,
we get that, with probability at least $1-\delta$, the verifier $V$ does not
reject and $\err(h_V,\Unif(S_m))\geq \frac{1}{2}-\eps/4$, and in this case the
tester accepts.

    \textbf{Soundness}\;$(\dist_{\tilde{\cF}}(\refdist{\cD},\cD)>\varepsilon)$:
    Then there exists a hypothesis $\tilde{h}\in\tilde{\cF}$ such that
    $\lvert\ee_{\refdist{\cD}}[\tilde{h}]
    - \ee_{\cD}[\tilde{h}]\rvert>
    \varepsilon$. Since
    $\ee_{(x,y)\sim\cD}[\tilde{h}(x,y)]=\err(h,\cD)$
    and similarly for $\refdist{\cD}$, it follows that
    $\lvert\err(h,\refdist{\cD})-\err(h,\cD)\rvert>\eps$
    (where $h \in \cF$ is the hypothesis whose lifted version is $\tilde{h}$).

    Since the marginal distribution of $\refdist{\cD}$ on the labels is
    the uniform distribution, we have
    $\err(h,\refdist{\cD})=\frac{1}{2}$ and hence
    $\lvert \err(h,\cD)-1/2\rvert>\eps$.
    It follows that $\opt(\cF,\cD) < 1/2-\eps$.

    By the soundness of the $\langle P,V \rangle$ protocol (where $V$ gets samples from the
    arbitrary distribution $\cD$ and we may view $P$ instantiated on $\refdist{\cD}$ as
    a possibly dishonest prover $P'$),
    the verifier outputs a hypothesis $h$ such that 
    \begin{equation}
    \label{eq:verifier-soundness}
    \Pr\left[h = \text{reject}\lor \left(\err(h,\cD)\leq \opt(\cF,\cD)+\eps/4\right)\right]\geq 1-\delta/2.
    \end{equation}
  If the verifier $V$ rejects, then our identity tester rejects and we are done.
  Suppose $V$ outputs hypothesis $h$. When the condition in \eqref{eq:verifier-soundness} holds,
  we obtain
  \[
    \err(h, \cD) \le \opt(\cF, \cD) + \epsilon/4
    < \frac{1}{2} - \epsilon + \epsilon/4
    = \frac{1}{2} - 3\epsilon/4 \,.
  \]
  \sloppy
  In this case,
  a Hoeffding bound implies that, with probability at least $1-\delta/2$,
  we have $\err(h,\Unif(S_m)) < \frac{1}{2} - \epsilon/4$,
  in which case the tester rejects. The proof follows by a union bound.
\end{proof}

In order to use our Rademacher-based sample complexity lower bound for
$\cF$-identity testing from \cref{thm:rademacher-lower-bound}, we first observe that the
Rademacher complexity of $\cF$ over $\cD_\cX$ is equal to the Rademacher
complexity of the lifted $\tilde{\cF}$ over $\refdist{\cD}$. Intuitively,
this holds because both $\tilde{\cF}$ and $\refdist{\cD}$ are
``uninformative'' in the last coordinate compared to $\cF$ and $\cD_\cX$.

\begin{claim}
    \label{claim:rademacher-equal}
    For each $m \ge 1$, it holds that $\cR_m(\cF, \cD_\cX) = \cR_m(\tilde{\cF},
    \refdist{\cD})$.
\end{claim}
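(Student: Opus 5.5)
Write $\sigma_i f^{\pm}$ for the signed versions, as in \cref{def:rademacher-complexity}. The plan is to compare the two Rademacher averages sample-by-sample. We couple a sample $\tilde S = ((x_1,y_1),\dots,(x_m,y_m)) \sim \refdist{\cD}^m$ with the sample $S = (x_1,\dots,x_m) \sim \cD_\cX^m$ obtained by dropping the labels. Because $\refdist{\cD} = \cD_\cX \times \Unif(\bits)$, the labels $y_i$ are i.i.d.\ uniform bits independent of $S$. We then show that, for every fixed $S$ and every fixed label vector $y$, the empirical Rademacher complexities satisfy $\cR(\tilde{\cF}, \tilde S) = \cR(\cF, S)$. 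Averaging over $S$ and $y$ then gives the claim directly.

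The key identity is pointwise. Since $\tilde f(x,y) = \Ind[f(x) \ne y] = f(x) \oplus y$, the signed version is
\[
\tilde f^{\pm}(x,y) = 2(f(x)\oplus y) - 1 = -f^{\pm}(x)\, y^{\pm},
\]
where $y^{\pm} = 2y-1 \in \sbits$. Hence
\[
\frac1m \sum_i \sigma_i \tilde f^{\pm}(x_i,y_i) = -\frac1m \sum_i (\sigma_i y_i^{\pm}) f^{\pm}(x_i).
\]
The outer absolute value removes the minus sign. The map $f \mapsto \tilde f$ ranges over all of $\tilde{\cF}$ as $f$ ranges over $\cF$, so the supremum over $\tilde{\cF}$ equals the supremum over $\cF$ of $\bigl|\frac1m\sum_i \sigma'_i f^{\pm}(x_i)\bigr|$, where $\sigma'_i = \sigma_i y_i^{\pm}$.

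For fixed $y$, the map $\sigma \mapsto \sigma'$ is a bijection of $\sbits^m$, namely coordinatewise multiplication by a fixed sign vector. So if $\sigma$ is uniform, then $\sigma'$ is uniform. Taking the expectation over $\sigma$ therefore gives $\cR(\tilde{\cF},\tilde S) = \cR(\cF,S)$ for every choice of labels. Finally we take the expectation over $\tilde S \sim \refdist{\cD}^m$, using that the marginal of the $x$-coordinates is $\cD_\cX^m$.

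There is no real obstacle here; the only point requiring care is the sign-flip invariance of the uniform Rademacher vector. Notably, this step does not use the uniformity of the labels, nor closure of $\cF$ under complement: the absolute value handles the global sign. Uniformity of the labels is also not essential for the per-sample equality; only the correct $x$-marginal is used.
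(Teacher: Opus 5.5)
Your proof is correct and is essentially the paper's argument: the identity $\tilde f^{\pm}(x,y) = (-1)^y f^{\pm}(x)$ lets the label signs be absorbed into the Rademacher vector, which remains uniform on $\{\pm 1\}^m$. Conditioning on each sample and label vector is a slightly cleaner presentation than the paper's. You are also right that neither closure under complement (which the paper invokes to drop and then restore the absolute value) nor uniformity of the labels is actually needed.
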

\begin{proof}
    Recall the notation $f^{\pm} = 2f-1$ for the $\sbits$-valued version of $f$.
    Note that each lifted function $\tilde{f}$ satisfies $\tilde{f}^{\pm}(x,y) =
    (-1)^y f^{\pm}(x)$. Now, by definition of Rademacher complexity and using
    the ongoing assumption that $\cF$ is closed under complementation, we have
    \begin{align*}
        \cR_m(\tilde{\cF}, \refdist{\cD})
        &= \argEx{(x_i, y_i)_i \sim \refdist{\cD}^m}{
            \argEx{\sigma \sim \sbits^m}{
                \sup_{\tilde{f} \in \tilde{\cF}} \Abs{
                    \frac{1}{m} \sum_{i=1}^m \sigma_i \tilde{f}^{\pm}(x_i, y_i)
                }
            }
        } \\
        &= \argEx{(x_i)_i \sim \cD_\cX^m}{
            \argEx{(y_i)_i \sim \bits^m}{
                \argEx{\sigma \sim \sbits^m}{
                    \sup_{f \in \cF}
                    \frac{1}{m} \sum_{i=1}^m \sigma_i (-1)^{y_i} f^{\pm}(x_i)
                }
            }
        } \,.
    \end{align*}
    Note that the random variable $\left(\sigma_i (-1)^{y_i}\right)_{i \in [m]}$
    is distributed as $\Unif(\sbits)^m$. Hence, using again that $\cF$ is closed
    under complementation, we have
    \[
        \cR_m(\tilde{\cF}, \refdist{\cD})
        = \argEx{(x_i)_i \sim \cD_\cX^m}{
            \argEx{\sigma \sim \sbits^m}{
                \sup_{f \in \cF} \Abs{
                    \frac{1}{m} \sum_{i=1}^m \sigma_i f^{\pm}(x_i)
                }
            }
        }
        = \cR_m(\cF, \cD_\cX) \,. \qedhere
    \]
\end{proof}

Combining \cref{thm:IP-to-IT,claim:rademacher-equal,thm:rademacher-lower-bound},
we obtain

\thmintroverificationlowerbound*
\begin{proof}
    If there was a testable verification protocol for $\cF$ with respect to marginal
    $\cD_\cX$ with verifier sample complexity $o(\sqrt{m})$, then by
    \cref{thm:IP-to-IT} we would have an $\tilde{\cF}$-identity tester to
    $\refdist{\cD}$ with sample complexity $o(\sqrt{m}) +
    \frac{1}{\epsilon^2} \le c\sqrt{m}$ for any suitably chosen sufficiently
    small constant $c > 0$, since $m \ge C/\epsilon^4$ for a sufficiently large
    constant $C$. But \cref{claim:rademacher-equal} implies that
    $\cR_m(\tilde{\cF}, \refdist{\cD}) = \cR_m(\cF, \cD_\cX) > 4\epsilon$,
    so the existence of this tester contradicts
    \cref{thm:rademacher-lower-bound}.
\end{proof}

\subsection{Verification protocols from identity testers}
In the opposite direction, we show how to obtain distribution-free PAC
verification protocols from $\cF$-identity testers for all reference distributions. We
first reduce PAC verification to $\tilde{\cF}$-identity testing, and then reduce
$\tilde{\cF}$-identity testing to $\cF$-identity testing.

\begin{algorithm}[!ht]
\caption{Prover-Verifier protocol $\langle P,V \rangle$ for function class $\cF$}
    \begin{algorithmic}[1]
    \State \textbf{Input:} Explicit description of the distribution $\cD$ for prover $P$ and sample access to distribution $\cD$ for the verifier $V$. 
    \State The honest prover $P$ sends an explicit description of a distribution $\cD'=\cD$ on $\cX\times\{0,1\}$ and a hypothesis $f'\in \arg\min_{f\in\cF}\err(f,\cD)$. 
    \State The verifier $V$ iterates over all functions in $f\in\cF$ and calculates $\opt(\cF,\cD')$. 
    \If{$\err(f',\cD')\neq \opt(\cF,\cD')$}
    \State Output ``Reject". \label{line:reject1}
    \EndIf
    \State Run $\tilde{\cF}$-identity tester $A_{\tilde{\cF}}$ with parameters $(\eps/2,\delta/2)$
        on reference distribution $\cD'$ and samples from $\cD$.
    \If{$A_{\tilde{\cF}}$ accepts}
    \State Output $f'$
    \Else
    \State Output ``Reject".
    \EndIf
  \end{algorithmic}
\label{alg:protocol}
\end{algorithm}

\smallskip
\begin{theorem}[$\tilde{\cF}$-identity testing for all distributions $\Longrightarrow$ Interactive verification]\label{thm:IT-to-IP}
Let $\cF\subseteq\{0,1\}^{\cX}$ be some
function class with lifted class
$\tilde{\cF}$. Suppose there exists an $(\epsilon/2, \delta/2)$-identity tester
$\cA_{\tilde{\cF}}$
for all reference distributions against $\tilde{\cF}$ with sample complexity
$m_{\tilde{\cF}}$.
Then there exists a distribution-free verification protocol (\cref{alg:protocol})
that verifies $\cF$ to excess error
$\varepsilon$ and confidence $1-\delta$ with verifier sample complexity
$m_{\tilde{\cF}}$.
\end{theorem}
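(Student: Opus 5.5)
We analyze \cref{alg:protocol} directly, checking completeness and soundness separately. The verifier's only samples are those consumed by $A_{\tilde{\cF}}$; the computation of $\opt(\cF,\cD')$ uses only the explicit description $\cD'$ and no samples. So the sample complexity claim is immediate. The one structural fact we use is the identity $\E_{\cD}[\tilde f]=\err(f,\cD)$ from \cref{def:Lifted function class}. It gives $\dist_{\tilde{\cF}}(\cD,\cD')=\sup_{f\in\cF}\abs{\err(f,\cD)-\err(f,\cD')}$, so closeness in lifted fooling distance is exactly uniform closeness of errors over $\cF$.

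\textbf{Completeness.} The honest prover sends $\cD'=\cD$ and $f'\in\arg\min_{f\in\cF}\err(f,\cD)$. Hence $\err(f',\cD')=\opt(\cF,\cD')$ and line~\ref{line:reject1} does not reject. (If the infimum is not attained, we either assume attainment as elsewhere in the paper or relax the check to within $\epsilon/2$.) Since $\cD'=\cD$, the completeness of $A_{\tilde{\cF}}$ makes it accept with probability at least $1-\delta/2$. The verifier then outputs $f'$, whose error is $\opt(\cF,\cD)\le\opt(\cF,\cD)+\epsilon$.

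\textbf{Soundness.} Fix any prover $P'$, and let $(\cD',f')$ be the message it sends; if it sends nothing well-formed, we reject. Split on the distance between $\cD'$ and the true $\cD$.
\begin{itemize}
\item If $\dist_{\tilde{\cF}}(\cD,\cD')>\epsilon/2$, the soundness of $A_{\tilde{\cF}}$ (an identity tester for \emph{all} reference distributions, run with reference $\cD'$) makes it reject with probability at least $1-\delta/2$. The verifier then rejects.
\item If $\dist_{\tilde{\cF}}(\cD,\cD')\le\epsilon/2$, then $\abs{\err(f,\cD)-\err(f,\cD')}\le\epsilon/2$ for every $f\in\cF$. Whenever the verifier outputs $f'$, it has checked $\err(f',\cD')=\opt(\cF,\cD')$. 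Taking $f^*$ near-optimal for $\cD$,
\[
\err(f',\cD)\le\err(f',\cD')+\epsilon/2=\opt(\cF,\cD')+\epsilon/2\le\err(f^*,\cD')+\epsilon/2\le\err(f^*,\cD)+\epsilon \,,
\]
so $\err(f',\cD)\le\opt(\cF,\cD)+\epsilon$, as required.
\end{itemize}

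The one subtle point is the case analysis: the prover's choice of $\cD'$ may depend on $\cD$ but not on the verifier's samples. So $\cD'$ is fixed before the tester runs, and the tester's guarantee applies to that fixed reference distribution. Formally, we condition on the prover's message and apply the tester's guarantee to that fixed $\cD'$. In each case the failure probability is at most $\delta/2\le\delta$. No other step is delicate.
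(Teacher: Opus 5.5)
Your proposal is correct and follows the paper's proof essentially step for step. Completeness is argued the same way. Your soundness case split on $\dist_{\tilde{\cF}}(\cD,\cD')$ is just the contrapositive phrasing of the paper's argument (with probability $1-\delta/2$, acceptance by the tester implies that all errors are $\epsilon/2$-close), followed by the same chain of inequalities, except that you compare against a near-optimal $f^*$ rather than first deriving $\abs{\opt(\cF,\cD')-\opt(\cF,\cD)}\le\epsilon/2$.
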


\begin{proof}
\sloppy
\textbf{Completeness:}
Since the prover $P$ is honest, it sends $\cD'=\cD$ and $f'\in
\arg\min_{f\in \cF}\err(h,\cD)$.
\cref{line:reject1} does not reject since
$\err(f',\cD')=\opt(\cF,\cD')$, and
$\cA_{\tilde{\cF}}$ accepts with probability at least $1-\delta/2$ since
the tester sees identical distributions.
Hence the verifier outputs $h$ that achieves error $\opt(\cF,\cD)$
with probability at least $1-\delta$.

\textbf{Soundness:} If the protocol rejects at \cref{line:reject1}, then the
verifier rejects and we are done. Suppose that step does not reject, so that the
verifier receives a hypothesis $f'\in\arg\min_{f\in
\cF}\err(f,\cD').$ In the subsequent steps, if the tester $\cA_{\tilde{\cF}}$
rejects then the verifier rejects as well, and again we are done. With
probability at least $1-\delta/2$, if $\cA_{\tilde{\cF}}$ accepts we have that,
for all $f \in \cF$,
\begin{equation}\label{eq:close-in-error}
    \abs{\err(f, \cD') - \err(f, \cD)}
    = \lvert\ee_{\cD'}[\tilde{f}] - \ee_{\cD}[\tilde{f}]\rvert
    \leq\varepsilon/2.
\end{equation}
This implies that
\begin{equation}
    \label{eq:close-in-opt}
    \abs{\opt(\cF, \cD') - \opt(\cF, \cD)} \le \epsilon/2 \,.
\end{equation}
But then, if the protocol returns hypothesis $f'$, we conclude that
\begin{align*}
    \err(f', \cD)
    &\le \err(f', \cD') + \epsilon/2 \tag{By \eqref{eq:close-in-error}} \\
    &= \opt(\cF, \cD') + \epsilon/2 \tag{By \cref{line:reject1}} \\
    &\le \opt(\cF, \cD) + \epsilon \tag{By \eqref{eq:close-in-opt}} \,.
\end{align*}
\end{proof}

\subsubsection{Lifting identity testers from $\cF$ to $\tilde{\cF}$}
\label{subsec:lifting}
\newcommand{\ak}{\cA_k}
In this section, we show a reduction lifting an identity tester against a
function class $\cF$ to the function class $\tilde{\cF}$.

\paragraph*{Notation.} Let $D\in\Delta(\cX\times \cY)$ be a distribution over
some set $\cX$ and $\cY=\bits$. For each $b \in \bits$, let $D[y=b] \define
\Pr_{y\sim D_{\cY}}[y=b]$ where $D_{\cY}$ denotes the marginal distribution on
$\cY$. Write $D_b \in \Delta(\cX)$ for the distribution on $\cX$ of $D$
conditioned on $Y=b$.

Our reduction relies on the following definition of synthetic distributions.

\begin{definition}[Synthetic distributions for $\tilde{P}$ and $\tilde{Q}$]
    Given distributions $\tilde{P}$ and $\tilde{Q}$ over $\cX \times \bits$, we define the
    distributions $\tilde{R}(\tilde{P}, \tilde{Q})$ over $\cX \times \bits$ and $R^b(\tilde{P}, \tilde{Q})$ over
    $\cX$, for each $b \in \bits$, as follows:
    \begin{enumerate}
        \item To draw a sample $(x,y) \sim \tilde{R}(\tilde{P}, \tilde{Q})$, we first draw $y
        \sim \tilde{P}_\cY$ and then $x \sim \tilde{Q}_y$.

        \item  For each $b \in \bits$, to draw a sample $x \sim R^b(\tilde{P},\tilde{Q})$, we
        first draw $y \sim \tilde{P}_\cY$. If $y = b$, we let $x \sim \tilde{P}_y$. Otherwise,
        we let $x \sim \tilde{Q}_y$.
    \end{enumerate}
    Note that a sample from $R^b(\tilde{P},\tilde{Q})$ may be produced using one
    sample from $\tilde{P}$ and one sample from $\tilde{Q}_{1-b}$.
\end{definition}

It turns out that $\dist_{\tilde{\cF}}(\tilde{P}, \tilde{Q})$ is captured by the
distance between the marginals $\tilde{P}_\cY$ and $\tilde{Q}_\cY$, along with
the distances between the marginals $\tilde{P}_\cX$ and $R^b$ against $\cF$:

\begin{lemma}
    \label{lemma:synthetic}
    For all distributions $\tilde{P}, \tilde{Q} \in \Delta(\cX \times \bits)$, letting
    $\tilde{R} \define \tilde{R}(\tilde{P}, \tilde{Q})$ and $R^b \define R^b(\tilde{P}, \tilde{Q})$ for each $b
    \in \bits$, for each $f : \cX \to \bits$ we have
    \[
        \dist_{\tilde{f}}(\tilde{P}, \tilde{Q})
        \le 2\abs{\tilde{P}[y=1] - \tilde{Q}[y=1]} + \dist_f(\tilde{P}_\cX, R^0) + \dist_f(\tilde{P}_\cX, R^1)
        \,,
    \]
    and therefore, for any family $\cF$ of Boolean functions over $\cX$,
    \[
        \dist_{\tilde{\cF}}(\tilde{P}, \tilde{Q})
        \le 2\abs{\tilde{P}[y=1] - \tilde{Q}[y=1]}
            + \dist_\cF(\tilde{P}_\cX, R^0) + \dist_\cF(\tilde{P}_\cX, R^1) \,.
    \]
\end{lemma}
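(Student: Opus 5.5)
The plan is to fix $f$ and decompose $\E_{\tilde P}[\tilde f]$ and $\E_{\tilde Q}[\tilde f]$ by conditioning on the label. Write $p_b \define \tilde P[y=b]$ and $q_b \define \tilde Q[y=b]$. Since $\tilde f(x,0) = f(x)$ and $\tilde f(x,1) = 1-f(x)$, we have
\[
    \E_{\tilde P}[\tilde f] = p_0 \E_{\tilde P_0}[f] + p_1\left(1 - \E_{\tilde P_1}[f]\right),
    \qquad
    \E_{\tilde Q}[\tilde f] = q_0 \E_{\tilde Q_0}[f] + q_1\left(1 - \E_{\tilde Q_1}[f]\right).
\]
The second inequality of the lemma then follows from the first by taking the supremum over $f\in\cF$ on both sides, using $\sup(a+b)\le \sup a+\sup b$.

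As an intermediate step, I would replace the $\tilde Q$ weights $q_b$ by the $\tilde P$ weights $p_b$. This means passing to the synthetic distribution $\tilde R$, for which $\E_{\tilde R}[\tilde f] = p_0\E_{\tilde Q_0}[f] + p_1(1-\E_{\tilde Q_1}[f])$. Since $|p_0-q_0| = |p_1-q_1| = |\tilde P[y=1]-\tilde Q[y=1]|$ and all conditional expectations lie in $[0,1]$, we get
\[
    \abs{\E_{\tilde R}[\tilde f] - \E_{\tilde Q}[\tilde f]} \le 2\abs{\tilde P[y=1]-\tilde Q[y=1]}.
\]
This accounts for the first term of the bound.

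The remaining task is to bound
\[
    \abs{\E_{\tilde P}[\tilde f]-\E_{\tilde R}[\tilde f]} = \abs{p_0\left(\E_{\tilde P_0}[f]-\E_{\tilde Q_0}[f]\right) - p_1\left(\E_{\tilde P_1}[f]-\E_{\tilde Q_1}[f]\right)}.
\]
I would use the identities $\E_{\tilde P_\cX}[f] = p_0\E_{\tilde P_0}[f]+p_1\E_{\tilde P_1}[f]$ and, from the definition of $R^b$, $\E_{R^b}[f] = p_b\E_{\tilde P_b}[f] + p_{1-b}\E_{\tilde Q_{1-b}}[f]$. These give
\[
    \E_{\tilde P_\cX}[f]-\E_{R^b}[f] = p_{1-b}\left(\E_{\tilde P_{1-b}}[f]-\E_{\tilde Q_{1-b}}[f]\right).
\]
So the $b=1$ synthetic distribution isolates the label-$0$ discrepancy, and $b=0$ isolates the label-$1$ discrepancy. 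The triangle inequality then bounds the expression above by $\dist_f(\tilde P_\cX,R^1)+\dist_f(\tilde P_\cX,R^0)$. Combining this with the $\tilde R$ step via the triangle inequality yields the first claim.

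The main obstacle is only bookkeeping: matching each $R^b$ with the correct label's discrepancy, and handling the sign flip from $\tilde f(x,1)=1-f(x)$. The sign flip is harmless because each term is bounded separately in absolute value. Degenerate cases where $p_b=0$ make the conditional distributions ill-defined, but those terms carry weight $0$, so the identities hold for any arbitrary choice of the conditionals.
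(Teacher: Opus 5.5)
Your proposal is correct and follows essentially the same route as the paper: a triangle inequality through the synthetic $\tilde R$, a bound of $2\abs{\tilde P[y=1]-\tilde Q[y=1]}$ on $\dist_{\tilde f}(\tilde Q,\tilde R)$ from the label-weight mismatch, and an identification of the two per-label discrepancies in $\dist_{\tilde f}(\tilde P,\tilde R)$ with $\dist_f(\tilde P_\cX,R^1)$ and $\dist_f(\tilde P_\cX,R^0)$. The paper reaches that last step by adding and subtracting terms rather than writing out your identity $\E_{\tilde P_\cX}[f]-\E_{R^b}[f]=p_{1-b}\bigl(\E_{\tilde P_{1-b}}[f]-\E_{\tilde Q_{1-b}}[f]\bigr)$, but it is the same computation.
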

\begin{proof}
    By the triangle inequality,
    \[
        \dist_{\tilde{f}}(\tilde{P}, \tilde{Q})
        \le \dist_{\tilde{f}}(\tilde{P}, \tilde{R}) + \dist_{\tilde{f}}(\tilde{Q}, \tilde{R}) \,.
    \]
    By the definition of $\tilde{R}$, the second term is
    \begin{align*}
        \dist_{\tilde{f}}(\tilde{Q}, \tilde{R})
        &= \abs{\E_{\tilde{Q}}[\tilde{f}] - \E_{\tilde{R}}[\tilde{f}]} \\
        &= \abs{\tilde{Q}[y=1]\E_{x \sim \tilde{Q}_1}[\tilde{f}(x,1)]
                + \tilde{Q}[y=0]\E_{x \sim \tilde{Q}_0}[\tilde{f}(x,0)] \\
        &\qquad \qquad
                - \tilde{P}[y=1]\E_{x \sim \tilde{Q}_1}[\tilde{f}(x,1)]
                - \tilde{P}[y=0]\E_{x \sim \tilde{Q}_0}[\tilde{f}(x,0)]} \\
        &\le \abs{\tilde{Q}[y=1] - \tilde{P}[y=1]} + \abs{\tilde{Q}[y=0] - \tilde{P}[y=0]} \\
        &= 2\abs{\tilde{P}[y=1] - \tilde{Q}[y=1]} \,,
    \end{align*}
    where we used the triangle inequality and the fact that each
    expectation is bounded between $0$ and $1$. As for the first term,
    writing $\mu \define \tilde{P}[y=1]$ and $\nu \define 1-\mu = \tilde{P}[y=0]$, we have
    \begin{align*}
        \dist_{\tilde{f}}(\tilde{P}, \tilde{R})
        &= \abs{\mu \E_{x \sim \tilde{P}_1}[\tilde{f}(x,1)]
                + \nu \E_{x \sim \tilde{P}_0}[\tilde{f}(x,0)]
                - \mu \E_{x \sim \tilde{Q}_1}[\tilde{f}(x,1)]
                - \nu \E_{x \sim \tilde{Q}_0}[\tilde{f}(x,0)]} \\
        &\le \abs{\mu \E_{x \sim \tilde{P}_1}[\tilde{f}(x,1)]
                - \mu \E_{x \sim \tilde{Q}_1}[\tilde{f}(x,1)]}
            + \abs{\nu \E_{x \sim \tilde{P}_0}[\tilde{f}(x,0)]
                - \nu \E_{x \sim \tilde{Q}_0}[\tilde{f}(x,0)]} \\
        &= \abs{\mu \E_{x \sim \tilde{P}_1}[1 - f(x)]
                - \mu \E_{x \sim \tilde{Q}_1}[1 - f(x)]}
            + \abs{\nu \E_{x \sim \tilde{P}_0}[f(x)]
                - \nu \E_{x \sim \tilde{Q}_0}[f(x)]} \\
        &= \abs{\mu \E_{\tilde{P}_1}[f] - \mu \E_{\tilde{Q}_1}[f]}
            + \abs{\nu \E_{\tilde{P}_0}[f] - \nu \E_{\tilde{Q}_0}[f]} \,,
    \end{align*}
    where the inequality is by the triangle inequality, the next equality
    is by the definition of $\tilde{f}$, and the last equality is by extracting
    the constant $1$ from the expectations. By adding and subtracting the
    same quantity inside each absolute value and applying the law of total
    expectation, we obtain
    \begin{align*}
        \dist_{\tilde{f}}(\tilde{P}, \tilde{R})
        &\le \abs{\mu \E_{\tilde{P}_1}[f] + \nu \E_{\tilde{P}_0}[f]
                - \mu \E_{\tilde{Q}_1}[f] - \nu \E_{\tilde{P}_0}[f]}
            + \abs{\nu \E_{\tilde{P}_0}[f] + \mu \E_{\tilde{P}_1}[f]
                - \nu \E_{\tilde{Q}_0}[f] - \mu \E_{\tilde{P}_1}[f]} \\
        &= \abs{\E_{\tilde{P}_\cX}[f] - \E_{R^0}[f]}
            + \abs{\E_{\tilde{P}_\cX}[f] - \E_{R^1}[f]} \\
        &= \dist_f(\tilde{P}_\cX, R^0) + \dist_f(\tilde{P}_\cX, R^1) \,. \qedhere
    \end{align*}
\end{proof}

Equipped with this lemma, we may reduce $\tilde{\cF}$-identity testing to $\cF$-identity testing.
The strategy and proof are similar to \cref{lemma:implicit-tester-reduction}.

\begin{theorem}\label{thm:lifting tester}
    Let $\cF$ be a class of Boolean functions over $\cX$ with lifted class
    $\tilde{\cF}$. Let $\cA_\cF$ be an $(\epsilon,\delta)$-identity tester for
    all reference distributions (over $\cX$) against $\cF$ with sample
    complexity $m_\cF$. Then there exists a $(3\epsilon,4\delta)$-identity
    tester $\cA_{\tilde{\cF}}$ for all reference distributions (over $\cX \times
    \bits$) against $\tilde{\cF}$ with sample complexity $m_{\tilde{\cF}} =
    2m_\cF + O\left(\frac{\log(1/\delta)}{\epsilon^2}\right)$.
\end{theorem}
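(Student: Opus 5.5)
We build the tester $\cA_{\tilde{\cF}}$ from \cref{lemma:synthetic}. Let the explicit reference be $\tilde{Q}$ over $\cX\times\bits$ and the unknown input be $\tilde{P}$. The synthetic distributions $R^b = R^b(\tilde{P},\tilde{Q})$ can be sampled from one sample of $\tilde{P}$ together with (explicit) sampling from $\tilde{Q}_{1-b}$. Their description is not explicit, since they depend on $\tilde{P}_\cY$. So we cannot use $R^b$ as references. Instead we use the explicit marginal $\tilde{Q}_\cX$ as the reference and feed $\cA_\cF$ samples from a suitable surrogate. The algorithm is as follows.

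\textbf{Step 1.} Draw $O(\log(1/\delta)/\epsilon^2)$ samples from $\tilde{P}$ to estimate $\tilde{P}[y=1]$ to within $\epsilon/4$ with probability $1-\delta$. Reject if the estimate differs from $\tilde{Q}[y=1]$ by more than $\epsilon/2$. By Hoeffding, when we pass this step we have $\abs{\tilde{P}[y=1]-\tilde{Q}[y=1]}\le 3\epsilon/4$ (w.h.p.). When $\tilde{P}=\tilde{Q}$ we pass w.h.p.

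\textbf{Step 2.} For each $b\in\bits$, run $\cA_\cF$ with explicit reference $\tilde{Q}_\cX$ on the input distribution $S^b$. Here $S^b$ is defined like $R^b$, but with the label drawn from $\tilde{Q}_\cY$ instead of $\tilde{P}_\cY$: draw $y\sim\tilde{Q}_\cY$, and if $y=b$ output $x\sim\tilde{P}_b$, otherwise output $x\sim\tilde{Q}_y$. Samples from $\tilde{P}_b$ come from rejection sampling on samples of $\tilde{P}$, pooled as in \cref{lemma:implicit-tester-reduction}; reject if the pool runs out. Also run $\cA_\cF$ once on $\tilde{P}_\cX$ itself against reference $\tilde{Q}_\cX$. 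Accept iff all runs and Step 1 accept. If $\tilde{P}=\tilde{Q}$, every simulated distribution equals $\tilde{Q}_\cX$, so completeness follows by union bounds. This accounts for the $4\delta$ failure probability: one Hoeffding event plus the $\cA_\cF$ runs (adjusting constants or merging runs). The sample count is $2m_\cF + O(\log(1/\delta)/\epsilon^2)$, as claimed.

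\textbf{Soundness.} Suppose $\dist_{\tilde{\cF}}(\tilde{P},\tilde{Q})>3\epsilon$ and all runs accept. We aim for a contradiction. By \cref{lemma:synthetic},
\[
3\epsilon<2\abs{\tilde{P}[y=1]-\tilde{Q}[y=1]}+\dist_\cF(\tilde{P}_\cX,R^0)+\dist_\cF(\tilde{P}_\cX,R^1).
\]
Bound each $\dist_\cF(\tilde{P}_\cX,R^b)\le \dist_\cF(\tilde{P}_\cX,\tilde{Q}_\cX)+\dist_\cF(\tilde{Q}_\cX,S^b)+\dist_\cF(S^b,R^b)$. The middle terms are at most $\epsilon$ by soundness of the accepted runs. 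The last term is at most $\dist_\tv(\tilde{P}_\cY,\tilde{Q}_\cY)=\abs{\tilde{P}[y=1]-\tilde{Q}[y=1]}$, because $S^b$ and $R^b$ apply the same kernel to different label laws. The constants as written do not close directly; naive bookkeeping gives roughly $2\cdot\tfrac34\epsilon+2(\epsilon+\epsilon+\tfrac34\epsilon)$, which is too large.

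\textbf{Main obstacle.} The hard step is reconciling these constants. The lifting argument of \cref{lemma:synthetic} is lossy, and $3\epsilon$ leaves little room. I would first redo the proof of \cref{lemma:synthetic} directly with the surrogate $S^b$ in place of $R^b$. Going through $\tilde{R}$ may let the $\dist_\cF(\tilde{P}_\cX,\tilde{Q}_\cX)$ term be avoided entirely, which drops the extra run. I would also tighten the label test in Step 1 to precision $\epsilon/8$. If this still does not give exactly $3\epsilon$, I would run $\cA_\cF$ at a smaller accuracy parameter, such as $\epsilon/2$, absorbing the loss into the stated $O(\cdot)$. The pool-exhaustion analysis is routine and follows \cref{lemma:implicit-tester-reduction}.
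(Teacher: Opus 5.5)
Your soundness argument has a genuine gap, the one you flag yourself, and the remedies you list do not repair it. Write $\Delta=\abs{\tilde P[y=1]-\tilde Q[y=1]}$. Your chain only gives $\dist_{\tilde\cF}(\tilde P,\tilde Q)\le 4\Delta+4\epsilon$. This exceeds $3\epsilon$ even at $\Delta=0$, so no tightening of Step~1 produces a contradiction.

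You also may not run $\cA_\cF$ at accuracy $\epsilon/2$. It is a fixed $(\epsilon,\delta)$-tester whose sample complexity \emph{is} $m_\cF$, and a more accurate tester would change $m_\cF$, which is not hidden inside the $O(\log(1/\delta)/\epsilon^2)$ term. For the same reason you cannot shrink its $\delta$ to fit three runs into a $4\delta$ budget.

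The missing idea is to use symmetry of fooling distance and apply \cref{lemma:synthetic} to the pair $(\tilde Q,\tilde P)$, with the explicit reference in the first slot. Then $R^b(\tilde Q,\tilde P)$ draws $y\sim\tilde Q_\cY$ and outputs $x\sim\tilde Q_b$ if $y=b$, or $x\sim\tilde P_{1-b}$ otherwise. This is exactly your surrogate $S^{1-b}$, and the lemma gives directly
\[
3\epsilon<2\Delta+\dist_\cF(\tilde Q_\cX,S^0)+\dist_\cF(\tilde Q_\cX,S^1).
\]
The right-hand side is anchored at the explicit marginal $\tilde Q_\cX$, which is precisely what your two simulated runs test. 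You need Step~1 to guarantee $\Delta\le\epsilon/4$; your current $3\epsilon/4$ is too loose, but precision and threshold $\epsilon/10$ suffice. Then one of the two terms exceeds $\epsilon$ and that run rejects. The run on $\tilde P_\cX$ and all the intermediate triangle inequalities become unnecessary.

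Your pool handling also breaks completeness. With ``reject if the pool runs out,'' suppose $\tilde Q$ gives some label mass of order $1/m_{\tilde\cF}$, which is allowed since the tester must work for all references. Then exhaustion happens with constant probability, independent of $\delta$. \cref{lemma:implicit-tester-reduction} needs an explicit fallback for exactly this case, and your algorithm omits it.

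The paper instead aborts a simulation that runs out and counts it as \emph{not} rejecting, so completeness needs no pool analysis at all. Exhaustion only matters in soundness, for the run that must reject. There $\dist_\cF(\tilde Q_\cX,R^b)>\epsilon$ forces $\tilde Q[y=1-b]>\epsilon$, since otherwise $R^b$ is $\epsilon$-close to $\tilde Q_\cX$ in total variation. Combined with $\Delta\le\epsilon/4$ and $m_{\tilde\cF}\ge 2m_\cF+C\log(1/\delta)/\epsilon^2$, Hoeffding shows the label-$(1-b)$ pool suffices except with probability $\delta$. The total failure probability is then one Hoeffding event, two runs, and one exhaustion event, i.e.\ $4\delta$.
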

\begin{proof}
    Let $\refdist{\tilde{P}}$ be any reference distribution over $\cX \times
    \bits$, given explicitly as input. Let $\tilde{P}$ be the unknown input
    distribution over $\cX \times \bits$. For each $b \in \bits$, let $R^b
    \define R^b(\refdist{\tilde{P}}, \tilde{P})$. The tester $\cA_{\tilde{\cF}}$
    proceeds as follows:
    \begin{enumerate}
        \item \label{item:lift-1}
        Use $O\left(\frac{\log(1/\delta)}{\epsilon^2}\right)$ samples to
        compute an estimate $\hat{\mu}$ of $\tilde{P}[y=1]$ to additive error at
        most $\epsilon/10$ with probability at least $1-\delta$ by a Hoeffding
        bound. If $\abs{\hat{\mu} - \refdist{\tilde{P}}[y=1]} > \epsilon/10$,
        \textbf{reject}.
        \item \label{item:lift-2}
        Draw $m_{\tilde{\cF}}$ samples $(x_i, y_i) \sim \tilde{P}$.
        For each sample $(x_i, y_i)$, add $x_i$ to the ordered tuple
        $S_0$ if $y_i=0$, or otherwise add $x_i$ to ordered tuple $S_1$
        if $y_i=1$.
        \item \label{item:lift-3}
        For each $b \in \bits$, simulate the tester $\cA_\cF$ with
        reference distribution $\left(\refdist{\tilde{P}}\right)_\cX$ and input distribution
        $R^b$, using a fresh element from $S_{1-b}$ whenever a sample from
        $\tilde{P}_{1-b}$ is required to draw a sample from $R^b$; if the tuple
        $S_{1-b}$ has no more fresh elements to allow this, abort the simulation.
        \item \label{item:lift-4}
        If either simulation of $\cA_\cF$ did not abort and rejected,
        \textbf{reject}. Otherwise, \textbf{accept}.
    \end{enumerate}
    The sample complexity is immediate, so it remains to argue correctness. By a
    union bound, with probability at least $1-3\delta$ the following events
    occur:
    \begin{enumerate}
        \item The estimate $\hat{\mu}$ satisfies $\abs{\hat{\mu} -
        \tilde{P}[y=1]} \le \epsilon/10$.
        \item For each $b \in \bits$, \emph{either} the tuple $S_{1-b}$ runs out
        of fresh elements for the simulation of $\cA_\cF$, \emph{or} that
        simulation (obtains fresh independent samples from the correct
        distribution and therefore) produces a correct output, i.e.\ satisfying
        its completeness and soundness conditions.
    \end{enumerate}
    Suppose that these events occur. We now separately argue completeness
    and soundness.

    \textbf{Completeness.} Suppose $\tilde{P} = \refdist{\tilde{P}}$. Then
    Step~\ref{item:lift-1} does not reject because the two distributions also have
    the same marginals on $\cY$ and by the accuracy of $\hat{\mu}$.
    Step~\ref{item:lift-4} does not reject because $R^b = R^b(\refdist{\tilde{P}},
    \tilde{P}) = \tilde{P}_\cX$ and by the completeness of $\cA_\cF$. Therefore
    $\cA_{\tilde{\cF}}$ accepts.

    \textbf{Soundness.} Suppose $\dist_{\tilde{\cF}}(\tilde{P}, \refdist{\tilde{P}}) > 3\epsilon$. By \cref{lemma:synthetic}, at least
    one of the quantities $2\abs{\refdist{\tilde{P}}[y=1] - \tilde{P}[y=1]}$,
    $\dist_\cF\left(\left(\refdist{\tilde{P}}\right)_\cX, R^0\right)$,
    and $\dist_\cF\left(\left(\refdist{\tilde{P}}\right)_\cX, R^1\right)$ is
    greater than $\epsilon$. We consider two cases.

    First, suppose $\abs{\refdist{\tilde{P}}[y=1] - \tilde{P}[y=1]} >
    \epsilon/4$. Then, by the triangle inequality and the accuracy of $\hat{\mu}$,
    we have $\abs{\hat{\mu} - \refdist{\tilde{P}}[y=1]} > \epsilon/4 - \epsilon/10
    > \epsilon/10$, so Step~\ref{item:lift-1} rejects.

    Second, suppose $\abs{\refdist{\tilde{P}}[y=1] - \tilde{P}[y=1]} \le \epsilon/4$.
    Then we must have
    $\dist_\cF\left(\left(\refdist{\tilde{P}}\right)_\cX, R^0\right) > \epsilon$ or
    $\dist_\cF\left(\left(\refdist{\tilde{P}}\right)_\cX, R^1\right) > \epsilon$.
    Suppose without loss of generality that
    $\dist_\cF\left(\left(\refdist{\tilde{P}}\right)_\cX, R^0\right) > \epsilon$
    (the other case proceeds analogously). If the tuple $S_1$ contains enough
    fresh elements to enable the simulation of $\cA_\cF$ on $b=0$,
    then by the soundness of $\cA_\cF$ we know that Step~\ref{item:lift-4}
    rejects. Therefore, it remains to bound the probability that the tuple $S_1$
    runs out of fresh elements.

    Let $\mu \define \refdist{\tilde{P}}[y=1]$ and $\mu' \define
    \tilde{P}[y=1]$. By the definition of $R^0$, the number of elements
    requested from $S_1$ is $N \sim \bin(m_{\cF}, \mu)$, while $\abs{S_1} \sim
    \bin(m_{\tilde{\cF}}, \mu')$. The tuple $S_1$ has enough fresh elements when
    $\abs{S_1} \ge N$.

    We first observe that it must hold that $\mu > \epsilon$. Indeed, if $\mu
    \le \epsilon$, then the definition of $R^0$ implies that
    $\dist_\varvarTV\left(\left(\refdist{\tilde{P}}\right)_\cX, R^0\right) \le
    \epsilon$, contradicting that
    $\dist_\cF\left(\left(\refdist{\tilde{P}}\right)_\cX, R^0\right) >
    \epsilon$.
    %Therefore, we also conclude that $\mu' > \epsilon - \epsilon/4 = 3\epsilon/4$.

    Now, by Hoeffding's inequality, for all $t > 0$ we have
    \[
        \Pr\left[\abs{\abs{S_1} - m_{\tilde{\cF}} \mu'} \ge t\right]
        \le 2\exp(-2t^2 / m_{\tilde{\cF}})
    \]
    and
    \[
        \Pr\left[\abs{N - m_\cF \mu} \ge t\right]
        \le 2\exp(-2t^2 / m_\cF)
        \le 2\exp(-2t^2 / m_{\tilde{\cF}}) \,.
    \]
    Note that, by the bounds $\abs{\mu-\mu'} \le \epsilon/4$, $m_{\tilde{\cF}}
    \ge 2m_\cF$, and $\mu > \epsilon$, we have
    \[
        m_{\tilde{\cF}} \mu' - m_\cF \mu
        \ge m_{\tilde{\cF}}\left(\mu - \frac{\epsilon}{4}\right)
            - \frac{m_{\tilde{\cF}}}{2} \mu
        = \frac{m_{\tilde{\cF}}}{2} \left(\mu - \frac{\epsilon}{2}\right)
        > \frac{m_{\tilde{\cF}} \epsilon}{4} \,.
    \]
    Therefore, if $\abs{\abs{S_1} - m_{\tilde{\cF}} \mu'} \le t$ and
    $\abs{N - m_\cF \mu} \le t$ for $t \define m_{\tilde{\cF}} \epsilon / 8$,
    we will have $\abs{S_1} \ge N$ as desired. By the above, these inequalities
    hold except with probability at most
    \[
        4\exp(-2t^2 / m_{\tilde{\cF}})
        = 4\exp\left(
            -\frac{2 \cdot m_{\tilde{\cF}}^2 \epsilon^2}{64 m_{\tilde{\cF}}}\right)
        = 4\exp\left(-\frac{m_{\tilde{\cF}} \epsilon^2}{32}\right)
        \le \delta \,,
    \]
    the last inequality by the choice of $m_{\tilde{\cF}} \ge C
    \frac{\log(1/\delta)}{\epsilon^2}$ for suitable constant $C > 0$. Therefore,
    with probability at least $1-\delta$ we have $\abs{S_1} \ge N$ and hence the
    simulation of $\cA_\cF$ rejects, which concludes the proof.
\end{proof}

\begin{corollary}($\cF$-identity testing for all reference distributions to verification of $\cF$ classes.)\label{cor:F-testing-to-F-IP}
    Let $\cF\subseteq\{0,1\}^{\cX}$ be some
    %symmetric
    function class. Suppose there exists an $(\epsilon, \delta)$-identity
    tester for all reference distributions against $\cF$ with sample
    complexity $m_\cF$.
    Then there exists a distribution-free verification protocol that verifies $\cF$ to
    excess error $O(\varepsilon)$ and confidence $1-O(\delta)$ with verifier sample
    complexity $2m_{\cF}+O\left(\frac{\log(1/\delta)}{\eps^2}\right)$
\end{corollary}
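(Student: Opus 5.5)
The plan is to chain the two reductions just proved: first lift the $\cF$-tester to a $\tilde{\cF}$-tester using \cref{thm:lifting tester}, then turn that into a verification protocol with \cref{thm:IT-to-IP}. The only real work is bookkeeping of parameters.

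Start from the given $(\epsilon,\delta)$-identity tester $\cA_\cF$ for all reference distributions against $\cF$, with sample complexity $m_\cF$. By \cref{thm:lifting tester}, this yields a $(3\epsilon,4\delta)$-identity tester $\cA_{\tilde{\cF}}$ for all reference distributions over $\cX\times\bits$ against $\tilde{\cF}$, with sample complexity
\[
m_{\tilde{\cF}} = 2m_\cF + O\left(\tfrac{\log(1/\delta)}{\epsilon^2}\right).
\]

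Now set $\epsilon' \define 6\epsilon$ and $\delta' \define 8\delta$. Then $\cA_{\tilde{\cF}}$ is exactly an $(\epsilon'/2,\delta'/2)$-identity tester for all reference distributions against $\tilde{\cF}$. Applying \cref{thm:IT-to-IP} with $(\epsilon',\delta')$ gives a distribution-free protocol (\cref{alg:protocol}) that verifies $\cF$ to excess error $\epsilon' = O(\epsilon)$ and confidence $1-\delta' = 1-O(\delta)$. Its verifier sample complexity is $m_{\tilde{\cF}} = 2m_\cF + O(\log(1/\delta)/\epsilon^2)$, as claimed.

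Two details need checking.
\begin{itemize}
\item \cref{thm:IT-to-IP} requires a tester that works for \emph{every} explicit reference distribution $\cD'$, since $\cD'$ is chosen by a possibly dishonest prover. \cref{thm:lifting tester} provides exactly this, because it is stated for all reference distributions over $\cX\times\bits$.
\item The $O(\cdot)$ term should be expressed in the original $\epsilon$ and $\delta$. Since $\epsilon'$ and $\delta'$ differ from them only by constant factors, this changes nothing.
\end{itemize}
I expect no real obstacle beyond this parameter matching.
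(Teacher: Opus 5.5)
Your proposal is correct and is the same argument as the paper's proof, which simply combines \cref{thm:lifting tester} with \cref{thm:IT-to-IP}. Your choice $\epsilon' = 6\epsilon$, $\delta' = 8\delta$ supplies the constant-factor bookkeeping that the paper leaves implicit. You also correctly note that the lifted tester must work for arbitrary explicit reference distributions, since the prover chooses $\cD'$.
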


\begin{proof}
    Combine \cref{thm:IT-to-IP,thm:lifting tester}.
\end{proof}

\subsubsection{Application: verification protocols for unions of rectangles}
In this section, we provide applications that follow from the reduction from interactive proofs to
identity testing. First, we prove a statistically efficient protocol for the task of verification of
union of $k$ intervals. For this, we use an algorithm from \cite{DKN14}.
As already discussed, their notion of $\cA_k$ distance is precisely a specialization of our fooling
distance to the class of unions of $k$ intervals.
For the sake of completeness we define $\ak$ distance.

\begin{definition}(Union of $k$ intervals)
    A function $f:[0,1]\to\{0,1\}$ is a \emph{union of $k$ intervals}
    if there exists $k$ intervals
    $[a_1,b_1],\dots ,[a_k,b_k]$
    such that $f(x)=1$ iff $x$ is contained in one of the intervals
    $[a_i,b_i]$. Let $\cF_k$ denote the class of such functions. 
\end{definition}

\begin{definition}($\ak$ distance)
Let $k\in\N$ and let $\cF_k$
denote the class of unions of $k$ intervals. For two distributions $P,Q\in\Delta([0,1])$, the $\cA_k$ distance between $P$ and $Q$ is defined as
\begin{align*}
    \cA_k(P,Q)=\dist_{\cF_k}(P,Q)=\sup_{f\in\cF_k}\lvert \ee_P[f]-\ee_Q[f]\rvert
\end{align*}
\end{definition}

We require the following result on identity testing against the $\cA_k$ distance.

\begin{theorem}[\cite{DKN14}]
\label{thm:ak-testing-1d}
Given $\eps>0$, an integer $k$ with $k\geq 2$, sample access to a distribution $\tgtdist{P}$ over $[0,1]$,
and an explicit distribution $\refdist{P}$ over $[0,1]$,
there is an algorithm which uses $O(\sqrt{k}/\eps^2)$ samples from $\tgtdist{P}$,
and with probability at least $2/3$ distinguishes between $\tgtdist{P} =
\refdist{P}$ and $\dist_{\cF_k}(P,\refdist{P})> \eps$.
\end{theorem}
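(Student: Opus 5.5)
This result is cited from \cite{DKN14} rather than proved here, so what follows is a sketch of how I would reconstruct the $\cA_k$ identity tester, following the reduction-to-$\ell_2$ strategy of that work.

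\textbf{Step 1: flatten the reference.} Since $\refdist{P}$ is explicit, I would partition $[0,1]$ into $K = \Theta(k/\eps)$ intervals $I_1,\dots,I_K$, each with $\refdist{P}$-mass at most $\eps/k$. Atoms of $\refdist{P}$ get singleton intervals of their own.

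\textbf{Step 2: control the error of rounding to the partition.} Any union $f$ of $k$ intervals can be written as a union of whole cells plus at most $2k$ boundary cells. The boundary cells contribute at most $2k \cdot \eps/k = O(\eps)$ under $\refdist{P}$. They could contribute much more under $P$, so the plan is to give the endpoints their own cells. Concretely, I would use the flattening trick: reduce identity testing of $P$ against $\refdist{P}$ over $[0,1]$ to identity testing over the discrete domain $[K]$, splitting each cell according to the reference.

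\textbf{Step 3: reduce to a discrete $\ell_2$ test.} After flattening, the reduced reference $q$ on $[K]$ satisfies $\|q\|_2^2 = O(1/K)$. The $\cA_k$ distance on $[K]$ can be bounded by $\ell_1$ distance restricted to at most $k$ ``pieces''. I would use the standard partition-based argument: bucket $[K]$ into groups of consecutive cells, and observe that if $\cA_k(p,q) > \eps$, then for some dyadic scale $j$ the reduced distributions at scale $2^j$ are far in $\ell_2$. A collision-based $\ell_2$ tester with $O(\|q\|_2/\eps'^2)$ samples then runs at each of the $O(\log K)$ scales.

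\textbf{Step 4: balance the parameters.} The distance parameter must be chosen per scale, for example $\eps_j \approx \eps\, 2^{j/2}/\sqrt{k}$ in $\ell_2$. Plugging into the collision-tester bound gives $\sum_j O\!\left(\sqrt{k 2^{-j}}\,/\,(\eps^2 2^{-j}\cdots)\right)$, which must total $O(\sqrt{k}/\eps^2)$.

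The main obstacle is the scale-balancing argument of Step 4. One has to show that some scale captures the $\cA_k$ discrepancy with enough $\ell_2$ mass, and that the resulting sample costs form a geometric series. The series has to be arranged so that the $\log K$ factor disappears, leaving exactly $O(\sqrt{k}/\eps^2)$. For that delicate accounting I would defer to \cite{DKN14}.
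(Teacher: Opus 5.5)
The paper gives no proof of this statement; it imports it from \cite{DKN14}. Deferring the core accounting to that reference therefore matches the paper's treatment. Your overall architecture is also the right one: cells of small reference mass, reduction to a flattened discrete reference, and $\ell_2$ tests at dyadic scales with geometrically decaying cost.

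As a reconstruction, however, two steps would fail as written. The first is Step~2, where you cannot ``give the endpoints their own cells'': the endpoints of a witnessing $f\in\cF_k$ depend on the unknown $P$. Flattening splits cells according to $\refdist{P}$, so it does not isolate them either. What actually makes the discretization harmless is a charging argument. If $B$ is a cell only partially covered by $f$, then $P(f\cap B)-\refdist{P}(f\cap B)\in[-\refdist{P}(B),P(B)]$ and $P(B)\le\abs{P(B)-\refdist{P}(B)}+\refdist{P}(B)$. So a boundary cell carrying a lot of $P$-mass is itself a single-cell discrepancy of the reduced distributions $p,q$ on the cells. Sum over the whole cells and the at most $2k$ boundary cells, with every non-singleton cell of reference mass at most $\eps/(ck)$. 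This gives $\dist_{\cF_k}(P,\refdist{P})\le 3\,\dist_{\cF_{2k}}(p,q)+2\eps/c$. Note that your bound $\eps/k$ leaves boundary error $2\eps$, which is too large. Flattening heavy cells afterwards preserves the $\cA$-distance exactly. Both distributions are uniform inside each split cell, so a partially covered group can be rounded to fully covered or uncovered without lowering the discrepancy or adding intervals.

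The second failure is Step~4, where your thresholds cannot give soundness. The claim of Step~3 comes from decomposing each interval of the witnessing union dyadically: a run of whole coarsest blocks plus at most two blocks per finer level. Its discrepancy is then at most $\sum_j\gamma_j$, where $\gamma_j$ is the $\ell_1$ discrepancy on the $O(k)$ blocks used at level $j$. One then needs weights $w_j$ with $\sum_j w_j\le 1$ to get $\gamma_j\ge w_j\,\dist_{\cF_{2k}}(p,q)$ at some level. Cauchy--Schwarz then gives $\ell_2$ discrepancy $\Omega(w_j\eps/\sqrt{k})$ at that level. Your threshold $\eps 2^{j/2}/\sqrt{k}$ corresponds to budgets $w_j\approx 2^{j/2}$, which do not sum to $O(1)$.

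Concretely, read $j$ as in your Step~3, so level $j$ uses groups of $2^j$ consecutive cells and $2^j\le 1/\eps$. Reweight $\refdist{P}$ by $1\pm 3\eps$ on its two quantile halves. This gives $\dist_{\cF_1}(P,\refdist{P})=1.5\eps$. Yet at level $j$ its $\ell_2$ discrepancy is $3\eps\|q^{(j)}\|_2=O(\eps\sqrt{2^j\eps/k})$, which is only an $O(\sqrt{\eps})$ fraction of your threshold, so no level rejects.

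The choice that works is $w_j\propto(2^j\eps)^{1/8}$, decaying toward the finer levels. The level-$j$ reference satisfies $\|q^{(j)}\|_2=O(\sqrt{2^j\eps/k})$; it is this norm, not $\|q\|_2$, that makes the fine levels cheap. The $\ell_2$ identity tester, using $O(\|q^{(j)}\|_2/\tau^2)$ samples at threshold $\tau$, then costs $O\bigl(\sqrt{k}\,(2^j\eps)^{1/4}/\eps^2\bigr)$ at level $j$. This is a geometric series summing to $O(\sqrt{k}/\eps^2)$. Failure probabilities $\delta_j\propto(2^j\eps)^{1/8}$ keep both the union bound and the $\log(1/\delta_j)$ overheads summable. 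With these two repairs your outline becomes a complete proof; without them it rests entirely on the citation.
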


Combining \cref{cor:F-testing-to-F-IP,thm:ak-testing-1d}, we obtain the following result for
verifying the class of unions of $k$ intervals.

\begin{theorem}
    Let $k \in \N$, and let $\cF_k$ be the class of unions of $k$ intervals on $[0,1]$. There exists
    a distribution-free verification protocol for the class $\cF_k$ with verifier sample complexity
    $O\left(\frac{\sqrt{k}\log(1/\delta)}{\eps^2}\right)$.
\end{theorem}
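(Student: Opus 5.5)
The plan is to apply \cref{cor:F-testing-to-F-IP} with $\cF = \cF_k$, using the $\cA_k$ identity tester of \cref{thm:ak-testing-1d} as the required $\cF$-identity tester for all reference distributions. The statement should then follow from the verifier sample complexity $2m_{\cF} + O(\log(1/\delta)/\eps^2)$ guaranteed by the corollary, once we control the success probability of the tester.

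\textbf{Step 1: amplify the tester.} \cref{thm:ak-testing-1d} only succeeds with probability $2/3$, but it applies to an arbitrary explicit reference distribution $\refdist{P}$. So it is an $(\eps,1/3)$-identity tester for all reference distributions against $\cF_k$ in the sense of \cref{def:testing-for-all}. I would amplify it in the standard way. Run it $O(\log(1/\delta))$ times independently and take the majority vote. By a Chernoff bound, this gives an $(\eps,\delta)$-identity tester with $m_{\cF_k} = O(\sqrt{k}\log(1/\delta)/\eps^2)$. Completeness and soundness amplify simultaneously because each is a one-sided requirement on a fixed input.

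\textbf{Step 2: check closure under complement.} The paper's standing assumption is that $\cF$ is closed under complementation. The complement of a union of $k$ intervals is a union of at most $k+1$ intervals, so $\cF_k$ is not literally closed. I would handle this by running the tester for $\cF_{k+1}$, which contains both $\cF_k$ and the complements of its members. Alternatively, note that $\abs{\E_P[1-f]-\E_Q[1-f]} = \abs{\E_P[f]-\E_Q[f]}$, so fooling distance is unchanged by complementing. Either way, the sample bound changes only by constants, since $\sqrt{k+1} = O(\sqrt{k})$.

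\textbf{Step 3: invoke the corollary.} Feeding the tester from Step 1 into \cref{cor:F-testing-to-F-IP}, run with accuracy $\eps/c$ for the appropriate constant $c$ to absorb the $O(\eps)$ excess error and the $O(\delta)$ confidence loss, yields a distribution-free verification protocol. Its verifier sample complexity is $2\cdot O(\sqrt{k}\log(1/\delta)/\eps^2) + O(\log(1/\delta)/\eps^2) = O(\sqrt{k}\log(1/\delta)/\eps^2)$.

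\textbf{Main obstacle.} The only subtle point is Step 2, together with the fact that \cref{thm:lifting tester} applies the $\cF$-tester to the synthetic distributions $R^b$ and to the marginal reference $(\refdist{\tilde P})_\cX$. This requires the base tester to work for every reference distribution over $[0,1]$, not just well-behaved ones. \cref{thm:ak-testing-1d} is stated for any explicit $\refdist{P}$ over $[0,1]$, so I expect this to go through. The remaining steps are routine bookkeeping of constants.
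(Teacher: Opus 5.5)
Your proposal is correct and follows the paper's own route: the paper proves this theorem simply by combining \cref{cor:F-testing-to-F-IP} with the $\cA_k$ identity tester of \cref{thm:ak-testing-1d}. Your majority-vote amplification is exactly what produces the $\log(1/\delta)$ factor in the statement. Your Step~2 is harmless but unnecessary: the arguments behind \cref{thm:IT-to-IP} and \cref{lemma:synthetic} never use closure under complement, and, as you note, fooling distance is invariant under complementation anyway.
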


The work of \cite{diakonikolas2023testingclosenessmultivariatedistributions} extended the notion of
$\cA_k$ distance to higher dimensions. In this setting, the class of distinguishers consists of
unions of $k$ disjoint $n$-dimensional axis-aligned rectangles in $\R^n$. They studied the problem
of closeness (or equivalence) testing under this notion, which is a harder problem than identity
testing. We now state the result they achieve.

\begin{theorem}\label{thm:ak-testing-d}
    \sloppy
    Let $\eps>0$ and integer $k>2$. There exists a computationally efficient algorithm which, given
    sample access to distributions with density functions $\refdist{p},p:\R^n\to \R^+$, draws
    $C\cdot2^{n/3}k^{6/7}\log^{3n}(k)/\eps^{\alpha_n}$ samples from $\refdist{p},p$ for a
    sufficiently large constant $C>0$ and $\alpha_n=O\left(n^2\cdot2^{2^{n+1}}\right)$, and with
    probability at least $2/3$ correctly distinguishes whether $\refdist{p}=p$ or
    $\cA^n_k(\refdist{p}, p) > \eps$, where $\cA^n_k$ is the natural generalization of $\cA_k$ to
    $n$ dimensions described above.
\end{theorem}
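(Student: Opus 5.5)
This result is imported from \cite{diakonikolas2023testingclosenessmultivariatedistributions} rather than proved here. If I had to reprove it, I would follow the standard $\cA_k$-testing template of \cite{DKN14}, lifted to $n$ dimensions. The idea is to reduce the continuous problem to a family of discrete closeness-testing problems over data-dependent partitions of $\R^n$ into axis-aligned cells, and then to apply $\ell_2$-closeness testers with flattening on each discrete problem.

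\textbf{Step 1 (partition).} Draw a preliminary sample from the mixture $\frac12(\refdist{p}+p)$. Use it to build a hierarchical, $k$-d-tree-style partition of $\R^n$ into axis-aligned cells, where each level refines the previous one by median splits along coordinates in turn. At each level $j$, every cell then has mass about $2^{-j}$ under the mixture with good probability. Both distributions are then replaced by their induced discrete distributions on the cells of each level.

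\textbf{Step 2 (structural lemma).} I would show that if $\cA^n_k(\refdist{p},p)>\eps$, then at some level $j$ the discrete distributions differ substantially, in a form detectable by an $\ell_2$ tester. Take the witnessing union of $k$ disjoint rectangles. Decompose it along the hierarchy into cells fully contained in some rectangle, plus a residual made of cells cut by rectangle boundaries. The key count is that one rectangle boundary meets only about $2^{j(n-1)/n}$ cells at level $j$, up to polylog factors, in such a balanced tree. So the residual mass is controlled, and the main discrepancy of at least $\eps/2$ is carried by unions of whole cells. By pigeonhole over levels, some level then carries an $\ell_1$ discrepancy of about $\eps/\mathrm{polylog}$ spread over $O(k\cdot\mathrm{polylog})$ cells. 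Cauchy--Schwarz converts this into an $\ell_2$ lower bound.

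\textbf{Step 3 (testing each level).} At each level, run an $\ell_2$ closeness tester after flattening, i.e.\ splitting heavy cells using extra samples so that the $\ell_2$ norm is small. Union bound over the $O(\log k)$ levels. Balancing the flattening budget against the number of relevant cells, and against the boundary-residual loss, is an optimization that should produce the $k^{6/7}$ exponent. It also accounts for the $2^{n/3}$ and $\log^{3n}k$ factors.

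I expect the main obstacle to be Step 2. In dimension $n\ge2$ the boundary-cell counting interacts badly with the requirement that the partition be learned from samples. The residual error has to be driven down recursively, and this recursion is what compounds into the enormous $\eps^{-\alpha_n}$ dependence with $\alpha_n = O(n^2 2^{2^{n+1}})$. My first attempt would be an induction on dimension: project the boundary faces of the rectangles, treat them as $(n-1)$-dimensional $\cA_k$-type instances, and charge the error at each depth to the previous dimension. Computational efficiency then follows, since every step is either sampling, sorting, or collision counting.
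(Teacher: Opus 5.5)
You are right that the paper only imports this theorem from \cite{diakonikolas2023testingclosenessmultivariatedistributions}. However, your reconstruction breaks at Step~2, and your own boundary count is what breaks it. Suppose a rectangle face meets about $2^{j(n-1)/n}$ level-$j$ cells, each of mixture mass about $2^{-j}$. Then the boundary residual of $k$ rectangles can carry mass about $2nk\,2^{-j/n}$. This falls below $\eps$ only once $2^j \gtrsim (2nk/\eps)^n$, which for $n\ge 2$ is $\gg k$ cells, so even building that level costs super-linearly many samples.

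The cell count inside a rectangle fails the same way. The maximal tree cells inside a single rectangle are children of boundary cells, so they can number on the order of $2^{j(n-1)/n}$ at level $j$, not $O(1)$ per level as for an interval in one dimension. Hence your claim that some level carries $\eps/\mathrm{polylog}$ discrepancy on only $O(k\cdot\mathrm{polylog})$ cells is false. Without it, Cauchy--Schwarz gives only $\|p-\refdist{p}\|_2^2\gtrsim\eps^2/N_j$ for the level-$j$ reductions, with $N_j$ as large as $2^j$. An $\ell_2$ test on near-uniform cells then costs about $\sqrt{N_j}/\eps^2$.

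A concrete instance: in $[0,1]^2$, let $\refdist{p}$ be uniform and let $p$ have density $1\pm 3\eps$ on $2k$ alternating horizontal strips of height $1/(2k)$, so that $\cA^2_k(\refdist{p},p)>\eps$. A cyclic median-split tree sees essentially no discrepancy until its cells shrink to side about $1/(2k)$, i.e.\ about $4k^2$ cells. From then on the $\ell_1$ discrepancy is spread evenly over all cells, so every level's test needs on the order of $k/\eps^2$ samples. That is no better than uniform convergence.

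Step~3 does not rescue this. The exponent $6/7$ is asserted to fall out of a balancing that is never set up, and the one-dimensional answer to the same problem is $k^{4/5}$ \cite{DiakonikolasKN15}. The cited work also proves an $\Omega(k^{6/7}/\poly(\eps))$ lower bound already for $n=2$, so the exponent reflects a genuinely two-dimensional obstruction. Your induction on dimension does not address it. The residual consists of the pieces $c\cap R$ of boundary cells $c$, which are themselves full-dimensional boxes, up to about $k\,2^{j(n-1)/n}$ of them. Projecting faces therefore does not produce $(n-1)$-dimensional instances with only $k$ pieces, and the failure above already happens at $n=2$.

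What is missing is a structural ingredient to replace the false counting claim in Step~2. By its own account, the cited work obtains both its tester and its lower bound by making essential use of tools from Ramsey theory, which has no counterpart in your outline. Your account of $\alpha_n$ as the compounding of a residual recursion is likewise unsupported.
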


In constant-dimensional space, combining \cref{cor:F-testing-to-F-IP,thm:ak-testing-d}, we obtain
the following result for verifying the class of unions of $k$ disjoint $n$-dimensional axis-aligned
rectangles.

\thmrect*

\begin{remark}
    \label{rem:verification-efficiency}
    Our interactive proof protocols, in their current state derived from $\cF$-identity testing,
    completely ignore the cost of the work performed by the prover and the communication cost
    between the prover and verifier. While this might seem problematic, we remark that the
    prover only needs to send a description of the distribution which is close in fooling distance,
    and that natural classes of distributions enjoy succinct descriptions. The upside of our
    approach is that we obtain improved verifier sample complexity via black-box reductions, namely
    $O(\sqrt{k}/\epsilon^2)$ for unions of $k$ intervals (compared to $O(\sqrt{k}/\eps^{2.5})$
    obtained by \cite{MutrejaS23}) and $\tilde{O}(k^{6/7})$ for disjoint unions of $k$
    multidimensional rectangles (for which no protocols were known before). We suspect that one
    should be able to obtain doubly efficient protocols by inspecting these reductions in their
    concrete settings.
\end{remark}

\subsubsection{VC lower bounds for $\cF$-identity testing}
\label{sec:vc-lower-bounds}
In this section, we give sample complexity lower bounds for algorithms that
perform $\cF$-identity testing for all reference distributions (\cref{def:testing-for-all}),
which follow from the weak equivalence between $\cF$-identity testing and verification of
learning algorithms along with known sample complexity lower bounds in the
verification framework.

\begin{theorem}[\cite{GoldwasserRSY21}]
    \label{thm:interactive-proof-vc-lower-bound}
    For every $\eps,\delta\in\left(0,\frac{1}{8}\right)$, there exist constants $c_0,c_1,c_2>0$ and a sequence of classes $\cH_1,\cH_2,\dots$ such that:
    \begin{itemize}
        \item For all $d\in\N$, the class $\cH_d$ has VC dimension at most $c_0\cdot d\log d$.
        \item The sample complexity of proper PAC verifying $\cH_d$ is
        $\Omega(d)$. That is, if $(V_1,P_1),(V_2,P_2),\dots $ is a sequence such
        that for all $d\in\N,\;(V_d,P_d)$ is an interactive proof system that
        verifies $\cH_d$,
        using random examples such that the output is either ``reject" or in $\cH_d$,
        then for all $d\geq c_1$, $V_d$ uses at least
        $c_2\cdot d$ samples when executed
        %on input
        with parameters $(\eps,\delta)$.
    \end{itemize}
\end{theorem}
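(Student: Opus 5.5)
Since \cref{thm:interactive-proof-vc-lower-bound} is imported from \cite{GoldwasserRSY21}, here is only a plan for how I would reprove it. The strategy is a reduction from a support-size lower bound in distribution testing. The specific lower bound I would use is that of Valiant and Valiant: distinguishing a distribution uniform on a random set of $N$ points of a larger universe from one uniform on a random set of $N/2$ points requires $\Omega(N/\log N)$ samples. Set $N \approx d\log d$, so that this lower bound becomes $\Omega(d)$. Then build $\cH_d$ over a domain of size $\mathrm{poly}(N)$ so that its VC dimension is $O(N) = O(d\log d)$. The point of the construction is that a \emph{proper} near-optimal hypothesis for the small-support case certifies a small support, and no hypothesis in $\cH_d$ does well on the large-support case.

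First I fix the class and the hard distributions. Every example carries label $1$, and $\cH_d$ consists of indicators of sets of size at most about $N/2$. The VC bound is then immediate. In the completeness world, $\cD$ is uniform on a random $S$ with $\abs{S}=N/2$, so $\opt(\cH_d,\cD)=0$ and the honest protocol must output some $h\in\cH_d$ with error at most $\epsilon$. In the soundness world, $\abs{S}=N$, so every $h\in\cH_d$ has error at least about $1/2$, and a sound verifier must reject with probability at least $1-\delta$.

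Second, I turn a verifier using $o(d)$ samples into a support-size distinguisher, as in \cref{alg:tester-from-protocol}. Run the protocol, and if it outputs some $h$, check $h$ on $O(1/\epsilon^2)$ fresh samples. In the small world this test passes, and in the large world it fails. To get a contradiction with the $\Omega(N/\log N)$ bound, the distinguisher must also simulate the prover's messages. So I need a single prover strategy, computable from the distinguisher's side, that is honest in the small world and is a legal, possibly malicious, prover in the large world.

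This third step is the main obstacle. The naive honest prover sends $\mathbbm{1}_S$, which depends on $S$. The matching malicious prover would send $\mathbbm{1}_{S'}$ for a half-subset $S'\subset S$. But then the verifier sees samples outside $S'$ about half the time and notices immediately, so the reduction breaks. I would try two fixes, in this order.
\begin{enumerate}
\item Restructure $\cH_d$ so that the certificate does not expose the support. For example, labels could be a fixed pseudo-structured function of the points, with hypotheses encoding only coarse information. Then a prover seeded with a random small-support simulation produces a message whose joint distribution with the verifier's samples is the same in both worlds, up to the $o(d)$-sample indistinguishability.
\item Alternatively, use the fact that the proper output must lie in $\cH_d$, together with a counting (compression) argument. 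A verifier with $o(d)$ samples and any prover must output a hypothesis drawn from a family too small to cover most of the random supports. This forces failure of completeness on a typical $S$.
\end{enumerate}
I expect the actual proof in \cite{GoldwasserRSY21} to follow the first route, with a carefully designed class. The remaining work, namely the union bounds over the Hoeffding check and the protocol's failure probability, is routine and mirrors the proof of \cref{thm:IP-to-IT}.
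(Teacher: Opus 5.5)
There is a genuine gap, and it sits exactly where you place the main obstacle. You never produce a prover strategy that the distinguisher can run without knowing $S$, and neither fix supplies one. The counting fix fails outright: the output hypothesis is named by the prover, whose message may encode $S$, so nothing about the verifier's $o(d)$ samples bounds the family of possible outputs.

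More seriously, your concrete class cannot satisfy the theorem at all. Embed the domain in $[0,1]$. Every set of at most $k=N/2$ points is a union of at most $k$ (degenerate) intervals, so $\dist_{\cH_d}(P,Q)\le\cA_k(P,Q)$. Hence the tester of \cref{thm:ak-testing-1d}, after amplification, is an identity tester for all reference distributions against $\cH_d$ using $O(\sqrt{N}\log(1/\delta)/\epsilon^2)$ samples. By \cref{cor:F-testing-to-F-IP}, $\cH_d$ then has a distribution-free verification protocol with $O(\sqrt{N}\log(1/\delta)/\epsilon^2)=o(d)$ verifier samples when $N\approx d\log d$. That protocol is proper: the prover names a minimizer in $\cH_d$ and the verifier only runs an identity test.

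This is not an accident. On distributions of support at most $N$, a prover who knows the distribution can send it, and the verifier can test identity in TV with $O(\sqrt N/\epsilon^2)$ samples. So Valiant--Valiant-type hardness, which comes from distinguishing two random families of support-$N$ distributions, does not survive the presence of a prover.

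Your description of the large-support world is also off. There $\opt(\cH_d,\cD)\approx 1/2$, so soundness allows accepting any $h\in\cH_d$. In fact, distribution-free completeness forces the honest protocol to accept in that world.

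The missing idea is to put a single explicit distribution on the completeness side, as in \cref{thm:IP-to-IT}. Take the reference $\refdist{\cD}=\cD_\cX\times\Unif(\bits)$. The tester runs the honest prover on $\refdist{\cD}$ using self-generated samples. If the input is $\refdist{\cD}$, this prover is honest; otherwise it is just some dishonest prover, which soundness already covers. No prover for an unknown distribution is ever simulated. The verifier lower bound therefore reduces to a lower bound for $\tilde{\cF}$-identity testing to a fixed reference. Since distribution-free protocols are in particular testable ones, such a bound suffices. You must make that bound $\Omega(d)$ through the choice of class; \cref{thm:rademacher-lower-bound} gives at most about $\sqrt{d}/\epsilon$ for VC dimension $d$.

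One route that works is the following.
Let $\cX=[d^2]$, and let $\cF$ consist of $2^d$ uniformly random labelings $f_j$ together with their complements, so its VC dimension is at most $d+1$. Let $\cD_j$ draw $x$ uniformly and set $y=f_j(x)$ with probability $\tfrac12+\eta$, for a constant $\eta\in(4\epsilon,\tfrac12]$. Then $\dist_{\tilde{\cF}}(\cD_j,\refdist{\cD})\ge\eta$. For the uniform mixture over $j$,
$1+\chi^2\bigl(\E_j\cD_j^m\,\|\,\refdist{\cD}^m\bigr)=\E_{j,k}\bigl[(1+4\eta^2c_{jk})^m\bigr]$, where $c_{jk}=\E_x[f_j^{\pm}f_k^{\pm}]$. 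Averaged over the random labelings, the diagonal terms contribute $2^{-d}(1+4\eta^2)^m$, and the off-diagonal terms at most $e^{8\eta^4m^2/d^2}$. So for some fixed choice of $\cF$, the $m$-sample mixture is indistinguishable from $\refdist{\cD}^m$ unless $m=\Omega(d)$. \cref{thm:IP-to-IT} then transfers this lower bound to the verifier for any fixed $\epsilon,\delta<1/8$.
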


\begin{corollary}
    \label{cor:vc-all-distributions-lower-bound}
For every $\eps,\delta\in\left(0,\frac{1}{8}\right)$, there exists a sequence of classes $\cF_1,\cF_2,\dots$ such that:
\begin{itemize}
    \item For all $d\in\N$, the class $\cF_d$ has VC dimension at most $c_0\cdot d\log d$.
    \item Every identity tester for all reference distributions against $\cF_d$
    has sample complexity $\Omega(d)$
\end{itemize}
\end{corollary}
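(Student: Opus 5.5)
The plan is a proof by contradiction: take $\cF_d \define \cH_d$ from \cref{thm:interactive-proof-vc-lower-bound}, so the VC bound $c_0 d\log d$ holds automatically. Suppose some identity tester for all reference distributions against $\cF_d$ used $m_{\cF_d} = o(d)$ samples. Feeding it into \cref{cor:F-testing-to-F-IP} (via \cref{thm:lifting tester} and \cref{thm:IT-to-IP}) gives a distribution-free verification protocol for $\cH_d$ whose verifier uses $2m_{\cF_d} + O(\log(1/\delta)/\epsilon^2)$ samples. For fixed constants $\epsilon,\delta$ this is $o(d)$, contradicting the $\Omega(d)$ verifier lower bound for large $d$. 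The constants would be arranged so that the tester is run at accuracy $\epsilon/6$ and confidence $\delta/8$: the lifting costs a factor $3$ in $\epsilon$ and $4$ in $\delta$, and \cref{thm:IT-to-IP} costs a further factor $2$ in each.

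Two compatibility points need checking. First, the lower bound of \cite{GoldwasserRSY21} is stated for protocols whose output is either ``reject'' or a hypothesis in $\cH_d$. \cref{alg:protocol} satisfies this, since the only hypothesis it ever outputs is the prover's $f'\in\cF$; a dishonest $f'\notin\cF$ can be rejected at \cref{line:reject1}. Second, \cref{thm:lifting tester} and \cref{lemma:mixtureopt}-style arguments use the standing assumption that $\cF$ is closed under complement. If $\cH_d$ is not closed, I would replace it by $\cH_d\cup\{1-h\}$. This at most doubles the class and so changes the VC dimension by an additive constant factor. The complemented class has the same lower bound, because the protocol can output either $h$ or $1-h$; verifying the bigger class to error $\epsilon$ yields a hypothesis competitive with $\opt$ over the larger class, hence over $\cH_d$. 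The catch is that the output must lie in $\cH_d$ itself, and I am unsure this step goes through cleanly.

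The main obstacle is the properness mismatch: \cite{GoldwasserRSY21}'s bound concerns \emph{proper} verification of $\cH_d$, while complementation may break properness. I would first check whether $\cH_d$ in their construction is already closed under complement. Failing that, I would observe that closure under complement is only used for one-sided versus absolute-value fooling distance, and that the reductions go through when $\dist_\cF$ is defined with the absolute value as in \eqref{eq:fooling-distance}. In that case no complementation is needed, and the output is $f'\in\cH_d$ directly.
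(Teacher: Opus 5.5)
Your proposal is correct and matches the paper's proof, which sets $\cF_d = \cH_d$ and combines \cref{thm:interactive-proof-vc-lower-bound} with \cref{cor:F-testing-to-F-IP} in exactly the way you describe. Your two extra checks are valid refinements that the paper leaves implicit: \cref{alg:protocol} is proper once the verifier also checks $f'\in\cF$, and the reductions in \cref{thm:lifting tester} and \cref{thm:IT-to-IP} only use the absolute-value form of $\dist_\cF$, so $\cH_d$ need not be closed under complement. That makes your fallback the right resolution, rather than the complementation route, which would indeed break properness.
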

\begin{proof}
    Immediate consequence of
    \cref{thm:interactive-proof-vc-lower-bound,cor:F-testing-to-F-IP}.
\end{proof}

\begin{remark}[On the tightness of the Rademacher sample complexity characterization]
    \label{rem:rademacher-tight}
    \cref{cor:vc-all-distributions-lower-bound} shows that the Rademacher complexity-based sample
    complexity upper bound for $\cF$-identity testing from \cref{thm:intro-testing-upper-bounds} can
    be tight up to a logarithmic factor for some families $\cF$. For example, suppose for a
    contradiction we could improve that upper bound to $O(m^{1-c})$, for constant $c > 0$ and small
    constants $\epsilon, \delta$. We have the following bound for the Rademacher complexity of
    family $\cF_d$ with respect to any reference distribution $\refdist{P}$, see e.g.\ \cite[p.\
    192]{Bousquet2004}:
    \[
        \cR_m(\cF, \refdist{P}) \le C \cdot \sqrt{\frac{\mathrm{VC}(\cF_d)}{m}} \,,
    \]
    where $\mathrm{VC}(\cF_d)$ is the VC dimension of $\cF_d$ and $C > 0$ is a constant. Setting $m
    = \Theta(d \log d) > \mathrm{VC}(\cF_d)$ sufficiently large, we obtain $\cR_m(\cF, \refdist{P})
    \le \epsilon/16$. Thus the improved result would give an identity tester to any reference
    distribution $\refdist{P}$ against $\cF_d$ with sample complexity $O(m^{1-c}) = o(d)$,
    contradicting \cref{cor:vc-all-distributions-lower-bound}. Hence there can be no such
    improvement to \cref{thm:intro-testing-upper-bounds}.

    \sloppy
    Conversely, the lower bound from \cref{thm:rademacher-lower-bound} can also be tight for some
    classes. Specifically, the class $\cF_k$ of unions of $k$ intervals on $[0,1]$ has Rademacher
    complexity satisfying $\cR_k(\cF_k, \Unif([0,1])) = 1$, so \cref{thm:rademacher-lower-bound}
    gives a lower bound of $\Omega(\sqrt{k})$ for uniformity testing against $\cF_k$, which is
    matched by the upper bound from \cref{thm:ak-testing-1d}.
\end{remark}

\begin{theorem}[\cite{MutrejaS23}]
    \label{thm:ip-sqrt-vc-lb}
    Let $\eps\in(0,1),\delta =1/3$, let $\cX$ be a set and let $\cF\subseteq \{0,1\}^{\cX}$ be a hypothesis class with VC dimension $d$. Assume that
    $\langle V,P \rangle$ is an interactive proof system that PAC verifies $\cF$ with parameters $(\eps,\delta)$ with respect to the set of all distributions $\mathbb{D}=\Delta(\cX\times\bits)$, and the verifier uses $m_V(d,\eps)$ i.i.d. labeled samples. Then there exists constants $C,c>0$ such that  $m_V(d,\eps)\geq (C\cdot \sqrt{d}-c)/\eps^2$.
\end{theorem}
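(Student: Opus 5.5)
The plan is to reduce a hard \emph{label-bias detection} problem to PAC verification, in the same spirit as \cref{thm:IP-to-IT}, and then prove a Paninski-style lower bound for that detection problem. Fix a set $\{z_1,\dots,z_d\}\subseteq\cX$ shattered by $\cF$. Let $\cD_{\mathrm{unif}}$ be the joint distribution with marginal $\Unif(\{z_1,\dots,z_d\})$ and labels that are independent uniform bits. For a sign vector $\tau\in\sbits^d$, let $\cD_\tau$ have the same marginal, but give $z_i$ the label $1$ with probability $\frac12+c_0\epsilon\tau_i$, for a suitable constant $c_0$. Under $\cD_{\mathrm{unif}}$ every hypothesis has error exactly $1/2$. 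Under $\cD_\tau$, shattering gives some $f_\tau\in\cF$ with $f_\tau(z_i)=\Ind[\tau_i=1]$, so $\opt(\cF,\cD_\tau)=\frac12-c_0\epsilon$.

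\textbf{Verifier to distinguisher.} Given a protocol $\langle V,P\rangle$ with $m_V$ samples, I would build the following tester, exactly as in \cref{alg:tester-from-protocol}. Run the protocol with the honest prover instantiated on $\cD_{\mathrm{unif}}$, and give $V$ samples from the unknown $\cD$. If $V$ outputs a hypothesis $h$, estimate $\err(h,\cD)$ with $O(1/\epsilon^2)$ fresh samples, and accept iff the estimate is at least $\frac12-c_0\epsilon/2$.
\begin{itemize}
\item On $\cD_{\mathrm{unif}}$, completeness of $\langle V,P\rangle$ applies, so the tester accepts with probability close to $2/3$.
\item On any $\cD_\tau$, the prover is possibly dishonest. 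Soundness (with verification error parameter a small constant times $\epsilon$) forces either rejection or $\err(h,\cD_\tau)\le\frac12-\Omega(\epsilon)$, so the tester rejects.
\end{itemize}
Here $\delta=1/3$ only gives constant advantage, which is all that is needed. Averaging over a uniformly random $\tau$, this yields a distinguisher between $\cD_{\mathrm{unif}}^{\,n}$ and the mixture $\E_\tau[\cD_\tau^{\,n}]$ with constant advantage, using $n=m_V+O(1/\epsilon^2)$ samples.

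\textbf{Lower bound for the mixture.} It then suffices to show that $\dist_\tv(\cD_{\mathrm{unif}}^{\,n},\E_\tau\cD_\tau^{\,n})=o(1)$ unless $n\ge c\sqrt d/\epsilon^2$. Given this, $m_V\ge c\sqrt d/\epsilon^2-O(1/\epsilon^2)$, which is exactly the claimed form $(C\sqrt d-c)/\epsilon^2$.
\begin{enumerate}
\item Poissonize the sample size, so that the counts at different points $z_i$ are independent and the problem factorizes across points.
\item Bound the $\chi^2$-divergence of the mixture from the null by the standard second-moment computation over two independent copies $\tau,\tau'$:
\[
1+\chi^2=\E_{\tau,\tau'}\prod_{i}\E\bigl[(1+2c_0\epsilon\tau_i s)^{N_i}\ldots\bigr],
\]
which evaluates to $\E_{\tau,\tau'}\prod_i\exp\bigl(\Theta(n\epsilon^2/d)\,\tau_i\tau_i'\bigr)$. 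Equivalently, use the per-point label statistic $(\#\text{ones}-\#\text{zeros})$ at each $z_i$.
\item Since the $\tau_i\tau_i'$ are i.i.d.\ signs, this is $\cosh(\Theta(n\epsilon^2/d))^d\le\exp(\Theta(n^2\epsilon^4/d))$, which is $1+o(1)$ when $n\ll\sqrt d/\epsilon^2$.
\end{enumerate}

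I expect this $\chi^2$ calculation, and in particular getting the $\epsilon^4$ (rather than $\epsilon^2$) dependence in the exponent, to be the main obstacle. Points sampled only once contribute nothing, because a single label is exactly uniform under the mixture, so only collisions at the same $z_i$ carry signal. I would carry out the bound carefully via the per-point binomial label counts, using $\E_\tau[(1+2c_0\epsilon\tau)^{a}(1-2c_0\epsilon\tau)^{b}]$ paired across the two copies. I would check that the first nonvanishing term is of order $\epsilon^2\cdot(\text{pairs at } z_i)$, which after summing gives $n^2\epsilon^4/d$. De-Poissonization and the union bound over the constant-probability failure events of the protocol are routine.
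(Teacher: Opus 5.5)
The paper does not prove this theorem; it imports it from \cite{MutrejaS23}, remarking only that \cite{MutrejaS23} implicitly reduce uniformity testing to verification. Your argument is correct and follows that standard route. The verifier-to-distinguisher step is \cref{alg:tester-from-protocol} (\cref{thm:IP-to-IT}) specialized to the uniform marginal on a shattered set. The family $\{\cD_\tau\}$ is Paninski's paired perturbation of the uniform distribution on the $2d$-element set $\{z_i\}\times\bits$. The extra $O(1/\epsilon^2)$ estimation samples are what produce the $-c/\epsilon^2$ term.

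The step you flag as the main obstacle takes two lines. Take $\gamma=2c_0\epsilon$, $\lambda=n/(2d)$, and Poissonized counts $a_i,b_i$ of $(z_i,1),(z_i,0)$. The likelihood ratio of $\cD_\tau$ is $\prod_i(1+\gamma\tau_i)^{a_i}(1-\gamma\tau_i)^{b_i}$. Using $\E[x^{a}]=e^{\lambda(x-1)}$ for $a\sim\mathrm{Poi}(\lambda)$, the per-point factor is exactly $\exp\bigl((n\gamma^2/d)\tau_i\tau_i'\bigr)$. Hence $1+\chi^2=\cosh(n\gamma^2/d)^d\le\exp\bigl(n^2\gamma^4/(2d)\bigr)$.

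Your heuristic sentence has a slip. Conditioned on $k$ samples at $z_i$, the averaged factor is $\frac12\bigl[(1+\gamma^2)^k+(1-\gamma^2)^k\bigr]=1+\binom{k}{2}\gamma^4+\dotsb$. So each colliding pair contributes $\Theta(\epsilon^4)$ to $\chi^2$, not $\epsilon^2$. With $\epsilon^2$ per pair you would only get $\sqrt d/\epsilon$.

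Two further remarks.

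\emph{Range of $\epsilon$.} Your construction needs $\frac12+c_0\epsilon\le1$, with $c_0$ larger than the verification-error constant. So it proves the bound only for $\epsilon$ below an absolute constant. This is forced rather than a gap: with absolute $C,c$, the statement as transcribed fails for $\epsilon=\frac12+\eta$. For a complement-closed $\cF$, consider the verifier that outputs a fixed $h_0$ or $1-h_0$ according to an $O(\eta^{-2})$-sample estimate of $\err(h_0,\cD)$. It achieves error at most $\frac12+\eta\le\opt(\cF,\cD)+\epsilon$ with a number of samples independent of $d$.

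\emph{Strength of the bound.} $\cD_{\mathrm{unif}}$ and all $\cD_\tau$ share the marginal $\Unif(\{z_1,\dots,z_d\})$, so your reduction uses completeness and soundness only on distributions with that marginal. It therefore lower-bounds even distribution-specific (hence also testable) verification. Compare the paper's own generic route, \cref{thm:intro-verification-lower-bound}. For this marginal, $\cR_m(\cF,\cdot)=\Theta(\sqrt{d/m})$ when $m\ge d$, so the largest admissible $m$ is $\Theta(d/\epsilon^2)$ and the theorem yields only $\Omega(\sqrt d/\epsilon)$. The reason is that \cref{thm:rademacher-lower-bound} uses only a collision bound. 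Your explicit $\chi^2$ computation buys the extra factor $1/\epsilon$, at the price of applying only to this specific hard instance rather than to arbitrary $(\cF,\cD_\cX)$.
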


\begin{corollary}
    Let $\eps\in(0,1),\delta =1/3$, let $\cX$ be a set and let $\cF\subseteq
    \{0,1\}^{\cX}$ be a hypothesis class with VC dimension $d$. Then any
    $(\epsilon, \delta)$-identity tester for all reference distributions against
    $\cF$ has sample complexity $\Omega(\sqrt{d}/\eps^2)$.
\end{corollary}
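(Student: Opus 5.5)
The natural route is a black-box reduction, mirroring \cref{cor:vc-all-distributions-lower-bound}. We combine \cref{cor:F-testing-to-F-IP} with the verification lower bound of \cref{thm:ip-sqrt-vc-lb}. Suppose $\cA$ is an $(\epsilon,\delta)$-identity tester for all reference distributions against $\cF$ with sample complexity $m_\cF$. \cref{cor:F-testing-to-F-IP} turns $\cA$ into a distribution-free verification protocol for $\cF$. This protocol has excess error $O(\epsilon)$, confidence $1-O(\delta)$, and verifier sample complexity $2m_\cF + O(\log(1/\delta)/\epsilon^2)$. Applying \cref{thm:ip-sqrt-vc-lb} to this protocol then gives a lower bound on $m_\cF$.

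The first step is bookkeeping of parameters. \cref{thm:ip-sqrt-vc-lb} is stated for confidence $\delta = 1/3$ and error $\epsilon$. The corollary, however, degrades the parameters to $(c_1\epsilon, c_2\delta)$, coming from the factors $3\epsilon, 4\delta$ in \cref{thm:lifting tester} and $\epsilon/2,\delta/2$ in \cref{thm:IT-to-IP}. To restore the target confidence $1/3$, I first amplify $\cA$'s success probability by majority vote over $O(1)$ independent repetitions. This drives its failure probability down to a small constant $\delta_0$ with $c_2\delta_0 \le 1/3$, and costs only a constant factor in $m_\cF$. I then feed the amplified tester, run at distance $\epsilon' = \epsilon/c_1$, into \cref{cor:F-testing-to-F-IP}. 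The result is a protocol verifying $\cF$ to error $\epsilon$ with confidence $2/3$, using $O(m_\cF(\epsilon/c_1)) + O(1/\epsilon^2)$ verifier samples.

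Next I apply \cref{thm:ip-sqrt-vc-lb}, which yields
\[
O\bigl(m_\cF(\epsilon/c_1)\bigr) + O(1/\epsilon^2) \ge \frac{C\sqrt{d} - c}{\epsilon^2}.
\]
The additive $O(1/\epsilon^2)$ term is absorbed once $d$ exceeds a suitable constant. For small $d$, the bound $\Omega(\sqrt d/\epsilon^2) = \Omega(1/\epsilon^2)$ must be handled directly. Even distinguishing $\refdist{P}$ from a single-function shift by $\epsilon$ requires $\Omega(1/\epsilon^2)$ samples, via the standard Bernoulli bias lower bound, whenever $\cF$ contains a nonconstant function. Finally, rescaling $\epsilon$ by the constant $c_1$ changes the bound only by a constant factor, giving $m_\cF(\epsilon) = \Omega(\sqrt d/\epsilon^2)$.

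The main obstacle is exactly this constant juggling: ensuring that the degraded parameters from the reductions still fall within the regime where \cref{thm:ip-sqrt-vc-lb} applies, and dealing with the additive $O(\log(1/\delta)/\epsilon^2)$ overhead when $d$ is small. I would resolve it by fixing $\delta$ to a constant throughout, so that $\log(1/\delta) = O(1)$, and treating the small-$d$ case separately as above.
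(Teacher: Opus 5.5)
Your proposal is correct and is the same argument as the paper's, which states the result as an immediate consequence of \cref{thm:ip-sqrt-vc-lb} and \cref{cor:F-testing-to-F-IP}. Your added bookkeeping fills in details the paper leaves implicit: majority-vote amplification so that the constant-factor confidence loss through the lifting and verification reductions still meets the $\delta = 1/3$ requirement, absorbing the additive $O(1/\epsilon^2)$ overhead once $d$ exceeds a constant, and the small-$d$ caveat requiring a nonconstant function in $\cF$.
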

\begin{proof}
    Immediate consequence of \cref{thm:ip-sqrt-vc-lb,cor:F-testing-to-F-IP}.
\end{proof}

%\section{TV Testing of Structured and Continuous Distributions via \texorpdfstring{$\cF$}{F}-Identity Testing}
\section{TV Testing of Structured Distributions via \texorpdfstring{$\cF$}{F}-Identity Testing}
\label{sec:structured}
The problem of identity testing in full generality requires samples that scale
with the square root of the domain size \cite{paninski2008coincidence}. This
implies that identity testing is impossible for the setting where the domain is
infinite, so in such cases, some assumption is needed in order to get any
testing algorithm. 

Inspired by the work of \cite{DKN14}, we consider the setting of identity
testing under the promise that the reference distribution and the distribution
being tested belong to a class of structured distributions. We provide
sufficient conditions on the structure of distributions that either make
identity testing possible at all, or yield computational and statistical
improvements over bounds attainable by standard identity testing (without
structural assumptions). We show how the notion of fooling distance is a useful
primitive that yields identity testers for structured distributions under TV
distance, and also lies at the core of reductions that leverage insights from
the testable learning literature to obtain computationally efficient TV testers
in the present setting.

Recall the definition of the problem of testing structured distributions from
\cref{def:identity-testing-structured-intro}.

\subsection{Statistically efficient identity testing}

In this section, we formalize the scheme by which $\cF$-identity testers may be used to
obtain identity testers against TV distance for structured classes of
distributions. We start by defining \emph{Scheff\'{e} sets}, which serve to connect
the structure of the class $\bbD$ with the structure of a related function
family $\cF_{\bbD}$.

\begin{definition}[Scheff\'{e} set]\label{def:sset}
    For any two distributions $P,Q\in\Delta(\cX)$, the \emph{Scheff\'{e} set}
    $\sset$ is $\sset \define \{x\in \cX\mid P(x)\geq Q(x)\}$.
\end{definition}

\begin{definition}[Scheff\'{e} function class]\label{def:ssclass}
    Let $\bbD$ be any class of distributions over domain $\cX$. For a Scheff\'{e}
    set $\sset$ defined with respect to distributions $P,Q\in\bbD$, let
    $f_{P,Q}:\cX\to \{0,1\}$ denote the indicator function of $\sset$, that is,
    $f_{P,Q} (x)=\ind\left[x\in \sset\right]$. We call
    $\cF_{\bbD }=\cup_{P\neq Q} \{f_{P,Q}\}$ the Scheff\'{e} function class for the
    class of distributions $\bbD$. We drop the subscript from $\cF_{\bbD}$ when
    the class of distributions is obvious. 
\end{definition}

\begin{remark}[Scheff\'{e} sets and total variation]\label{remark:eq}
    Let $\sset$ denote the Scheff\'{e} set corresponding to the distributions
    $P,Q\in\Delta(\cX)$. Note that $\sset\in\arg\max_{A\subseteq \cX}
    \Abs{P(A)-Q(A)}$. It follows that
    $\dist_\tv(P,Q)=\dist_{f_{P,Q}}(P,Q)$.  
\end{remark}

Therefore, given a distribution class $\bbD$ with Scheff\'{e} function class
$\cF$, identity testing distributions in $\bbD$ against TV distance reduces to
$\cF$-identity testing:

\begin{lemma}
    \label{lemma:f-identity-testing-implies-structured-testing}
    Let $\bbD$ be a  set of distributions over domain $\cX$, and let $\cF_{\bbD}$ be the Scheff\'{e}
    function class for class $\bbD$. Let $\cA$ be an $(\epsilon, \delta)$-identity tester for all
    reference distributions against $\cF_{\bbD}$ (see \cref{def:testing-for-all}). Then $\cA$ is
    also an $(\epsilon,\delta)$-identity tester for distribution class $\bbD$.

    In fact, the same conclusion holds even if the algorithm $\cA$ specified by
    \cref{def:testing-for-all} is only promised to work when the reference distribution
    $\refdist{P}$ given to it satisfies $\refdist{P} \in \bbD$.
\end{lemma}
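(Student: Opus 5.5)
The plan is to check the two conditions of \cref{def:identity-testing-structured-intro} directly. We feed $\cA$ the explicit reference $\refdist{P} \in \bbD$ and sample access to $P \in \bbD$, and condition on the probability-$(1-\delta)$ event on which $\cA$ satisfies both completeness and soundness of \cref{def:testing-for-all} for this pair. Because we only ever call $\cA$ with $\refdist{P} \in \bbD$, this also covers the weaker promise version stated at the end of the lemma.

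\textbf{Completeness.} If $P = \refdist{P}$, then $\cA$ accepts by its own completeness. There is nothing more to check here.

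\textbf{Soundness.} Suppose $\dist_\tv(P, \refdist{P}) > \epsilon$. Then $P \neq \refdist{P}$, and both lie in $\bbD$, so the indicator $f_{P,\refdist{P}}$ of the Scheff\'{e} set $\cH_{P,\refdist{P}}$ belongs to $\cF_{\bbD}$ by \cref{def:ssclass}. By \cref{remark:eq},
\[
    \dist_{\cF_{\bbD}}(P, \refdist{P}) \ge \dist_{f_{P,\refdist{P}}}(P, \refdist{P}) = \dist_\tv(P, \refdist{P}) > \epsilon .
\]
Hence the soundness of $\cA$ against $\cF_{\bbD}$ forces a rejection on the same good event.

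\textbf{The only subtle step.} The one point needing care is that \cref{remark:eq} actually holds with the paper's conventions. The remark says the Scheff\'{e} set $\{x : P(x) \ge Q(x)\}$ maximizes $\abs{P(A) - Q(A)}$; for continuous domains one reads $P(x)$ as a density. The justification is the standard identity $\dist_\tv(P,Q) = \sup_A \abs{P(A) - Q(A)}$, and the supremum is attained at $\{p \ge q\}$. Ties on $\{p = q\}$ contribute zero to the difference, so including them in the set is harmless.

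I would also note two conventions. First, $\dist_\tv \ge \dist_{\cF}$ always holds, so no reverse inequality is needed. Second, the ordering convention ($f_{P,Q}$ versus $f_{Q,P}$, i.e.\ complements) is irrelevant because of the absolute value in the definition of $\dist_f$. With this, both conditions hold with probability at least $1-\delta$, which proves the lemma.
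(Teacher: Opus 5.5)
Your proposal is correct and matches the paper's proof, which simply invokes \cref{remark:eq} to get $\dist_\tv(\refdist{P},P)=\dist_{\cF_{\bbD}}(\refdist{P},P)$ for $\refdist{P},P\in\bbD$ and concludes. You spell out the completeness and soundness checks, and you correctly observe that only the direction $\dist_{\cF_{\bbD}}\ge\dist_\tv$ (via $f_{P,\refdist{P}}\in\cF_{\bbD}$ when $P\ne\refdist{P}$) is actually needed, which also covers the weaker promise version.
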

\begin{proof}
    Let $\refdist{P} \in \bbD$ be a reference distribution and $P \in \bbD$ be an input
    distribution. By \cref{remark:eq}, we have $\dist_\tv(\refdist{P}, P) = \dist_\cF(\refdist{P},
    P)$. The claim follows.
\end{proof}

Therefore, following \cref{sec:vc-bounds}, we can obtain a simple sample complexity upper bound in
terms of the VC dimension of the Scheff\'{e} function class:

\begin{lemma}\label{lemma:structuredit}
    Let $\bbD$ be a (possibly infinite) set of distributions over (a possibly
    infinite) domain $\cX$. Let $\cF_{\bbD}$ be the Scheff\'{e} function class for
    class $\bbD$. Suppose $\cF_{\bbD}$ has finite VC dimension $d$. Then
    \cref{alg:idtesting} is an $(\eps,\delta)$-identity tester for distribution
    class $\bbD$ with sample complexity
    $O\left(\tfrac{d + \log(1/\delta)}{\eps^2}\right)$.
\end{lemma}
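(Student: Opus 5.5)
The plan is to compose two earlier results directly, so the proof should be short. First, \cref{lemma:vc-upper-bound} applied to the family $\cF_{\bbD}$, which has VC dimension $d$ by hypothesis, shows that \cref{alg:idtesting} run with this family is an $(\epsilon,\delta)$-identity tester for all reference distributions against $\cF_{\bbD}$. Its sample complexity is $m = O\left(\frac{d+\log(1/\delta)}{\epsilon^2}\right)$. Second, \cref{lemma:f-identity-testing-implies-structured-testing} says that any identity tester for all reference distributions against the Scheff\'{e} function class $\cF_{\bbD}$ is also an $(\epsilon,\delta)$-identity tester for the distribution class $\bbD$. Chaining the two gives the claim with the same sample complexity.

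Two routine points need checking.

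\begin{itemize}
\item \textbf{Closure under complementation.} The standing assumption in the Preliminaries asks $\cF$ to be closed under complementation, and this is automatic here only up to ties. For $P \neq Q$, the complement of $\cH_{P,Q}$ is $\{x : P(x) < Q(x)\}$. This set agrees with $\cH_{Q,P}$ except on the tie set $\{P = Q\}$.
\begin{itemize}
\item \cref{alg:idtesting} and \cref{lemma:vc-upper-bound} never use complementation: they only use uniform convergence for each $f$ and $\dist_f$, and $\dist_f = \dist_{1-f}$ anyway.
\item If closure is needed formally, we may replace $\cF_{\bbD}$ by $\cF_{\bbD} \cup \{1-f : f \in \cF_{\bbD}\}$. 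This at most doubles the number of dichotomies, so the VC dimension stays $O(d)$ and the bound is unchanged up to constants.
\end{itemize}

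\item \textbf{The TV identity.} By \cref{remark:eq}, $\dist_\tv(\refdist{P},P) = \dist_{f_{P,\refdist{P}}}(P,\refdist{P}) \le \dist_{\cF_{\bbD}}(P,\refdist{P})$ whenever $P,\refdist{P}\in\bbD$ are distinct.
\begin{itemize}
\item \emph{Soundness.} If $\dist_\tv > \epsilon$, then $\dist_{\cF_{\bbD}} > \epsilon$, so the tester rejects with probability at least $1-\delta$.
\item \emph{Completeness.} This is inherited unchanged from \cref{lemma:vc-upper-bound}.
\end{itemize}
\end{itemize}

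For infinite or continuous domains, $P(x)$ in \cref{def:sset} should be read as a density with respect to a common dominating measure, and \cref{remark:eq} still holds in that setting. Measurability of the Scheff\'{e} sets is implicitly assumed so that \cref{th:uc} applies.

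The main obstacle, mild as it is, is only this formal closure/tie bookkeeping; the substantive content is entirely in the two cited lemmas.
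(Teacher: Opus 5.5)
Your proposal is correct and matches the paper's proof, which simply combines \cref{lemma:vc-upper-bound} with \cref{lemma:f-identity-testing-implies-structured-testing}. Your remarks on complementation and tie sets are harmless extra bookkeeping that the paper omits. They are not needed, since neither \cref{alg:idtesting} nor its analysis uses closure under complement.
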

\begin{proof}
    Combine \cref{lemma:vc-upper-bound} with
    \cref{lemma:f-identity-testing-implies-structured-testing}.
\end{proof}

\begin{corollary}
    If $\bbD=\{D_1,D_2,\dots,D_n\}$ is a finite set of distributions, then there exists an identity tester for the class of distributions $\bbD$.
\end{corollary}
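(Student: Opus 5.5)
The plan is to apply \cref{lemma:structuredit} after bounding the VC dimension of the Scheff\'{e} function class $\cF_{\bbD}$ in terms of $n$. Since $\bbD=\{D_1,\dots,D_n\}$ is finite, the Scheff\'{e} sets come from ordered pairs $(D_i,D_j)$ with $i\neq j$. Hence $\cF_{\bbD}=\bigcup_{P\neq Q}\{f_{P,Q}\}$ contains at most $n(n-1)$ functions. A finite class $\cF$ has VC dimension at most $\log_2\abs{\cF}$, because shattering $d$ points requires $2^d$ distinct restrictions. So $d\le \log_2(n^2)=2\log_2 n$, which is finite.

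\cref{lemma:structuredit} then states that \cref{alg:idtesting}, run against $\cF_{\bbD}$, is an $(\epsilon,\delta)$-identity tester for $\bbD$ with sample complexity $O\left(\frac{\log n+\log(1/\delta)}{\epsilon^2}\right)$. This gives existence, with an explicit bound. The one point to check is that \cref{lemma:structuredit} is phrased for infinite domains. It only needs $\dist_\tv(P,Q)=\dist_{f_{P,Q}}(P,Q)$ (\cref{remark:eq}) for $P,Q\in\bbD$.

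Two degenerate cases need a remark. If $n=1$, completeness is the only requirement, so the tester that always accepts works. The paper's standing assumption is that $\cF$ is closed under complement, but \cref{alg:idtesting} uses absolute values, so complementation does not matter; in any case $f_{Q,P}$ is essentially the complement of $f_{P,Q}$.

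For general densities, where "$P(x)$" means a density value, the Scheff\'{e} sets are measurable and finitely many. The argument is therefore unchanged: the finite union bound over the $n(n-1)$ functions in the proof of \cref{lemma:vc-upper-bound} suffices, even without invoking VC theory.

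I expect no real obstacle here; the only care needed is the cardinality-to-VC-dimension step and these edge cases.
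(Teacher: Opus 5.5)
Your proof is correct and follows the paper's argument. Both bound $\abs{\cF_{\bbD}}=O(n^2)$, deduce $\mathrm{VC}(\cF_{\bbD})=O(\log n)$, and apply \cref{lemma:structuredit}. Your ordered-pair count $n(n-1)$ is slightly more careful than the paper's $\binom{n}{2}$, since $f_{P,Q}$ and $f_{Q,P}$ are distinct functions, though this does not affect the $O(\log n)$ bound.
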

\begin{proof}
     As shown in \cref{lemma:structuredit}, if the VC dimension of $\cF$ is
     finite, then \cref{alg:idtesting} is an identity tester for the
     distribution class $\bbD$. Hence, it suffices to show that the function
     class $\cF$ has a finite VC dimension. Since the number of distributions in
     $\bbD$ is finite, the number of Scheff\'{e} sets is at most $\binom{n}{2}$, so
     the size of $\cF$ is at most $O(n^2)$. Hence the VC dimension of $\cF$ is
     at most $O(\log n)$. This proves our claim.
\end{proof}

As an immediate application, we show identity testers for the class of
degree-$d$ polynomial distributions and size-$s$ decision tree distributions
over the solid cube $[0,1]^n$. We define these classes of distributions below. 

\begin{definition}[Degree-$d$ polynomial distributions over {$[0,1]^n$}]
\label{def:degree-d-pdf}
Let $\cX=[0,1]^n$ and let $\cP_d$ denote the class of degree-$d$ multivariate
polynomials over $[0,1]^n$. Let $\bbD_d$ be the set of absolutely continuous
probability distributions with probability density function in $\cP_d$, that is,
\begin{align*}
    \bbD_d = \left\{
        p \in \cP_d : \int_{x\in \cX} p(x) \odif x = 1, \, p \ge 0
        \right\} \,.
\end{align*}
\end{definition}

We recall the following standard bound on the VC dimension of the class of
degree-$d$ polynomial threshold functions (PTFs) on $[0,1]^n$:

\begin{fact}[See, e.g., \cite{AB02}]
    \label{fact:vck}
    The VC dimension of class $\bbD_d$ over $[0,1]^n$ is
    $O\left(\binom{n+d}{d}\right)$.
\end{fact}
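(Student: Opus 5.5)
The plan is to read the statement as a bound on the VC dimension of the Scheff\'{e} function class $\cF_{\bbD_d}$ (which is how it is used in \cref{lemma:structuredit}). I will prove it by embedding that class into the class of degree-$d$ polynomial threshold functions, and then reducing PTFs to linear thresholds in a lifted feature space.

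\textbf{Step 1: Scheff\'{e} sets are PTFs.} For $P,Q \in \bbD_d$ with densities $p,q \in \cP_d$, the Scheff\'{e} set of \cref{def:sset} is $\{x \in [0,1]^n : p(x) - q(x) \ge 0\}$. Since $r \define p - q$ is again a polynomial of degree at most $d$, every $f_{P,Q} \in \cF_{\bbD_d}$ has the form $x \mapsto \ind[r(x) \ge 0]$ with $r \in \cP_d$. Hence $\cF_{\bbD_d}$ is contained in the class of degree-$d$ PTFs on $[0,1]^n$. Since VC dimension is monotone under inclusion, it suffices to bound the VC dimension of that PTF class.

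\textbf{Step 2: linearize.} Let $N \define \binom{n+d}{d}$ be the number of monomials $x^\alpha$ with $|\alpha| \le d$. This count includes the constant monomial. Define the feature map $\phi : [0,1]^n \to \R^N$ by $\phi(x) = (x^\alpha)_{|\alpha| \le d}$. Every $r \in \cP_d$ can be written as $r(x) = \langle w, \phi(x) \rangle$ for a coefficient vector $w \in \R^N$. So each PTF equals $x \mapsto \ind[\langle w, \phi(x)\rangle \ge 0]$. It therefore suffices to show that homogeneous linear thresholds $\{z \mapsto \ind[\langle w, z\rangle \ge 0]\}$ on $\R^N$ cannot shatter $N+1$ points. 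A set shattered by PTFs maps under $\phi$ to a set of the same size, with distinct images, that is shattered by these thresholds.

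\textbf{Step 3: the key linear-algebra argument.} This is the only real step. Suppose $z_1,\dots,z_{N+1} \in \R^N$ are shattered. They are linearly dependent, so there is a nonzero $\lambda$ with $\sum_i \lambda_i z_i = 0$. Replacing $\lambda$ by $-\lambda$ if necessary, we may assume some $\lambda_i > 0$. Consider the labeling that assigns $1$ to the indices with $\lambda_i > 0$ and $0$ to the indices with $\lambda_i \le 0$. A realizing $w$ would satisfy $\langle w, z_i\rangle \ge 0$ for $\lambda_i > 0$ and $\langle w, z_i\rangle < 0$ for $\lambda_i \le 0$. Then
\[
0 = \Big\langle w, \sum_i \lambda_i z_i \Big\rangle = \sum_{\lambda_i > 0} \lambda_i \langle w, z_i\rangle + \sum_{\lambda_i \le 0} \lambda_i \langle w, z_i \rangle .
\]
The first sum is $\ge 0$. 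In the second sum each term is $\ge 0$, since $\lambda_i \le 0$ and $\langle w, z_i\rangle < 0$, and each term is strictly positive when $\lambda_i < 0$.

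\textbf{The main obstacle} is the boundary case where no $\lambda_i$ is negative. Then the total could be $0$ without contradiction, because the positive terms may vanish when $\langle w, z_i \rangle = 0$. I would handle this by choosing the labeling the other way round: label $1$ exactly on the indices with $\lambda_i < 0$ and $0$ elsewhere. A realizing $w$ makes every term $\lambda_i\langle w,z_i\rangle$ nonnegative, and makes the terms with $\lambda_i>0$ strictly positive. Since some $\lambda_i > 0$, the sum is strictly positive, contradicting that it equals $0$. So at least one labeling of the $N+1$ points is not realized.

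This gives VC dimension at most $N = \binom{n+d}{d}$ for homogeneous thresholds. Through Steps 1--2 the same bound holds for degree-$d$ PTFs on $[0,1]^n$, and hence for $\cF_{\bbD_d}$. This yields the claimed $O\left(\binom{n+d}{d}\right)$ bound, as in the standard argument of \cite{AB02}.
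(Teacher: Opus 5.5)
Your proof is correct and follows the same route as the paper: in the proof of \cref{lemma:degree-d-it} the paper also observes that every Scheff\'{e} indicator for $\bbD_d$ is a degree-$d$ PTF, and then cites the standard $\binom{n+d}{d}$ bound for PTFs from \cite{AB02}, which you reprove via the monomial feature map and a linear-dependence argument. One small sign slip: under your second labeling each term $\lambda_i\langle w,z_i\rangle$ is nonpositive, and strictly negative when $\lambda_i>0$ (not nonnegative/positive as you wrote), so the sum is strictly negative; the contradiction still stands, and in fact this labeling alone covers every case, so the case split is unnecessary.
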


\begin{lemma}\label{lemma:degree-d-it}
    \cref{alg:idtesting} is an $(\epsilon, \delta)$-identity tester for
    distribution class $\bbD_d$ with sample complexity
    $O\left(\frac{\binom{n+d}{d} + \log(1/\delta)}{\eps^2}\right)$.
\end{lemma}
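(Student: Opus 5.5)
The plan is to apply \cref{lemma:structuredit} to the class $\bbD_d$. This requires bounding the VC dimension of the Scheff\'{e} function class $\cF_{\bbD_d}$ by $O\left(\binom{n+d}{d}\right)$. Once that bound is in hand, \cref{lemma:structuredit} immediately gives that \cref{alg:idtesting} is an $(\epsilon,\delta)$-identity tester for $\bbD_d$ with sample complexity $O\left(\frac{\binom{n+d}{d}+\log(1/\delta)}{\epsilon^2}\right)$.

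The key step is to identify the Scheff\'{e} functions as polynomial threshold functions. For $P,Q\in\bbD_d$ with densities $p,q\in\cP_d$, the Scheff\'{e} set is $\{x\in[0,1]^n : p(x)\ge q(x)\}$. Its indicator is
\[
    f_{P,Q}(x)=\ind[p(x)-q(x)\ge 0].
\]
Since $\cP_d$ is a vector space, $p-q\in\cP_d$, so $f_{P,Q}$ is a degree-$d$ PTF on $[0,1]^n$. Hence $\cF_{\bbD_d}$ is contained in the class of degree-$d$ PTFs, and VC dimension is monotone under inclusion. \cref{fact:vck} is stated loosely as a bound for ``class $\bbD_d$'', but its content is the standard PTF bound. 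The proof is short: a degree-$d$ PTF is a linear threshold function in the feature map $x\mapsto(x^\alpha)_{|\alpha|\le d}$, which has dimension $\binom{n+d}{d}$. The VC dimension of homogeneous halfspaces in $\bbR^N$ is $N$, which gives the claimed $O\left(\binom{n+d}{d}\right)$.

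One detail to check is the non-strict inequality $\ge$ in \cref{def:sset}. It is harmless: $\ind[g\ge 0]$ for $g\in\cP_d$ is still a threshold of a linear function in the feature space. The paper also assumes $\cF$ is closed under complementation. Complements $\ind[p-q<0]$ are again (strict) PTFs, so adding them keeps the VC dimension within $O\left(\binom{n+d}{d}\right)$. A union of two classes of VC dimension $D$ has VC dimension $O(D)$. This is really the only point requiring care; the rest is a direct chaining of \cref{remark:eq}, \cref{lemma:f-identity-testing-implies-structured-testing}, and \cref{lemma:structuredit}.
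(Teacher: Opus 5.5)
Your proposal is correct and matches the paper's proof: the paper likewise notes that $f_{\refdist{P},P} = \ind[\refdist{p} - p \ge 0]$ is a degree-$d$ PTF because $\refdist{p}-p \in \cP_d$. It therefore places the Scheff\'{e} function class of $\bbD_d$ inside the degree-$d$ PTFs and concludes via \cref{lemma:structuredit} and \cref{fact:vck}; your added checks (the feature-map bound, the non-strict inequality, and closure under complementation) are sound details the paper leaves implicit.
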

\begin{proof}
\sloppy
For $\refdist{P},P\in\bbD_{d}$, let $\refdist{p},p$ be the corresponding
probability density functions, which are degree-$d$ polynomials. Define
$t\left(x\right)=\refdist{p}(x)-p\left(x\right)$ which is also a degree-$d$
polynomial. We now look at the Scheff\'{e} set for the distributions
$\refdist{P},P$. The Scheff\'{e} set $\cH_{\refdist{P},P}=\left\{ x\in \cX\mid
\refdist{p}(x)\geq p(x)\right\} =\left\{ x\in \cX\mid t\left(x\right)\geq
0\right\}.$ Observe that $f_{\refdist{P},P}\left(x\right)=\ind\left[x\in
\cH_{\refdist{P},P}\right]$ is a degree-$d$ PTF.
Thus the Scheff\'{e} function class for the distribution class
$\bbD_d$ is a subset of the class of degree-$d$ PTFs.
The proof follows from \cref{lemma:structuredit,fact:vck}.
\end{proof}

\begin{definition}[Decision tree distributions over {$[0,1]^n$}]\label{def:dtd}
    For any parameters $n,s\in \N$, let $\bbT_{n,s}$ denote the set of absolutely continuous
    probability distributions $P$ whose density functions $p:[0,1]^n\to\R^+$, with $\int_{x\in
    [0,1]^n} p(x) \odif x = 1$, are such that $p$ can be computed by some decision tree with at most
    $s$ nodes where each internal node is of the form ``$x_i\leq t?$" for some $t\in \R$, and the
    density value at each leaf is a non-negative real number. We call such $P$ a (size-$s$)
    \emph{decision tree distribution}. For a decision tree distribution $P$, we write $L(P)$ for the
    set of leaves of a size-$s$ decision tree computing the density of $P$ (we implicitly fix a
    single arbitrary decision tree representation for $P$), where each leaf $\ell \in L(P)$ is
    identified with the subset of $[0,1]^n$ obtained by the intersection of the threshold
    restrictions in the path from the root to leaf $\ell$.
\end{definition}

We will use the following bound on the VC dimension of the class of size-$s$
decision trees on $[0,1]^n$:

\begin{fact}\label{fact:vctree}(\cite{leboeuf2020decisiontreespartitioningmachines})
    Let $\cT_{n,s}$ be the class of Boolean functions represented by size-$s$ decision
    trees over $[0,1]^n$. The VC dimension of the function class $\cT_{n,s}$ is
    $O(s\log (ns))$.
\end{fact}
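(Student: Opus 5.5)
The plan is a direct growth-function (counting) argument: bound the number of distinct labelings that $\cT_{n,s}$ can induce on any $m$ points of $[0,1]^n$, then compare with $2^m$. Fix a set $S=\{x^{(1)},\dots,x^{(m)}\}\subseteq[0,1]^n$. A size-$s$ decision tree is determined by three pieces of data, and I will bound the number of choices for each as seen by $S$:
\begin{enumerate}
\item its shape, a rooted binary tree with at most $s$ nodes;
\item for each internal node, a coordinate $i\in[n]$ and a threshold $t\in\R$;
\item a Boolean label at each leaf.
\end{enumerate}

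First, the number of binary tree shapes with at most $s$ nodes is at most $s$ times the largest Catalan number in that range, which is $2^{O(s)}$. Second, for a fixed internal node with coordinate $i$, the test ``$x_i\le t$?'' splits the points of $S$ in one of at most $m+1$ ways. The outcome depends only on where $t$ falls among the $m$ values $x^{(1)}_i,\dots,x^{(m)}_i$. Thus each internal node contributes at most $n(m+1)$ effective choices as far as $S$ is concerned. Third, there are at most $2^s$ leaf labelings. Since the labeling of $S$ is determined entirely by these combinatorial data, the number of distinct labelings of $S$ is at most
\[
2^{O(s)}\cdot \bigl(n(m+1)\bigr)^{s}\cdot 2^{s}.
\]

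If $S$ is shattered, this quantity must be at least $2^m$. Taking logarithms gives
\[
m \le O(s) + s\log_2\bigl(n(m+1)\bigr).
\]
The final step is to solve this implicit inequality for $m$. Suppose $m\ge C s\log(ns)$ for a large constant $C$. Then $\log(n(m+1)) = O(\log(ns)+\log C + \log\log(ns))$, so the right-hand side is at most $O(s\log(ns)) + s\log C$, which is smaller than $Cs\log(ns)$ once $C$ is large enough. This contradicts the inequality, so the VC dimension is $O(s\log(ns))$. In the degenerate case $n=s=1$ the bound should be read as $O(1)$, which the same counting gives directly.

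I expect the main step needing care to be the second one: justifying that continuous thresholds collapse to $m+1$ effective choices per node, uniformly over the tree. The key point is that the set of points of $S$ reaching a node is some subset of $S$, and the threshold's effect on that subset is still determined by its position among the $m$ coordinate values. So the bound $n(m+1)$ per node holds regardless of the tree's structure above that node. The rest is routine bookkeeping.
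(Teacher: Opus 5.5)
Your counting argument is correct, but it is not the paper's route: the paper gives no proof and simply cites Leboeuf, Leblanc and Marchand. That work uses a ``partitioning machine'' framework. It bounds, via a recursive relation, the number of ways a \emph{fixed} tree structure can partition $m$ points into classes, which also gives finer results such as the exact VC dimension of decision stumps. From this it concludes that a fixed structure with $N$ internal nodes over $n$ real-valued features has VC dimension of order $N\log(Nn)$.

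Your proof is instead the classical self-contained growth-function bound. It is cruder in constants, but it needs no external machinery. It also explicitly handles the union over all $2^{O(s)}$ shapes with at most $s$ nodes, a step that is needed anyway to pass from a fixed structure to the whole class $\cT_{n,s}$. And it gives exactly the asymptotic bound the paper relies on, namely in the decision-tree-distribution tester, applied with $s^2$ in place of $s$.

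One small imprecision in your final step: the claim $\log(n(m+1)) = O(\log(ns)+\log C+\log\log(ns))$ does not hold for arbitrary $m \ge Cs\log(ns)$, since $m$ could be much larger. Either of the following fixes it:
\begin{itemize}
\item Note that shattering is hereditary, so it suffices to rule out sets of size exactly $m_0=\lceil Cs\log(ns)\rceil$.
\item Observe that $m - s\log_2(n(m+1))$ is increasing for $m \ge 2s$, so failure of the inequality at $m_0$ carries over to every larger $m$.
\end{itemize}
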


\begin{lemma}(Identity tester for decision tree distributions over $[0,1]^n$)\label{lem:it-for-dt}
    %Let $\refdist{P}\in \bbT_{n,s}$ be a decision tree distribution of size at most $s$ over
    %$\cX=[0,1]^n$. Then
    \cref{alg:idtesting} is an $(\eps,\delta)$-identity tester for distribution
    class $\bbT_{n,s}$ with sample complexity $O\left(\tfrac{s^2\log (ns) + \log
    (1/\delta)}{\eps^2}\right)$.
\end{lemma}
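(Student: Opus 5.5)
The approach mirrors \cref{lemma:degree-d-it}. By \cref{lemma:structuredit}, it suffices to show that the Scheff\'{e} function class $\cF_{\bbT_{n,s}}$ has VC dimension $O(s^2\log(ns))$. Substituting $d = O(s^2 \log(ns))$ into the bound $O\left(\tfrac{d+\log(1/\delta)}{\eps^2}\right)$ then gives the claimed sample complexity.

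\textbf{Step 1: embed each Scheff\'{e} set in a decision tree of size $O(s^2)$.} Fix $\refdist{P}, P \in \bbT_{n,s}$ with densities $\refdist{p}, p$, each computed by a size-$s$ threshold decision tree. Build a product tree $T$ as follows. Start from the tree for $\refdist{p}$. At each of its leaves, graft a copy of the tree for $p$. Each leaf of $T$ then corresponds to a cell $\ell \cap \ell'$, where $\ell \in L(\refdist{P})$ and $\ell' \in L(P)$. On that cell both densities are constant, say $\refdist{p} \equiv a$ and $p \equiv b$. Label the leaf $1$ if $a \ge b$ and $0$ otherwise. The resulting tree computes exactly $f_{\refdist{P},P}(x) = \ind[\refdist{p}(x) \ge p(x)]$.

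The size bound is direct. The first tree has at most $s$ nodes, so it has at most $s$ leaves. Each leaf receives a grafted tree with at most $s$ nodes. Hence $T$ has at most $s + s\cdot s = O(s^2)$ nodes, and all its internal nodes are threshold queries ``$x_i \le t$?''. Therefore $\cF_{\bbT_{n,s}} \subseteq \cT_{n,O(s^2)}$.

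\textbf{Step 2: bound the VC dimension.} \Cref{fact:vctree} gives $\mathrm{VC}(\cT_{n,s'}) = O(s'\log(ns'))$. With $s' = O(s^2)$, this is $O(s^2 \log(n s^2)) = O(s^2\log(ns))$. VC dimension is monotone under taking subclasses, so the same bound holds for $\cF_{\bbT_{n,s}}$. Applying \cref{lemma:structuredit} completes the proof.

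\textbf{Main obstacle.} The only step needing care is the size accounting in the product construction, together with the convention for what ``size'' counts (nodes versus leaves) in \cref{fact:vctree}. Under either convention the blow-up is quadratic, so the bound is robust.

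A minor technicality is the ``$\ge$'' tie-breaking in \cref{def:sset}. It is handled pointwise inside each cell, so it causes no difficulty.
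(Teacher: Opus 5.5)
Your proposal is correct and follows essentially the same route as the paper: you compose the two density trees to get the common refinement of their leaf partitions, use that both densities are constant on each cell to compute the Scheff\'{e} indicator by labeling leaves, conclude $\cF_{\bbT_{n,s}} \subseteq \cT_{n,O(s^2)}$, and finish with \cref{fact:vctree} and \cref{lemma:structuredit}. The paper bounds the composed tree's size by $s^2$ rather than your $s + s^2$; either bound yields VC dimension $O(s^2\log(ns))$.
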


\begin{proof}
    Let $\refdist{P},P$ be any size-$s$ decision tree distributions and let $\refdist{p}$ and $p$ be
    the corresponding probability density functions. Consider the Scheff\'{e} set for the distributions
    $\refdist{P},P$, namely $\cH_{\refdist{P},P}=\{x\in X\mid \refdist{p}(x)\geq p(x)\}$. We
    show that the function $f_{\refdist{P},P}(x)=\ind[x\in \cH_{\refdist{P},P}]$ can be represented
    by a decision tree of size at most $s^2$. 
    
    %Consider $\Pi=\left\{ R=\refdist{P}(\ell)\cap P(\ell)\neq\emptyset:\refdist{P}(\ell)\in
    %L(\refdist{P}),P(\ell)\in L(P)\right\} $.
    Consider $\Pi=\left\{  \ell_1 \cap \ell_2 \neq \emptyset : \ell_1 \in
    L(\refdist{P}), \ell_2 \in L(P)\right\} $.
Observe that each $R\in\Pi$ is a subcube and $\Pi$ forms a valid
partition of $[0,1] ^{n}$. Since the distributions $\refdist{P}$ and $P$ are uniform over each
    subcube, we know that for all $R\in \Pi$, it holds that $\refdist{p}(x)\geq p(x)$ for all $x \in
    R$ or $\refdist{p}(x)\leq p(x)$ for all $x \in R$. Thus $\Pi$, which can be represented by a
    decision tree of size at most $s^2$ (formed by composing the decision trees for
    $\refdist{P},P$), computes (by assigning values to the leaves appropriately) the function
    $f_{\refdist{P},P}$. Hence the Scheff\'{e} function class for the distribution class $\bbT_{n,s}$ is
    a subset of the class of binary decision trees $\cT_{n,s^2}$. The proof follows from
    \cref{lemma:structuredit,fact:vctree}.
\end{proof}

\subsection{Computationally efficient identity testing}
\label{sec:structured-computationally-efficient}

In the previous section, we designed identity testers for classes of structured distributions by
leveraging the fact that the total variation distance is equal to the $\cF$-distance for suitably
chosen function classes $\cF$. In this section we show how we can use algorithms from the testable
learning framework (defined in \cref{subsec:TL}) to design computationally efficient testers.

%\paragraph{Testable Learning algorithms to Identity Testing for Structured Distributions}

\begin{theorem}\label{thm:tl-to-structured}
    Let $\bbD$ be a class of distributions over $\cX$ containing the uniform distribution over
    $\cX$. Let $\cF$ denote the Scheff\'{e} function class corresponding to $\bbD$. If there exists a
    testable learning algorithm for $\cF$ with respect to the uniform distribution on $\cX$,
    then there exists a uniformity tester for the class of distributions $\bbD$. The time
    complexity of the constructed uniformity tester is polynomial in that of the testable learning
    algorithm. The sample complexity also matches that of the testable learning algorithm plus an
    additional $O\left(\frac{\log(1/\delta)}{\eps^2}\right)$ samples.
\end{theorem}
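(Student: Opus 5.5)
The plan is to compose two earlier results. First, \cref{thm:reduction} converts a testable learner into an identity tester. Second, \cref{lemma:f-identity-testing-implies-structured-testing} converts an $\cF$-identity tester for the Scheff\'{e} class into a TV identity tester for $\bbD$.

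Write $U$ for the uniform distribution over $\cX$; by assumption $U \in \bbD$. Suppose we are given an $(\epsilon/5, \delta/2)$-testable learner $\TL$ for $\cF = \cF_{\bbD}$ with respect to marginal $\cD_\cX = U$. Applying \cref{thm:reduction} with $\refdist{P} = U$ yields an $(\epsilon, \delta)$-identity tester to $U$ against $\cF_{\bbD}$. Its sample complexity is that of $\TL$ plus $O(\log(1/\delta)/\epsilon^2)$, and its running time is polynomial in that of $\TL$. The tester samples from the mixture $M(U, P)$, which is possible because sampling from $U$ is easy.

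Next, we show this is a uniformity tester for $\bbD$. Completeness is inherited directly: if $P = U$, the tester accepts with probability at least $1-\delta$. For soundness, let $P \in \bbD$ satisfy $\dist_\tv(P, U) > \epsilon$. Since both $U$ and $P$ lie in $\bbD$, \cref{remark:eq} gives
\[
\dist_{\cF_{\bbD}}(U, P) \ge \dist_{f_{U,P}}(U, P) = \dist_\tv(U, P) > \epsilon.
\]
This uses $f_{U,P} \in \cF_{\bbD}$, which holds when $P \ne U$, together with closure under complementation. Hence the $\cF_{\bbD}$-identity tester rejects with probability at least $1-\delta$. This is exactly the restricted version of \cref{lemma:f-identity-testing-implies-structured-testing}, where the reference distribution is fixed to $U \in \bbD$.

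The only point needing care is a parameter mismatch. \cref{thm:reduction} requires accuracy $\epsilon/5$ and confidence $\delta/2$ for the learner, while the statement assumes only "a testable learning algorithm". We handle this by invoking the learner at those rescaled parameters, which changes the complexity only by constant factors in $\epsilon$ and $\delta$.
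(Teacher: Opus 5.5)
Your proposal is correct and matches the paper's proof: it likewise applies \cref{thm:reduction} with the uniform reference distribution $U$, then uses \cref{remark:eq} together with $U \in \bbD$ to upgrade soundness from $\dist_{\cF}$ to $\dist_\tv$. Your explicit $(\epsilon/5,\delta/2)$ rescaling is a detail the paper leaves implicit; one small correction is that closure under complementation is not needed for your displayed inequality, since $\dist_f$ already takes an absolute value and $f_{U,P}\in\cF_{\bbD}$ suffices, whereas it is the paper's standing assumption used inside \cref{thm:reduction} via \cref{lemma:mixtureopt}.
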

\begin{proof}
    By \cref{thm:reduction}, there is a uniformity tester against class $\cF$ with the announced
    time and sample complexity. Since class $\bbD$ contains the uniform distribution, that algorithm
    is then also a uniformity tester for distribution class $\bbD$ by \cref{remark:eq}.
\end{proof}

We use the following result on testable learning of degree-$2$ PTFs.

\begin{theorem}[\cite{GollakotaKK23}]\label{thm:tl-degree-2ptf}
    Let $\cF$ be the class of degree-$2$ polynomial threshold functions over $\{0,1\}^n$. Let $\eps>0$. Then there exists a testable learning algorithm with respect to the uniform distribution over $\{0,1\}^n$ up to excess error $\eps$ with time and sample complexity $n^{\tilde{O}(1/\eps^9)}$.
\end{theorem}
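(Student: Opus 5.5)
The plan is to obtain this result from the general framework of \cite{GollakotaKK23}: the existence of low-degree \emph{sandwiching polynomials} for a class $\cF$ under the reference distribution implies a testable learner whose running time is polynomial in the number of monomials. The proof then reduces to exhibiting degree-$k$ sandwiching polynomials for degree-$2$ PTFs over $\Unif(\bits^n)$ with $k = \tilde O(1/\epsilon^9)$. Recall that $p_{\mathrm{up}}, p_{\mathrm{down}}$ $\epsilon$-sandwich $f$ if $p_{\mathrm{down}} \le f \le p_{\mathrm{up}}$ pointwise and $\E_{\Unif}[p_{\mathrm{up}} - p_{\mathrm{down}}] \le \epsilon$.

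\textbf{Step 1 (fooling from bounded independence).} I would invoke the result of Diakonikolas, Kane and Nelson that $k$-wise independent distributions over $\bits^n$ $\epsilon$-fool all degree-$2$ PTFs for $k = \tilde O(1/\epsilon^9)$. This statement is itself proved via the regularity lemma for quadratic forms combined with an invariance principle and FT-mollification. We take it as a black box; its degree bound is exactly where the $\epsilon^{-9}$ exponent comes from.

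\textbf{Step 2 (duality to sandwiching).} Following Bazzi's LP-duality argument (cf.\ \cite{Baz09}), $\epsilon$-fooling by all $k$-wise independent distributions is equivalent to the existence, for each $f$ in the class, of degree-$k$ polynomials that $\epsilon$-sandwich $f$ under $\Unif(\bits^n)$. Concretely, one considers the LP minimizing $\E_{\Unif}[p_{\mathrm{up}}] - \E_{\Unif}[f]$ over degree-$k$ polynomials $p_{\mathrm{up}} \ge f$. Its dual ranges over distributions whose degree-$\le k$ moments match those of the uniform distribution, that is, over $k$-wise independent distributions. Hence the optimal value is at most the fooling error $\epsilon$, and symmetrically for $p_{\mathrm{down}}$.

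\textbf{Step 3 (sandwiching implies testable learning).} Apply \cite[Theorem~4.2]{GollakotaKK23}. The tester estimates all degree-$\le k$ moments of the input marginal and rejects if any deviates from the uniform moments by more than roughly $\epsilon \cdot n^{-O(k)}$. The learner runs $L_1$ polynomial regression of degree $k$ and thresholds the result. Soundness follows because approximate moment matching, together with sandwiching polynomials of bounded coefficient norm, forces $\dist_\cF(P, \Unif)$ to be small. This makes regression competitive with $\opt$ on the true marginal, since the sandwiching polynomial for the optimal $f$ is a feasible regression solution whose error transfers between $P$ and $\Unif$. Over $\bits^n$, a degree-$k$ multilinear polynomial that is bounded in $[-1,2]$ on the cube has coefficients bounded by $2^{O(k)}$, giving total coefficient norm $n^{O(k)}$. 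The sample and time complexity are therefore $n^{O(k)} = n^{\tilde O(1/\epsilon^9)}$.

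The main obstacle is Step 1, the bounded-independence fooling theorem, which carries all of the analytic difficulty; I would cite it rather than reprove it. A secondary point to check is the coefficient-norm bound in Step 3, which is needed so that approximate rather than exact moment matching suffices. This bound is straightforward on the hypercube, because the sandwiching polynomials are bounded pointwise there.
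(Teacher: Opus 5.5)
Your route matches the one behind the cited result. The paper gives no proof of its own and simply invokes \cite{GollakotaKK23}, where the degree-$2$ PTF bound comes from Diakonikolas, Kane and Nelson's $\tilde O(\epsilon^{-9})$-wise independence theorem. That fooling result is converted into sandwiching polynomials by Bazzi-style LP duality and then fed into the moment-matching tester plus $L_1$ regression, which is the same pipeline the paper spells out for decision trees via \cite[Theorem~4.2]{GollakotaKK23} and \cite{Baz09}.

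One justification needs fixing. Sandwiching polynomials are not pointwise bounded in $[-1,2]$: from $p_{\mathrm{up}} - f \ge 0$ and $\E_{\Unif}[p_{\mathrm{up}} - f] \le \epsilon$ you only get $p_{\mathrm{up}} \le 1 + \epsilon 2^n$ a priori. The coefficient bound still holds, because $\E_{\Unif}\abs{p_{\mathrm{up}}} \le 1+\epsilon$ and $\E_{\Unif}\abs{p_{\mathrm{down}}} \le 1+\epsilon$. Hence every Fourier coefficient satisfies $\abs{\E_{\Unif}[p\,\chi_S]} \le 1+\epsilon$. This gives total coefficient norm $n^{O(k)}$ in either the $\pm 1$ basis or the $\{0,1\}$ monomial basis, which is all Step~3 needs.
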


\begin{corollary}(Uniformity Tester for $\bbD_2$ distributions)
    Let $\bbD_2$ be the set of degree-$2$ polynomial distributions over $\{0,1\}^n$ (defined
    analogously to \cref{def:degree-d-pdf}). Then, there exists a uniformity testing algorithm for
    class $\bbD_2$ with time and sample complexity $n^{\tilde{O}(1/\eps^9)}$. 
\end{corollary}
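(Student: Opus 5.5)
The plan is to apply \cref{thm:tl-to-structured} with $\bbD = \bbD_2$ and $\cX = \bits^n$. Two hypotheses need checking. First, $\bbD_2$ must contain the uniform distribution over $\bits^n$. Second, there must be a testable learner, with respect to the uniform distribution, for the Scheff\'{e} function class $\cF_{\bbD_2}$ (or for some superclass of it).

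\textbf{Step 1 (uniform distribution lies in $\bbD_2$).} Its probability mass function is the constant polynomial $2^{-n}$, which has degree $0 \le 2$. It is nonnegative and sums to $1$, so it belongs to $\bbD_2$.

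\textbf{Step 2 (Scheff\'{e} sets are degree-$2$ PTFs).} For $P, Q \in \bbD_2$ with mass functions $p, q$, the Scheff\'{e} set is $\{x : p(x) - q(x) \ge 0\}$. Since $p - q$ is a degree-$2$ polynomial, the indicator $f_{P,Q}$ is a degree-$2$ PTF, exactly as in the proof of \cref{lemma:degree-d-pdf}'s continuous analogue (\cref{lemma:degree-d-it}). Hence $\cF_{\bbD_2}$ is contained in the class $\cF_2$ of degree-$2$ PTFs over $\bits^n$. The class $\cF_2$ is also closed under complementation, which matches the standing assumption.

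\textbf{Step 3 (apply the reduction to the superclass).} The main subtlety is that \cref{thm:tl-to-structured} is phrased for a testable learner of $\cF_{\bbD_2}$ itself, whereas \cref{thm:tl-degree-2ptf} provides one for the larger class $\cF_2$. I will avoid claiming a learner for the subclass and instead argue directly at the level of $\cF_2$:
\begin{itemize}
\item Applying \cref{thm:reduction} to the learner of \cref{thm:tl-degree-2ptf} (run with accuracy $\epsilon/5$) gives a uniformity tester against $\cF_2$. Its time and sample complexity are $n^{\tilde O(1/\epsilon^9)}$ plus $O(\log(1/\delta)/\epsilon^2)$ extra samples, which is absorbed into the main term.
\item For $P \in \bbD_2$ with $\dist_\tv(P, \Unif(\bits^n)) > \epsilon$, \cref{remark:eq} and the inclusion $\cF_{\bbD_2} \subseteq \cF_2$ give $\dist_{\cF_2}(P, \Unif) \ge \dist_{f_{P,\Unif}}(P,\Unif) = \dist_\tv(P,\Unif) > \epsilon$, so the tester rejects.
\item Completeness is unchanged.
\end{itemize}
Equivalently, one can note that fooling distance is monotone in the class of distinguishers, so the proof of \cref{thm:tl-to-structured} goes through verbatim for any superclass of the Scheff\'{e} class. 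This yields the claimed $n^{\tilde O(1/\epsilon^9)}$ uniformity tester.
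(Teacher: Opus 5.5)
Your proposal is correct and follows the paper's proof. Like the paper, you check that the uniform distribution (constant mass function) lies in $\bbD_2$, observe as in \cref{lemma:degree-d-it} that the Scheff\'{e} class of $\bbD_2$ is contained in the degree-$2$ PTFs, and combine the testable learner of \cref{thm:tl-degree-2ptf} with \cref{thm:tl-to-structured}. Your Step~3 routes through \cref{thm:reduction} for the superclass and then uses monotonicity of fooling distance in the class of distinguishers; this makes explicit a step the paper leaves implicit.
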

\begin{proof}
    First, observe that the uniform distribution belongs in the class of degree-$2$ polynomial
    distributions since it has constant mass function. Following the proof of
    \cref{lemma:degree-d-it}, we can similarly argue that the Scheff\'{e} function class of $\bbD_2$ is
    contained in $\cF_2$, which is the set of degree-$2$ polynomial threshold functions.
    \cref{thm:tl-degree-2ptf} gives a testable learning algorithm for the class $\cF_2$. Following
    \cref{thm:tl-to-structured}, we thus get a uniformity tester for distribution class $\bbD_2$
    with time and sample complexity $n^{\tilde{O}(1/\eps^9)}$.
\end{proof}

\begin{corollary}(Uniformity Tester for $\bbT_{n,s}$ distributions)
    \label{cor:uniformity-decision-trees}
    Let $\bbT_{n,s}$ be the set of size-$s$ decision tree distributions over $\{0,1\}^n$ (defined
    analogously to \cref{def:dtd}). Then, there exists a uniformity testing algorithm for class
    $\bbT_{n,s}$ with time and sample complexity $n^{O(\log(s/\epsilon))}$. 
\end{corollary}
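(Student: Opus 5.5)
The plan is to mirror the degree-$2$ corollary: invoke \cref{thm:tl-to-structured} with $\bbD = \bbT_{n,s}$ over $\cX = \bits^n$, and supply a testable learner for the Scheff\'{e} function class of $\bbT_{n,s}$. The uniform distribution over $\bits^n$ lies in $\bbT_{n,s}$, since it is computed by the trivial one-leaf tree with constant density $2^{-n}$ (for any $s \ge 1$). So the hypothesis of \cref{thm:tl-to-structured} on containing the uniform distribution holds.

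Next I would bound the Scheff\'{e} class exactly as in the proof of \cref{lem:it-for-dt}, adapted to the Boolean cube. Fix $\refdist{P}, P \in \bbT_{n,s}$ with fixed size-$s$ trees. Compose them: at each leaf of the tree for $\refdist{P}$, graft a copy of the tree for $P$. The leaves of the composed tree are the nonempty intersections $\ell_1 \cap \ell_2$ with $\ell_1 \in L(\refdist{P})$ and $\ell_2 \in L(P)$. On each such subcube both mass functions are constant, so the sign of $\refdist{p} - p$ is constant there. Labeling each leaf by $\ind[\refdist{p} \ge p]$ then computes $f_{\refdist{P},P}$ with a tree of size $O(s^2)$. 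Hence the Scheff\'{e} function class is contained in $\cT_{n,s^2}$, the size-$s^2$ decision trees over $\bits^n$, up to the constant in the size bound.

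For the testable learner I would apply \cref{thm:tl-dt} with size parameter $s^2$. This gives an $(\epsilon/5,\delta/2)$-testable learner for size-$s^2$ decision trees under the uniform distribution. Its running time and sample count are $n^{O(\log(s^2/\epsilon))}\poly\log(1/\delta) = n^{O(\log(s/\epsilon))}$ for constant $\delta$. A testable learner for a superclass is also one for the subclass. The reason is that $\opt$ over the subclass is at least $\opt$ over the superclass, so soundness transfers, and completeness is unchanged. There is one subtlety: the paper assumes $\cF$ is closed under complement, which \cref{lemma:mixtureopt} needs. This is fine, because $\cT_{n,s^2}$ is closed under complement (flip the leaf labels), and \cref{thm:reduction} only needs the fooling distance against the superclass to dominate TV between members of $\bbT_{n,s}$. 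It does, by \cref{remark:eq}.

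Finally, \cref{thm:reduction} (via \cref{thm:tl-to-structured}) turns this into a uniformity tester against $\cT_{n,s^2}$. It uses $O(\log(1/\delta)/\epsilon^2)$ extra samples and polynomial overhead, which stays within $n^{O(\log(s/\epsilon))}$. This tester is a uniformity tester for $\bbT_{n,s}$ in TV by \cref{lemma:f-identity-testing-implies-structured-testing}. The only step needing care is the containment of the Scheff\'{e} class in a complement-closed tree class with quadratic size. Since $\log(s^2/\epsilon) = O(\log(s/\epsilon))$, the quadratic blowup does not affect the final bound.
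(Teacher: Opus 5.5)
Your proposal is correct and follows the paper's own proof. The uniform distribution is a one-leaf tree, and composing the two trees places the Scheff\'{e} class inside size-$s^2$ decision trees. Feeding \cref{thm:tl-dt} with size parameter $s^2$ through \cref{thm:tl-to-structured} (equivalently \cref{thm:reduction}) then gives $n^{O(\log(s/\epsilon))}$. Your choice to run the reduction on the complement-closed superclass $\cT_{n,s^2}$, rather than on the Scheff\'{e} class itself, is a sensible tightening of a point the paper glosses over.
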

\begin{proof}
    The proof follows the proof of \cref{lem:it-for-dt}. First, observe that the uniform
    distribution can be represented as a depth $0$ decision tree. Similar to \cref{lem:it-for-dt},
    we can similarly argue that the Scheff\'{e} function class for $\bbT_{n,s}$ is a subset of
    size-$s^2$ decision trees. \cref{thm:tl-dt} gives a testable learning algorithm for size-$s^2$
    decision trees. From \cref{thm:tl-to-structured} we get a uniformity tester for the class of
    size-$s$ decision tree distributions with time and sample complexity $n^{O(\log(s/\epsilon))}$. 
\end{proof}

\subsubsection{Testing identity to decision tree distributions}

While \cref{cor:uniformity-decision-trees} gives a computationally efficient \emph{uniformity}
tester for decision tree distributions, the reduction to testable learning does not immediately lend
itself to the broader problem of \emph{identity} testing. In this section, we give computationally
efficient identity testers to any target decision tree distribution, against the class of decision
tree distributions -- and hence TV testers when the input distribution is also a decision tree
distribution -- by combining the reduction to testable learning with testing-specific techniques.

As above, write $\bbT_{n,s}$ for the class of size-$s$ decision tree distributions over
$\bits^n$, and write $\cT_{n,s}$ for the class of Boolean functions computed by size-$s$ decision
trees over $\bits^n$.

\paragraph*{Testing against decision tree distributions.} We first give an identity tester to any
reference distribution in $\bbT_{n,s}$ against the class $\cT_{n,s}$ of distinguishers. The idea is
to use a standard identity tester to test the marginal input distribution over the leaves of the
reference distribution, and then test uniformity of the conditional distribution within each leaf
against $\cT_{n,s}$.

\begin{lemma}
    \label{lemma:identity-dt-dist-against-dt}
    There exists an algorithm which, given as input the explicit description of any decision tree
    distribution $\refdist{P} \in \bbT_{n,s}$, is an $\epsilon$-identity tester to $\refdist{P}$
    against $\cT_{n,s}$ with time and sample complexity $n^{O(\log(s/\epsilon))}$.
\end{lemma}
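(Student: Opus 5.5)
Following the idea announced before the statement, the tester has two stages. Let $L = L(\refdist{P})$ be the set of at most $s$ leaves of the reference tree; each leaf $\ell$ is a subcube of $\bits^n$, and $\refdist{P}$ is uniform within $\ell$. Write $w_\ell \define \refdist{P}(\ell)$ and $P|_\ell$ for $P$ conditioned on $x \in \ell$.

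\textbf{Stage 1 (leaf masses).} Map each sample $x \sim P$ to the leaf containing it, which gives samples from the induced distribution $\pi_P$ on $L$. Since $\abs{L} \le s$, we can learn $\pi_P$ to TV error $\epsilon/4$ with $O((s+\log(1/\delta))/\epsilon^2)$ samples. We reject if the learned $\pi_P$ is more than $\epsilon/2$ from $(w_\ell)_\ell$ in TV. This takes polynomial time, and afterwards we may assume $\dist_\tv(\pi_P, \pi_{\refdist{P}}) \le 3\epsilon/4$.

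\textbf{Stage 2 (uniformity within heavy leaves).} Call $\ell$ heavy if $w_\ell \ge \epsilon/(8s)$. For each heavy leaf, $\ell$ is a subcube obtained by fixing at most $s$ coordinates, so it is isomorphic to $\bits^{n'}$ with $n' \le n$. For any $g \in \cT_{n,s}$, the restriction $g|_\ell$ is a size-$\le s$ decision tree on the free coordinates. We therefore run the uniformity tester for $\cT_{n',s}$ obtained from the testable learner of \cref{thm:tl-dt} via \cref{thm:reduction}. This tester has parameter $\epsilon/4$, confidence $\delta/(2s)$, and time and samples $n^{O(\log(s/\epsilon))}$, and we run it on $P|_\ell$. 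Samples from $P|_\ell$ are obtained by rejection sampling: since $P(\ell) \ge w_\ell - 3\epsilon/4$ may be small, we instead only test leaves with $P(\ell) \gtrsim \epsilon/(8s)$, as estimated in Stage 1. Each conditional sample then costs $O(s/\epsilon)$ draws, and if we run out of budget we reject, which is justified because it only occurs when $P(\ell)$ deviates. We reject if any test rejects.

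\textbf{Correctness.} Completeness is immediate: when $P=\refdist{P}$, each $P|_\ell$ is uniform on $\ell$, so the union bound over at most $s$ tests and Stage 1 suffices. For soundness, fix $g \in \cT_{n,s}$ and decompose
\[
  \E_P[g] - \E_{\refdist{P}}[g] = \sum_\ell \bigl(P(\ell) - w_\ell\bigr)\E_{P|_\ell}[g] + \sum_\ell w_\ell\bigl(\E_{P|_\ell}[g] - \E_{\Unif(\ell)}[g]\bigr) \,.
\]
The first sum is at most $2\dist_\tv(\pi_P,\pi_{\refdist{P}}) \le 3\epsilon/2$. This is too much, so I would tighten the constants: learn $\pi_P$ to $\epsilon/8$ and reject above $\epsilon/4$, which makes the first sum at most $\epsilon/2$ in absolute value. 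In the second sum, light leaves contribute at most $s \cdot \epsilon/(8s) = \epsilon/8$. Each heavy leaf that passed contributes at most $w_\ell \cdot \epsilon/4$, since $g|_\ell \in \cT_{n',s}$ and the test passed; these sum to at most $\epsilon/4$. The total is below $\epsilon$, so $\dist_{\cT_{n,s}}(P,\refdist{P}) > \epsilon$ forces a rejection.

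\textbf{Main obstacle.} The delicate step is the interplay between the conditional sampling cost and the thresholds. The per-leaf uniformity tests must use a confidence parameter of $\delta/s$ and rely on rejection sampling whose cost depends on $P(\ell)$ rather than $w_\ell$. The constants must be arranged so that leaves where $P(\ell)$ and $w_\ell$ disagree are caught in Stage 1, while leaves that are tested have enough mass. The remaining check is that restricting a size-$s$ tree to a subcube keeps it in $\cT_{n',s}$, which holds because fixed variables just prune branches. The total cost is $s \cdot (s/\epsilon) \cdot n^{O(\log(s/\epsilon))} = n^{O(\log(s/\epsilon))}$.
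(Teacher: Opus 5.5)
Your overall plan is the paper's proof. The leaf-marginal check, the uniformity tests on reference-heavy leaves via the testable-learner reduction, and the soundness decomposition all match: your two sums are exactly $\E_P[g]-\E_{P_*}[g]$ and $\E_{P_*}[g]-\E_{\refdist{P}}[g]$ for the paper's hybrid $P_*$, which draws a leaf from $\refdist{P}$'s leaf marginal and then a point from $P$ conditioned on that leaf. The step that fails as written is your switch to testing ``only leaves with $P(\ell)\gtrsim\epsilon/(8s)$, as estimated in Stage 1.'' Your soundness accounting weights the second sum by $w_\ell$ and charges untested leaves $s\cdot\epsilon/(8s)$. That is valid only if ``untested'' means $w_\ell<\epsilon/(8s)$. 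Under the new rule, a leaf with large $w_\ell$ but small estimated $P(\ell)$ is skipped while contributing up to $w_\ell$. The total over such leaves is controlled only by the Stage~1 slack, a constant fraction of $\epsilon$, not by $\epsilon/8$, so your budget no longer closes. The new rule also misses its purpose: a TV-learning guarantee of $\epsilon/8$ lets a single leaf's estimate be off by $\epsilon/8\gg\epsilon/(8s)$, so it does not certify the mass the rejection-sampling budget needs.

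The fix, which is the paper's route, is not to relate $P(\ell)$ to $w_\ell$ in this step at all. Test exactly the leaves with $w_\ell\ge\epsilon/(8s)$, which you know from the explicit description of $\refdist{P}$. Allot $O(s/\epsilon)$ draws per conditional sample times the tester's sample count, with Chernoff slack, and reject if the budget is exhausted. Exhaustion needs controlling only in the completeness case, where $P(\ell)=w_\ell\ge\epsilon/(8s)$ and a Chernoff bound makes it unlikely. Your completeness paragraph should include this event. In the soundness case, rejecting is always allowed.

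A smaller point concerns the first sum. It is bounded by $\dist_\tv(\pi_P,\pi_{\refdist{P}})$, not twice that, because $\sum_\ell (P(\ell)-w_\ell)=0$ and $\E_{P|_\ell}[g]\in[0,1]$. With your tightened Stage~1 ($\dist_\tv\le 3\epsilon/8$), this sharper bound is what gives your claimed $\epsilon/2$. The factor-$2$ bound gives $3\epsilon/4$ and a total of $9\epsilon/8$.

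The remaining differences are harmless. Learning $\pi_P$ with $O(s/\epsilon^2)$ samples, instead of the paper's $O(\sqrt{s}/\epsilon^2)$-sample identity tester, is absorbed into $n^{O(\log(s/\epsilon))}$. Your observation that $g|_\ell\in\cT_{n',s}$, so the tester for the uniform distribution over $\bits^{n'}$ applies on a subcube, fills in a detail the paper leaves implicit.
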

\begin{proof}
    Recall that we write $\ell \in L(\refdist{P})$ for the leaves of decision tree distribution
    $\refdist{P}$. Abusing notation, for each $x \in \bits^n$, write $L(x) \in L(\refdist{P})$ for
    the leaf of $\refdist{P}$ containing $x$. For any distribution $Q$ over $\bits^n$, write
    $Q^{L(\refdist{P})}$ for its marginal over the leaves of $\refdist{P}$, and $Q \mid \ell$ for
    its conditional distribution on the leaf (subcube) $\ell \subseteq \bits^n$ -- set arbitrarily
    if $Q$ places zero mass on $\ell$. Write $Q(\ell)$ for the mass of $Q$ on $\ell$.

    Given sample access to input distribution $P$, the tester proceeds as follows:
    \begin{enumerate}
        \item Draw $O(\sqrt{s}/\epsilon^2)$ samples and use an identity tester (e.g.\
            \cite{valiant2017automatic}) to test the identity of input distribution
            $P^{L(\refdist{P})}$ to reference distribution $\refdist{P}^{L(\refdist{P})}$ with
            proximity parameter $\epsilon/10$ and failure probability $1/10$. Note that a sample
            from $P^{L(\refdist{P})}$ may be produced by drawing $x \sim P$ and outputting $L(x)$.
            If that tester rejects, \textbf{reject}.

        \item For each leaf $\ell \in L(\refdist{P})$ satisfying $\refdist{P}(\ell) \ge
            \frac{\epsilon}{10s}$, do the following. Test uniformity of input distribution $P \mid
            \ell$ against $\cT_{n,s}$ with proximity parameter $\epsilon/10$ and failure probability
            $\delta = \frac{1}{10s}$, which requires $n^{O(\log(s/\epsilon))}$ samples and time by
            \cref{thm:tl-dt,thm:reduction}. To draw a sample from $P \mid \ell$, repeatedly draw
            samples $x \sim P$ until one satisfies $L(x) = \ell$. If either the tester rejects, or
            more than $\frac{20s}{\epsilon} \cdot n^{C \cdot \log(s/\epsilon)}$ samples $x \sim P$
            are drawn before we have enough samples to run the uniformity tester, where $C > 0$ is a
            sufficiently large absolute constant, \textbf{reject}.

        \item If none of the previous steps rejected, \textbf{accept}.
    \end{enumerate}

    It is clear that the time and sample complexity is indeed $n^{O(\log(s/\epsilon))} \cdot
    \poly(s/\epsilon) = n^{O(\log(s/\epsilon))}$, so now we argue correctness.

    \textbf{Completeness.} Suppose $P = \refdist{P}$. The first step rejects with probability at
    most $1/10$ because $P^{L(\refdist{P})} = \refdist{P}^{L(\refdist{P})}$. We now consider the
    second step. Let $\ell \in L(\refdist{P})$ satisfy $\refdist{P}(\ell) \ge \frac{\epsilon}{10s}$.
    If the algorithm successfully draws enough samples $x \sim P \mid \ell$, then the uniformity
    tester rejects with probability at most $\delta = \frac{1}{10s}$ since $P \mid \ell =
    \refdist{P} \mid \ell$ is the uniform distribution over $\ell$ by the definition of decision
    tree distribution. On the other hand, by a Chernoff bound, the tester fails to draw enough
    samples with probability
    \[
        \exp\left(-\Theta\left(n^{C \cdot \log(s/\epsilon)}\right)\right)
        \le \frac{1}{10s} \,.
    \]
    Hence, by a union bound the tester rejects with probability at most $\frac{1}{10} + s \cdot 2
    \cdot \frac{1}{10s} = \frac{3}{10}$.

    \textbf{Soundness.} Suppose $\dist_{\cT_{n,s}}(P, \refdist{P}) > \epsilon$. Let $P_*$ be the
    distribution on $\bits^n$ obtained by first sampling a leaf $\ell \sim
    \refdist{P}^{L(\refdist{P})}$ and then producing $x \sim P \mid \ell$. By the triangle
    inequality, we have
    \[
        \epsilon < \dist_{\cT_{n,s}}(P, \refdist{P})
        \le \dist_{\cT_{n,s}}(P, P_*) + \dist_{\cT_{n,s}}(P_*, \refdist{P})
        \le \dist_\tv(P, P_*) + \dist_{\cT_{n,s}}(P_*, \refdist{P}) \,.
    \]
    Suppose $\dist_\tv(P, P_*) > \epsilon/2$. By the definition of $P_*$, we have that $P \mid \ell
    = P_* \mid \ell$ for each $\ell \in L(\refdist{P})$, and hence $\dist_\tv(P, P_*) =
    \dist_\tv(P^{L(\refdist{P})}, P_*^{L(\refdist{P})}) = \dist_\tv(P^{L(\refdist{P})},
    \refdist{P}^{L(\refdist{P})})$, the last equality since $P_*^{L(\refdist{P})} =
    \refdist{P}^{L(\refdist{P})}$ by construction. It follows that the tester rejects in the first
    step except with probability $1/10$, and we are done.

    Otherwise, we have $\dist_{\cT_{n,s}}(P_*, \refdist{P}) > \epsilon/2$. Hence, for some Boolean
    function $f \in \cT_{n,s}$ we have
    \begin{align*}
        &\frac{\epsilon}{2}
        < \Abs{\E_{P_*}[f] - \E_{\refdist{P}}[f]}
        = \Abs{
            \sum_{\ell \in L(\refdist{P})} \refdist{P}(\ell) \left(
            \E_{P \mid \ell}[f] - \E_{\refdist{P} \mid \ell}[f]
            \right)
        }
        \le \E_{\ell \sim \refdist{P}^{L(\refdist{P})}}\left[
            \Abs{\E_{P \mid \ell}[f] - \E_{\refdist{P} \mid \ell}[f]}
        \right] \\
        &= \E_{\ell \sim \refdist{P}^{L(\refdist{P})}}\left[
            \Abs{\E_{P \mid \ell}[f] - \E_{\refdist{P} \mid \ell}[f]}
            \cdot \Ind[\refdist{P}(\ell) \ge \frac{\epsilon}{10s}]
        \right]
    + \E_{\ell \sim \refdist{P}^{L(\refdist{P})}}\left[
        \Abs{\E_{P \mid \ell}[f] - \E_{\refdist{P} \mid \ell}[f]}
        \cdot \Ind[\refdist{P}(\ell) < \frac{\epsilon}{10s}]
    \right] \\
    &\le \E_{\ell \sim \refdist{P}^{L(\refdist{P})}}\left[
        \Abs{\E_{P \mid \ell}[f] - \E_{\refdist{P} \mid \ell}[f]}
        \cdot \Ind[\refdist{P}(\ell) \ge \frac{\epsilon}{10s}]
        \right]
    + \Pr_{\ell \sim \refdist{P}^{L(\refdist{P})}}[\refdist{P}(\ell) < \frac{\epsilon}{10s}] \,.
    \end{align*}
    The second term in the last line is at most $s \cdot \frac{\epsilon}{10s} =
    \frac{\epsilon}{10}$. Therefore there exists $\ell \in L(\refdist{P})$ for which
    \[
        \Abs{\E_{P \mid \ell}[f] - \E_{\refdist{P} \mid \ell}[f]}
            \cdot \Ind\left[\refdist{P}(\ell) \ge \frac{\epsilon}{10s}\right]
        > \frac{\epsilon}{4} \,,
    \]
    which implies that $\refdist{P}(\ell) \ge \frac{\epsilon}{10s}$ and $\dist_{\cT_{n,s}}(P \mid
    \ell, \Unif(\ell)) > \frac{\epsilon}{4}$. Therefore the iteration corresponding to such $\ell$
    in the second step of the algorithm either rejects due to failing to obtain enough samples, in
    which case we are done, or it obtains enough samples to run the uniformity tester against
    $\cT_{n,s}$ on $P \mid \ell$, which rejects except with probability $\frac{1}{10s} \le
    \frac{1}{10}$.
\end{proof}

\paragraph*{TV identity testing for decision tree distributions.} In the setting of testing
structured distributions, where the input distribution is promised to satisfy $P \in \bbT_{n,s}$,
the result above implies a computationally efficient identity tester against TV distance -- hence
extending \cref{cor:uniformity-decision-trees} from uniformity to identity testing:

\begin{corollary}
    There exists an $\epsilon$-identity tester for distribution class $\bbT_{n,s}$ with time and
    sample complexity $n^{O(\log(s/\epsilon))}$.
\end{corollary}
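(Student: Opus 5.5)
The plan is to combine \cref{lemma:identity-dt-dist-against-dt} with the Scheff\'{e}-set reduction of \cref{lemma:f-identity-testing-implies-structured-testing}. There is one mismatch in size parameters. The Scheff\'{e} set of two size-$s$ decision tree distributions is computed by a decision tree of size at most $s^2$; this is the argument of \cref{lem:it-for-dt}, which carries over verbatim from $[0,1]^n$ to $\bits^n$ by intersecting the leaves of the two trees. So the Scheff\'{e} function class of $\bbT_{n,s}$ is contained in $\cT_{n,s^2}$. We therefore need an identity tester to $\refdist{P}$ against $\cT_{n,s^2}$, not against $\cT_{n,s}$.

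First, observe that $\bbT_{n,s} \subseteq \bbT_{n,s^2}$, so any reference $\refdist{P} \in \bbT_{n,s}$ is a valid input to \cref{lemma:identity-dt-dist-against-dt} instantiated with size parameter $s' = \max(s, s^2) = s^2$. More precisely, run the tester of \cref{lemma:identity-dt-dist-against-dt} as written, but in its second step use the uniformity tester against $\cT_{n,s^2}$. This uniformity tester is obtained from \cref{thm:tl-dt} applied to size-$s^2$ trees together with \cref{thm:reduction}. The first step, the identity test on leaves, depends only on the $\le s$ leaves of $\refdist{P}$, and the soundness calculation only used that the distinguisher class is a subset of Boolean functions and that there are at most $s$ leaves. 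So the same proof gives an $\epsilon$-identity tester to $\refdist{P}$ against $\cT_{n,s^2}$. Its cost is $n^{O(\log(s^2/\epsilon))} = n^{O(\log(s/\epsilon))}$ in time and samples, with success probability $7/10$, which can be amplified by majority vote if desired.

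Second, for $P, \refdist{P} \in \bbT_{n,s}$, \cref{remark:eq} gives $\dist_\tv(P,\refdist{P}) = \dist_{f_{\refdist{P},P}}(P,\refdist{P}) \le \dist_{\cT_{n,s^2}}(P,\refdist{P})$, since $f_{\refdist{P},P} \in \cT_{n,s^2}$. Completeness is inherited directly. For soundness, if $\dist_\tv > \epsilon$ then the fooling distance exceeds $\epsilon$ and the tester rejects. This is exactly the second clause of \cref{lemma:f-identity-testing-implies-structured-testing}, which only requires the tester to work for references in $\bbT_{n,s}$.

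The main point needing care is the first step: that \cref{lemma:identity-dt-dist-against-dt} genuinely decouples the leaf-count of $\refdist{P}$ from the size of the distinguisher class. The leaf threshold $\epsilon/(10s)$ and the union bound over the $s$ leaves are governed by $\refdist{P}$'s tree, while the distinguisher size enters only through the per-leaf uniformity tester. The remaining bookkeeping is that the Scheff\'{e} trees have Boolean leaves and size $\le s^2$, and that $\log(s^2/\epsilon) = O(\log(s/\epsilon))$.
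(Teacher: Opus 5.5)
Your proposal is correct and matches the paper's proof. Like the paper, you use \cref{lemma:f-identity-testing-implies-structured-testing} to reduce to identity testing against $\cT_{n,s^2}$, which contains the Scheff\'{e} class, and you get that tester from \cref{lemma:identity-dt-dist-against-dt} with size parameter $s^2$, using $\bbT_{n,s}\subseteq\bbT_{n,s^2}$. Your refinement of keeping the leaf threshold tied to $s$ and enlarging only the distinguisher class is valid but unnecessary, since $\log(s^2/\epsilon)=O(\log(s/\epsilon))$ either way.
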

\begin{proof}
    By \cref{lemma:f-identity-testing-implies-structured-testing}, it suffices to give an algorithm
    which, given the explicit description of any $\refdist{P} \in \bbT_{n,s}$, is an
    $\epsilon$-identity tester to $\refdist{P}$ against $\cT_{n,s^2}$ (recall this class contains
    the Scheff\'{e} function class for $\bbT_{n,s}$). We get such an algorithm by applying
    \cref{lemma:identity-dt-dist-against-dt} with size parameter $s^2$.
\end{proof}

\section*{Acknowledgments}

We thank Arnab Bhattacharyya, Eric Blais, and Adam Smith for useful discussions and suggestions.
We also thank the anonymous referees for many valuable comments and suggestions.
This project was initiated while R.F.\ was visiting Boston University hosted by Sofya Raskhodnikova,
who also provided feedback on part of this work as a graduate course project by R.D.

R.F.\ is supported by an NSERC Postdoctoral Fellowship and by NSF grant CCF-2106429. M.B.\ and R.D.\ are supported by NSF grant CNS-2046425.

\bibliographystyle{alpha}
\bibliography{references}

\newcommand{\etalchar}[1]{$^{#1}$}
\begin{thebibliography}{HJKRR18}

\bibitem[AAC{\etalchar{+}}23]{DBLP:conf/icml/AamandACINS23}
Anders Aamand, Alexandr Andoni, Justin~Y. Chen, Piotr Indyk, Shyam Narayanan,
  and Sandeep Silwal.
\newblock Data structures for density estimation.
\newblock In Andreas Krause, Emma Brunskill, Kyunghyun Cho, Barbara Engelhardt,
  Sivan Sabato, and Jonathan Scarlett, editors, {\em International Conference
  on Machine Learning, {ICML} 2023, 23-29 July 2023, Honolulu, Hawaii, {USA}},
  Proceedings of Machine Learning Research, pages 1--18. {PMLR}, 2023.

\bibitem[AB02]{AB02}
Martin Anthony and Peter~L. Bartlett.
\newblock {\em Neural Network Learning - Theoretical Foundations}.
\newblock Cambridge University Press, 2002.

\bibitem[ACB17]{arjovsky2017wasserstein}
Martin Arjovsky, Soumith Chintala, and L{\'e}on Bottou.
\newblock Wasserstein generative adversarial networks.
\newblock In {\em International conference on machine learning}, pages
  214--223. Pmlr, 2017.

\bibitem[ACS10]{AdamaszekCS10}
Michal Adamaszek, Artur Czumaj, and Christian Sohler.
\newblock Testing monotone continuous distributions on high-dimensional real
  cubes.
\newblock In Moses Charikar, editor, {\em Proceedings of the Twenty-First
  Annual {ACM-SIAM} Symposium on Discrete Algorithms, {SODA} 2010, Austin,
  Texas, USA, January 17-19, 2010}, pages 56--65. {SIAM}, 2010.

\bibitem[AEMM22]{amit2022integral}
Ron Amit, Baruch Epstein, Shay Moran, and Ron Meir.
\newblock Integral probability metrics pac-bayes bounds.
\newblock {\em Advances in Neural Information Processing Systems},
  35:3123--3136, 2022.

\bibitem[AFL24]{AdarFL24a}
Tomer Adar, Eldar Fischer, and Amit Levi.
\newblock Improved bounds for high-dimensional equivalence and product testing
  using subcube queries.
\newblock In Amit Kumar and Noga Ron{-}Zewi, editors, {\em Approximation,
  Randomization, and Combinatorial Optimization. Algorithms and Techniques,
  {APPROX/RANDOM} 2024, August 28-30, 2024, London School of Economics, London,
  {UK}}, volume 317 of {\em LIPIcs}, pages 48:1--48:21. Schloss Dagstuhl -
  Leibniz-Zentrum f{\"{u}}r Informatik, 2024.

\bibitem[Baz09]{Baz09}
Louay~MJ Bazzi.
\newblock Polylogarithmic independence can fool dnf formulas.
\newblock {\em SIAM Journal on Computing}, 38(6):2220--2272, 2009.

\bibitem[BBL04]{Bousquet2004}
Olivier Bousquet, St{\'e}phane Boucheron, and G{\'a}bor Lugosi.
\newblock Introduction to statistical learning theory.
\newblock In Olivier Bousquet, Ulrike von Luxburg, and Gunnar R{\"a}tsch,
  editors, {\em Advanced Lectures on Machine Learning: ML Summer Schools 2003,
  Canberra, Australia, February 2 - 14, 2003, T{\"u}bingen, Germany, August 4 -
  16, 2003, Revised Lectures}, pages 169--207. Springer Berlin Heidelberg,
  Berlin, Heidelberg, 2004.

\bibitem[BC18]{Bhattacharyya_2018}
Rishiraj Bhattacharyya and Sourav Chakraborty.
\newblock Property testing of joint distributions using conditional samples.
\newblock {\em ACM Transactions on Computation Theory}, 10(4):1–20, August
  2018.

\bibitem[BCG19]{BlaisCG19}
Eric Blais, Cl{\'{e}}ment~L. Canonne, and Tom Gur.
\newblock Distribution testing lower bounds via reductions from communication
  complexity.
\newblock {\em {ACM} Trans. Comput. Theory}, 11(2):6:1--6:37, 2019.

\bibitem[BCSV25]{BlancaCSV25}
Antonio Blanca, Zongchen Chen, Daniel Stefankovic, and Eric Vigoda.
\newblock Complexity of high-dimensional identity testing with coordinate
  conditional sampling.
\newblock {\em {ACM} Trans. Algorithms}, 21(1):7:1--7:58, 2025.

\bibitem[BFR{\etalchar{+}}00]{BatuFRSW00}
Tugkan Batu, Lance Fortnow, Ronitt Rubinfeld, Warren~D. Smith, and Patrick
  White.
\newblock Testing that distributions are close.
\newblock In {\em 41st Annual Symposium on Foundations of Computer Science,
  {FOCS} 2000, Redondo Beach, California, USA, November 12-14, 2000}, pages
  259--269. {IEEE} Computer Society, 2000.

\bibitem[BKM19]{DBLP:conf/colt/BousquetKM19}
Olivier Bousquet, Daniel Kane, and Shay Moran.
\newblock The optimal approximation factor in density estimation.
\newblock In Alina Beygelzimer and Daniel Hsu, editors, {\em Conference on
  Learning Theory, {COLT} 2019, 25-28 June 2019, Phoenix, AZ, {USA}},
  Proceedings of Machine Learning Research, pages 318--341. {PMLR}, 2019.

\bibitem[BLMT23]{BlancLMT23}
Guy Blanc, Jane Lange, Ali Malik, and Li{-}Yang Tan.
\newblock Lifting uniform learners via distributional decomposition.
\newblock In Barna Saha and Rocco~A. Servedio, editors, {\em Proceedings of the
  55th Annual {ACM} Symposium on Theory of Computing, {STOC} 2023, Orlando, FL,
  USA, June 20-23, 2023}, pages 1755--1767. {ACM}, 2023.

\bibitem[BLQT22]{BlancLQT22}
Guy Blanc, Jane Lange, Mingda Qiao, and Li{-}Yang Tan.
\newblock Properly learning decision trees in almost polynomial time.
\newblock {\em J. {ACM}}, 69(6):39:1--39:19, 2022.

\bibitem[BNNR11]{BaNNR11}
Khanh~Do Ba, Huy~L. Nguyen, Huy~N. Nguyen, and Ronitt Rubinfeld.
\newblock Sublinear time algorithms for earth mover's distance.
\newblock {\em Theory Comput. Syst.}, 48(2):428--442, 2011.

\bibitem[Can20]{gs009}
Cl{\'{e}}ment~L. Canonne.
\newblock {\em A Survey on Distribution Testing: Your Data is Big. But is it
  Blue?}
\newblock Number~9 in Graduate Surveys. Theory of Computing Library, 2020.

\bibitem[Can22]{Canonne22}
Cl{\'{e}}ment~L. Canonne.
\newblock Topics and techniques in distribution testing: {A} biased but
  representative sample.
\newblock {\em Found. Trends Commun. Inf. Theory}, 19(6):1032--1198, 2022.

\bibitem[CCK{\etalchar{+}}21]{CanonneCKLW21}
Cl{\'{e}}ment~L. Canonne, Xi~Chen, Gautam Kamath, Amit Levi, and Erik
  Waingarten.
\newblock Random restrictions of high dimensional distributions and uniformity
  testing with subcube conditioning.
\newblock In D{\'{a}}niel Marx, editor, {\em Proceedings of the 2021 {ACM-SIAM}
  Symposium on Discrete Algorithms, {SODA} 2021, Virtual Conference, January 10
  - 13, 2021}, pages 321--336. {SIAM}, 2021.

\bibitem[CCR{\etalchar{+}}25]{chakrabarty2025monotonicitytestinghighdimensionaldistributions}
Deeparnab Chakrabarty, Xi~Chen, Simeon Ristic, C.~Seshadhri, and Erik
  Waingarten.
\newblock Monotonicity testing of high-dimensional distributions with subcube
  conditioning.
\newblock In Michal Kouck{\'{y}} and Nikhil Bansal, editors, {\em Proceedings
  of the 57th Annual {ACM} Symposium on Theory of Computing, {STOC} 2025,
  Prague, Czechia, June 23-27, 2025}, pages 1019--1030. {ACM}, 2025.

\bibitem[CDKS20]{CanonneDKS20}
Cl{\'{e}}ment~L. Canonne, Ilias Diakonikolas, Daniel~M. Kane, and Alistair
  Stewart.
\newblock Testing bayesian networks.
\newblock {\em {IEEE} Trans. Inf. Theory}, 66(5):3132--3170, 2020.

\bibitem[CDV24]{casacuberta2024complexity}
S{\'{\i}}lvia Casacuberta, Cynthia Dwork, and Salil~P. Vadhan.
\newblock Complexity-theoretic implications of multicalibration.
\newblock In Bojan Mohar, Igor Shinkar, and Ryan O'Donnell, editors, {\em
  Proceedings of the 56th Annual {ACM} Symposium on Theory of Computing, {STOC}
  2024, Vancouver, BC, Canada, June 24-28, 2024}, pages 1071--1082. {ACM},
  2024.

\bibitem[CG18]{chiesa2018proofs}
Alessandro Chiesa and Tom Gur.
\newblock Proofs of proximity for distribution testing.
\newblock In Anna~R. Karlin, editor, {\em 9th Innovations in Theoretical
  Computer Science Conference, {ITCS} 2018, Cambridge, MA, USA, January 11-14,
  2018}, volume~94 of {\em LIPIcs}, pages 53:1--53:14. Schloss Dagstuhl -
  Leibniz-Zentrum f{\"{u}}r Informatik, 2018.

\bibitem[CGKR25]{casacuberta2025global}
S{\'{\i}}lvia Casacuberta, Parikshit Gopalan, Varun Kanade, and Omer Reingold.
\newblock How global calibration strengthens multiaccuracy.
\newblock In {\em 66th {IEEE} Annual Symposium on Foundations of Computer
  Science, {FOCS} 2025, Sydney, Australia, December 14-17, 2025}, pages
  1198--1227. {IEEE}, 2025.

\bibitem[CJLW21]{ChenJLW21}
Xi~Chen, Rajesh Jayaram, Amit Levi, and Erik Waingarten.
\newblock Learning and testing junta distributions with sub cube conditioning.
\newblock In Mikhail Belkin and Samory Kpotufe, editors, {\em Conference on
  Learning Theory, {COLT} 2021, 15-19 August 2021, Boulder, Colorado, {USA}},
  volume 134 of {\em Proceedings of Machine Learning Research}, pages
  1060--1113. {PMLR}, 2021.

\bibitem[CKK{\etalchar{+}}24]{chandrasekaran2024efficient}
Gautam Chandrasekaran, Adam Klivans, Vasilis Kontonis, Konstantinos
  Stavropoulos, and Arsen Vasilyan.
\newblock Efficient discrepancy testing for learning with distribution shift.
\newblock {\em Advances in Neural Information Processing Systems},
  37:137263--137308, 2024.

\bibitem[CM24]{ChenM24}
Xi~Chen and Cassandra Marcussen.
\newblock Uniformity testing over hypergrids with subcube conditioning.
\newblock In David~P. Woodruff, editor, {\em Proceedings of the 2024 {ACM-SIAM}
  Symposium on Discrete Algorithms, {SODA} 2024, Alexandria, VA, USA, January
  7-10, 2024}, pages 4338--4370. {SIAM}, 2024.

\bibitem[CRS15]{canonne2015testingprobabilitydistributionsusing}
Cl{\'{e}}ment~L. Canonne, Dana Ron, and Rocco~A. Servedio.
\newblock Testing probability distributions using conditional samples.
\newblock {\em {SIAM} J. Comput.}, 44(3):540--616, 2015.

\bibitem[Dan16]{daniely2016complexity}
Amit Daniely.
\newblock Complexity theoretic limitations on learning halfspaces.
\newblock In Daniel Wichs and Yishay Mansour, editors, {\em Proceedings of the
  48th Annual {ACM} {SIGACT} Symposium on Theory of Computing, {STOC} 2016,
  Cambridge, MA, USA, June 18-21, 2016}, pages 105--117. {ACM}, 2016.

\bibitem[DDK19]{DaskalakisD019}
Constantinos Daskalakis, Nishanth Dikkala, and Gautam Kamath.
\newblock Testing ising models.
\newblock {\em {IEEE} Trans. Inf. Theory}, 65(11):6829--6852, 2019.

\bibitem[DDS{\etalchar{+}}13]{daskalakis2013testing}
Constantinos Daskalakis, Ilias Diakonikolas, Rocco~A. Servedio, Gregory
  Valiant, and Paul Valiant.
\newblock Testing \emph{k}-modal distributions: Optimal algorithms via
  reductions.
\newblock In Sanjeev Khanna, editor, {\em Proceedings of the Twenty-Fourth
  Annual {ACM-SIAM} Symposium on Discrete Algorithms, {SODA} 2013, New Orleans,
  Louisiana, USA, January 6-8, 2013}, pages 1833--1852. {SIAM}, 2013.

\bibitem[DKK{\etalchar{+}}21]{DiakonikolasKKT21}
Ilias Diakonikolas, Daniel~M. Kane, Vasilis Kontonis, Christos Tzamos, and
  Nikos Zarifis.
\newblock Agnostic proper learning of halfspaces under gaussian marginals.
\newblock In Mikhail Belkin and Samory Kpotufe, editors, {\em Conference on
  Learning Theory, {COLT} 2021, 15-19 August 2021, Boulder, Colorado, {USA}},
  volume 134 of {\em Proceedings of Machine Learning Research}, pages
  1522--1551. {PMLR}, 2021.

\bibitem[DKK{\etalchar{+}}23]{diakonikolas2023efficient}
Ilias Diakonikolas, Daniel Kane, Vasilis Kontonis, Sihan Liu, and Nikos
  Zarifis.
\newblock Efficient testable learning of halfspaces with adversarial label
  noise.
\newblock {\em Advances in Neural Information Processing Systems},
  36:39470--39490, 2023.

\bibitem[DKK{\etalchar{+}}24]{DiakonikolasKKT24}
Ilias Diakonikolas, Daniel~M. Kane, Vasilis Kontonis, Christos Tzamos, and
  Nikos Zarifis.
\newblock Agnostically learning multi-index models with queries.
\newblock In {\em 65th {IEEE} Annual Symposium on Foundations of Computer
  Science, {FOCS} 2024, Chicago, IL, USA, October 27-30, 2024}, pages
  1931--1952. {IEEE}, 2024.

\bibitem[DKL24]{diakonikolas2023testingclosenessmultivariatedistributions}
Ilias Diakonikolas, Daniel~M. Kane, and Sihan Liu.
\newblock Testing closeness of multivariate distributions via ramsey theory.
\newblock In Bojan Mohar, Igor Shinkar, and Ryan O'Donnell, editors, {\em
  Proceedings of the 56th Annual {ACM} Symposium on Theory of Computing, {STOC}
  2024, Vancouver, BC, Canada, June 24-28, 2024}, pages 340--347. {ACM}, 2024.

\bibitem[DKN15a]{DiakonikolasKN15}
Ilias Diakonikolas, Daniel~M. Kane, and Vladimir Nikishkin.
\newblock Optimal algorithms and lower bounds for testing closeness of
  structured distributions.
\newblock In Venkatesan Guruswami, editor, {\em {IEEE} 56th Annual Symposium on
  Foundations of Computer Science, {FOCS} 2015, Berkeley, CA, USA, 17-20
  October, 2015}, pages 1183--1202. {IEEE} Computer Society, 2015.

\bibitem[DKN15b]{DKN14}
Ilias Diakonikolas, Daniel~M. Kane, and Vladimir Nikishkin.
\newblock Testing identity of structured distributions.
\newblock In Piotr Indyk, editor, {\em Proceedings of the Twenty-Sixth Annual
  {ACM-SIAM} Symposium on Discrete Algorithms, {SODA} 2015, San Diego, CA, USA,
  January 4-6, 2015}, pages 1841--1854. {SIAM}, 2015.

\bibitem[DKN17]{DiakonikolasKN17}
Ilias Diakonikolas, Daniel~M. Kane, and Vladimir Nikishkin.
\newblock Near-optimal closeness testing of discrete histogram distributions.
\newblock In Ioannis Chatzigiannakis, Piotr Indyk, Fabian Kuhn, and Anca
  Muscholl, editors, {\em 44th International Colloquium on Automata, Languages,
  and Programming, {ICALP} 2017, July 10-14, 2017, Warsaw, Poland}, volume~80
  of {\em LIPIcs}, pages 8:1--8:15. Schloss Dagstuhl - Leibniz-Zentrum
  f{\"{u}}r Informatik, 2017.

\bibitem[DKP19]{DiakonikolasKP19}
Ilias Diakonikolas, Daniel~M. Kane, and John Peebles.
\newblock Testing identity of multidimensional histograms.
\newblock In Alina Beygelzimer and Daniel Hsu, editors, {\em Conference on
  Learning Theory, {COLT} 2019, 25-28 June 2019, Phoenix, AZ, {USA}}, volume~99
  of {\em Proceedings of Machine Learning Research}, pages 1107--1131. {PMLR},
  2019.

\bibitem[DKW18]{DaskalakisKW18}
Constantinos Daskalakis, Gautam Kamath, and John Wright.
\newblock Which distribution distances are sublinearly testable?
\newblock In Artur Czumaj, editor, {\em Proceedings of the Twenty-Ninth Annual
  {ACM-SIAM} Symposium on Discrete Algorithms, {SODA} 2018, New Orleans, LA,
  USA, January 7-10, 2018}, pages 2747--2764. {SIAM}, 2018.

\bibitem[DL01]{devroye2001combinatorial}
Luc Devroye and G{\'a}bor Lugosi.
\newblock {\em Combinatorial methods in density estimation}, volume~1.
\newblock Springer, 2001.

\bibitem[DLLT23]{dwork2023pseudorandomness}
Cynthia Dwork, Daniel Lee, Huijia Lin, and Pranay Tankala.
\newblock From pseudorandomness to multi-group fairness and back.
\newblock In {\em The Thirty Sixth Annual Conference on Learning Theory}, pages
  3566--3614. PMLR, 2023.

\bibitem[Fel09]{Feldman09}
Vitaly Feldman.
\newblock On the power of membership queries in agnostic learning.
\newblock {\em J. Mach. Learn. Res.}, 10:163--182, 2009.

\bibitem[FGKP06]{feldman2006new}
Vitaly Feldman, Parikshit Gopalan, Subhash Khot, and Ashok~Kumar Ponnuswami.
\newblock New results for learning noisy parities and halfspaces.
\newblock In {\em 47th Annual {IEEE} Symposium on Foundations of Computer
  Science, {FOCS} 2006, Berkeley, California, USA, October 21-24, 2006,
  Proceedings}, pages 563--574. {IEEE} Computer Society, 2006.

\bibitem[FH24]{FH24}
Renato {Ferreira Pinto Jr} and Nathaniel Harms.
\newblock Distribution testing with a confused collector.
\newblock In Venkatesan Guruswami, editor, {\em 15th Innovations in Theoretical
  Computer Science Conference, {ITCS} 2024, January 30 to February 2, 2024,
  Berkeley, CA, {USA}}, volume 287 of {\em LIPIcs}, pages 47:1--47:14. Schloss
  Dagstuhl - Leibniz-Zentrum f{\"{u}}r Informatik, 2024.

\bibitem[FOS08]{FeldmanOS08}
Jon Feldman, Ryan O'Donnell, and Rocco~A. Servedio.
\newblock Learning mixtures of product distributions over discrete domains.
\newblock {\em {SIAM} J. Comput.}, 37(5):1536--1564, 2008.

\bibitem[GJK{\etalchar{+}}24]{GurJKRSS24}
Tom Gur, Mohammad~Mahdi Jahanara, Mohammad~Mahdi Khodabandeh, Ninad Rajgopal,
  Bahar Salamatian, and Igor Shinkar.
\newblock On the power of interactive proofs for learning.
\newblock In Bojan Mohar, Igor Shinkar, and Ryan O'Donnell, editors, {\em
  Proceedings of the 56th Annual {ACM} Symposium on Theory of Computing, {STOC}
  2024, Vancouver, BC, Canada, June 24-28, 2024}, pages 1063--1070. {ACM},
  2024.

\bibitem[GKK23]{GollakotaKK23}
Aravind Gollakota, Adam~R. Klivans, and Pravesh~K. Kothari.
\newblock A moment-matching approach to testable learning and a new
  characterization of rademacher complexity.
\newblock In Barna Saha and Rocco~A. Servedio, editors, {\em Proceedings of the
  55th Annual {ACM} Symposium on Theory of Computing, {STOC} 2023, Orlando, FL,
  USA, June 20-23, 2023}, pages 1657--1670. {ACM}, 2023.

\bibitem[GKR{\etalchar{+}}22]{gopalan2021omnipredictors}
Parikshit Gopalan, Adam~Tauman Kalai, Omer Reingold, Vatsal Sharan, and Udi
  Wieder.
\newblock Omnipredictors.
\newblock In Mark Braverman, editor, {\em 13th Innovations in Theoretical
  Computer Science Conference, {ITCS} 2022, Berkeley, CA, USA, January 31 -
  February 3, 2022}, LIPIcs, pages 79:1--79:21. Schloss Dagstuhl -
  Leibniz-Zentrum f{\"{u}}r Informatik, 2022.

\bibitem[GKSV24]{gollakota2023efficient}
Aravind Gollakota, Adam~R. Klivans, Konstantinos Stavropoulos, and Arsen
  Vasilyan.
\newblock An efficient tester-learner for halfspaces.
\newblock In {\em The Twelfth International Conference on Learning
  Representations, {ICLR} 2024, Vienna, Austria, May 7-11, 2024}.
  OpenReview.net, 2024.

\bibitem[GKSV25]{goel2025testing}
Surbhi Goel, Adam~R Klivans, Konstantinos Stavropoulos, and Arsen Vasilyan.
\newblock Testing noise assumptions of learning algorithms.
\newblock {\em arXiv preprint arXiv:2501.09189}, 2025.

\bibitem[GR00]{GR00}
Oded Goldreich and Dana Ron.
\newblock On testing expansion in bounded-degree graphs.
\newblock {\em Electron. Colloquium Comput. Complex.}, {TR00-020}, 2000.

\bibitem[GR09]{guruswami2009hardness}
Venkatesan Guruswami and Prasad Raghavendra.
\newblock Hardness of learning halfspaces with noise.
\newblock {\em {SIAM} J. Comput.}, 39(2):742--765, 2009.

\bibitem[GRSY21]{GoldwasserRSY21}
Shafi Goldwasser, Guy~N. Rothblum, Jonathan Shafer, and Amir Yehudayoff.
\newblock Interactive proofs for verifying machine learning.
\newblock In James~R. Lee, editor, {\em 12th Innovations in Theoretical
  Computer Science Conference, {ITCS} 2021, January 6-8, 2021, Virtual
  Conference}, volume 185 of {\em LIPIcs}, pages 41:1--41:19. Schloss Dagstuhl
  - Leibniz-Zentrum f{\"{u}}r Informatik, 2021.

\bibitem[GSSV24]{GoelSSV24}
Surbhi Goel, Abhishek Shetty, Konstantinos Stavropoulos, and Arsen Vasilyan.
\newblock Tolerant algorithms for learning with arbitrary covariate shift.
\newblock In Amir Globersons, Lester Mackey, Danielle Belgrave, Angela Fan,
  Ulrich Paquet, Jakub~M. Tomczak, and Cheng Zhang, editors, {\em Advances in
  Neural Information Processing Systems 38: Annual Conference on Neural
  Information Processing Systems 2024, NeurIPS 2024, Vancouver, BC, Canada,
  December 10 - 15, 2024}, 2024.

\bibitem[HJKRR18]{hebert2018multicalibration}
Ursula H{\'e}bert-Johnson, Michael Kim, Omer Reingold, and Guy Rothblum.
\newblock Multicalibration: Calibration for the (computationally-identifiable)
  masses.
\newblock In {\em International Conference on Machine Learning}, pages
  1939--1948. PMLR, 2018.

\bibitem[IT03]{IT03}
Piotr Indyk and Nitin Thaper.
\newblock Fast image retrieval via embeddings.
\newblock In {\em International Workshop on Statistical and Computational
  Theories of Vision (ICCV)}, 2003.

\bibitem[KKMS08]{kalai2008agnostically}
Adam~Tauman Kalai, Adam~R. Klivans, Yishay Mansour, and Rocco~A. Servedio.
\newblock Agnostically learning halfspaces.
\newblock {\em {SIAM} J. Comput.}, 37(6):1777--1805, 2008.

\bibitem[KL18]{kothari2018improper}
Pravesh~K. Kothari and Roi Livni.
\newblock Improper learning by refuting.
\newblock In Anna~R. Karlin, editor, {\em 9th Innovations in Theoretical
  Computer Science Conference, {ITCS} 2018, Cambridge, MA, USA, January 11-14,
  2018}, LIPIcs, pages 55:1--55:10. Schloss Dagstuhl - Leibniz-Zentrum
  f{\"{u}}r Informatik, 2018.

\bibitem[KPJ{\etalchar{+}}23]{kong2023covariate}
Insung Kong, Yuha Park, Joonhyuk Jung, Kwonsang Lee, and Yongdai Kim.
\newblock Covariate balancing using the integral probability metric for causal
  inference.
\newblock In {\em International Conference on Machine Learning}, pages
  17430--17461. PMLR, 2023.

\bibitem[KSV24]{KlivansSV24}
Adam~R. Klivans, Konstantinos Stavropoulos, and Arsen Vasilyan.
\newblock Testable learning with distribution shift.
\newblock In Shipra Agrawal and Aaron Roth, editors, {\em The Thirty Seventh
  Annual Conference on Learning Theory, June 30 - July 3, 2023, Edmonton,
  Canada}, volume 247 of {\em Proceedings of Machine Learning Research}, pages
  2887--2943. {PMLR}, 2024.

\bibitem[LLM20]{leboeuf2020decisiontreespartitioningmachines}
Jean{-}Samuel Leboeuf, Fr{\'{e}}d{\'{e}}ric Leblanc, and Mario Marchand.
\newblock Decision trees as partitioning machines to characterize their
  generalization properties.
\newblock In Hugo Larochelle, Marc'Aurelio Ranzato, Raia Hadsell,
  Maria{-}Florina Balcan, and Hsuan{-}Tien Lin, editors, {\em Advances in
  Neural Information Processing Systems 33: Annual Conference on Neural
  Information Processing Systems 2020, NeurIPS 2020, December 6-12, 2020,
  virtual}, 2020.

\bibitem[LQ26]{lange2026limitations}
Jane Lange and Mingda Qiao.
\newblock Limitations of membership queries in testable learning.
\newblock In Shubhangi Saraf, editor, {\em 17th Innovations in Theoretical
  Computer Science Conference, {ITCS} 2026, Bocconi University, Milan, Italy,
  January 27-30, 2026}, LIPIcs, pages 91:1--91:23. Schloss Dagstuhl -
  Leibniz-Zentrum f{\"{u}}r Informatik, 2026.

\bibitem[MMR09]{mansour2009domain}
Yishay Mansour, Mehryar Mohri, and Afshin Rostamizadeh.
\newblock Domain adaptation: Learning bounds and algorithms.
\newblock In {\em {COLT} 2009 - The 22nd Conference on Learning Theory,
  Montreal, Quebec, Canada, June 18-21, 2009}, 2009.

\bibitem[MPV25]{marcussen2025characterizing}
Cassandra Marcussen, Aaron Putterman, and Salil~P. Vadhan.
\newblock Characterizing the distinguishability of product distributions
  through multicalibration.
\newblock In Srikanth Srinivasan, editor, {\em 40th Computational Complexity
  Conference, {CCC} 2025, Toronto, Canada, August 5-8, 2025}, LIPIcs, pages
  19:1--19:19. Schloss Dagstuhl - Leibniz-Zentrum f{\"{u}}r Informatik, 2025.

\bibitem[MS08]{DBLP:conf/colt/MahalanabisS08}
Satyaki Mahalanabis and Daniel Stefankovic.
\newblock Density estimation in linear time.
\newblock In Rocco~A. Servedio and Tong Zhang, editors, {\em 21st Annual
  Conference on Learning Theory - {COLT} 2008, Helsinki, Finland, July 9-12,
  2008}, pages 503--512. Omnipress, 2008.

\bibitem[MS23]{MutrejaS23}
Saachi Mutreja and Jonathan Shafer.
\newblock {PAC} verification of statistical algorithms.
\newblock In Gergely Neu and Lorenzo Rosasco, editors, {\em The Thirty Sixth
  Annual Conference on Learning Theory, {COLT} 2023, 12-15 July 2023,
  Bangalore, India}, volume 195 of {\em Proceedings of Machine Learning
  Research}, pages 5021--5043. {PMLR}, 2023.

\bibitem[M{\"u}l97]{muller1997integral}
Alfred M{\"u}ller.
\newblock Integral probability metrics and their generating classes of
  functions.
\newblock {\em Advances in applied probability}, 29(2):429--443, 1997.

\bibitem[Pan08]{paninski2008coincidence}
Liam Paninski.
\newblock A coincidence-based test for uniformity given very sparsely sampled
  discrete data.
\newblock {\em {IEEE} Trans. Inf. Theory}, 54(10):4750--4755, 2008.

\bibitem[RS09]{RubinfeldS09}
Ronitt Rubinfeld and Rocco~A. Servedio.
\newblock Testing monotone high-dimensional distributions.
\newblock {\em Random Struct. Algorithms}, 34(1):24--44, 2009.

\bibitem[RV23]{rubinfeld2022testingdistributionalassumptionslearning}
Ronitt Rubinfeld and Arsen Vasilyan.
\newblock Testing distributional assumptions of learning algorithms.
\newblock In Barna Saha and Rocco~A. Servedio, editors, {\em Proceedings of the
  55th Annual {ACM} Symposium on Theory of Computing, {STOC} 2023, Orlando, FL,
  USA, June 20-23, 2023}, pages 1643--1656. {ACM}, 2023.

\bibitem[Sch47]{scheffe1947useful}
Henry Scheff{\'e}.
\newblock A useful convergence theorem for probability distributions.
\newblock {\em The Annals of Mathematical Statistics}, 18(3):434--438, 1947.

\bibitem[SFG{\etalchar{+}}12]{sriperumbudur2012empirical}
Bharath~K. Sriperumbudur, Kenji Fukumizu, Arthur Gretton, Bernhard
  Sch{\"o}lkopf, and Gert R.~G. Lanckriet.
\newblock {On the empirical estimation of integral probability metrics}.
\newblock {\em Electronic Journal of Statistics}, 6:1550--1599, 2012.

\bibitem[STW24]{slot2024testably}
Lucas Slot, Stefan Tiegel, and Manuel Wiedmer.
\newblock Testably learning polynomial threshold functions.
\newblock {\em Advances in Neural Information Processing Systems},
  37:3781--3831, 2024.

\bibitem[TTV09]{trevisan2009regularity}
Luca Trevisan, Madhur Tulsiani, and Salil~P. Vadhan.
\newblock Regularity, boosting, and efficiently simulating every high-entropy
  distribution.
\newblock In {\em Proceedings of the 24th Annual {IEEE} Conference on
  Computational Complexity, {CCC} 2009, Paris, France, 15-18 July 2009}, pages
  126--136. {IEEE} Computer Society, 2009.

\bibitem[Vad12]{vadhan2012pseudorandomness}
Salil~P Vadhan.
\newblock Pseudorandomness.
\newblock {\em Foundations and Trends{\textregistered} in Theoretical Computer
  Science}, 7(1-3):1--336, 2012.

\bibitem[Vad17]{vadhan2017learning}
Salil Vadhan.
\newblock On learning vs. refutation.
\newblock In {\em Conference on Learning Theory}, pages 1835--1848. PMLR, 2017.

\bibitem[VC15]{vapnik2015uniform}
Vladimir~N Vapnik and A~Ya Chervonenkis.
\newblock On the uniform convergence of relative frequencies of events to their
  probabilities.
\newblock In {\em Measures of complexity: festschrift for alexey chervonenkis},
  pages 11--30. Springer, 2015.

\bibitem[VV17]{valiant2017automatic}
Gregory Valiant and Paul Valiant.
\newblock An automatic inequality prover and instance optimal identity testing.
\newblock {\em {SIAM} J. Comput.}, 46(1):429--455, 2017.

\bibitem[Yat85]{yatracos1985rates}
Yannis~G Yatracos.
\newblock Rates of convergence of minimum distance estimators and kolmogorov's
  entropy.
\newblock {\em The Annals of Statistics}, 13(2):768--774, 1985.

\bibitem[Zol84]{zolotarev1984probability}
Vladimir~Mikhailovich Zolotarev.
\newblock Probability metrics.
\newblock {\em Theory of Probability \& Its Applications}, 28(2):278--302,
  1984.

\end{thebibliography}

\end{document}